\newtheorem{theorem}{Theorem}
\newtheorem{definition}[theorem]{Definition}
\newtheorem{lemma}[theorem]{Lemma}
\newtheorem{proposition}[theorem]{Proposition}
\newenvironment{proof}[1][Proof]{\noindent\textbf{#1.} }{\ \rule{0.5em}{0.5em}}
\title{{\LARGE Testing Instrument Validity with Covariates}{ \thanks{%
We would like to thank Tymon S\l oczy\'{n}ski and the seminar participants at Brandeis University, Sant'Anna School of Advanced Studies, University of Exeter, and University of Siena for helpful comments. We would also like to thank the participants in IAAE 2022, ICEEE 2021, NASMES 2022, SETA 2022 for helpful comments. The authors gratefully acknowledge financial support from ERC grant (number 715940) and the
ESRC Centre for Microdata Methods and Practice (CeMMAP) (grant number RES-589-28-0001).}}}
\author{ \Large Thomas Carr\thanks{Department of Economics, Brown University}, \ Toru Kitagawa\thanks{ Department of Economics, Brown University; Department of Economics, University College London and the Institute of Economic Research at Hitotsubashi University. Email: toru\_kitagawa@brown.edu}}
\date{\today}
\begin{document}

\maketitle

\begin{abstract}
We develop a novel test of the instrumental variable identifying assumptions for heterogeneous treatment effect models with conditioning covariates. We assume semiparametric dependence between potential outcomes and conditioning covariates. This allows us to obtain testable equality and inequality restrictions among the subdensities of estimable partial residuals. We propose jointly testing these restrictions. To improve power, we introduce \textit{distillation}, where a trimmed sample is used to test the inequality restrictions. In Monte Carlo exercises we find gains in finite sample power from testing restrictions jointly and distillation. We apply our test procedure to three instruments and reject the null for one.

%[LONGER VERSION] We develop a novel specification test of the instrumental variable identifying assumptions (instrument validity) for heterogeneous treatment effect models with conditioning covariates. Building on the common empirical settings of local average treatment effect and marginal treatment effect analysis, we assume semiparametric dependence between the potential outcomes and conditioning covariates, and show that this allows us to express the testable implications of instrument validity in terms of equality and inequality restrictions among the subdensities of estimable partial residuals. We propose jointly testing these restrictions. To improve the power of the test, we propose \textit{distillation}, a process designed to reduce the sample down to the information useful for detecting violations of the instrument validity inequalities. We show advantages of distillation analytically in terms of detecting invalid instruments and perform Monte Carlo exercises to demonstrate a gain in finite sample power from testing restrictions jointly and distillation. We apply our test procedure to the college proximity instrument of \citet{Card1993}, the same-sex instrument of \citet{angristevans1998}, and the school leaving age instrument of \citet{oreopoulos2006}. We find that the null of instrument validity conditional on covariates cannot be rejected for \citet{Card1993} and \citet{angristevans1998} , but it is rejected at all conventional levels of significance in the case of \citet{oreopoulos2006}.

\end{abstract}

\noindent \textbf{Keywords}: Program evaluation, local average treatment effect, marginal treatment effect.

\pagebreak

\section{Introduction}

Instrumental variable (IV) methods are widely used in empirical economics. The credibility of IV estimates relies on the validity of the identifying assumptions, which are often subject to debate. In IV estimation, conditioning covariates are frequently used to enhance the credibility of the identifying assumptions or improve the precision of estimates. For example, when using proximity to college as an instrument to estimate the return to schooling, researchers control for location characteristics which may otherwise induce dependence between the instrument and the outcome. %In panel data settings, the credibility of instruments often relies on controlling for individual fixed effects.
\citet{hubermellace15Testing}, \citet{Kitagawa2015}, and \citet{mourifiewan} propose statistical tests for instrument validity in the heterogeneous treatment effect model of \citet{imbensangrist}. However, the state of the art literature lacks high-power tests for instrument validity that remain computationally feasible for more than a small number of covariates.

%Although these tests can accommodate a small number of covariates, their power and computationaly feasibility  are based on testable implications that hold conditional on covariates. As the number of unique covariate values grows, so does the number of testable implications. The state of the art literature thus lacks high-power tests for instrument validity that remain computationally feasible for more than a small number of covariates.

The first contribution of this paper is to propose an easy-to-implement test procedure for instrumental validity in the heterogeneous treatment effect model that can accommodate a moderate to large number of conditioning covariates. Building on the semiparametric approach of \citet{CHV11}, we specify that conditioning covariates are independent of unobserved hetergoneity and affect potential outcomes linearly. Under these restrictions, the conditional expectation of the outcome given conditioning covaraites and the instrument follows a partially linear model that depends non-parametrically on the probability of receiving treatment (propensity score). This allows the conditioning covariates to be partialed out and partial residuals obtained. Testable implications for the instrumental variable identifying assumptions which hold without conditioning on covariates can be stated for the subdensities of these partial residuals. With partial residuals in hand, testing can be performed without reference to covariate values. In comparison to the test procedure with conditioning covariates proposed in \cite{Kitagawa2015}, this greatly reduces the computational burden. Increasing the number of conditioning covariates increases only the complexity of estimating the propensity score and partially linear model, not the number of implications to be tested.

Two testable implications are available: \emph{index sufficiency} and \emph{nesting inequalities}. Index sufficiency states that the partial residual subdensities depend on the instrument through the propensity score only. The nesting inequalities are monotonicity inequalities for the joint distributions of the partial residuals and treatment status. They state that, for two distinct values of the propensity score, the distribution of partial residuals for treated units with the higher propensity score nests that of units with the lower propensity score, and the converse is true for untreated units. The two testable implications complement each other: index sufficiency holds the propensity score fixed and compares subdensities for different values of the instrument, while the nesting inequalities compare subdensities for different propensity scores. To exploit this complementarity, we propose jointly testing the two.     

As the subdensity nesting inequalities are ordered with respect to the propensity score, a natural approach to testing them would be to coarsen the propensity score and test the monotonicity of the subdensities conditional on this coarsened propensity score. When the propensity score is strongly correlated with the conditioning covariates, this approach can result in a test with low power. In such cases, classifying observations by propensity score will homogenise the distribution of instrument values across propensity score bins, which can mask violations of instrument validity. 

Our second contribution is to propose a novel approach to testing the nesting inequalities. We suggest testing instead monotonicity of the conditional partial residual subdensities given the instrument. This approach ensures that observations with different instrument values are kept separate. We show that, when the identifying assumptions are satisfied, the nesting inequalities hold over instrument values if the conditional distribution of propensity scores is monotonic in the instrument in the sense of first-order stochastic dominance. If first-order stochastic dominance does not hold for the instrument, we propose trimming the sample and testing the nesting inequalities for a sub-sample where first-order stochastic dominance is satisfied. We refer to this process as \emph{distillation} as it distills the sample down to the observations best able to detect violations.   

We show analytically and numerically the benefits of implementing distillation in terms of the ability to detect violation of instrument validity conditional on covariates. 
Specifically, we present a class of data generating processes for which distillation dominates conditioning on the coarsened propensity score in the following sense. A violation of the nesting inequalities conditional on the coarsened propensity score implies a larger violation of the nesting inequalities with a distilled sample, but not vice versa. That is, a violation of the nesting inequalities with a distilled sample does not imply the existence of a violation of the nesting inequalities conditional on the coarsened propensity score. Consequently, there exist data generating processes where instrument validity fails but the violation can only be detected by the nesting inequalities when performing distillation. 

We perform Monte Carlo exercises to evaluate the size and power of the proposed test procedure and compare it to two alternatives: the \citet{Kitagawa2015} test that does not control for covariates, and the test of nesting inequalities using a coarsened propensity score. We consider processes where the instrument is independent of the covariates, and processes where it is not. In the first case, an instrument that is valid conditional on the covariates will also be valid without conditioning on covariates, so the size of the \citet{Kitagawa2015} test will not be distorted. In the second, size will only be controlled if the conditioning covariates are appropriately controlled for. We find that the size of the proposed test is controlled in all cases. For processes where the \citet{Kitagawa2015} test has the correct size, we find our test procedure has comparable power. For all processes, it has higher power than the test of nesting inequalities using a coarsened propensity score. We also investigate the contributions of the index sufficiency and nesting inequality tests to the power of the joint test. These results suggest that index sufficiency has most power when the distributions of propensity scores for different instrument values are similar, so that there are observations with different instrument values and a similar propensity score. On the other hand, for the processes considered, rejection probabilities when testing the nesting inequalities do not depend strongly on the similarity of the propensity score distributions across instrument values. This is consistent with the two testable implications being complementary. 

To demonstrate the feasbility of the test procedure and illustrate how it can be used to resolve questions of instrument credibility, we present three applications. The first is \citet{Card1993}. This instrument has been tested in \citet{hubermellace15Testing}, \citet{Kitagawa2015}, \citet{mourifiewan} and \citet{sun2020} but, for computational reasons, in each case only a subset of the conditioning covariates of the original specification is used. The proposed test procedure allows all conditioning covariates to be included and remains computationally feasibile. Second, we apply the test to the same-sex instrument of \citet{angristevans1998}. The validity of this instrument was been questioned in \citet{RosenzweigWolpin2000} and its validity has been formally tested by \citet{huber2015} and \citet{mourifiewan}. As with \citet{Card1993}, these studies use only a subset of the conditioning covariates, while our test procedure allows us to include all conditioning covariates. Finally, we turn to identifying strategies that rely on geographical variation in the timing of changes in compulsory schooling laws. Using data for America, \citet{stephensyang2014} presents evidence that changes in education laws are correlated with other unobserved state-level changes, which suggests that this instrument is not valid. We test its validity in the case of the UK using the data of \citet{oreopoulos2006}. 

% Finally, we consider instruments derived from geographical features, which have been widely used to estimate the effects of infrastructure. In the absence of a formal test for instrument validity, this literature has relied on ad hoc tests to show instrument credibility. To demonstrate that our procedure can provide a formal test for instrument validity, we apply it to the data of \citet{Dinkelman2011} where mean land gradient is used as an instrument for electrification.   

The remainder of the paper proceeds as follows. The related literature is reviewed in Section \ref{sec:literature review} below. Section 2 introduces the semiparametric testable implications. Section 3 describes the test procedure. Section 4 presents Monte Carlo exercises. Section 5 discusses our three empirical applications. Section 6 concludes.  

\subsection{Related Literature}\label{sec:literature review}

This paper contributes to the active literature on the identification and estimation of the local average treatment effect (LATE) model of \cite{angristetal} and \citet{imbensangrist} and the marginal treatment effect model (MTE) of \citet{ heckmanvytlacil99, heckmanvytlacil01, heckman_vytlacil_2001, heckmanvytlacil05}. This literature has recently been reviewed by \citet{MogstadTorgovitsky2018}. A testable implication of the LATE identifying restrictions was derived in \citet{balkepearl}, \citet{imbensrubin} and \citet{heckmanvytlacil05}. \citet{Kitagawa2015} shows that this testable implication is sharp, and proposes a test procedure using a Kolmogorov-Smirnov statistic. \citet{mourifiewan} reformulates this testable implication as a conditional moment inequality, and proposes a test based on the intersection bounds framework of \citet{chernozhukovleerosen2013}. It is straightforward to generalise this testable implication to incorporate conditioning covariates, but assessing it nonparametrically in a finite sample is limited by the capacity of nonparametric methods to handle multiple conditioning covariates. 

To facilitate the estimation of marginal treatment effects while controlling for multiple conditioning covariates, \citet{CHV11} further imposes that potential outcomes are linear in the conditioning covariates, unobserved heterogeneities enter additively, and these unobservables are statistically independent of the covariates and instrument. The test procedure presented in this paper corresponds to a test for the validity of the \citet{CHV11} identifying assumptions. \citet{MaestasMullenStrand2013}, \citet{Eisenhaueretal2015}, \citet{KlineWalters2016}, \citet{Cornelissenetal2018}, \citet{FelfeLalive2018}, \citet{Bhulleretal2020}, and \citet{CKST2022} among others apply the approach of \citet{CHV11} to recover marginal treatment effects. Our test procedure can be applied in any of these contexts to assess the validity of the identifying assumptions. Furthermore, even if the validity of the instrument is undisputed, the test procedure can still be used to detect errors in the functional form specification for the relationship between potential outcomes and conditioning covariates. \citet{brinchmogstadwiswall} further develops the approach of \citet{CHV11}, and show how marginal treatment effects can be identified in a variety of settings. \citet{mogstadetal} discusses how to extrapolate marginal treatment effects to set identify policy relevant treatment effects.

Several alternative tests of the LATE identifying assumptions have been proposed. \citet{hubermellace15Testing} considers a weaker set of identifying restrictions than \citet{Kitagawa2015}, derives a testable implication, and proposes several test procedures. \citet{LaffersMellace2017} shows that the testable implication of \citet{hubermellace15Testing} is sharp for these identifying assumptions. \citet{sun2020} derives testable implications when the treatment is an ordered or unordered multi-valued variable, and proposes a power improving refinement of the \citet{Kitagawa2015} test procedure. \citet{araietal2022} extends the \citet{Kitagawa2015} test to fuzzy regression discontinuity designs. \citet{farbmacherguberklaassen2022} proposes a test procedure that uses random forests and classification and regression trees to find violations of the monotonicity and exclusion assumptions. \citet{kedagnimourifie} derives a testable implication of the instrument independence assumption in the form of a set of inequalities, reformulates these inequalities as a set of conditional expectations, and develops a test procedure using the framework of \citet{chernozhukovleerosen2013}. \citet{machadoetal} discusses tests for instrument validity in the case of a binary outcome which responds monotonically with respect to the treatment. \citet{frandsenetal2023} develops a test of the identifying assumptions of research designs which exploit random assignment of judges. The closest paper to ours is \citet{MaoSantAnna20}, which also uses the MTE framework of \citet{CHV11}, and derives index sufficiency and the nesting inequalities. This paper focuses on testing the nesting inequalities, and does not jointly test index sufficiency. The nesting inequalities are tested by considering many possible discretisations of the propensity score. We differ in that we jointly test index sufficiency and the nesting inequalities. In addition, for the nesting inequalities, rather than examining subdensities indexed by the propensity score, we propose constructing a distilled sample so that the inequalities hold for subdensities indexed by the instrument, and show the advantages of this approach. 

We contribute to several empirical literatures by testing the validity of commonly used instruments and identification strategies. The college proximity instrument of \citet{Card1993} has also been employed in \citet{CameronTaber2004} to estimate the effects of borrowing constraints on education, \cite{CHV11} to estimate the marginal treatment effect of education, and \citet{Heckmanetal2018} to estimate the effects of education on labour market and health outcomes. The same-sex instrument of \citet{angristevans1998} has been used widely to identify the effects of family size. \citet{BlackDeveruxSalvanes2010},\citet{ConleyGlauber2006}, \citet{AngristLavySchlosser2010} and \citet{Beckeretal2010} use this instrument to empirically investigate the existence of a quantity-quality trade-off for children, while \citet{CrucesGaliani2007} employs it to investigate the relationship between family size and maternal labour supply. Other than \citet{oreopoulos2006}, geographical variation in schooling laws has been employed by \citet{acemogluangrist2000} to estimate human capital externalities, \citet{lochnermoretti2004} to estimate the effect of education on imprisonment rate, and \citet{ClarkRoyer2013} to estimate the effects of education on life expectancy. 

\section{Model and Testable Implications}

\subsection{Setting}

The data is a random sample $\left( Y,D,X,Z\right) $,
where $Y$ is an observed outcome continuously distributed on $\mathbb{R}$, $D\in \left\{ 1,0\right\} $ is an observed binary treatment status
indicating treated $(D=1)$ or non-treated $\left( D=0\right) $, $X\in 
\mathcal{X}\subset \mathbb{R}^{k_{X}}$ is a vector of individual pretreatment observable covariates, and $Z\in \mathcal{Z}\subset \mathbb{R}%
^{k_{Z}}$ is a vector of instrumental variables. For the moment, we impose no restrictions on the instrument.
\ Let $\left\{ Y_{dz}\right\} _{d\in \left\{ 1,0\right\} ,z\in \mathcal{Z}}$
be the set of potential outcomes indexed by treatment status $d\in \left\{
1,0\right\} $ and instrument value $z\in \mathcal{Z}$. 
The observed outcome 
$Y$ can be written as
\begin{equation*}
Y=Y_{1Z}D+Y_{0Z}(1-D)\text{.}
\end{equation*}
We write a selection equation of unconstrained form as
\begin{equation*}
D=1\{v(X,Z,U_D)\geq 0\}.
\end{equation*}%
We interpret this selection equation as follows. With $U_{D}$ fixed at $u_{D}$, $D_{xz}=1\{v(x,z,u_{D})\geq 0\}$ generates the counterfactual selection response for an individual when their pre-treatment characteristics $X$ and instrument $Z$ are exogenously set at $x$ and $z$. \\ 

\bigskip

The following three assumptions identify the marginal treatment effect (MTE):

\bigskip

\textbf{MTE Identification Assumptions (Heckman and Vytlacil (2005)):}

\begin{description}
\item[(A1)] \textit{Instrument exclusion restriction}: Potential outcomes do not depend on the instrument in the sense $Y_{1z}=Y_{1}$ and $%
Y_{0z}=Y_{0}$ for all $z\in \mathcal{Z}$ with probability one.

\item[(A2)] \textit{Random assingnment}: $(\left\{ Y_{1z}\right\} _{z\in 
\mathcal{Z}},\left\{ Y_{0z}\right\} _{z\in \mathcal{Z}},U_{D})$ are
statistically independent of $Z$ conditional on $X$, and $U_{D}$ is
statistically independent of $X$.

\item[(A3)] \textit{Instrument monotonicity}: For every $z$, $z^{\prime }\in 
\mathcal{Z}$, $v(x,z,u_{D})\geq v(x,z^{\prime },u_{D})$ holds for
all $u_{D}\in \mathcal{U}$ or $v(x,z,u_{D})\leq v(x,z^{\prime},u_{D})$ holds for all $u_{D}\in \mathcal{U}$.
\end{description}

\bigskip

The instrument exclusion restriction (A1) precludes the instrument from having a direct causal impact on the outcome for any member of the population.
With this restriction imposed, an individual's potential outcomes are reduced to
a pair indexed only by their treatment status, i.e, $Y_{1}$ denotes
their potential outcome with treatment and $Y_{0}$ their potential
outcome without treatment.  The observed outcome $Y$ can then be
written as $Y=Y_{1}D+Y_{0}(1-D)$.

Assumption (A2) states that, conditional on $X$, the instrument $Z$ is assigned without reference to underlying potential outcomes and unobserved heterogeneity in selection response. This corresponds to the conventional assumption of instrument exogeneity. In the heterogeneous treatment effect model, however, it should be noted that instrument exogeneity takes the form of joint statistical independence of potential outcomes and selection heterogeneities, rather than the zero-correlation exogeneity restriction. Assumption (A3) states that a hypothetical change in the instrument from $z^{\prime }$ to $z$ can induce some individuals to opt in to treatment or some individuals to opt out of treatment, but never both simultaneously. Equivalently, in the terminology of \citet{imbensangrist}, there is a sub-population of compliers associated with this hypothetical change, but no sub-population of defiers. Instrument monotonicity of the form (A3) allows for multiple instrumental variables. \citet{Mogstadetal2020} argue that, in the presence of multiple instruments, (A3) implies severe restrictions on individual selection responses.  Our test procedure allows for multiple instruments and can be used to assess a necessary testable implication of (A3) jointly with the exclusion and random assignment restrictions. Section \ref{subsec:multipleIV} discusses in detail the implications of multiple instruments for instrument validity testing.    

Following \citet{heckmanvytlacil01, heckmanvytlacil05}, assumptions (A1) - (A3) imply a heterogeneous treatment effect model of the following form: 
\begin{equation}
\left\{ 
\begin{array}{l}
Y_{1}=m_{1}(X)+\tilde{U}_{1}; \\ 
Y_{0}=m_{0}(X)+\tilde{U}_{0}; \\ 
Y=DY_{1}+(1-D)Y_{0}; \\ 
D=\mathbbm{1}\{p(X,Z)- V \geq 0\},%
\end{array}%
\right.  \label{Model1}
\end{equation}%
with%
\begin{equation}
\left( Y_{1},Y_{0},V\right) \perp Z|X\text{ \ and } V\perp X.
\label{IV exclusion & exogeneity}
\end{equation}
Here, $E[Y_{1}|X=x]=m_{1}(x)$ and $E[Y_{0}|X=x]=m_{0}(x)$
are regression equations of potential outcomes $(Y_{1},Y_{0})$ on
the control covariates, $\tilde{U}_{1}$ and $\tilde{U}_{0}$ are
unobserved heterogeneities in the potential outcomes $(Y_{1},Y_{0})$, $p(x,z) = \Pr (D=1|X=x,Z=z)$ is the propensity score, and $V$ is uniformly distributed on $[0,1]$ and is independent of $(X,Z)$. \

\subsection{Semiparametric Testable Implications}

The MTE identifying assumptions introduced above involve counterfactual
variables that are never jointly observed for the same individual, so cannot be directly verified using the distribution of observables. However, \textit{necessary} testable implications have been established. These conditions relate to the distribution of observables, so may be examined empirically. They consist of \textit{single index sufficiency} and \textit{nesting inequalities}. We present a proof in Appendix \ref{sec:proposition 1 and 2 proofs} for completeness.

\begin{proposition}
\label{prop: joint-ineq} Assume (A1) - (A3) and that, conditional on $X=x$, the propensity score $%
p\left( X, Z\right) $ has a nondegenerate distribution. \ Then, at any such $x$,
the following two conditions hold:

\noindent (i) \textbf{Index sufficiency}: The conditional distribution of $(Y,D)$ given $(X,Z)$ depends on $Z$ only through the propensity score $p(X,Z)$, i.e., for any measurable set $A \subset \mathcal{Y}$ and $d \in \{0 , 1 \}$,
\begin{equation}
    \Pr(Y \in A ,D = d|X = x,Z=z) = \Pr(Y \in A ,D = d|p(X,Z)=p, X = x) \label{propostion one index sufficiency}
\end{equation}
holds for any $(x,z)$ where $p = p(x,z)$.

\noindent (ii) \textbf{Nesting inequalities}: The following inequalities hold for every $p\geq p^{\prime }$ in the support of $p\left( X,Z\right)$ and every measurable subset $A \subset \mathcal{Y}$:
\begin{eqnarray}
&&\Pr(Y \in A,D=1|p\left( X,Z \right) =p,X=x)\geq \Pr(Y \in A,D=1|p\left(X,Z \right) =p^{\prime },X=x),  \label{proposition one density} \\
&&\Pr(Y \in A ,D=0|p\left( X,Z \right) =p,X=x)\leq \Pr(Y \in A,D=0|p \left(
X,Z\right) =p^{\prime },X=x).  \notag
\end{eqnarray}
\end{proposition}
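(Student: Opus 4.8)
The plan is to argue directly from the Heckman--Vytlacil representation (\ref{Model1}), which (A1)--(A3) are assumed to imply, together with the conditional independence $(Y_{1},Y_{0},V)\perp Z\mid X$ and $V\perp X$ recorded there. The first observation is that the threshold-crossing rule $D=\mathbbm{1}\{p(X,Z)-V\geq 0\}$ makes $\{D=1\}$ coincide with $\{V\leq p(X,Z)\}$ and $\{D=0\}$ with $\{V>p(X,Z)\}$, while on these events the observed outcome equals $Y_{1}$ and $Y_{0}$ respectively. Writing $p=p(x,z)$, this converts the objects of interest into statements about the latent variables:
\begin{align*}
\Pr(Y\in A,D=1\mid X=x,Z=z)&=\Pr(Y_{1}\in A,\,V\leq p\mid X=x,Z=z),\\
\Pr(Y\in A,D=0\mid X=x,Z=z)&=\Pr(Y_{0}\in A,\,V>p\mid X=x,Z=z).
\end{align*}

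For index sufficiency I would then invoke $(Y_{1},Y_{0},V)\perp Z\mid X$ to drop the conditioning on $z$ on the right-hand sides, obtaining $\Pr(Y_{1}\in A,V\leq p\mid X=x)$ and $\Pr(Y_{0}\in A,V>p\mid X=x)$, which depend on $z$ only through the scalar $p=p(x,z)$. To match the propensity-score form in (\ref{propostion one index sufficiency}), I note that, given $X=x$, the conditioning event $\{p(X,Z)=p\}$ is measurable with respect to $Z$; hence the same conditional independence removes it without altering the joint law of $(Y_{d},V)$, so $\Pr(Y\in A,D=d\mid p(X,Z)=p,X=x)$ reduces to the identical latent-variable expression. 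Equating the two sides yields (\ref{propostion one index sufficiency}).

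The nesting inequalities follow from the expressions just obtained. By index sufficiency, $\Pr(Y\in A,D=1\mid p(X,Z)=p,X=x)=\Pr(Y_{1}\in A,V\leq p\mid X=x)$, with the analogous identity for $d=0$. Fixing $p\geq p'$ in the support, the event $\{V\leq p'\}$ is contained in $\{V\leq p\}$, so $\{Y_{1}\in A,V\leq p'\}\subseteq\{Y_{1}\in A,V\leq p\}$ and the $D=1$ probability is nondecreasing in $p$; symmetrically $\{V>p\}\subseteq\{V>p'\}$ forces the $D=0$ probability to be nonincreasing in $p$. These are precisely the two statements in (\ref{proposition one density}).

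The threshold rewriting and the set-nesting monotonicity are immediate; the main obstacle is the measure-theoretic step inside the index-sufficiency argument. When $p(X,Z)$ is continuously distributed, $\{p(X,Z)=p\}$ has probability zero, so conditioning on it must be read through regular conditional distributions, i.e.\ a disintegration along the $\sigma$-algebra generated by $p(X,Z)$ given $X$, and one must verify that conditional independence survives this conditioning because $p(X,Z)$ is itself a measurable function of $Z$ given $X$. I would make this rigorous by phrasing everything in terms of conditional expectations of $\mathbbm{1}\{Y_{d}\in A\}\mathbbm{1}\{V\leq p\}$ against the relevant sub-$\sigma$-algebras and applying the tower property, which renders the cancellation of the $Z$-dependence transparent and legitimises replacing $\{X=x,Z=z\}$ by $\{X=x,p(X,Z)=p\}$.
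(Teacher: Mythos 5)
Your proof is correct and follows essentially the same route as the paper's: rewrite the events via the threshold-crossing selection rule as statements about $(Y_{d},V)$, invoke (A2) to drop the $Z$-conditioning (the paper likewise drops the $p(X,Z)=p$ conditioning since it is $Z$-measurable given $X$), and obtain the nesting inequalities from the monotonicity of $\{V\leq p\}$ in $p$. Your added remarks on regular conditional distributions only make explicit a measure-theoretic point the paper leaves implicit.
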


\bigskip

Condition (i), index sufficiency, restricts the conditional distribution of $(Y,D)$ given $(X,Z)$ to depend on $Z$ through the propensity score $p(X,Z)$ only. That is, if there are multiple values of $Z$, say $z \neq z' \in \mathcal{Z}$, that yield the same value of the propensity score given $X=x$, the conditional distributions of $(Y,D)$ given $(X=x,Z=z)$ and $(X=x,Z=z')$ must be identical. Condition (ii) provides distributional monotonicity inequalities among the conditional distributions of $(Y,D)$ given $(p,x)$ with respect to $p = p(X,Z)$. This corresponds to the testable monotonicity shown by \citet{heckmanvytlacil05}.  The difference between the two sides of (\ref{proposition one density}) can be expressed as
\begin{eqnarray*}
&&\Pr (Y_{1}\in A,p^{\prime }<U_{D}\leq p|X=x) \\
&=&\Pr (Y_{1}\in A|p^{\prime }<U_{D}\leq p,X=x)\times \Pr \left( p^{\prime
}<U_{D}\leq p|X=x\right).
\end{eqnarray*}%
Individuals whose selection heterogeneity $U_{D}$ falls in $(p^{\prime },p]$ can be viewed as compliers conditional on $X=x$. (\ref{proposition one density}) imposes non-negativity of the treated outcome probability density function for these conditional compliers. Detecting any violation of the conditions of Proposition 1 allows the joint restrictions (A1) - (A3) to be refuted. 
Neither of these testable implications restrict the support of the instrument $Z$, i.e., $Z$ can be discrete, continuous, or multi-dimensional. 
% The testable implications of Proposition \ref{prop: joint-ineq} are known to be sharp (\citet{MaoSantAnna20}) in the sense that any distribution of observables satisfying them can be generated by a model (\ref{Model1}) satisfying (A1) - (A3).

The two testable implications shown in Proposition \ref{prop: joint-ineq} are distinct and assess different aspects of the distribution of observables. 
Holding $X$ fixed, index sufficiency compares conditional distributions of $(Y,D)$ given $(p(X,Z),X)$ at a \textit{fixed} value of the propensity score, while the nesting inequalities compare conditional distributions across \textit{different} values of the propensity score.
Consequently, there are scenarios in which one testable implication is useful for assessing instrument validity but the other is not. 
For instance, if $Z$ is a scalar and $p(x,z)$ is strictly monotonic in $z$ at every $x$, instrument validity cannot be assessed through index sufficiency since their is no variation in the instrument conditional on $(p(X,Z),X)$, whereas it can be assessed using the nesting inequalities. 
Conversely, when the propensity score does not vary with the instrument conditional on $X$ (i.e. the instrument is irrelevant), index sufficiency does have content for assessing instrument validity while the nesting inequalities do not. As such, the two testable implications complement each other and joint assessment of them is desirable.

When the dimension of $X$ is not small, nonparametrically testing the implications listed in Proposition \ref{prop: joint-ineq} is challenging. 
\citet{Kitagawa2015} proposes testing the inequalities of Proposition \ref{prop: joint-ineq} (ii) with a Kolmogorov-Smirnov type test statistic where a supremum is taken over a large class of instrument functions defined on the product space of $Y$ and $X$. Similar to optimization in the empirical risk minimizing classification problem, the computational complexity of searching for this supremum increases with the dimension of $X$. Due to this computational barrier, the finite sample power of this test is unknown, and it has rarely been implemented in practice. Furthermore, accommodating a continuous instrument in this procedure remains an open problem.

\bigskip

We develop a test for instrument validity which circumvents the implementation issues that arise due to conditioning covariates. This test \textit{jointly} assesses index sufficiency and the nesting inequalities. In addition, the test procedure can accommodate continuous instruments. 

To achieve this, we impose the semiparametric restrictions proposed by \citet{CHV11}, which have since been used in many empirical studies of marginal treatment effects. 
\begin{description}
\item[(A4)] \textit{Linear Functional Form}: The regression equations for potential outcomes have the parametric form
\begin{equation}
\left\{ 
\begin{array}{c}
Y_{1}=\alpha _{1}+X^{\prime }\theta _{1}+\tilde{U}_{1}; \\ 
Y_{0}=\alpha _{0}+X^{\prime }\theta _{0}+\tilde{U}_{0},%
\end{array}%
\right.  \label{Y linear in X}
\end{equation}%
\end{description}

In addition, strengthen the exogeneity restriction (A2) to:

\begin{description}
\item[(A5)] \textit{Strong Exogeneity}: Let $(\tilde{U}_1, \tilde{U}_0)$ be the residuals of the potential outcome regression equations defined in (\ref{Y linear in X}). $(\tilde{U}_{1},\tilde{U}%
_{0},V)$ is independent of $\left( X,Z\right)$.
\end{description}

\bigskip

(A4) restricts the conditional mean potential outcomes to depend linearly on the covariates. (A5) implies that the mean zero unobserved terms $\left( \tilde{U}_{1},\tilde{U}%
_{0}\right) $ are independent of the covariates. (A4) and (A5) jointly imply that  covariates affect the potential outcomes only through their conditional means. Note that this assumption
does not imply that the treatment effect is homogeneous, since the individual
causal effect $Y_{1}-Y_{0}$ remains random even conditional on $X$. Note also that, while only conditional mean treatment effects may depend on $X$, the distribution of treatment effects remains otherwise unconstrained. 

In general, under (A1) - (A3), the MTE at $X=x$ and $V=p\in \mathcal{P}$ is
nonparametrically identified by%
\begin{equation*}
\frac{\partial }{\partial p}E\left[ Y|p\left( X,Z\right) =p,X=x\right] .
\end{equation*}%
\citet{CHV11} shows that imposing the parametric functional forms of (A4) together with strengthening the exogeneity assumption of (A2) to (A5), yields the following partially linear regression equation for the observed outcome $Y$:
\begin{equation}
E\left[ Y|p\left( X,Z\right) =p,X=x\right] =x^{\prime }\theta
_{0}+px^{\prime }\left( \theta _{1}-\theta _{0}\right) +\phi \left( p\right) 
\text{,}  \label{reg y on p and x}
\end{equation}%
where $\phi \left( \cdot \right) $ is a unknown function of the propensity
score, which absorbs the intercept term $p\alpha _{1}+\left( 1-p\right)
\alpha _{0}$. As a result, the MTE at $X=x$ and $V=p$ is identified by 
\begin{equation}
MTE(x,p) \equiv E[Y_1 - Y_0 | X=x, V = p] =  x^{\prime }\left( \theta _{1}-\theta _{0}\right) +\phi ^{\prime }\left(
p\right). \label{eq:simplified_MTEs}
\end{equation}
Since the only nonparametric component in this regression equation is $\phi \left( \cdot \right)$, which is a function of the
scalar-valued propensity score, the computational complexity of estimation does not explode as the dimension of $X$ increases. \citet{CHV11} argues that this is a practical advantage of the functional form specification (A4) and the stronger instrument exogeneity assumption
(A5). 

Since the partial linear regression (\ref{reg y on p and x}) identifies $\left( \theta _{1},\theta _{0} \right)$, it is possible to compute
\begin{equation*}
U_{1}\equiv \alpha _{1}+\tilde{U}_{1} = Y_{1}-X^{\prime}\theta _{1}    
\end{equation*}
for every treated observation, and 
\begin{equation*}
U_{0}\equiv
\alpha _{0}+\tilde{U}_{0} = Y_{0}-X^{\prime }\theta _{0}
\end{equation*}
for every untreated observation. 

\bigskip

The joint restrictions (A1), (A3), (A4) and (A5) simplify not only estimation of MTE, but also the testable implications of MTE identifying assumption. To see how, consider
the distribution of $U_{1}$ conditional on $\left( D=1,p\left( X,Z\right)
=p,X=x\right) $. \ Under (A1), (A3) and (A4),%
\begin{eqnarray}
U_{1}|\left( D=1,p\left( X,Z\right) =p, X=x\right) &\sim &U_{1}|\left(
V\leq p,p\left( X,Z\right) =p,X=x\right) \notag \\
&\sim &U_{1}|\left( V \leq p\right), \label{eq.U1} 
\end{eqnarray}
and similarly,
\begin{eqnarray}
U_{0}|\left( D=0,p\left( X,Z\right) =p, X=x\right) &\sim &U_{0}|\left(
V > p,p\left( X,Z\right) =p,X=x\right) \notag \\
&\sim &U_{0}|\left( V > p\right). \label{eq.U0}
\end{eqnarray}
That is, for $d=0,1$, the distribution of the residual, $U_{d}=Y_{d}-X^{\prime }\theta _{d}$,
conditional on $\left( D=d,p\left( X,Z\right) ,X\right)$ depends only on $p\left( X,Z\right)$. Exploiting this single index sufficiency for the distributions of $U_1$ and $U_0$, we obtain the following necessary testable implications for the joint restrictions (A1), (A3), (A4) and (A5).

\begin{proposition} \label{prop2}
 Let $U=DU_{1}+(1-D)U_{0}$. Assume the propensity score $p\left( X,Z\right)$ has a nondegenerate distribution. If (A1), (A3), (A4) and (A5) hold, then the following two conditions hold: 

\noindent (i) \textbf{Index sufficiency}: The conditional distribution of $(U,D)$ given $(X,Z)$ depends only on the propensity score $p(X,Z)$, i.e., for any measurable set $A \subset \mathbb{R}$ and $d \in \{0 , 1 \}$
\begin{equation}
    \Pr(U \in A ,D = d|X=x,Z=z) = \Pr(U \in A ,D = d|p(X,Z)=p)\label{eq:index sufficiency statement}
\end{equation}
holds for any $(x,z)$ where $p = p(x,z)$.

\noindent (ii) \textbf{Nesting inequalities}: For every $p\geq p^{\prime }$ in the support of the propensity score distribution and every measurable subset $A \subset \mathcal{Y}$:
\begin{eqnarray}
&&\Pr(U \in A,D=1|p\left( X,Z \right) =p)\geq \Pr(U \in A,D=1|p\left(X,Z \right) =p^{\prime }),  \label{density} \\
&&\Pr(U \in A ,D=0|p\left( X,Z \right) =p)\leq \Pr(U \in A, D=0|p \left(X,Z\right) =p^{\prime }).  \notag
\end{eqnarray}
\end{proposition}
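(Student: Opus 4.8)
The plan is to reduce both statements to a single distributional fact: under (A5) the partial residuals together with the selection index, $(U_1, U_0, V)$, are jointly independent of $(X,Z)$. This holds because $U_d = \alpha_d + \tilde U_d$ is an affine transformation with population-level (hence nonrandom) coefficients of $\tilde U_d$, so the independence of $(\tilde U_1, \tilde U_0, V)$ from $(X,Z)$ asserted in (A5) transfers directly to $(U_1, U_0, V)$. Since $p(X,Z)$ is a measurable function of $(X,Z)$, the same independence holds with respect to $p(X,Z)$. Combined with the selection rule $D = \mathbbm{1}\{V \le p(X,Z)\}$, which lets me translate $\{D=1\}$ and $\{D=0\}$ into $\{V \le p\}$ and $\{V > p\}$ on any event that fixes the propensity score, this is essentially all that is needed; equations (\ref{eq.U1}) and (\ref{eq.U0}) already package the core equivalence.

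For part (i) I would argue the $d=1$ case and note $d=0$ is symmetric. On $\{D=1\}$ one has $U = U_1$, so for the left-hand side I write
\[
\Pr(U \in A, D=1 \mid X=x, Z=z) = \Pr(U_1 \in A, V \le p(x,z) \mid X=x, Z=z),
\]
and then drop the conditioning using the independence of $(U_1, V)$ from $(X,Z)$, obtaining $\Pr(U_1 \in A, V \le p)$ with $p = p(x,z)$. Running the identical computation on the event $\{p(X,Z)=p\}$—where $\{D=1\}$ again becomes $\{V \le p\}$ and independence of $(U_1,V)$ from $p(X,Z)$ again removes the conditioning—yields the same expression $\Pr(U_1 \in A, V \le p)$ for the right-hand side. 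Equality of the two sides is index sufficiency, and the $d=0$ case follows by replacing $U_1$, $\{V \le p\}$ with $U_0$, $\{V > p\}$.

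For part (ii) I would use the same reduction to write, for $p \ge p'$,
\[
\Pr(U \in A, D=1 \mid p(X,Z)=p) - \Pr(U \in A, D=1 \mid p(X,Z)=p') = \Pr(U_1 \in A,\, p' < V \le p) \ge 0,
\]
where the equality uses the part-(i) representation and the inequality is just nonnegativity of a probability; the nesting $\{V \le p'\} \subseteq \{V \le p\}$ is what collapses the difference into the probability of a single event. The $D=0$ inequality is obtained symmetrically from $\Pr(U_0 \in A, V > p') - \Pr(U_0 \in A, V > p) = \Pr(U_0 \in A,\, p' < V \le p) \ge 0$, which flips the direction. I do not expect a genuine obstacle: the entire content sits in the independence-inheritance step of the first paragraph and in the clean translation of the treatment indicator through the selection threshold. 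The one point requiring care is confirming that conditioning on the level set $\{p(X,Z)=p\}$ interacts correctly with the independence, which is immediate since $p(X,Z)$ is $(X,Z)$-measurable and therefore independent of $(U_1, U_0, V)$; the nondegeneracy hypothesis on $p(X,Z)$ enters only to guarantee that the inequalities in (ii) are nonvacuous.
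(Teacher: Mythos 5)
Your proposal is correct and follows essentially the same route as the paper's proof: translate $\{D=d\}$ into the event $\{V\leq p\}$ or $\{V>p\}$, invoke strong exogeneity (A5) to drop the conditioning on $(X,Z)$ (and hence on the level set of $p(X,Z)$), and obtain the nesting inequalities by collapsing the difference into $\Pr(U_d\in A,\,p'<V\leq p)\geq 0$. Your explicit remark that the independence transfers from $(\tilde U_1,\tilde U_0,V)$ to $(U_1,U_0,V)$ because the $\alpha_d$ are nonrandom is a small clarification the paper leaves implicit, but it does not change the argument.
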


\bigskip

We refer to the joint restrictions of (A1), (A3), (A4), and (A5) as \textit{instrument validity} and base our test on the testable implications of Proposition \ref{prop2}. Compared to Proposition \ref{prop: joint-ineq}, Proposition \ref{prop2} replaces the outcome variable $Y$ with the outcome residuals $U$, and removes the requirement to condition on $X$. Index sufficiency is strengthened such that the distribution of $(U,D)$ depends on $(X,Z)$ only through the propensity score $p(X,Z)$. The nesting inequalities are strengthened to imply distributional monotonicity with respect to changes in $p(X,Z)$ regardless of the underlying variation in $(X,Z)$. These testable implications are also shown in \citet{MaoSantAnna20}.  

As the testable implications of Proposition \ref{prop2} follow from assumptions (A1), (A3), (A4) and (A5), any test based on them is a joint test of these assumptions. That is, in addition to the conventional exclusion, random assignment and monotonicity, we will be testing the functional form assumption of (A4) and strong exogenienty assumption of (A5). Violation of testable implications can be due to misspecification in the functional form (\ref{Y linear in X}) or the restricted heterogeneity of MTEs with respect to the observable characteristics, i.e., additive separability of MTEs in (\ref{eq:simplified_MTEs}). In addition, when the propensity score is estimated, misspecification of the propensity score function may also lead to a rejection of the null. 

As index sufficiency should hold for all $(X,Z)$  it leads to a large number of inequalities, and testing all of them may be impractical. In this paper our focus is on testing the validity of the instrument, assuming that the strong exogeneity of $X$ holds. As such, we focus on testing that the conditional joint distribution of $(U, D)$ given the propensity score does not vary with $Z$. 

\subsection{Potential Difficulties Detecting Violation of Instrument Validity}\label{sec: distilled instrument discussion}

In this section we show how the conditional distribution of the propensity score given $Z$ affects the usefulness of each of the testable implications characterized in Proposition \ref{prop2}. We use the simple case of a single binary instrument to illustrate this point. 

Index sufficiency (\ref{eq:index sufficiency statement}) can be tested by examining whether the difference
\begin{equation}\label{eq:index sufficiency practical test}
    \Pr(U\in A, D=d | p(X,Z)=p, Z=1) - \Pr(U\in A, D=d| p(X,Z)=p, Z= 0)
\end{equation}
is zero.

For a data generating process to have empirical content to asses index sufficiency, there must exist some $p$ where observations with both $Z=0$ and $Z=1$ are available. This occurs if we can find covariate values $x \neq x'$ such that $p(x,1) = p(x^{\prime},0)$. If no such $p$ exists, then for all $p$ either $\Pr(p(X,Z) = p, Z=0) = 0$ or $\Pr(p(X,Z) = p, Z=1) = 0$ and (\ref{eq:index sufficiency practical test}) cannot be calculated. Hence, the practical relevance of (\ref{eq:index sufficiency statement}) is determined by the degree of overlap between the propensity score distributions conditional on $Z=0$ and $Z=1$.  

% If the conditioning events are not supported by the distribution of $(X,Z)$, i.e., $\Pr(p(X,Z) = p, Z=0) = 0$ or $\Pr(p(X,Z) = p, Z=1) = 0$, then index sufficiency cannot be tested at the given $p$. This occurs if, for instance, we cannot find covariate values $x \neq x'$ such that $p(x,1) = p(x^{\prime},0)$. In general, if the distributions of propensity score conditional on $Z=0$ and $Z=1$ do not have overlapping support, then there will exist no $p$ for which observations with both $Z=0$ and $Z=1$ are available, and thus no $p$ at which (\ref{eq:index sufficiency practical test}) can be calculated. In other words, whether a data generating process has empirical content to assess index sufficiency or not depends on the extent of overlap between the propensity score distributions conditional on $Z=0$ and $Z=1$.

Now consider the nesting inequalities (\ref{density}). These can be viewed as analogous to those tested in \citet{Kitagawa2015} with the observed outcome $Y$ replaced by the outcome residual $U$ and the instrument replaced by the propensity score $p(X,Z)$. If the propensity score were coarsened, e.g., by binning, the test procedure with no conditioning covariates and a discrete instrument of \citet{Kitagawa2015} could be applied. Although this approach would attain an asymptotically valid test size,\footnote{Assuming the estimation errors in $U$ and $p(X,Z)$ asymptotically vanish} there are circumstances where conditioning on the propensity score has a detrimental effect on power. 

Maintain the simplifying assumption of a single binary instrument. Assume that the conditioning covariates $X$ are statistically independent of the unobservables $(\tilde{U}_1, \tilde{U}_0, V)$,as required by (A5). Furthermore, assume that they are also independent of $Z$. Consider the following simple violation of the exclusion restriction (A1) where the residuals of the potential outcome regression for $D = 1$ depend on $Z$
\begin{equation}
    U_1 = \alpha_1  + \nu Z + \tilde{U}_1, \label{eq:section 2.3 violation process}
\end{equation}
with $\nu \neq 0$. As $U_1$ is a sum of quantities that are independent of $Z$, it will be independent of $Z$.

If $\Pr(Z=1|p(X,Z)=p)$ is bounded away from 0 and 1, the additive separability of the selection equation and the independence of $X$ leads to the following decomposition:
\begin{align}
    \Pr(U\in A, D=1|p(X,Z)=p) = & \Pr(U\in A, D=1|p(X,Z)=p,Z=0)\Pr(Z=0|p(X,Z)=p) \notag \\ & + \Pr(U\in A, D=1|p(X,Z)=p,Z=1)\Pr(Z=1|p(X,Z)=p) \notag \\
    = & \Pr(U_1\in A, V \leq p |p(X,Z) = p, Z=0)\Pr(Z=0|p(X,Z)=p) \notag \\ 
    & + \Pr(U_1 \in A, V \leq p|p(X,Z) = p, Z=1)\Pr(Z=1|p(X,Z)=p) \notag\\
    = & \Pr(U_1\in A, V \leq p |Z=0)\Pr(Z=0|p(X,Z)=p) \notag \\ 
    & + \Pr(U_1 \in A, V \leq p|Z=1)\Pr(Z=1|p(X,Z)=p).
    \label{eq:decomposition 2}
\end{align}
The first equality states that, for a given $p$ where $\Pr(Z=1|p(X,Z)=p)$ is bounded away from 0 and 1, we have a mixture of observations with $Z=0$ and $Z=1$ with conditioning covariates implicitly adjusting to hold the propensity score constant. The second equality uses that the event $(U \in A, D = 1)$ is equivalent to $(U_1 \in A, V \leq p)$. To see the third, note that conditioning on $(p(X,Z) = p, Z=z)$ is equivalent to conditioning on $Z = z$ and $X \in \{x|p(X = x, Z = z) = p\}$. $X$ is independent of $V$ and $U_1$, so $X \in \{x|p(X = x, Z = z) = p\}$ can be dropped from the conditioning set.

Assume that the violation of the exclusion restriction assumption leads there to be some $A \subset \mathcal{Y}$ and $p^{\prime} < p$ where the following inequality holds:
\begin{equation}
    \Pr(U_1 \in A, V \leq p^{\prime} | Z= 0) - \Pr(U_1 \in A, V \leq p |Z=1) > 0. \label{eq:source violation}
\end{equation}
Notice that if the instrument were valid the reverse inequality would hold.

To see how (\ref{eq:source violation}) can drive a violation of the first nesting inequality (\ref{density}), apply the decomposition \eqref{eq:decomposition 2} to the difference between the left and right side of (\ref{density})
\begin{align}
    & \Pr(U \in A, D = 1 | p^{\prime}) - \Pr(U \in A, D=1 |p) \label{eq:nesting violation} \\ 
    = &\left[(\Pr(Z=1|p) - \Pr(Z=1|p')\right]\left[ \Pr(U_1 \in A, V \leq p^{\prime}|Z=0) - \Pr(U_1 \in A, V \leq p |Z=1) \right] \notag \\
    & - \Pr(Z=0|p)  \cdot \Pr(U_1 \in A, p^{\prime} < V  \leq p| Z=0) - \Pr(Z=1|p^{\prime}) \cdot \Pr(U_1 \in A, p^{\prime} < V \leq p | Z=1). \notag
\end{align}
The violation (\ref{eq:source violation}) appears in the second line of (\ref{eq:nesting violation}) multiplied by $(\Pr(Z=1|p) - \Pr(Z=1|p')$. The remaining terms in the third line are all negative. In order for (\ref{eq:source violation}) to result in a violation of the nesting inequality  (i.e.,for (\ref{eq:nesting violation}) to be positive), it is desirable that $\Pr(Z=1|p) - \Pr(Z=1|p')$ is of large magnitude. 

This magnitude depends on the strength of the relationship between the covariates $X$ and the propensity score. If variation in the propensity scores is driven mainly by $X$, $\Pr(Z=1|p)$ will change little in $p$, and the magnitude of $(\Pr(Z=1|p) - \Pr(Z=1|p')$ will be small. This will result in limited detectability of violations of instrument validity through the nesting inequality (\ref{eq:nesting violation}). 

In the Monte Carlo studies in Section \ref{sec:MC}, we show numerically that tests of the nesting inequalities that directly examine the monotonicity of $\Pr(U \in A, D = 1 | p)$ in $p$ suffer from low power if variation in the propensity score is driven by the conditioning covariates $X$. 

\subsection{Distillation}\label{subsec:distilledIV}

To improve power when testing nesting inequalities, we consider conditioning on $Z$ rather than conditioning on the propensity score. Consider the difference between the probabilities of $\{U \in A,D=1 \}$ conditional on $Z$,
\begin{align}
    &\Pr(U \in A, D = 1|Z = 0) - \Pr(U \in A, D=1| Z = 1) \label{eq:prob_difference} \\
    =&\int_0^1 \left[ \Pr(U \in A, D=1 |p(X,Z)=p, Z=0) f(p|Z=0) - \Pr(U \in A, D=1 |p(X,Z)=p,Z=1) f(p|Z=1) \right] dp,  \notag
\end{align}
where $f(p|Z=z)$ denotes the conditional probability density (or probability mass function) of the propensity score $p(X,Z)$ given $Z=z$. 
% This can be rewritten as
% \begin{multline}
%     \int_0^1 \left[ \Pr(U \in A, D=1 |p(X,Z)=p, Z=0) - \Pr(U \in A, D=1 |p(X,Z)=p,Z=1) \right] f(p|Z=0)  dp \\
%      - \int_0^1  \Pr(U \in A, D=1 |p(X,Z)=p, Z=0) \left[f(p|Z=1) - f(p|Z=0) \right] dp \label{eq:prob_difference_rearragned}
% \end{multline}
% If index sufficiency holds, the first term is zero. If the nesting inequalities hold, $Pr(U \in A, D=1 |p(X,Z)=p, Z=0)$ is weakly increasing in $p$. 
% However, the sign of the second term depends also on the difference between conditional propensity score distributions $f(p|Z=1) - f(p|Z=0)$. If $p(X,Z)|Z=0$ is first-order stochastically dominated by $p(X,Z)|Z=1$ then, under the null, the second integral is weakly positive\footnote{Under the null, $\Pr(U \in A, D=1 |p(X,Z)=p, Z=0) = \Pr(U \in A, D=1 |p(X,Z)=p)$ (index sufficiency), and  $\Pr(U \in A, D=1 |p(X,Z)=p)$ is weakly increasing in $p$  (nesting inequalities). The integral being weakly positive follows from the definition of first-order stochastic dominance.}, and the difference overall is weakly negative. 

Observing a positive sign for (\ref{eq:prob_difference}) can imply three possibilities: 
\begin{enumerate}
    \item[\textbf{Case 1}:] If the distribution of $p(X,Z)|Z=0$ is first-order stochastically dominated by the distribution of $p(X,Z)|Z=1$, a positive sign implies index sufficiency is violated, the nesting inequalities are violated, or both. In particular, if index sufficiency holds it implies violation of the nesting inequalities, i.e., monotonicity of $\Pr(U \in A, D=1 |p(X,Z)=p)$ in $p$ fails for some value of $p \in (0,1)$ with a positive measure.
    
    \item[\textbf{Case 2}:]  If $f(p|Z=0) = f(p|Z=1)$, a positive sign for (\ref{eq:prob_difference}) implies violation of index sufficiency at some value of $p \in (0,1)$ with a positive measure.
    
    \item[\textbf{Case 3}:]  If the distribution of $p(X,Z)|Z=0$ is not first-order stochastically dominated by the distribution of $p(X,Z)|Z=1$, a positive sign does
    \textit{not} imply violation of index sufficiency or the nesting inequalities. i.e., (\ref{eq:prob_difference}) can be positive even under the null.
\end{enumerate}
Hence, (\ref{eq:prob_difference}) is informative for detecting violations of instrument validity only if first-order stochastic ordering holds between the distributions of $p(X,Z)|Z=0$ and $p(X,Z)|Z=1$. 
First-order stochastic dominance would hold, for instance, if the propensity score were given by a probit function $\Phi(X'\delta + \gamma Z)$ with $\gamma > 0$, and $X$ and $Z$ were uncorrelated. However, first-order stochastic dominance is not implied by (A1), (A3), (A4) and (A5). If $\gamma > 0$, $\delta > 0$ and $X$ and $Z$ are negatively correlated, first-order stochastic dominance can fail even if (A1) - (A5) hold. In general, a positive value of  (\ref{eq:prob_difference}) cannot be taken as evidence against instrument validity.  

Case 3 renders a positive sign of (\ref{eq:prob_difference}) inconclusive about violations of instrument validity. To rule out Case 3, first-order stochastic dominance can be checked as part of the test procedure. In cases where it does not hold, we consider trimming the sample so that the conditional distribution of the propensity score $p(X,Z)$ given $Z$ is stochastically monotonic in $Z$. We then assess the sign of an inequality analogous to (\ref{eq:prob_difference}) using the potentially trimmed sample. We refer to this process as \textit{distillation}.

Given an instrument, $Z \in \mathcal{Z} \subset \mathbb{R}$, we define a binary indicator $S_1 \in \{ 0, 1 \}$ such that $S_1=1$ indicates that an observation is included in the sample used for testing nesting inequalities, and $S_1=0$ indicates that the observation is trimmed and discarded. The distribution of $S_1 \in \{0, 1\}$ can depend on $(X,Z)$, but we require $S_1$ to be independent of $(U_1,U_0,V)$.

Under the assumptions of Proposition \ref{prop2}, index sufficiency of the propensity score for the distribution of $(U_d,V)$ implies
\begin{align}
    & \Pr(U \in A, D=1|Z=z, S_1=1) \notag \\ 
    = & \int_0^1 \Pr(U_1 \in A, D=1|p(X,Z)=p,Z=z, S_1=1) dP_{p|ZS_1}(p|z^,S_1=1) \notag \\
    = & \int_0^1 \Pr(U_1 \in A, V \leq p |p(X,Z)=p,Z=z, S=1) dP_{p|ZS_1}(p|z,S_1=1) \notag \\
    = & \int_0^1 \Pr(U_1 \in A, V \leq p) dP_{p|ZS_1}(p|z,S_1=1), \label{eq:distilledIV1}
\end{align}
where $P_{p|Z S_1}$ is the conditional distribution of the propensity score $p(X,Z)$ given $Z=z$ and $S_1=1$. The third equality follows by Assumption (A5) and $S_1$ being independent of $(U_1,U_0,V)$. Since the integrand in (\ref{eq:distilledIV1}) is monotonic in $p$ by Proposition \ref{prop2}, $\Pr(U \in A, D=1|Z=z,S_1=1)$ is monotonically increasing in $z$ if $P_{p|ZS_1}(\cdot|z,S_1=1)$ is monotonic in $z$ in terms of first-order stochastic dominance. Similarly, the stochastic monotonicity of $P_{p|ZS_1}(\cdot|z,S_1=1)$ in $z$ implies 
\begin{equation*}
    \Pr(U \in A, D=0|Z=z,S_1=1) = \int_0^1 \Pr(U_0 \in A, V > p) dP_{p|ZS_1}(p|z,S_1=1) \label{eq:distilledIV2}
\end{equation*}
is monotonically decreasing in $z$. 

We now formally define distillation
\begin{definition}[Distillation] \label{def:distilledIV}
Let $S_1 \in \{0,1 \}$ be an indicator for inclusion in the sample. $(Z,S_1)$ is a distillation rule if $(U_1,U_0,V) \perp S_1$ and the conditional distribution of the propensity score given $\{ Z,S_1 \}$, $P_{p|Z S_1}(\cdot|z,S_1=1)$, is stochastically increasing in $z \in \mathcal{Z}$ in terms of the first-order stochastic dominance.
\end{definition}

As discussed above, the nesting inequalities shown in Proposition \ref{prop2} remain valid even when we replace the propensity score $p(X,Z)$ with the instrument $Z$ and restrict the sample to a subsample with $S_1=1$. We hence obtain the next proposition.

\begin{proposition}\label{Distilled Instrument Testable Implications} Suppose the assumptions of Proposition \ref{prop2} holds. Let $(Z,S_1) \in \mathcal{Z} \times \{0,1\}$ be a 
distilled sample as defined in Definition \ref{def:distilledIV}. For any $z \leq z^{ \prime} \in \mathcal{Z}$ and every measurable set $A \subset \mathcal{Y}$,
\begin{eqnarray}
&&\Pr(U \in A,D=1| Z =z,S_1=1)\geq \Pr(U \in A,D=1|Z =z^{\prime },S_1=1), \label{eq:nesting_inequalities_distilled_IV} \\
&&\Pr(U \in A ,D=0|Z =z,S_1=1) \leq \Pr(U \in A, D=0|Z =z^{\prime },S_1=1).  \notag
\end{eqnarray}
\end{proposition}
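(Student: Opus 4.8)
The plan is to reduce both inequalities to a single standard fact: integrating a monotone function against a family of probability measures that is ordered by first-order stochastic dominance preserves that order. The integral representation I need is already established in the chain of equalities ending in (\ref{eq:distilledIV1}), which gives, for each $z$,
\[
\Pr(U \in A, D=1 \mid Z=z, S_1=1) = \int_0^1 g_1(p)\, dP_{p|ZS_1}(p \mid z, S_1=1),
\]
with $g_1(p) := \Pr(U_1 \in A,\, V \le p)$, and the analogous representation for $D=0$ with integrand $g_0(p) := \Pr(U_0 \in A,\, V > p)$. So the whole proposition reduces to controlling how these two integrals move with $z$.

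First I would record that the integrands $g_1,g_0$ depend on $p$ alone, with no residual dependence on $z$ or $S_1$. This is the only substantive use of the hypotheses beyond Definition~\ref{def:distilledIV}: index sufficiency (Proposition~\ref{prop2}(i)) removes the conditioning on $Z=z$ from the integrand, and strong exogeneity (A5) together with the independence $(U_1,U_0,V)\perp S_1$ from the distillation rule removes the conditioning on $p(X,Z)=p$ and on $S_1=1$. These are exactly the justifications for the second and third equalities displayed in (\ref{eq:distilledIV1}), so I would invoke them rather than reprove them.

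Next I would establish monotonicity of the integrands in $p$. Conditional on $p(X,Z)=p$ the events coincide, $\{U\in A, D=1\}=\{U_1\in A, V\le p\}$ and $\{U\in A, D=0\}=\{U_0\in A, V>p\}$, so the two nesting inequalities of Proposition~\ref{prop2}(ii) in (\ref{density}) say precisely that $g_1$ is nondecreasing and $g_0$ is nonincreasing in $p$. Finally, by Definition~\ref{def:distilledIV} the measures $P_{p|ZS_1}(\cdot\mid z, S_1=1)$ are ordered in $z$ by first-order stochastic dominance. Integrating the nondecreasing $g_1$ and the nonincreasing $g_0$ against this ordered family then yields monotonicity of the two conditional probabilities in $z$ in opposite directions, which is the content of (\ref{eq:nesting_inequalities_distilled_IV}); the orientation of each inequality is fixed by the direction of the stochastic dominance.

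The argument is essentially mechanical once the integral representation is granted, so there is no serious obstacle; the only place demanding care is the first step, where I must confirm that every conditioning variable other than $p$ can legitimately be dropped from the integrand. That reduction is where index sufficiency, (A5), and the $S_1$-independence built into the distillation rule all enter, and it is worth stating explicitly that the $S_1$-independence is exactly what keeps trimming from disturbing the integrand even though $S_1$ is permitted to depend on $(X,Z)$.
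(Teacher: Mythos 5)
Your proposal is correct and follows essentially the same route as the paper: the paper's own justification is exactly the chain of equalities in (\ref{eq:distilledIV1}) --- with the integrand reduced to a function of $p$ alone via index sufficiency, (A5), and the independence of $S_1$ from $(U_1,U_0,V)$ --- followed by the observation that the integrand is monotone in $p$ by Proposition \ref{prop2} and that integrating a monotone function against the first-order-stochastically ordered family $P_{p|ZS_1}(\cdot \mid z, S_1=1)$ preserves the order. The only point worth flagging is that your derivation (like the paper's own text preceding the proposition) makes $\Pr(U\in A, D=1 \mid Z=z, S_1=1)$ \emph{increasing} in $z$, whereas the displayed inequalities in (\ref{eq:nesting_inequalities_distilled_IV}) as written have the opposite orientation for $z \leq z'$; this is an internal sign inconsistency in the paper's statement, not a gap in your argument.
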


Proposition \ref{Distilled Instrument Testable Implications} shows that a testable implication exists in terms of the subdensities conditional on $Z$ provided that the sample is appropriately trimmed. However, it does not show any advantages of this approach relative to a direct test of the nesting inequalities in Proposition \ref{prop2} with a coarsened propensity score. In the proposition below, we characterise a class of processes and a restriction on trimming such that distillation will outperform a test of the nesting inequalities for observations binned by propensity score values in terms of detecting violations of instrument validity.

% While we only discuss trimming the sample here, the nesting inequalities hold for a broader class of transformations of the data. In particular, let $Z^{\ast} = g(X,Z,\epsilon)$ be a real-valued measurable function that maps $(X,Z, \epsilon) \in \mathcal{X} \times \mathcal{Z} \times \mathcal{E}$ to $\mathcal{Z}^{\ast} \subset \mathbb{R}$ and $S_1$ be the sample inclusion indicator defined above. The nesting inequalities hold for any $(Z^{\ast},S_1)$ such that $(U_1,U_0,V) \perp (\epsilon,S_1)$ and the conditional distribution of the propensity score given $\{ Z^{\ast},S_1 \}$, $P_{p|Z^{\ast}S_1}(\cdot|z^{\ast},S_1=1)$, is stochastically increasing in $z^{\ast} \in \mathcal{Z}^{\ast}$ in terms of the first-order stochastic dominance. While we do not investigate the possibility here, an appropriate choice of $g(X,Z,\epsilon)$ may further improve the power of the test.

\begin{proposition} \label{prop:distillation better}
Let $P^{-}$ and $P^{+}$ be disjoint intervals with the lower bound of $P^{+}$ weakly greater than the upper bound of $P^{-}$ such that $\Pr(p(X,Z) \in P^{-}, Z = z)$ and $\Pr(p(X,Z) \in P^{+}, Z = z)$ are positive for $z = 0, 1$. Consider the differences
\begin{eqnarray}
&&\Pr(U \in A,D=1|p \in P^{-}) - \Pr(U \in A,D=1|p \in P^{+}) , \label{eq:nesting_comparison_1} \\
&&\Pr(U \in A ,D=0|p \in P^{+}) - \Pr(U \in A, D=0|p \in P^{-}),  \label{eq:nesting_comparison_2}
\end{eqnarray}
and
\begin{align}
&\Pr(U \in A,D=1|Z =0,S_1=1, p \in P^{-} \cup P^{+}) - \Pr(U \in A,D=1| Z =1,S_1=1, p \in P^{-} \cup P^{+})) , \label{eq:distilled_comparison_1} \\
&\Pr(U \in A ,D=0|Z =1,S_1=1, p \in P^{-} \cup P^{+}) - \Pr(U \in A, D=0|Z =0,S_1=1, p \in P^{-} \cup P^{+}). \label{eq:distilled_comparison_2}
\end{align}

Assume that the selection indicator $S_1$ is chosen so that the following hold \\ 
\noindent (i) \textbf{Valid Distillation}: $S_1$ satisfies the conditions of Definition \ref{def:distilledIV} \\
(ii) \textbf{Trimming Rule}: $S_1 = 1$ for all observations with $Z = 0$ and $p \in P^{-}$, and all observations with $Z = 1$ and $p \in P^{+}$.

Let $\mathbb{P}$ be the class of data generating processes satisfying \\
\noindent (iii) \textbf{Labeling}: $Z=0$ and $Z=1$ are chosen such that
\begin{equation}
    \Pr(p \in P^{+}|Z = 0, p \in P^{-} \cup P^{+}) \leq \Pr(p \in P^{+}|Z = 1, p \in P^{-} \cup P^{+}) \label{eq:labeling restriction} 
\end{equation}\\
(iv) \textbf{Conditional Nesting}: For every $p\geq p^{\prime}$ in the support of the propensity score distribution, every measurable subset $A \subset \mathcal{Y}$, and $z \in \{0,1\}$:
\begin{eqnarray}
&&\Pr(U \in A,D=1|p\left( X,Z \right) =p, Z = z)\geq \Pr(U \in A,D=1|p\left(X,Z \right) =p^{\prime }, Z = z), \label{eq:conditional nesting D = 1} \\
&&\Pr(U \in A ,D=0|p\left( X,Z \right) =p, Z = z)\leq \Pr(U \in A, D=0|p \left(X,Z\right) =p^{\prime }, Z = z). 
\end{eqnarray}

The following statements hold:
\begin{enumerate}
    \item For all data generating processes in $\mathbb{P}$, for any measurable subset $A \subset \mathbb{R}$ and $P^{-}$, $P^{+}$ where (\ref{eq:nesting_comparison_1}) is positive
    \begin{multline*}
        \Pr(U \in A,D=1|Z =0,S_1=1, p \in P^{-} \cup P^{+}) - \\ \Pr(U \in A,D=1| Z =1,S_1=1, p \in P^{-} \cup P^{+})  \geq \\ \Pr(U \in A,D=1|p \in P^{-}) - \Pr(U \in A,D=1|p \in P^{+}).\label{eq:distilled_comparison_1}
    \end{multline*}
    Similarly, for any measurable subset $A \subset \mathbb{R}$ where (\ref{eq:nesting_comparison_2}) is positive
    \begin{multline*}
     \Pr(U \in A ,D=0|Z =1,S_1=1, p \in P^{-} \cup P^{+}) - \\ \Pr(U \in A, D=0|Z =0,S_1=1, p \in P^{-} \cup P^{+}) \geq \\
       \Pr(U \in A ,D=0|p \in P^{+}) - \Pr(U \in A, D=0|p \in P^{-}). 
    \end{multline*}
    \item There exist data generating processes in $\mathbb{P}$ where,  given $P^{-}$ and $P^{+}$, a positive value of (\ref{eq:distilled_comparison_1}) for some $A \subset \mathbb{R}$ does not imply a positive value of (\ref{eq:nesting_comparison_1}) for the same $A \subset \mathbb{R}$. Similarly, there exist processes in $\mathbb{P}$ where,  given $P^{-}$ and $P^{+}$,  a positive value of \eqref{eq:distilled_comparison_2} for some  $A \subset \mathbb{R}$ does not imply a positive value of \eqref{eq:nesting_comparison_2}. 
    \item Given $P^{-}$ and $P^{+}$, there exist data generating processes in $\mathbb{P}$ such that distillation can detect violation of instrument validity while the original nesting inequalities cannot, i.e., (\ref{eq:distilled_comparison_1}) or (\ref{eq:distilled_comparison_2}) is positive at some $A \subset \mathbb{R}$, while (\ref{eq:nesting_comparison_1}) and (\ref{eq:nesting_comparison_2}) are less than or equal to zero for any $A \subset \mathbb{R}$.
\end{enumerate}
\end{proposition}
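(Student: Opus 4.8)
The plan is to reduce both the coarsened differences (\ref{eq:nesting_comparison_1})--(\ref{eq:nesting_comparison_2}) and their distilled counterparts to mixtures, over the four cells indexed by $(Z,\text{bin})$ with $\text{bin}\in\{P^{-},P^{+}\}$, of the within-cell subdensities
\[
R_z^{-} := \Pr(U \in A, D=1 \mid Z=z, p \in P^{-}), \qquad R_z^{+} := \Pr(U \in A, D=1 \mid Z=z, p \in P^{+}),
\]
weighted by the relevant conditional propensity and instrument probabilities. Writing $a := \Pr(Z=1 \mid p \in P^{-})$ and $b := \Pr(Z=1 \mid p \in P^{+})$, the coarsened difference (\ref{eq:nesting_comparison_1}) equals $(1-a)R_0^{-}+aR_1^{-}-(1-b)R_0^{+}-bR_1^{+}$. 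I would first record the two structural facts used throughout: the labeling restriction (\ref{eq:labeling restriction}) is equivalent, after cross-multiplying, to the odds inequality $a \le b$; and conditional nesting (\ref{eq:conditional nesting D = 1}) together with $P^{+}$ lying weakly above $P^{-}$ gives the within-$Z$ monotonicity $R_z^{-} \le R_z^{+}$ for each $z$ (and the reversed inequality for the $D=0$ subdensities, which are decreasing in $p$).

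The core of Claim 1 is the observation that positivity of the coarsened difference forces $R_0^{-}\ge R_1^{-}$ and $R_0^{+}\ge R_1^{+}$ simultaneously. I would prove this by a short case analysis on the signs of $\Delta^{-}:=R_0^{-}-R_1^{-}$ and $\Delta^{+}:=R_0^{+}-R_1^{+}$: substituting $R_1^{\pm}=R_0^{\pm}-\Delta^{\pm}$ rewrites (\ref{eq:nesting_comparison_1}) as $(R_0^{-}-R_0^{+})-a\Delta^{-}+b\Delta^{+}$, and in each configuration other than $\Delta^{-},\Delta^{+}\ge 0$ one checks, using $a\le b$ and $R_z^{-}\le R_z^{+}$, that this expression is bounded above by a nonpositive quantity (for instance by $R_1^{-}-R_1^{+}\le 0$ after maximizing over $a\le b$), contradicting positivity. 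Next I bound the distilled difference below. Since $S_1$ is independent of $(U_1,U_0,V)$ and the trimming rule retains every observation in the aligned cells $\{Z=0,p\in P^{-}\}$ and $\{Z=1,p\in P^{+}\}$, the subdensities of those cells are unaffected by conditioning on $S_1=1$, while monotonicity bounds the reweighted subdensities of the two trimmed cells by $\Pr(U\in A,D=1\mid Z=0,S_1=1,p\in P^{+})\ge R_0^{-}$ and $\Pr(U\in A,D=1\mid Z=1,S_1=1,p\in P^{-})\le R_1^{+}$. Mixing over bins then shows the distilled difference is at least $R_0^{-}-R_1^{+}$, the value of the most aggressive admissible trimming. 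Finally, $R_0^{-}-R_1^{+}$ minus the coarsened difference equals $a\Delta^{-}+(1-b)\Delta^{+}$, which is nonnegative by the sign result; chaining gives distilled $\ge R_0^{-}-R_1^{+}\ge$ coarsened. The $D=0$ statement follows symmetrically with all inequalities reversed.

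Claims 2 and 3 are constructive. For Claim 2 I would exhibit a process in $\mathbb{P}$ — for example the exclusion violation $U_1=\alpha_1+\nu Z+\tilde U_1$ of (\ref{eq:section 2.3 violation process}) with a covariate/propensity design that keeps $a\le b$ but makes $b-a$ small — together with a set $A$ for which the distilled difference is positive while the coarsened difference is zero or negative; this only requires choosing the weights so the coarsened mixture cancels a difference that distillation, by concentrating $Z=0$ on $P^{-}$ and $Z=1$ on $P^{+}$, preserves. For Claim 3 I would sharpen this to the regime of Section \ref{sec: distilled instrument discussion} in which $X$ drives essentially all variation in the propensity score and $X\perp Z$, so that $\Pr(Z=1\mid p)$ is constant in $p$, $a=b$, and the coarsened differences (\ref{eq:nesting_comparison_1})--(\ref{eq:nesting_comparison_2}) are nonpositive for every $A$ (the coarsened nesting inequalities hold), while the $Z$-dependence induced by $\nu\neq 0$ leaves the distilled difference strictly positive for some $A$. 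In each case one must verify the construction lies in $\mathbb{P}$, i.e.\ satisfies labeling and conditional nesting, and that $S_1$ conforms to Definition \ref{def:distilledIV} and the trimming rule.

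The main obstacle is the sign result inside Claim 1: showing that positivity of the coarsened difference, under labeling and monotonicity \emph{alone}, excludes the three adverse sign configurations of $(\Delta^{-},\Delta^{+})$. This is the conceptual heart, since it is exactly what turns the geometric intuition — distillation spreads the instrument across propensity bins and thereby amplifies a violation — into the dominance inequality. For Claims 2 and 3 the delicate point is engineering a single data generating process that simultaneously fails instrument validity, satisfies conditional nesting and labeling, and renders the coarsened comparison uninformative for all $A$ while leaving the distilled comparison with detecting power.
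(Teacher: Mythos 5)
Your proposal is correct and follows essentially the same route as the paper's proof: the same decomposition into the four $(Z,\text{bin})$ cells, the same structural lemmas (labeling gives $a\le b$, conditional nesting gives within-$Z$ monotonicity across bins, the trimming rule fixes the two aligned cells and lets the trimmed cells be bounded by monotonicity), the same chaining of the distilled difference through the intermediate quantity $R_0^{-}-R_1^{+}=\mathcal{F}(P^{-},0)-\mathcal{F}(P^{+},1)$ for Statement 1, and the same irrelevant-instrument, $a=b$ construction for Statements 2 and 3. The only cosmetic difference is that you exclude the adverse sign configurations by a direct case analysis on $(\Delta^{-},\Delta^{+})$, whereas the paper deduces the full ordering $\mathcal{F}(P^{-},1)\le\mathcal{F}(P^{+},1)\le\mathcal{F}(P^{-},0)\le\mathcal{F}(P^{+},0)$; both arguments deliver exactly the inequalities $\Delta^{-}\ge 0$ and $\Delta^{+}\ge 0$ that the chaining step requires.
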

A proof is presented in Appendix \ref{sec:distillation power improvement}

Regarding the construction of $S_1$, condition (ii) requires that all observations with $Z = 0$ and  $p \in P^{-}$ are retained, and all observation with  $Z = 1$ and $p \in P^{+}$ are retained. On the other hand, no restriction is placed on the trimming of observations with $p \in P^{+}$ and $Z = 0$ or observations with $p \in P^{-}$ and $Z = 1$. That is, first order stochastic dominance should be attained solely through trimming observations with $Z = 0$ and a high propensity score, and observations with $Z = 1$ and a low propensity score. In Section \ref{sec:distillation procedure} we present an algorithm that satisfies these conditions.  

The inequality \eqref{eq:labeling restriction} can always be satisfied by setting $Z = 1$ for the instrument value with a higher proportion of observations in $P^{+}$. Condition (iv) imposes restrictions on the dependence between the partial residuals and $Z$. For example, condition (iv) follows if $Z$ affects the density of partial residuals, and this relationship is independent of $X$ i.e. the partial residuals admit the representation
\begin{equation*}
    U_d = \nu Z + \tilde{U}_d \quad \tilde{U}_d \perp (X,Z).
\end{equation*}

The first statement of Proposition \ref{prop:distillation better} says that, for any $A$ where a nesting inequality calculated using propensity score bins is positive, the nesting inequality calculated using a distilled sample will also be positive and at least as large in magnitude. 
That is, any violation of instrument validity that can be detected by a comparison of the data across the propensity score bins can also be detected by a comparison of the data across different instrument values, and the magnitude of the violation can be larger when comparing across instrument values. The second statement says that, for some $A$ where the nesting inequality calculated using a distilled sample is positive, the nesting inequality calculated with propensity score bins is not positive. In other words, at such $A$ the violation becomes undetectable if we compare across propensity score bins rather than instrument values. The second statement compares the nesting inequalities and distilled inequalities at common $A$, while the third statement compares their maxima over $A$. The third statement says that there exist data generating processes where instrument validity does not hold, but this violation can only be detected by comparing across instrument values. This proposition characterises a class of data generating processes where distillation is guaranteed to outperform the coarsened propensity score approach in terms of detecting invalid instruments. It thus provides theoretical support for implementing distillation.

\subsection{Example Process}\label{sec:example process}

  Proposition \ref{prop:distillation better} implies that we can more easily detect violations of instrument validity by testing the nesting inequalities with a distilled sample. To verify the practical relevance of this gain, we perform a numerical exercise using the following process
\begin{align*}
    & Y = \nu Z - \nu (1 - Z) + D (X^{\prime}\theta_1 + \tilde{U}_1) + (1 - D) (X^{\prime}\theta_0 + \tilde{U}_0), \\
    & D = \mathbbm{1}\{Z(X^{\prime}\delta_1 + \alpha) + (1 - Z)(X^{\prime}\delta_0 - \alpha) + U_D \geq 0\}, \\
    & \Pr(Z = 1) = 1 / 2, \\
    & X \sim N(0, I), \\
    & (U_D, \tilde{U}_1, \tilde{U}_0) \sim N(\mu, \Sigma), \\
    & \mu = \begin{pmatrix}
0 \\
\mu_1 \\
\mu_0 \\
    \end{pmatrix} \quad
      \Sigma = \begin{pmatrix}
        1 &  \sigma_{1} \rho_{D, 1} & \sigma_0 \rho_{D, 0} \\
        \sigma_1 \rho_{D, 1} & \sigma_1^2 & \sigma_1 \sigma_0\rho_{1, 0} \\
        \sigma_0 \rho_{D,0} & \sigma_1 \sigma_0 \rho_{1, 0} & \sigma_0^2\ \\
    \end{pmatrix}.
\end{align*} 
$X$ is assumed to consist of $k_X \geq 3$ covariates. We assume $\nu \neq 0$, so the instrument has a direct causal effect on the outcome. In addition, we impose $\alpha > 0$ and $\delta_0^{\prime}\delta_0 = \delta_1^{\prime}\delta_1 = \delta^{\prime}\delta > 0$. 

We consider the asymptotic setting, with both propensity scores and the parameters $\theta_1$ and $\theta_0$ estimated. We assume the propensity scores are estimated consistently. Given the above specification, the distribution of propensity scores conditional on $Z = 1$ will first-order stochastically dominate the distribution conditional on $Z = 0$. Conditions (i) and (ii) of Proposition \ref{prop:distillation better} can then be satisfied by including the entire sample, and condition (iii) is immediate. 

Let $\hat{\theta}_1^{*} $ and $\hat{\theta}_0^{*}$ be the probability limits of the estimates of $\theta_1$ and $\theta_0$. We assume these estimates are obtained semiparametrically.\footnote{Specifically, we assume these estimates are obtained as described in Section \ref{sec:test procedure}.} The partial residuals are
\begin{equation*}
    U = Y - DX^{\prime}\hat{\theta}_1^{*} - (1 - D) X^{\prime}\hat{\theta}_0^{*}.
\end{equation*}
In Appendix \ref{app:numerical example} we show that condition (iv) of Proposition \ref{prop:distillation better} follows immediately if these estimates are consistent i.e. $\hat{\theta}_1^{*} = \theta_1$ and $\hat{\theta}_0^{*} = \theta_0$. However, as the exclusion restriction is violated, in general $\hat{\theta}_1^{*} \neq \theta_1$ and $\; \hat{\theta}_0^{*} \neq \theta_0$. Proposition \ref{prop:bias simplification} in Appendix \ref{app:numerical example} shows that the effect of the asymptotic bias on the partial residual subdensities can be summarised by two constants which depend on the parameters $\nu, \alpha$ and $\delta^{\prime}\delta$ introduced above, and $\rho_{\delta}$ 
\begin{equation*}
    \rho_{\delta} \equiv \frac{\delta_0^{\prime}\delta_1}{\delta^{\prime}\delta}.
\end{equation*}
Furthermore, $\nu$ enters only as a scalar multiplier. Hence, the numerical exercise considers a class of processes indexed by values of $\delta^{\prime}\delta$ and $\rho_{\delta}$. Proposition \ref{prop:bias simplification} further shows that the effect of the bias is the same for the density of $(U, D = 1)$ conditional on $Z = 1$ and the density of $(U, D = 0)$ conditional on $Z = 0$, and the density of $(U, D = 0)$ conditional on $Z = 1$ and the density of $(U, D = 1)$ conditional on $Z = 0$. As the conditional densities for $(U, D = 0)$ thus mirror those for $(U, D = 1)$, we present results for the $(U, D = 1)$ densities only.  

The numerical exercise is as follows. $\mu_1$ is set to 0.3, $\mu_0$ to 0, and the elements of $\Sigma$ are drawn randomly. We set $\alpha = 0.3$, $\nu = 0.2$, and loop over values of $\rho_{\delta}$ and $\delta^{\prime}\delta$.
We define $P^{-}$ to include propensity scores below the median and $P^{+}$ to include propensity scores above the median. At each set of values, we calculate the violation of the nesting inequality with observations split into bins by the median propensity score as
\begin{equation}
    \underset{A \in \mathcal{C}(\mathcal{Y})}{\text{max}} \left\{\Pr(U \in A ,D=1|p \in P^{-}) - \Pr(U \in A, D=1|p \in P^{+})\right\} \label{eq:numerical example nesting violation definition}.
\end{equation}
and the violation of the nesting inequality with a distilled sample as
\begin{equation}
      \underset{A \in \mathcal{C}(\mathcal{Y})}{\text{max}} \left\{ \Pr(U \in A ,D=1|Z =0,S_1=1, p \in P^{-} \cup P^{+}) - \\ \Pr(U \in A, D=1|Z =1,S_1=1, p \in P^{-} \cup P^{+})\right\} \label{eq:numerical example distilled violation definition}. 
\end{equation}
where $\mathcal{C}(\mathcal{Y})$ is the set of connected intervals. We then compare these two violations. Figure \ref{fig:numerical example difference} plots the difference between the violation of the nesting inequality with a distilled sample and the nesting inequality with propensity score bins. For most parameter values, the violation with a distilled sample is larger, with the gain largest for high values of $\delta^{\prime}\delta$ and $\rho_{\delta}$ close to zero. Figure \ref{fig:numerical example regions} plots the region of the parameter space where the violation with a distilled sample is larger than the violation with propensity score bins, and the region where Statement 3 of Proposition \ref{prop:distillation better} holds. That is, the violation can only be detected by using a distilled sample. Testing with a distilled sample outperforms coarsening the propensity score for the majority of parameter values, and the region where Statement 3 holds is large.  

\begin{figure}
    \centering
    \includegraphics[scale = 0.8]{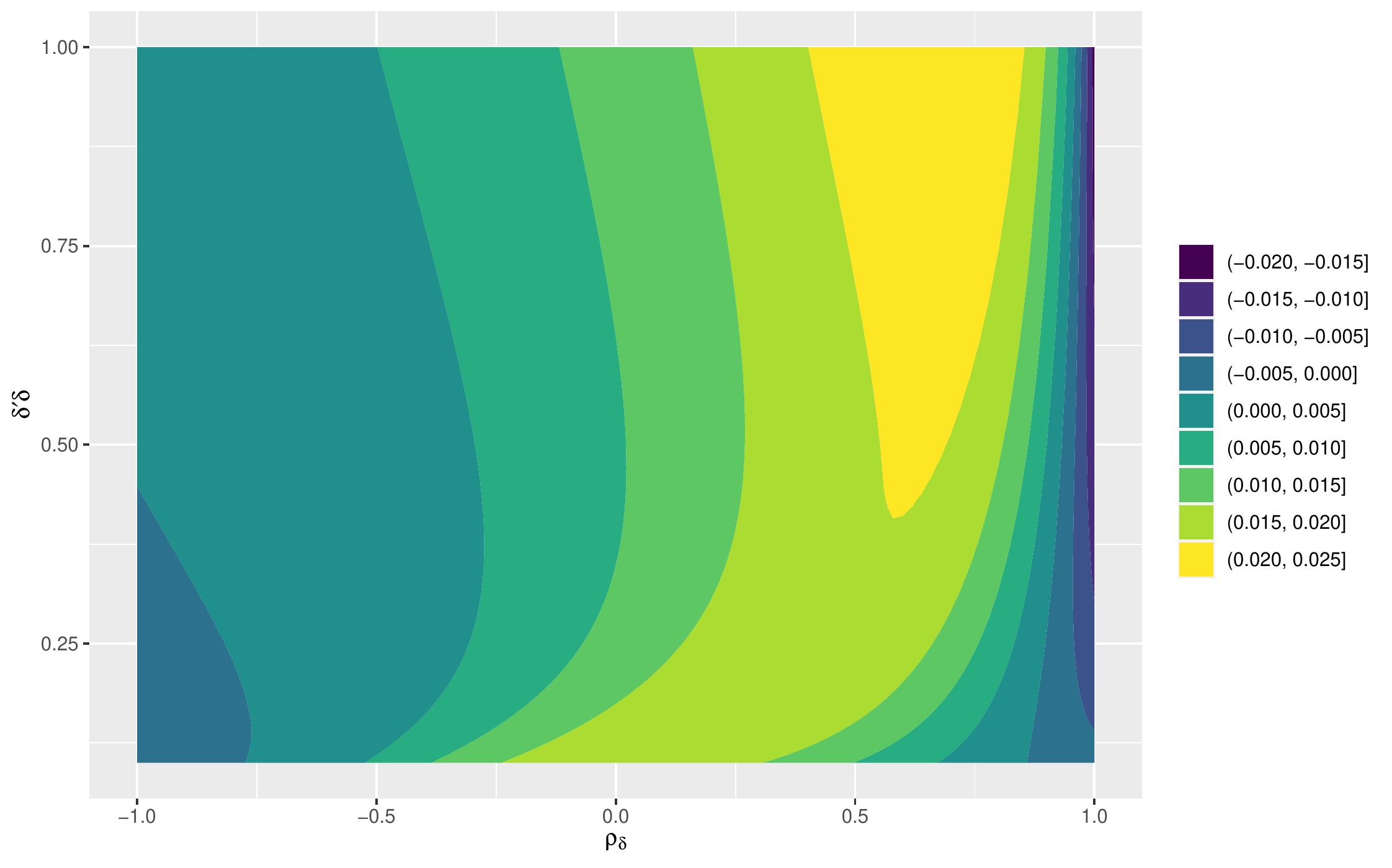}
    \caption{Contour plot of the difference between the violation of the nesting equality with a distilled sample and the violation of the nesting equality with a coarsened propensity score. A positive value implies that the violation with a distilled sample is larger than the violation with a coarsened propensity score. A negative value implies that the violation with a coarsened propensity score is larger that the violation with a distilled sample.}
    \label{fig:numerical example difference}
\end{figure}

\begin{figure}
    \centering
    \includegraphics{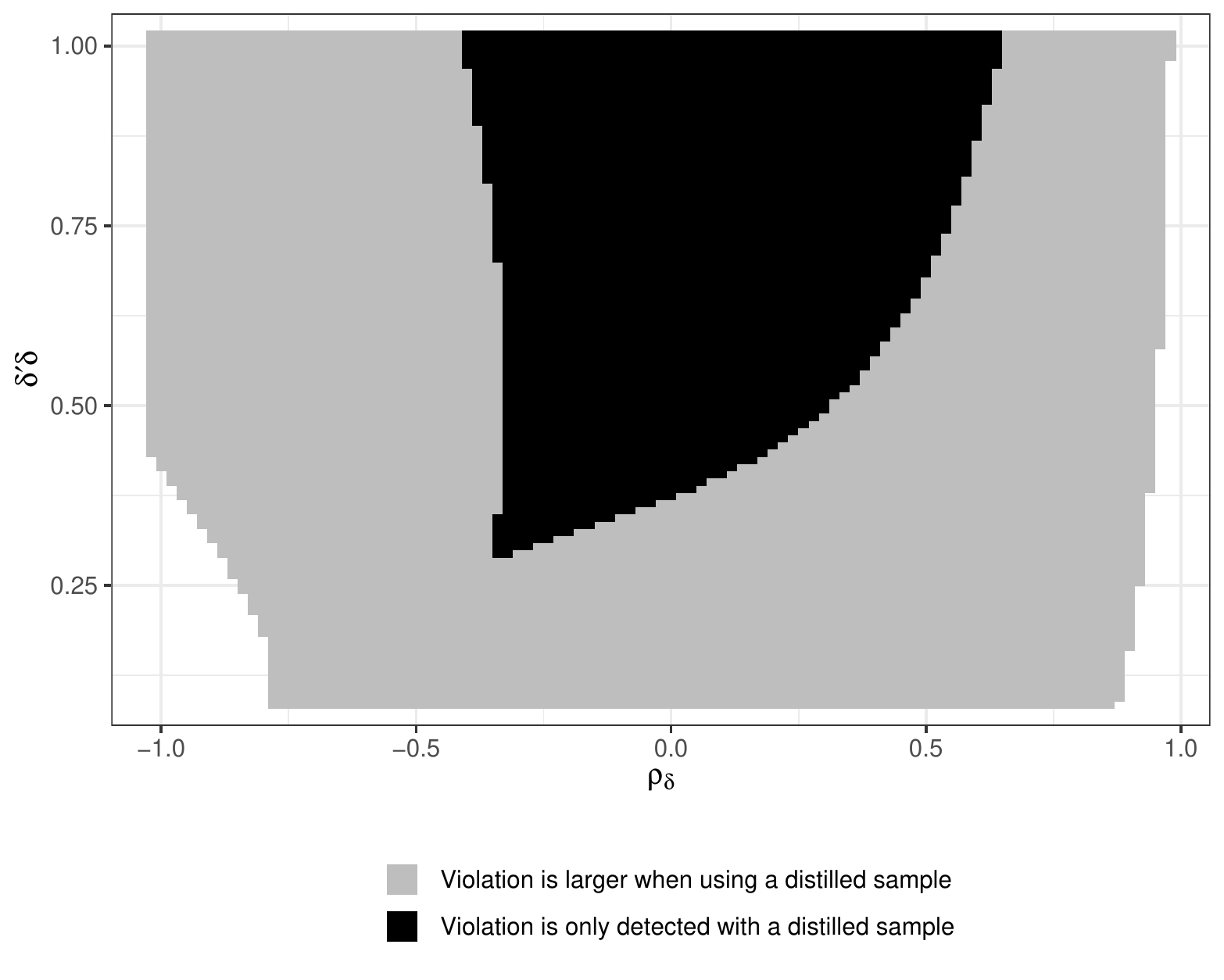}
    \caption{The region of parameter values where the violation of the nesting equality with a distilled sample exceeds the violation of the nesting equality with a coarsened propensity score, and the region where only the nesting equality with a distilled sample is violated. Each point corresponds to a combination of $\rho_{\delta}$ and $\delta^{\prime}\delta$. Let $\mathcal{N}(\rho_{\delta}, \delta^{\prime}\delta)$ denote the violation of the nesting equality with a coarsened propensity score defined in \eqref{eq:numerical example nesting violation definition}, and let $\mathcal{D}(\rho_{\delta}, \delta^{\prime}\delta)$ denote the violation of the nesting inequality with a distilled sample defined in \eqref{eq:numerical example distilled violation definition}. A point is shaded gray if $\mathcal{D}(\rho_{\delta}, \delta^{\prime}\delta) \geq  \mathcal{N}(\rho_{\delta}, \delta^{\prime}\delta) \geq 0$. A point is shaded black if $\mathcal{D}(\rho_{\delta}, \delta^{\prime}\delta) > 0$ and $\mathcal{N}(\rho_{\delta}, \delta^{\prime}\delta) < 0$.}
    \label{fig:numerical example regions}
\end{figure}

\subsection{A Distillation Procedure}\label{sec:distillation procedure}

As in Proposition 5, consider data such that $p \in P^{-} \cup P^{+}$. Let $i$ index observations. Sort observations $i = 1,...,N$ into ascending order by propensity score, so that observation $i = 1$ has the lowest propensity score and observation $i=N$ the highest. Split ties by ordering observations with $Z=0$ before observations with $Z=1$. Let $S_{1,i}$ be the sample inclusion indicator for observation $i$. First-order stochastic dominance holds when
\begin{equation}\label{eq:first order stochastic dominance constraints}
    \frac{\sum_{i=1}^j S_{1,i} Z_i}{\sum_{i=1}^n S_{1,i} Z_i} \leq \frac{\sum_{i=1}^j S_{1,i} (1 - Z_i)}{\sum_{i=1}^n S_{1,i} (1 - Z_i)} \quad \text{for} \: j=1,...,N.
\end{equation}
These constraints restrict the conditional propensity score CDF for $Z=1$ to never exceed the conditional CDF for $Z=0$.

It is immediate from \eqref{eq:first order stochastic dominance constraints} that
\begin{enumerate}
    \item $S_{1,i} = 0$ for all $i$ with $Z_i = 1$ and $p(X_i, Z_i) < \underset{j}{\text{min}}(p(X_j,Z_j):Z_j = 0)$,
    \item $S_{1,i} = 0$ for all $i$ with $Z_i = 0$ and $p(X_i, Z_i) > \underset{j}{\text{max}}(p(X_j,Z_j):Z_j = 1)$.
\end{enumerate}
That is, the observation in the trimmed sample with the lowest propensity score must have $Z=0$ and the observation with the highest propensity score must have $Z=1$. In all the applications considered in Section \ref{sec:applications}, no further trimming is necessary. Assume that we begin with a sample that has been trimmed in this manner, so that $Z_i=0$ for $i = 1$ and $Z_i=1$ for $i = N$. Let $n_0$ be the number of observations with $Z = 0$ and $n_1$ be the number with $Z=1$. Define $\Delta_j$ to be the difference between the conditional propensity score distributions for $Z = 1$ and $Z = 0$ at point $j$ 
\begin{equation*}
    \Delta_j \equiv \frac{\sum_{i=1}^j Z_i}{n_1} - \frac{\sum_{i=1}^j (1 - Z_i)}{n_0}.  
\end{equation*}

The algorithm is as follows
\begin{enumerate}
    \item Set $S_{1i} = 1 \; \forall \; i$
    \item Calculate $ \Delta_j$ for $j = 1,...,N$. If $\text{max}(\Delta_j) \leq 0$, stop. If $\text{max}(\Delta_j) > 0$, continue to steps 3 and 4.
    \item Let $J^{-} = \{j:p_j \in P^{-}\}$
    \begin{enumerate}
        \item Find 
        \begin{align*}
            j^- &= \underset{j \in J^{-}}{\text{argmax}} \;d_{1j},\\
            d_{1} &= \underset{j \in J^{-}}{\text{max}} \; d_{1j},
    \end{align*}
    where
    \begin{equation*}
        d_{1j} = \text{ceil}\left( \frac{n_0 }{n_0 - \sum_{i=1}^j(1-Z_i)} n_1 \Delta_j\right),
    \end{equation*}
    \item Loop over $j = 1,...,j^-$. If $Z_j = 1$
     \begin{enumerate}
         \item Calculate
         \begin{equation*}
              \delta_{1j} = \frac{\sum_{i=1}^j S_{1,i} Z_i}{n_1 -  d_{1}} - \frac{\sum_{i=1}^j (1 - Z_i)}{n_0}.
         \end{equation*}
         \item If $\delta_{1j} > 0$, set $S_{1j} = 0$
     \end{enumerate}
    \end{enumerate}
   \item Let $J^{+} = \{j:p_j \in P^{+}\}$
   \begin{enumerate}
       \item  Find
        \begin{align*}
            j^+ &= \underset{j \in J^{+}}{\text{argmax}} \;   d_{0j}, \\
            d_{0} &= \underset{j \in J^{+}}{\text{max}} \;   d_{0j},
        \end{align*}
        where
        \begin{equation*}
            d_{0j} =  \text{ceil}\left(n_0 \frac{n_1 - d_{1}}{\sum_{i=1}^j Z_i - d_{1}} \left(\Delta_j - \frac{d_{1}}{n_1 - d_{1}} \frac{n_1 - \sum_{i = 1}^j Z_i}{n_1}\right) \right).
        \end{equation*}
     \item Loop over $j = N, ..., j^+ + 1$. If $Z_j = 0$
        \begin{enumerate}
            \item Calculate
            \begin{equation*}
              \delta_{0j} =  \frac{\sum_{i=j}^N S_{1,i}(1 - Z_i)}{n_0 - d_{0}} - \frac{\sum_{i=j}^N  Z_i}{n_1 - d_{1}}.
            \end{equation*}
            \item If $\delta_{0j} > 0$, set $S_{1j} = 0$
        \end{enumerate}
    \end{enumerate}
\end{enumerate}

The algorithm begins by including all possible observations. In step 2, it checks if first-order stochastic dominance holds with all possible observations included. If stochastic dominance would not hold, we proceed to steps 3 and 4. Step 3 considers observations with $p \in P^{-}$. In step 3a, the algorithm calculates for each index $j$ the number of observations with $i \leq j$ and $Z_i = 1$ that need to be trimmed for there to be no violation of stochastic dominance at $p(X_j, Z_j)$. Taking the maximum gives the overall number of observations that must be trimmed, $d_1$, and the index by which this trimming must take place, $j^-$. Step 3b then selects the observations to be trimmed. Looping over $j$ from 1 to $j^-$, an observation is trimmed whenever its inclusion would lead the trimmed propensity score distribution for $Z = 1$ to cross the propensity score distribution for $Z = 0$. Step 4 is analogous to step 3, but instead considers observations with $p \in P^{+}$. Step 4a calculates the number of observations with $i \geq j$ and $Z_i = 1$ that need to be trimmed for there to be no violation of stochastic dominance at $p(X_j, Z_j)$. Taking the maximum of this gives a number of observations to trim, $d_0$, and an index above which this trimming must take place $j^+$. In step 4b, an observation is trimmed whenever its inclusion would lead the trimmed complementary propensity score distribution for $Z = 0$ to cross the trimmed complementary distribution for $Z = 0$.

% In particular, consider trimming observations with $i \leq j$ and $Z_i = 1$, which will shift down the conditional CDF for $Z = 1$ at $j$. For each $j$ we calculate the number of observations that would have to be trimmed in this manner. The overall number of observations with $Z_i = 1$ that must be trimmed $d_{1}$ is then found by taking the maximum, and  $j^{-}$ is the index below which they must be trimmed. The $d_{1}$ trimmed observations are then chosen so that each observation $Z_i=0$ is matched with the $(n_1 - d_{1}) / n_0$ observations with $Z_i=1$ that are closest in propensity score.

Figure \ref{fig:subsample simple example} presents an example with artifical data. Panel \subref{fig:subsample simple cdfs} plots the empirical distributions of the propensity scores conditional on $Z$. First-order stochastic dominance is violated as the distribution for $Z = 1$ lies above the distribution for $Z = 0$ for some $p$. Panel \subref{fig:subsample simple d1} plots $d_{1j}$ for each $j \in J^{-}$. In step 3a, $d_1$ and $j^{-}$ are found by taking the maximum. In this case, $d_1 = 6$. Panel \subref{fig:subsample simple P^-} shows step 3b. An observation is trimmed whenever the trimmed propensity score distribution for $Z = 1$ would rise above the distribution for $Z = 0$. Panel \subref{fig:subsample simple d0} plots $d_{0j}$ for each $j \in J^{+}$. In step 4a, $d_0$ and $j^+$ are found by taking the maximum. In this case $d_0 = 11$. Panel \subref{fig:subsample simple P^+} shows step 4b. Here an observation is trimmed whenever the trimmed complementary propensity score distribution for $Z = 0$ would rise above the trimmed complementary distribution for $Z = 1$. Panel \subref{fig:subsample simple trimmed cdfs} shows the resulting trimmed distributions, where first-order stochastic dominance holds as required by distillation.

\begin{figure}
    \centering
    
    \begin{subfigure}[b]{0.49 \textwidth}
    \centering
    \includegraphics[width = \textwidth]{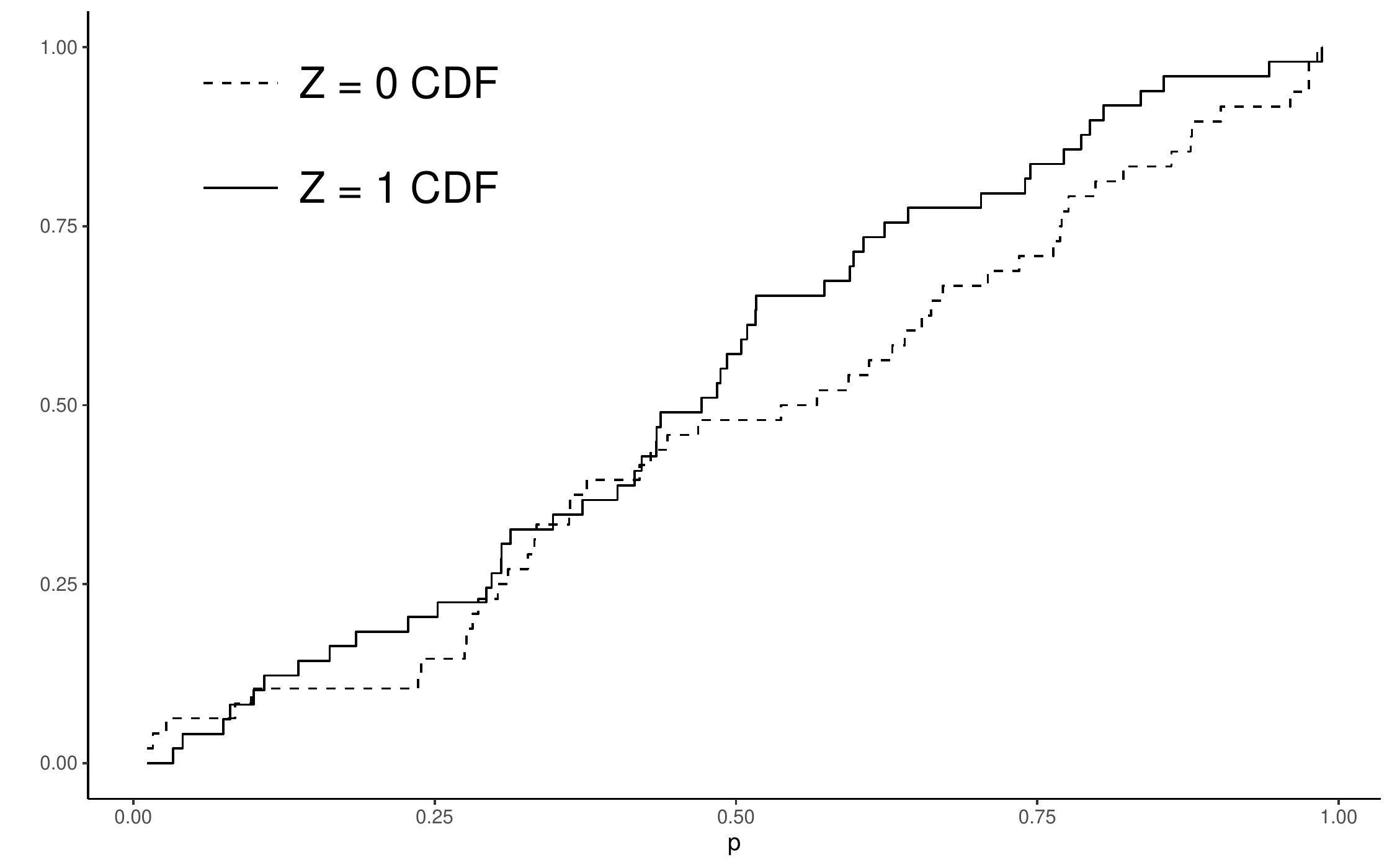}
    \caption{Empirical distributions}\label{fig:subsample simple cdfs}
    \end{subfigure}
    \begin{subfigure}[b]{0.49 \textwidth}
    \centering
    \includegraphics[width = \textwidth]{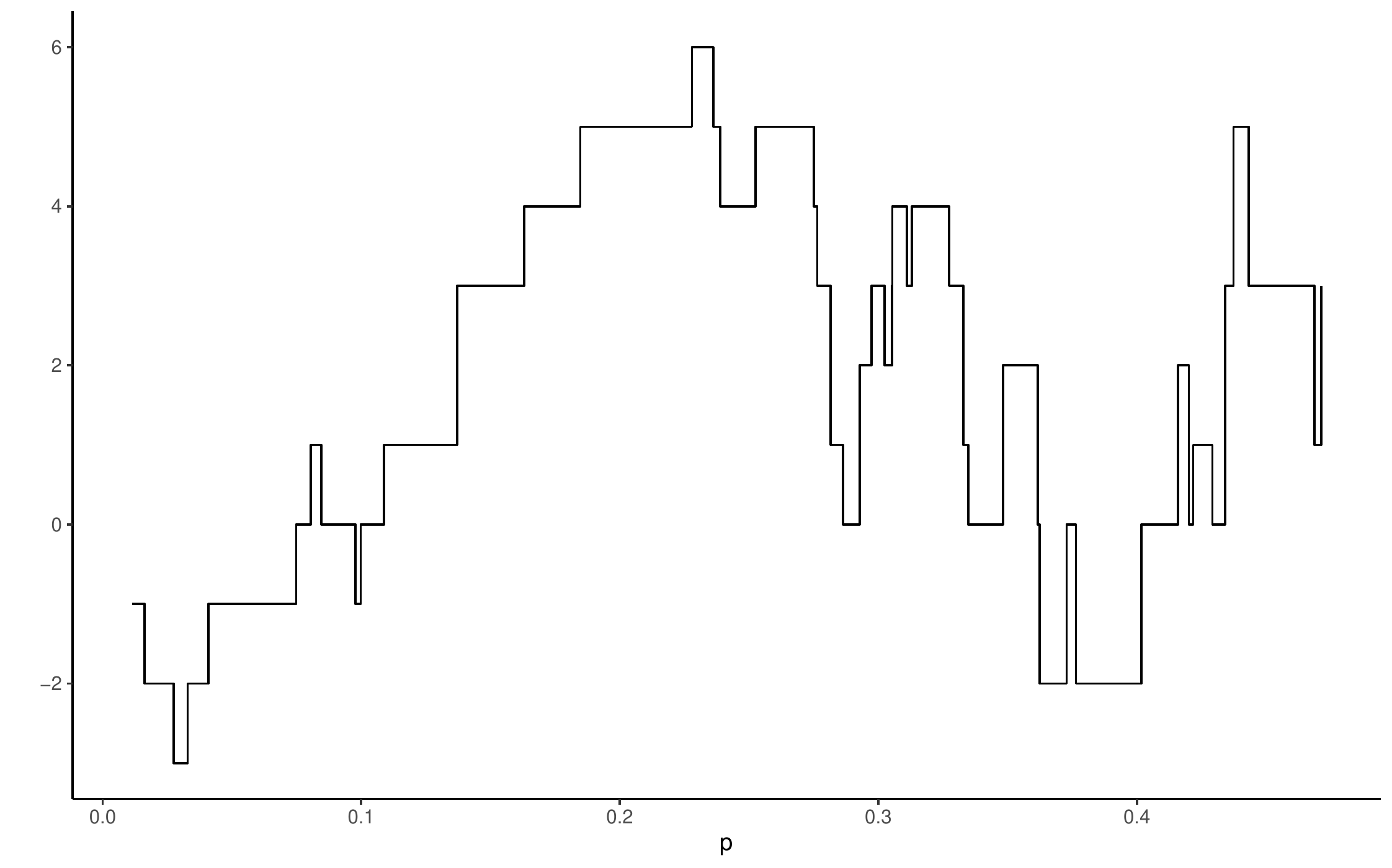}
    \caption{Step 3a}\label{fig:subsample simple d1}
    \end{subfigure}\\
    
    \begin{subfigure}[b]{0.49 \textwidth}
    \centering
    \includegraphics[width = \textwidth]{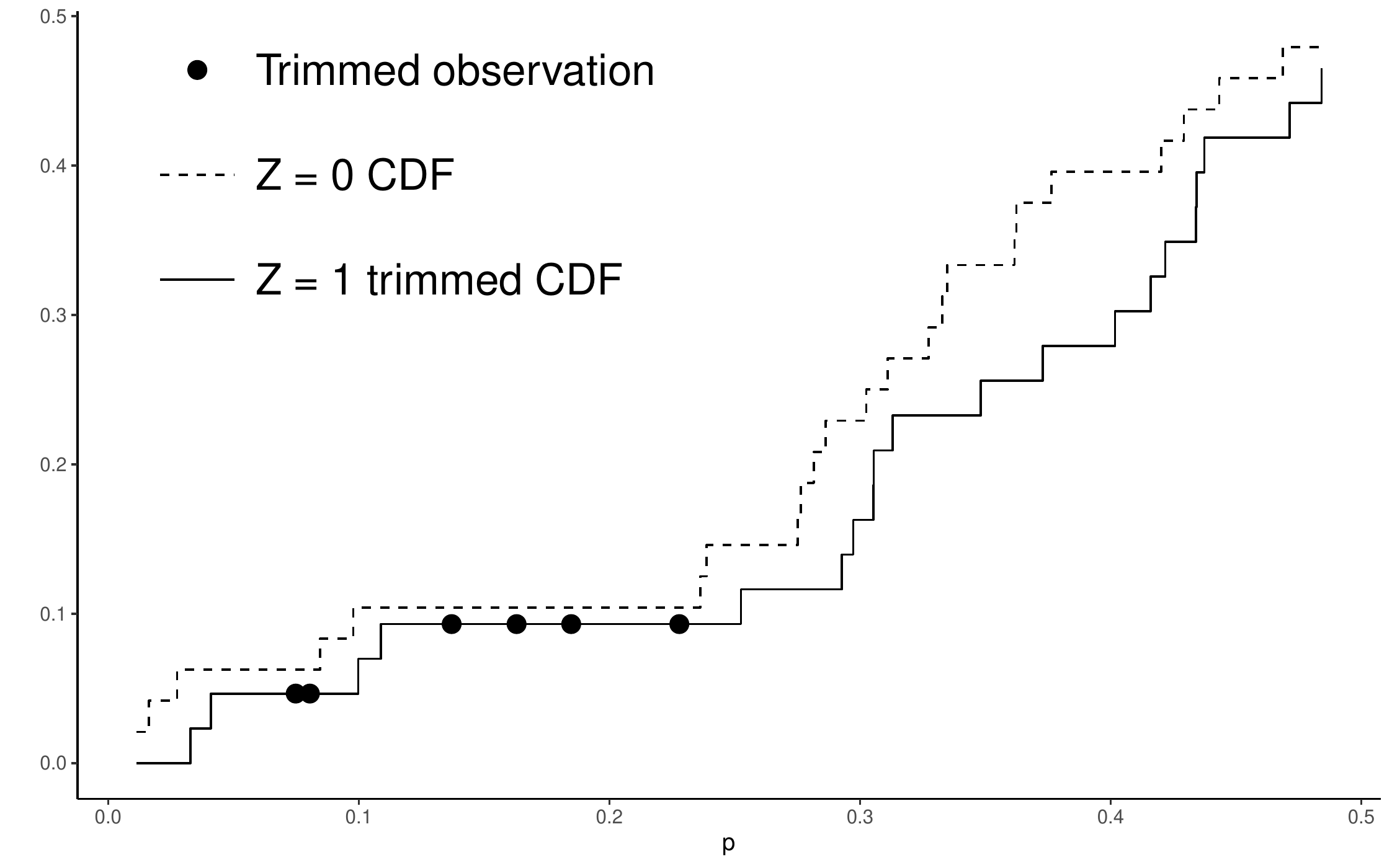}
    \caption{Step 3b}\label{fig:subsample simple P^-}
    \end{subfigure}
    \begin{subfigure}[b]{0.49 \textwidth}
    \centering
    \includegraphics[width = \textwidth]{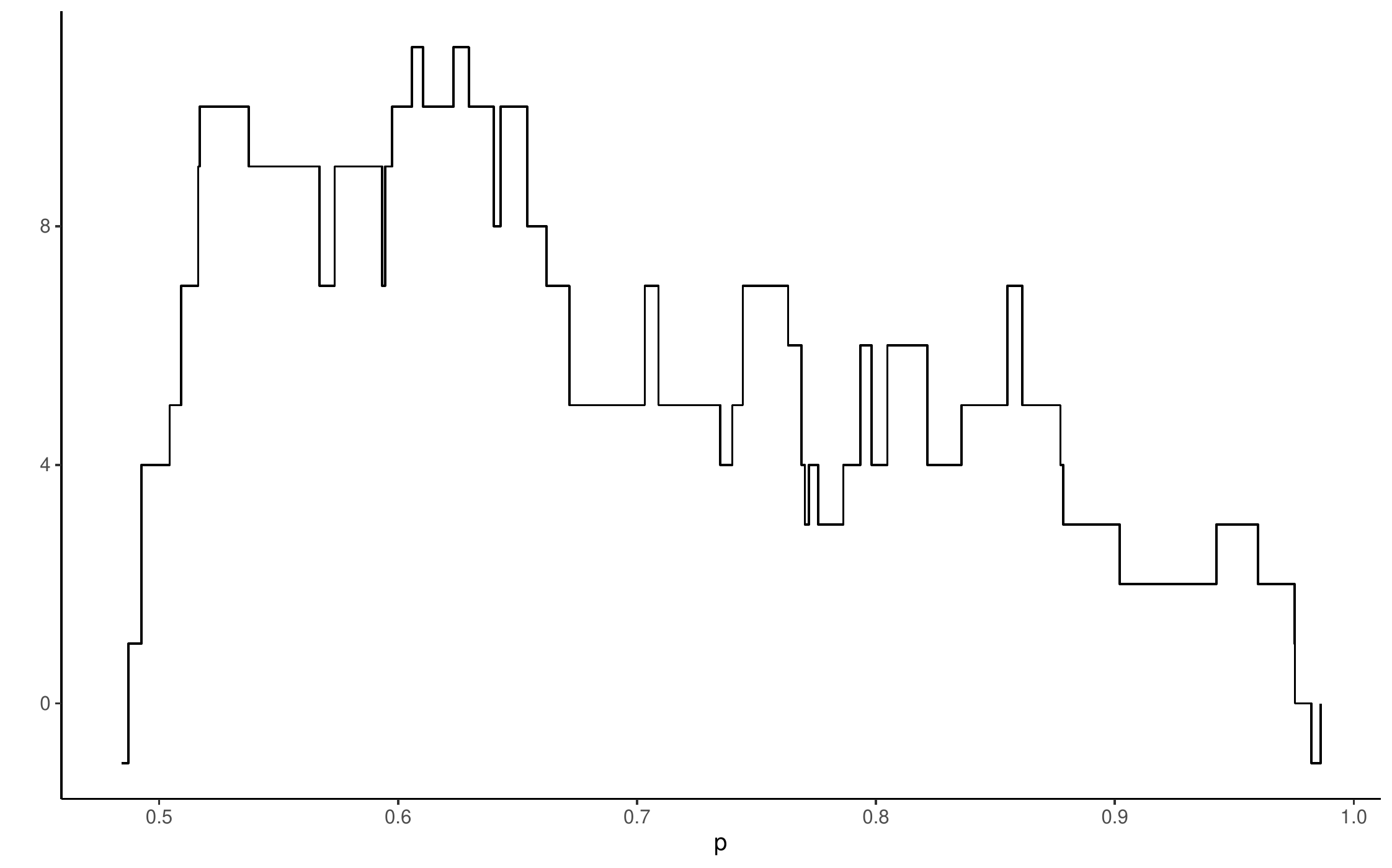}
    \caption{Step 4a}\label{fig:subsample simple d0}
    \end{subfigure}\\
     
    \begin{subfigure}[b]{0.49 \textwidth}
    \centering
    \includegraphics[width = \textwidth]{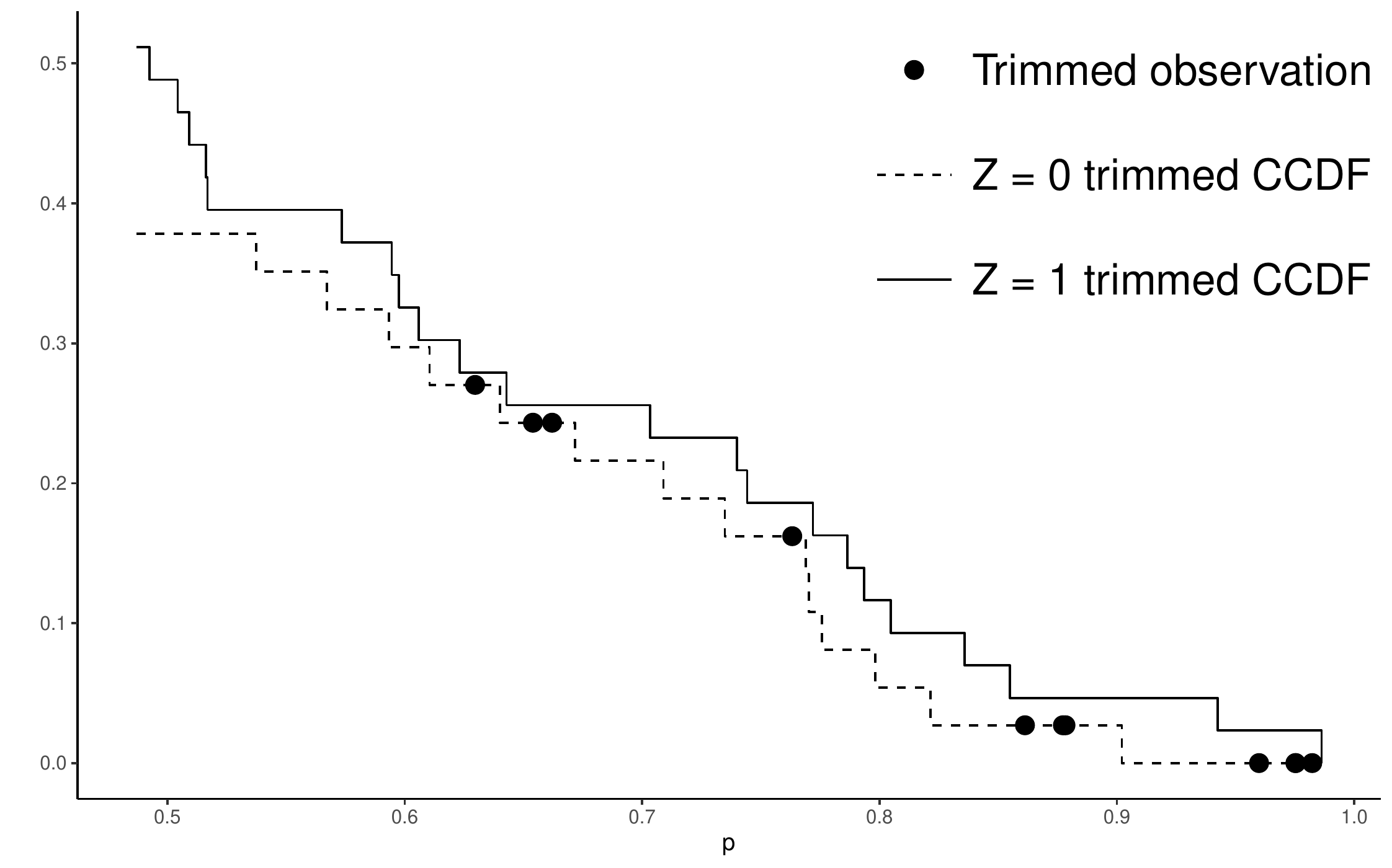}
    \caption{Step 4b}\label{fig:subsample simple P^+}
    \end{subfigure}
      \begin{subfigure}[b]{0.49 \textwidth}
    \centering
    \includegraphics[width = \textwidth]{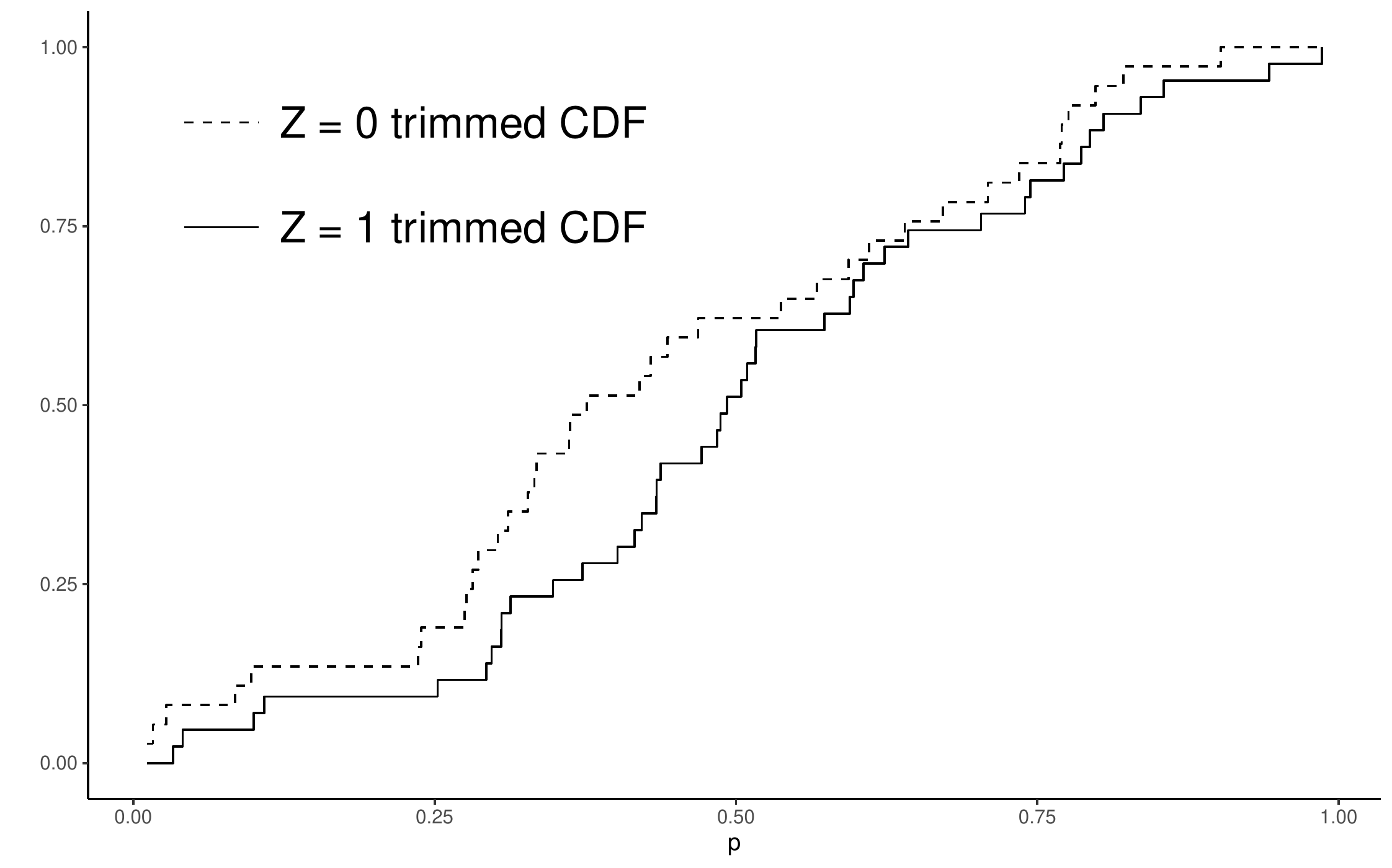}
    \caption{Trimmed distributions}\label{fig:subsample simple trimmed cdfs}
    \end{subfigure} 
    
    \caption{Example of the distillation algorithm. The sample consists of 100 observations. The data generating process sets $\Pr(Z = 1) = 1 / 2$ with propensity score uniformly distributed between 0 and 1 for $Z = 0, 1$. The interval $P^-$ includes all observations with a propensity weakly less than the median, while $P^+$ includes all observations with a propensity score above the median. Panel \subref{fig:subsample simple d1} plots $d_{1j}$ for all $j \in J^{-}$. Panel \subref{fig:subsample simple d0} plots $d_{0j}$ for all $j \in J^+$.} \label{fig:subsample simple example}
\end{figure}

% in this example, d0 = 11, d1 = 6

This algorithm satisfies  Definition \ref{def:distilledIV} as first order stochastic dominance is guaranteed to hold for the trimmed sample, and trimming is performed based on propensity score values only. In addition, as all observations with $Z_i = 0$ and $p \in P^{-}$ or $Z_i = 1$ and $p \in P^{+}$ are retained
\begin{equation*}
    \Pr(P \leq p|Z=0, S_1 = 1, p \in P^{-}) = \Pr(P \leq p|Z=0, p \in P^{-})
\end{equation*}
\begin{equation*}
    \Pr(P \leq p|Z=0, p \in P^{+}) = \Pr(P \leq p|Z=0, S_1 = 1, p \in P^{+})
\end{equation*}
which is sufficient to satisfy condition (ii) of Proposition \ref{prop:distillation better}. 

In general there will be many trimmed samples that satisfy Definition \ref{def:distilledIV} and condition (ii) of Proposition \ref{prop:distillation better}. To select a particular sample, in this algorithm we choose the observations that are trimmed in order to make the propensity score distributions for $Z = 0$ and $Z = 1$ as similar as possible. This can be motivated heuristically as follows: variation in subdensities driven by propensity score can mask variation in the subdensities driven by $Z$. Hence, we wish to compare subdensities for similar propensity score distribution.  

As the second step of the algorithm does not take into account any subsequent trimming of observations with $Z_i = 0$, it may trim more observations with $Z_i = 1$ than is necessary to attain stochastic dominance. Appendix \ref{sec:distillation algorithm details} details how the algorithm may be modified to avoid unnecessary trimming. The additional step essentially determines $d_0$ and $d_1$ by iterating using the algorithm above, so satisfies the conditions of Definition \ref{def:distilledIV} and condition (ii) of Proposition \ref{prop:distillation better} in the same manner. 

\section{Test Statistics and Procedure}   

We set the null hypothesis for our test to be the joint restrictions of index sufficiency (\ref{eq:index sufficiency statement}) and the nesting inequalities with a distilled sample (\ref{eq:nesting_inequalities_distilled_IV}) shown in Proposition \ref{Distilled Instrument Testable Implications}. This section presents the construction of our test statistics and an implementation procedure with a bootstrap algorithm for computing p-values. Our exposition of the test here is restricted to a binary instrument $Z \in \{1,0\}$, while the support of covariates $\mathcal{X}$ is unconstrained. Appendix \ref{sec:test procedure multi-valued discrete instrument} presents a test procedure for a single multi-valued discrete instrument. 

\subsection{Test statistics and bootstrap algorithm}\label{sec:test statistic}

Let $\lambda = \Pr(Z=1) \in (0,1)$ and $(Z,S_1) \in \{0,1 \} \times \{0,1 \}$ be a binary instrument and a sample inclusion indicator as defined in Definition \ref{def:distilledIV}. With binary $Z$, we can rewrite the nesting inequalities with a distilled sample (\ref{eq:nesting_inequalities_distilled_IV}) and index sufficiency of Proposition \ref{prop2}
as the following moment inequalities and equalities: for every measurable subset $A \subset \mathcal{Y}$,
\begin{align}
    &E[1\{ U_i \in A \}D_i | Z= 0, S_1 = 1 ] - E[1\{ U_i \in A \}D_i | Z= 1, S_1 = 1 ] \leq  0, \label{eq:moment_ineq_1} \\
    &E[1\{ U_i \in A \}(1-D_i) | Z= 1, S_1 = 1 ] - E[1\{ U_i \in A \}(1-D_i) | Z = 0, S_1 = 1 ] \leq  0, \label{eq:moment_ineq_2} \\
    & E\left[ \frac{1 \{U_i \in A  \} D_i (1- \lambda)}{ \Pr(Z=0|p(X_i,Z_i))} |Z=0, S_2 = 1 \right] - E\left[ \frac{1 \{U_i \in A  \} D_i \lambda}{ \Pr(Z=1|p(X_i,Z_i))} |Z=1, S_2 = 1 \right] = 0,\label{eq:moment_eq_1} \\
    & E\left[ \frac{1 \{U_i \in A  \} (1-D_i) (1- \lambda)}{ \Pr(Z=0|p(X_i,Z_i))} |Z=0, S_2 = 1 \right] - E\left[ \frac{1 \{U_i \in A  \} (1-D_i) \lambda}{ \Pr(Z=1|p(X_i,Z_i))} |Z=1, S_2 = 1 \right] = 0, \label{eq:moment_eq_2}
\end{align}
where $S_2$ is a sample inclusion indicator based on the value of the inverse probability weighting term $\Pr(Z=1| p(X_i, Z_i))$. For example, $S_2 = 1$ if $\Pr(Z=1| p(X_i, Z_i)) \in [ 0.05, 0.95]$ and $S_2 = 0$ otherwise. Trimming in this manner avoids dividing by near zero probabilities 
which regularizes the statistical behaviour of the sample analogues of these expectations.

The inequalities (\ref{eq:moment_ineq_1}) and (\ref{eq:moment_ineq_2}) express the nesting inequalities of Proposition \ref{Distilled Instrument Testable Implications} in terms of the conditional expectations given $(Z,S_1=1)$. 
Using Bayes rule, the moment equality hypotheses (\ref{eq:moment_eq_1}) and (\ref{eq:moment_eq_2}) express the index sufficiency restrictions of Proposition \ref{prop2} in terms of the conditional expectation given $(Z,S_2=1)$. 
Trimming the sample through $S_2$ does not affect validity of the index sufficiency restrictions insofar as $S_2=1$ depends only on the value of the propensity score.

Denote the conditional distributions of $(U,D,X)$ given $Z=1$ and $Z=0$ by $P$ and $Q$, respectively, and the corresponding empirical distributions by $P_{n_1}$ and $Q_{n_0}$, where $n_1$ and $n_0$ are the number of observations with $Z=1$ and $Z=0$, respectively. Following the notational conventions of empirical process theory (e.g., \citet{vandervaart1996}), we denote the expectation of a function of $(U,D,X,S_1)$, $f$, with respect to a generic distribution, $P$, by $Pf$ and the sample analogue (sample average) by $P_{n_1}f$. Similarly, for  a function $g$ of $(U,D,X,S_2)$, the expectation with respect to a generic distribution $P$ and its sample analogue are denoted by $Pg$ and $P_{n_1}g$, respectively. 

Let $\mathcal{C}(\mathcal{Y})$ be the class of closed and connected intervals in $\mathcal{Y}$. Define
\begin{align*}
    f_i^{z}(A,d) &= \frac{1\{U_i \in A, D_i = d \} S_{1i}}{\Pr(S_1 = 1 | Z=z )}, \\
    g_i^z(A,d) &= \frac{1\{U_i \in A, D_i = d \} \lambda^z(1-\lambda)^{1-z} S_{2i}}{ \Pr(Z=z|p(X_i,Z_i))\Pr(S_2 = 1 | Z=z )}.
\end{align*}
Consider the following test statistic,
\begin{align*}
    T \equiv & \max \left\{ \sup_{A \in \mathcal{C}(\mathcal{Y}),d \in \{0 , 1 \}} \frac{T_1(A,d)}{\hat{\sigma}_1(A,d) \vee \xi}, \sup_{A \in \mathcal{C}(\mathcal{Y}),d \in \{0 , 1 \}} \frac{T_2(A,d)}{\hat{\sigma}_2(A,d) \vee \xi}, \sup_{A \in \mathcal{C}(\mathcal{Y}), d \in \{0 , 1 \}} \frac{-T_2(A,d)}{\hat{\sigma}_2(A,d) \vee \xi} \right\}, \ \text{where} \\
    T_1(A,d) = & \sqrt{\frac{n_1 n_0}{n_1+n_0}} \left[ \left( Q_{n_0} f_i^0(A,1) - P_{n_1} f_i^1(A,1) \right) d + (P_{n_1} f_i^1(A,0) - Q_{n_0} f_i^0(A,0) )(1-d)  \right] \\
    T_2(A,d) = & \sqrt{\frac{n_1 n_0}{n_1+n_0}}  \left( Q_{n_0} g_i^0(A,d) - P_{n_1} g_i^1(A,d) \right) \\
    \hat{\sigma}_1^2(A,d) = & \left\{ \lambda[ Q_{n_0} (f_i^0(A,1))^2 - ( Q_{n_0} f_i^0(A,1))^2] + (1-\lambda)[ P_{n_1} (f_i^1(A,1))^2 - ( P_{n_1} f_i^1(A,1))^2] \right\} d \\
    & + \left\{ \lambda[ Q_{n_0} (f_i^0(A,0))^2 - ( Q_{n_0} f_i^0(A,0))^2] + (1-\lambda)[ P_{n_1} (f_i^1(A,0))^2 - ( P_{n_1} f_i^1(A,0))^2] \right\} (1-d), \\
    \hat{\sigma}_2^2(A,d) = & \lambda[ Q_{n_0} (g_i^0(A,d))^2 - ( Q_{n_0} g_i^0(A,d))^2] + (1-\lambda)[ P_{n_1} (g_i^1(A,d))^2 - ( P_{n_1} g_i^1(A,d))^2]. \\
\end{align*}
Statistic $T$ is a variance-weighted Kolmogorov-Smirnov test statistic that jointly tests (\ref{eq:moment_ineq_1}) - (\ref{eq:moment_eq_2}) by searching for a maximal violation of their variance-weighted sample analogues. Here, the denominators of the terms appearing in the max operator are consistent estimators for the asymptotic standard deviations of the numerators. $\xi$ is a user-specified trimming constant which serves to bound the denominator away from zero. Similar to the test of \citet{Kitagawa2015}, searching over the closed and connected intervals $\mathcal{C}(\mathcal{Y})$ suffices for detecting violations in any measurable set in $\mathcal{Y}$. 

%A consistent estimator for the asymptotic covariance between $\hat{T}_1(A,d)$ and $\hat{T}_{2}(A',d')$ can be obtained as
%\begin{align*}
%    \widehat{Acov}(\hat{T}_1(A,d), \hat{T}_2(A',d')) = & \lambda [\hat{Q}_{n_0} (f_i^0(A,d)g_i^0(A',d')) - \hat{Q}_{n_0} f_i^0(A,d) \hat{Q}_{n_0} g_i^0(A',d')] \\ & + (1- \lambda) [\hat{P}_{n_1} (f_i^1(A,d)g_i^1(A',d')) - \hat{P}_{n_1} f_i^1(A,d) \hat{P}_{n_1} g_i^0(A',d')].
%\end{align*}

Keeping the standard deviation estimators fixed, we want to resample $T_1(\cdot, \cdot)$ and $T_2(\cdot, \cdot)$ from a distribution such that the null hypothesis holds and their variance-covariance structure is preserved. Specifically, we consider a multiplier bootstrap: let $(\hat{M}_1, \dots, \hat{M}_{n})$ be iid random bootstrap multipliers such that they are independent of the original sample and satisfy $E(\hat{M}_i) = 0$ and $Var(\hat{M}_i) = 1$. A bootstrap analogue of $T$ under a least favorable null can be constructed as
\begin{align*}
    \hat{T} \equiv & \max \left\{ \sup_{A  \in \mathcal{C}(\mathcal{Y}),d \in \{0 , 1 \}} \frac{\hat{T}_1(A,d)}{\hat{\sigma}_1(A,d) \vee \xi}, \sup_{A \in \mathcal{C}(\mathcal{Y}),d \in \{0 , 1 \}} \frac{\hat{T}_2(A,d)}{\hat{\sigma}_2(A,d) \vee \xi}, \sup_{A \in \mathcal{C}(\mathcal{Y}),d \in \{0 , 1 \}} \frac{-\hat{T}_2(A,d)}{\hat{\sigma}_2(A,d) \vee \xi}\right\}, \ \text{where} \\
    \hat{T}_1(A,d) = & \sqrt{\frac{n_1 n_0}{n_1+n_0}} \left( (\hat{Q}_{n_0} - Q_{n_0}) f_i^0(A,1) - (\hat{P}_{n_1}- P_{n_1}) f_i^1(A,1) \right) d \\ 
    & + \sqrt{\frac{n_1 n_0}{n_1 + n_0}} \left( (\hat{P}_{n_1} - P_{n_1}) f_i^1(A,0) - (\hat{Q}_{n_0} - Q_{n_0} ) f_i^0(A,0) \right) (1-d) \\
    \hat{T}_2(A,d) = & \sqrt{\frac{n_1 n_0}{n_1+n_0}}  \left[ ( \hat{Q}_{n_0} - Q_{n_0}) g_i^0(A,d) - (\hat{P}_{n_1} - P_{n_1}) g_i^1(A,d) \right],
    \end{align*}
where, for the random variable $(a_i: i=1, \dots, n)$, we define
\begin{align*}
(\hat{Q}_{n_0} - Q_{n_0}) a_i & = \frac{1}{n_0} \sum_{i: Z_i=0} \hat{M}_i a_i, \mspace{10mu}
(\hat{P}_{n_1} - P_{n_1}) a_i = \frac{1}{n_1} \sum_{i: Z_i=1} \hat{M}_i a_i.
\end{align*}

\subsubsection{Test Procedure}\label{sec:test procedure}

% For the testable implication given in Proposition 
% \ref{Distilled Instrument Testable Implications}, 

An implementation of the test is as follows.

\begin{enumerate}
\item Estimate the propensity scores $\Pr (D=1|Z,X)$ using either
parametric or nonparametric methods, and obtain the fitted values $\hat{p}\left( x_{i},z_{i}\right) $.

\item Estimate the partially linear regression model (\ref{reg y on p and x})
and obtain estimates of $\theta _{1}$ and $\theta _{0}$. \ Specifically, let $%
\hat{\mu}_{x,i}=\hat{E}\left( x_{i}|\hat{p}\left( x_{i},z_{i}\right) \right) 
$ be nonparametric regression estimates (e.g. local linear predicted values) of the
covariate vector $x_{i}$ onto the estimated propensity scores, $\hat{p}%
_{i}\equiv \hat{p}\left( x_{i},z_{i}\right) ,$ and let $\hat{\mu}_{y,i}=\hat{E}%
\left( y_{i}|\hat{p}\left( x_{i},z_{i}\right) \right) $ be a nonparametric
regression estimate of the outcome onto the estimated propensity scores. \ $\left( \theta _{0},\theta _{1}\right) $ can be estimated by running OLS on%
\begin{equation*}
y_{i}-\hat{\mu}_{y,i}=\hat{p}_{i}\left( x_{i}-\hat{\mu}_{x,i}\right) \theta
_{1}+\left( 1-\hat{p}_{i}\right) \left( x_{i}-\hat{\mu}_{x,i}\right) \theta
_{0}+\epsilon _{i}.
\end{equation*}

\item Construct the residual observations by $\hat{u}_{i}=D_{i}\left(
y_{i}-x_{i}^{\prime }\hat{\theta}_{1}\right) +(1-D_{i})\left(
y_{i}-x_{i}^{\prime }\hat{\theta}_{0}\right) $.
\item Construct the inclusion indicators $S_1$. This can be done using the algorithm of Section 2.5.
\item Estimate $\Pr(Z=1|p(X,Z))$ and generate the inclusion indicators $S_2$
\item Using $\left( \hat{u}_{i},D_{i}, z, \hat{\Pr}(Z=1|p(X,Z), S_{1,i}, S_{2,i}\right) $ as data, calculate the test statistic of Section \ref{sec:test statistic}
\end{enumerate}

% In the case of a binary instrument $Z \in \{0,1\}$, the following algorithm can be used to construct a distilled sample where the instrument satisfies first-order stochastic monotonicity. 
% \begin{enumerate}
%     \item Set $S_{1,i} = 1$ for all i
%     \item Check whether first-order stochastic dominance holds
%     \begin{itemize}
%         \item If first-order stochastic dominance fails
%         \begin{enumerate}
%             \item Calculate the number of included observations with $Z=1$, $m = \sum_i S_{1,i} Z_{1,i}$, and the number with $Z=0$, $n = \sum_i S_{1,i}(1-Z_{1,i})$
%             \item If $n > m$ find the \emph{highest} value of $\hat{p}\left( X_{i},Z_{i}\right)$ where a violation of first order stochastic dominance is observed. Set $S_{1,i} =0$ for that observation.
%             \item If \textbf{$m > n$}  find the \emph{lowest} value of $\hat{p}\left( X_{i},Z_{i}\right)$ where a violation of first order stochastic dominance is observed. Set $S_{1,i} = 0$ for that observation.
%             \item If \textbf{$n = m$} find the \emph{highest} and \emph{lowest} values of $\hat{p}\left( X_{i},Z_{i}\right)$ where a violation of first order stochastic dominance is observed. Set $S_{1i}=0$ for the observations with $Z_i=0$ at the highest point and the observation with $Z_i=1$ at the lowest point.
%         \end{enumerate}
%         \item Return to step 2. Continue until first-order stochastic dominance holds
%     \end{itemize}
% \end{enumerate}

The test procedure described above involves semiparametric estimation of a partially linear model where each conditioning covariate and the outcome are non-parametrically regressed on the propensity score. While non-parametric estimation can be computationally taxing, this step is only performed once and the main determinant of the computational burden is the sample size rather than the number of conditioning covariates. In cases where this non-parametric estimation step is infeasible, the test procedure can still be implemented by specifying a parametric functional form for $\phi(p)$, such as a quadratic polynomial in the propensity score. 

\subsection{Practical Considerations}

\subsubsection{Testing Causal Intepretability of 2SLS}

The test proposed here does not directly test instrument validity in the context of linear 2SLS. This is because the set of identifying assumptions considered in Proposition \ref{prop2} is distinct from the set of conditions that allows us to interpret the linear two-stage least square (2SLS) estimand as a weighted average of LATEs/MTEs with positive weights. 
Specifically, causal interpretability of linear 2SLS requires only the weaker exogeneity condition (A2), not the strong exogeneity of (A5), and does not require the functional form specification for the potential outcome equations of (A4). On the other hand, as shown in \citet{Abadie_2003}, \citet{Kolesar_2013}, \citet{sloczynski2021}, and \citet{Blandhol_etal_2022}, causal interpretability of 2SLS relies crucially on the linearity of $E(Z|X)$ in the covariate vector $X$, whereas the identifying assumptions here considered place no restrictions on $E(Z|X)$.   

However, we believe our test is useful as a specification check when one reports linear 2SLS estimates for the following reasons. 
First, non-rejection in our test means the data do not reject strong exogeneity (A5), implying that the data also do not contradict weak exogeneity (A2). With credible evidence for the linearity of $E(Z|X)$, non-rejection of our test can be used to support causal interpretability of linear 2SLS.
Second, with specifications of the potential outcome equations and the propensity score, our test is useful for checking which variables should be included as controlling covariates. Hence, p-values of our test can also suggest which set of covariates should be included in the linear 2SLS estimation.

\subsubsection{Continuous Instruments}

The test procedure can accommodate continuous instruments as follows. The initial estimation of a partially linear model can be performed using the continuous instrument. With estimated partial residuals in hand, the instrument can then be discretized, and the test procedure for a multi-valued discrete instrument of Appendix \ref{sec:test procedure multi-valued discrete instrument} applied. 

% \ref{sec:test procedure multi-valued discrete instrument}

\subsubsection{Multiple Instruments}\label{subsec:multipleIV}

The discussion to this point has considered only a scalar $Z$, here we consider the case where $Z$ is a vector of multiple instruments. Propositions \ref{prop: joint-ineq} and \ref{prop2} hold irrespective of the number of instruments. Hence, both testable implications remain available. However, a test of the nesting inequalities using a coarsened propensity score would be subject to the same concerns regarding power as the single instrument case.  

Consider the case of two binary instruments $Z = (Z_1, Z_2)$, $Z_1 \in \{0,1\}$, $Z_2 \in \{0,1\}$. When testing nesting inequalities and index sufficiency, $Z$ can be treated as a single instrument taking four values. That is, the estimation step is performed taking $Z$ to be vector, but when testing we map from $(0,0), (0,1), (1,0), (1,1)$ to four values of a single categorical variable and implement the test for a multi-valued instrument described in Appendix \ref{sec:test procedure multi-valued discrete instrument}. In cases where this is not viable, one approach is to aggregate the multiple instruments into a single index, and then treat this index as single multivalued instrument. Suppose the propensity score has a generalised linear form with a probit link
\begin{equation}
    p(X,Z) = \Phi(\rho X + \varphi(Z)),
\end{equation}
where $\varphi(\cdot)$ reflects the researchers assumptions about the relationship between the instruments and $Z$ and $\rho$ is a vector of coefficients. In this case, we view $\varphi(Z)$ as an aggregated index of instruments, and perform the test as if it was a single instrument.
% If the researcher is unwilling to assume separability between $X$ and $Z$, observations could instead be sorted by fixing $X$ at its mean value $\bar{X}$ and ordering by $\bar{X}\varphi(Z)$. 

%\citet{Mogstadetal2019, Mogstadetal2020} show that the monotonicity assumption of \citet{imbensangrist} imposes strong restrictions on selection heterogeniety. 

Our testing approach can be extended to the setting of \citet{Mogstadetal2019, Mogstadetal2020} where the montonicity assumption \citet{imbensangrist} is replaced by partial monotonicity. \citet{Mogstadetal2019, Mogstadetal2020} show that inference using a single instrument remains valid as long as the remaining instruments are controlled for. In the context of our test procedure, the estimation of the propensity score and residuals would be performed using the entire set of instruments, but the final step would be performed separately for each instrument. That is, index sufficiency and the nesting inequalities are tested for every instrument, with a distilled sample used in the test of nesting inequalities. A Bonferroni correction can be applied to P-values to correct for multiple tests. 

\section{Monte Carlo}\label{sec:MC}

We perform two sets of Monte Carlo exercises. The first evaluates the size and power of the joint test of nesting inequalities with a distilled sample and index sufficiency, and compares it to the \citet{Kitagawa2015} test which does not control for covariates and the test of the nesting inequalities using a coarsened propensity score. The second looks separately at the power of the tests of index sufficiency and nesting inequalities with a distilled sample.  

All exercises use
\begin{align}
    &X \sim \text{N}_3(0_3, I_3) \label{eq:monte carlo X} \\
    &Z = \mathbbm{1}\{X^{\prime} \gamma + U_Z \geq 0\} \label{eq:MC Z function}\\
    &D = \mathbbm{1}\{\alpha_0 (1 - Z) + \alpha_1 Z + X^{\prime} \delta + U_D \geq 0\} \label{eq:monte carlo selection}\\
    &Y = D Y_1 + (1 - D)Y_0 \\
    &Y_0 = X^{\prime} \theta + \tilde{U}_0 \\
    &U_Z \sim N(0,1)\\
    &(\tilde{U}_0, U_D)\sim N(0, \Sigma), \label{eq:monte carlo UZ UD}
\end{align}
with diagonal entries of $\Sigma$ set to 1 and off-diagonal entries to 0.3. Elements of $\theta$ are drawn from the uniform distribution with bounds $(-1,1)$. The distributions for $\gamma$ and $\delta$ are described below. $\alpha_0$, $\alpha_1$, and the specification for $Y_1$ differ for the size and power processes. When checking size we consider a valid a but irrelevant instrument
\begin{align}
    &\alpha_0 = \alpha_1 = 0, \\
    &Y_1 = X^{\prime} \theta + 1 + \tilde{U}_1, \ \text{with $\tilde{U}_1 = \tilde{U}_0$.}
\end{align}
Here, conditional on the propensity score, the joint distribution of $(U, D)$ conditional on $Z$ is identical for $Z=0$ and $Z=1$. Four specifications are used to check power. They share
\begin{align}
    &\alpha_0 = \Phi^{-1}(0.45), \\
    &\alpha_1 = \Phi^{-1}(0.55), \\
    &Y_1 = X^{\prime} \theta + \tilde{U}_1,
\end{align}
where $\Phi^{-1}()$ is the standard normal quantile function. The specifications for $\tilde{U}_1$ are
\begin{alignat*}{2}
    &\text{DGP1:}  & \quad &  \tilde{U}_1 = Z \times \tilde{U}_0 + (1 - Z) \times (-0.7 + \tilde{U}_0), \\
    &\text{DGP2:}  & \quad &  \tilde{U}_1 =  Z \times \tilde{U}_0 + (1 - Z) \times 1.675 \times \tilde{U}_0, \\
    &\text{DGP3:}  & \quad &  \tilde{U}_1 = Z \times \tilde{U}_0 + (1 - Z) \times  0.515 \times \tilde{U}_0, \\
    &\text{DGP4:}  & \quad &  \tilde{U}_1 = Z \times \tilde{U}_0 + (1 - Z) \times (\mu_l + 0.125 \times \tilde{U}_0),
\end{alignat*}
where $\mu_l$ takes values $(-1,-0.5,0,0.5,1)$ with probabilities $(0.15, 0.2, 0.3, 0.2, 0.15)$. Figure \ref{fig:Monte Carlo true residuals} plots the densities of $(\tilde{U}_1,D = 1|Z)$ for each of the four processes for one draw of parameters. In the case of DGP1 the distribution for $Z=0$ is shifted, with DGP2 it has thicker tails, with DGP3 it is more concentrated, and with DGP4 there are violations at each value of $\mu_l$. In all cases, for a given propensity score, the distribution of $(\tilde{U}_1, D=1)$ conditional on $Z$ does not coincide for $Z=0$ and $Z=1$. 

We consider three values of the sample size $N$: 200, 500, and 1000. For each process and sample size, we generate 1,000 samples. When comparing test procedures, the same simulated samples are used for each test procedure. The proposed test procedure requires estimating propensity scores, a partially linear model for outcomes, and $\Pr(Z=1|p(X,Z))$. Propensity scores are estimated by probit with the correct specification supplied. For the nonparametric regressions involved in the estimation of the partially linear model we use local linear regressions. For the estimation of $\Pr(Z=1|p(X,Z))$, we use local constant regressions. In both cases bandwidths are chosen by least squares cross validation. The partially linear model is estimated under the assumption that the instrument is valid, so the specification is
%estimated using the np Package for R (\citet{hayfieldracine2008np}). The partially linear model is estimated by \textbf{npplreg} with \textbf{regtype} set to local-linear. The specification is set to 
%$Pr(Z = 1|p)$ is estimated using \textbf{npcdens}. In both cases bandwidths are selected by least squares cross-validation. 
\begin{equation*}
    Y = X^{\prime}\theta + \phi(p) + \epsilon.
\end{equation*}
In the case DGP1-DGP4, this specification is incorrect, so estimates will be biased.\footnote{For the size process, this specification is correct. However, as the instrument is irrelevant, the model is not identified. In particular, it can be shown that asymptotically there is perfect multicollinearity among the columns of $X - E[X|p]$. In the finite samples used in these Monte Carlo exercises it is still possible to calculate an estimate of $\theta$. This estimate reflects finite sample noise, so the associated partial residuals can be used to check size.} We set the sample inclusion indicator for the index sufficiency test to $S_2 = \mathbbm{1} \{ \Pr(Z=1|p(X,Z)) \in [0.05, 0.95]\}$. Results are reported for two values of the trimming parameter $\xi$: $\sqrt{0.05 \times 0.95} \approx 0.21$ and 0.3. Results for additional values of $\xi$ are available upon request. P-values are calculated using 500 bootstrap samples. 

\subsection{Size and Power}\label{sec:monte carlo exercise 1}

Here we evaluate the size and power of the proposed test procedure, and compare it to the \citet{Kitagawa2015} test that does not control for covariates and the test of nesting inequalities with a binarised propensity score. 

For this exercise, the elements of $\delta$ are drawn from the uniform distribution with bounds $(-1,1)$. For $\gamma$, we consider two cases. The first sets all elements of $\gamma$ to 0. $X$ and $Z$ are then independent, so an instrument that is valid conditional on the covariates will also be valid without conditioning on covariates and failure to control appropriately for $X$ should not affect size. In the second case each element of $\gamma$ is drawn from the uniform distribution with bounds $(-1,1)$. Here $X$ and $Z$ will not, in general, be independent, and the instrument can only be valid conditional on $X$.

Tables \ref{tab:size} and \ref{tab:power} present rejection rates. Turning first to the proposed test procedure, rejection rates for the size process are below nominal size irrespective of whether $X$ and $Z$ are independent. For the four power DGPs, with the exception of the smaller sample sizes for DGP4, all rejection rates are well above the nominal size. Similar rejection rates are obtained irrespective of whether $X$ and $Z$ are independent.   

Next consider the test with a binarised propensity score. This test performs well in terms of size: rejection rates for the size processes are below nominal size. However, consistent with the results of Proposition \ref{prop:distillation better},  there is a considerable loss in power compared to the proposed test procedure. Rejection rates above nominal size are obtained only for DGP1. For all other DGPs, rejection rates are low and change little as the sample size increases.

Figures \ref{fig:Monte Carlo Distilled} and \ref{fig:Monte Carlo binarised} illustrate why the test with a binarized propensity score lacks power. To construct these figures, for each of the four power DGPs, we generate a single draw for $\theta$ and $\delta$. To simplify, $\gamma$ is set to the zero vector. We approximate the asymptotic estimates of $\theta$,  $\hat{\theta}^{\star}$, by generating a sample of a million observations and regressing $Y - E[Y|p]$ on $X - E[X|p]$. Partial residuals are then calculated as $U = Y - X^{\prime}\hat{\theta}^{\star}$. Figure \ref{fig:Monte Carlo Distilled} plots the densities of  $(U, D = 1)$ conditional on $Z = 0$ and $Z = 1$, whereas Figure \ref{fig:Monte Carlo binarised} plots the subdensities conditional on the binarised propensity score $Z^{*}$. In Figure \ref{fig:Monte Carlo Distilled}, when we condition on $Z$, violations of the null are apparent in all cases, but in Figure \ref{fig:Monte Carlo binarised} violations are either small or have vanished entirely. Each value of $Z^{*}$ mixes observations with $Z=0$ and $Z=1$. The subdensities in Figure \ref{fig:Monte Carlo binarised} are thus a mixture of those in Figure \ref{fig:Monte Carlo Distilled}, which masks differences in shape. Furthermore, by construction, all observations with $Z^{*}=0$ have a lower propensity score than observations with $Z^{*}=1$, so the $D=1$ subdensity for $Z^{*}=0$ is guaranteed to have less total mass than the subdensity for $Z^{*}=1$.      

The \citet{Kitagawa2015} test, which does not control for covariates, performs reasonably well when the instrument is independent of the covariates. The size of the test is controlled, but power is lower than the proposed test procedure. This is for two reasons. First, the \citet{Kitagawa2015} test essentially tests the nesting inequalities only, and some violations are more easily detected by index sufficiency. Second, failure to control for covariates adds noise to the outcome and which makes violations of the nesting inequalities more difficult to detect. For a single draw of $\theta$ and $\delta$, and with $\gamma$ set to zero, Figure \ref{fig:Monte Carlo Y} plots the densities of $(Y, D=1)$  conditional on $Z = 0$ and $Z = 1$ for each DGP. These differ from the densities of Figure \ref{fig:Monte Carlo true residuals} as the effect of the covariates on the outcome has not been partialed out. Compared to the densities in Figure \ref{fig:Monte Carlo true residuals}, violations are either reduced or have vanished entirely. As expected, when $Z$ and $X$ are not independent, this test procedure has rejection rates above nominal size regardless of whether the instrument is valid conditional on covariates. 

\begin{table}[ht]
\resizebox{\textwidth}{!}{\begin{tabular}{@{\extracolsep{0.5pt}}c|cccccc|cccccc}
  \hline
  \hline
  \addlinespace[0.1cm]  
  \multicolumn{1}{r}{} & \multicolumn{6}{c}{$\gamma_k = 0$} & \multicolumn{6}{c}{$\gamma_k \sim U(-1,1)$} \\
   \multicolumn{1}{r}{Trimming Constant:} & \multicolumn{3}{c}{$\xi \approx 0.21$} & \multicolumn{3}{c}{$\xi = 0.30$} & \multicolumn{3}{c}{$\xi \approx 0.21$} & \multicolumn{3}{c}{$\xi = 0.30$} \\
  \cline{2-4} \cline{5-7} \cline{8-10} \cline{11-13}
  \addlinespace[0.1cm]  
   \multicolumn{1}{r}{Nominal Size:} & 0.1 & 0.05 & 0.01 & 0.1 & 0.05 & 0.01 & 0.1 & 0.05 & 0.01 & 0.1 & 0.05 & 0.01 \\ 
  \addlinespace[0.1cm]  
  \hline
  \multicolumn{13}{l}{Proposed test procedure} \\   
200 & 0.027 & 0.004 & 0.000 & 0.018 & 0.005 & 0.000 & 0.085 & 0.040 & 0.012 & 0.048 & 0.022 & 0.008 \\  
  500 & 0.024 & 0.005 & 0.000 & 0.018 & 0.004 & 0.000 & 0.068 & 0.033 & 0.009 & 0.045 & 0.022 & 0.007 \\ 
   1000 & 0.028 & 0.009 & 0.002 & 0.017 & 0.006 & 0.000 & 0.060 & 0.029 & 0.010 & 0.040 & 0.016 & 0.003 \\  
  \multicolumn{13}{l}{Binarised $p(X,Z)$}\\
 200 & 0.043 & 0.024 & 0.013 & 0.031 & 0.023 & 0.013 & 0.046 & 0.033 & 0.017 & 0.044 & 0.030 & 0.017 \\ 
  500 & 0.067 & 0.052 & 0.035 & 0.054 & 0.043 & 0.029 & 0.056 & 0.042 & 0.033 & 0.045 & 0.039 & 0.029 \\ 
  1000 & 0.075 & 0.069 & 0.051 & 0.062 & 0.053 & 0.040 & 0.096 & 0.082 & 0.060 & 0.081 & 0.068 & 0.048 \\ 
  \multicolumn{13}{l}{Kitagawa (2015)} \\   
 200 & 0.131 & 0.070 & 0.016 & 0.134 & 0.067 & 0.017 & 0.454 & 0.384 & 0.262 & 0.453 & 0.403 & 0.268 \\  
  500 & 0.135 & 0.066 & 0.015 & 0.116 & 0.063 & 0.012 & 0.524 & 0.483 & 0.403 & 0.530 & 0.485 & 0.404 \\  
  1000 & 0.115 & 0.066 & 0.012 & 0.124 & 0.054 & 0.012  & 0.645 & 0.599 & 0.540 & 0.642 & 0.608 & 0.539 \\ 
  \hline
  \end{tabular}}
  \caption{Size}\label{tab:size}
\end{table}

\begin{table}
\vspace{-1cm}
\resizebox{\textwidth}{!}{\begin{tabular}{@{\extracolsep{0.5pt}}c|cccccc|cccccc}
  \hline
  \hline
  \addlinespace[0.1cm] 
  \multicolumn{1}{r}{} & \multicolumn{6}{c}{$\gamma_k = 0$} & \multicolumn{6}{c}{$\gamma_k \sim U(-1,1)$} \\
  \multicolumn{1}{r}{Trimming Constant:} & \multicolumn{3}{c}{$\xi \approx 0.21$} & \multicolumn{3}{c}{$\xi = 0.30$} & \multicolumn{3}{c}{$\xi \approx 0.21$} & \multicolumn{3}{c}{$\xi = 0.30$}\\
  \cline{2-4} \cline{5-7} \cline{8-10} \cline{11-13}
  \addlinespace[0.1cm]  
  \multicolumn{1}{r}{Nominal Size:} & 0.1 & 0.05 & 0.01 & 0.1 & 0.05 & 0.01 & 0.1 & 0.05 & 0.01 & 0.1 & 0.05 & 0.01 \\ 
  \addlinespace[0.1cm]  
  \hline
  
 \multicolumn{13}{l}{\textbf{Power DGP1}} \\
  \multicolumn{13}{l}{Proposed test procedure} \\  
 200 & 0.164 & 0.085 & 0.020 & 0.158 & 0.078 & 0.018 & 0.228 & 0.144 & 0.059 & 0.189 & 0.113 & 0.042 \\ 
  500 & 0.446 & 0.311 & 0.104 & 0.440 & 0.309 & 0.110 & 0.454 & 0.328 & 0.154 & 0.413 & 0.289 & 0.124 \\ 
  1000 & 0.725 & 0.587 & 0.346 & 0.718 & 0.586 & 0.327 & 0.706 & 0.569 & 0.327 & 0.676 & 0.548 & 0.300 \\  
  \multicolumn{13}{l}{Binarised Propensity Score}\\
 200 & 0.053 & 0.040 & 0.012 & 0.050 & 0.040 & 0.011 & 0.049 & 0.035 & 0.013 & 0.044 & 0.028 & 0.013 \\  
  500 & 0.153 & 0.128 & 0.074 & 0.131 & 0.104 & 0.064 & 0.127 & 0.096 & 0.051 & 0.096 & 0.079 & 0.045 \\ 
  1000 & 0.313 & 0.263 & 0.192 & 0.251 & 0.214 & 0.150 & 0.266 & 0.232 & 0.158 & 0.225 & 0.183 & 0.114 \\  
  \multicolumn{13}{l}{Kitagawa (2015)} \\   
 200 & 0.205 & 0.134 & 0.039 & 0.198 & 0.123 & 0.033 & 0.419 & 0.354 & 0.249 & 0.417 & 0.358 & 0.251 \\ 
  500 & 0.359 & 0.257 & 0.122 & 0.354 & 0.255 & 0.118 & 0.583 & 0.535 & 0.450 & 0.577 & 0.530 & 0.448 \\ 
  1000 & 0.637 & 0.545 & 0.362 & 0.652 & 0.548 & 0.342 & 0.699 & 0.668 & 0.604 & 0.690 & 0.656 & 0.591 \\  
 \addlinespace[0.1cm]  
 \hline 
 \addlinespace[0.1cm]  
 \multicolumn{13}{l}{\textbf{Power DGP2}} \\
  \multicolumn{13}{l}{Proposed test procedure} \\   
200 & 0.201 & 0.116 & 0.026 & 0.206 & 0.118 & 0.031 & 0.232 & 0.146 & 0.039 & 0.200 & 0.124 & 0.033 \\ 
  500 & 0.638 & 0.493 & 0.194 & 0.648 & 0.505 & 0.195 & 0.585 & 0.432 & 0.167 & 0.529 & 0.387 & 0.145 \\  
  1000 & 0.954 & 0.908 & 0.697 & 0.955 & 0.902 & 0.684 & 0.921 & 0.858 & 0.640 & 0.885 & 0.791 & 0.532 \\
  \multicolumn{13}{l}{Binarised Propensity Score}\\
 200 & 0.026 & 0.018 & 0.008 & 0.023 & 0.019 & 0.009 & 0.039 & 0.029 & 0.014 & 0.041 & 0.026 & 0.016 \\ 
  500 & 0.036 & 0.027 & 0.017 & 0.025 & 0.021 & 0.012 & 0.043 & 0.033 & 0.020 & 0.035 & 0.029 & 0.018 \\  
  1000 & 0.040 & 0.032 & 0.018 & 0.029 & 0.021 & 0.012  & 0.047 & 0.038 & 0.025 & 0.039 & 0.032 & 0.015 \\ 
  \multicolumn{13}{l}{Kitagawa (2015)} \\   
 200 & 0.105 & 0.054 & 0.011 & 0.079 & 0.033 & 0.009 & 0.369 & 0.313 & 0.196 & 0.372 & 0.303 & 0.190 \\ 
  500 & 0.214 & 0.141 & 0.054 & 0.118 & 0.070 & 0.027 & 0.582 & 0.526 & 0.404 & 0.533 & 0.477 & 0.381 \\ 
  1000 & 0.470 & 0.366 & 0.215 & 0.290 & 0.200 & 0.082 & 0.720 & 0.680 & 0.592 & 0.650 & 0.602 & 0.518 \\ 
 
  \addlinespace[0.1cm]  
  \hline
  \addlinespace[0.1cm]
  
  \multicolumn{13}{l}{\textbf{Power DGP3}} \\
  \multicolumn{13}{l}{Proposed test procedure} \\   
200 & 0.209 & 0.116 & 0.024 & 0.185 & 0.098 & 0.020 & 0.295 & 0.189 & 0.059 & 0.223 & 0.142 & 0.047 \\  
  500 & 0.777 & 0.640 & 0.303 & 0.649 & 0.500 & 0.185 & 0.772 & 0.660 & 0.364 & 0.669 & 0.535 & 0.268 \\ 
  1000 & 0.991 & 0.980 & 0.925 & 0.976 & 0.954 & 0.827  & 0.987 & 0.977 & 0.901 & 0.967 & 0.944 & 0.784 \\  
  \multicolumn{13}{l}{Binarised Propensity Score}\\
 200 & 0.028 & 0.020 & 0.008 & 0.027 & 0.018 & 0.007 & 0.027 & 0.016 & 0.007 & 0.025 & 0.015 & 0.007 \\  
  500 & 0.018 & 0.016 & 0.012 & 0.018 & 0.015 & 0.010 & 0.025 & 0.016 & 0.010 & 0.019 & 0.016 & 0.009 \\ 
  1000 & 0.006 & 0.004 & 0.003 & 0.005 & 0.005 & 0.002  & 0.013 & 0.010 & 0.006 & 0.012 & 0.009 & 0.003 \\ 
  \multicolumn{13}{l}{Kitagawa (2015)} \\   
200 & 0.115 & 0.061 & 0.013 & 0.105 & 0.058 & 0.014 & 0.359 & 0.301 & 0.200 & 0.352 & 0.294 & 0.204 \\ 
  500 & 0.118 & 0.073 & 0.022 & 0.126 & 0.076 & 0.027 & 0.497 & 0.437 & 0.338 & 0.508 & 0.456 & 0.349 \\  
  1000 & 0.163 & 0.108 & 0.039 & 0.180 & 0.125 & 0.042 & 0.582 & 0.547 & 0.475 & 0.582 & 0.545 & 0.478 \\ 

  \addlinespace[0.1cm]  
  \hline
  \addlinespace[0.1cm]
  
   \multicolumn{13}{l}{\textbf{Power DGP4}} \\
  \multicolumn{13}{l}{Proposed test procedure} \\   
 200 & 0.052 & 0.018 & 0.002 & 0.031 & 0.011 & 0.001 & 0.130 & 0.067 & 0.013 & 0.078 & 0.038 & 0.008 \\ 
  500 & 0.302 & 0.177 & 0.036 & 0.125 & 0.056 & 0.007 & 0.326 & 0.212 & 0.061 & 0.194 & 0.110 & 0.027 \\ 
  1000 & 0.732 & 0.616 & 0.364 & 0.412 & 0.278 & 0.068  & 0.710 & 0.609 & 0.340 & 0.471 & 0.312 & 0.098 \\ 
  \multicolumn{13}{l}{Binarised Propensity Score}\\
  200 & 0.022 & 0.015 & 0.007 & 0.017 & 0.011 & 0.007 & 0.023 & 0.017 & 0.009 & 0.021 & 0.014 & 0.009 \\ 
  500 & 0.021 & 0.018 & 0.012 & 0.018 & 0.015 & 0.011 & 0.040 & 0.028 & 0.015 & 0.022 & 0.019 & 0.012 \\
  1000 & 0.016 & 0.012 & 0.007 & 0.011 & 0.009 & 0.005 & 0.028 & 0.021 & 0.016 & 0.019 & 0.015 & 0.012 \\ 
  \multicolumn{13}{l}{Kitagawa (2015)} \\   
 200 & 0.081 & 0.040 & 0.003 & 0.066 & 0.029 & 0.004 & 0.346 & 0.288 & 0.176 & 0.358 & 0.291 & 0.186 \\ 
  500 & 0.080 & 0.035 & 0.007 & 0.071 & 0.042 & 0.005 & 0.451 & 0.399 & 0.318 & 0.453 & 0.406 & 0.328 \\ 
  1000 & 0.076 & 0.044 & 0.010 & 0.078 & 0.046 & 0.012 & 0.527 & 0.485 & 0.418 & 0.532 & 0.494 & 0.424 \\ 
  \hline
  \end{tabular}}
  \caption{Power}\label{tab:power}
\end{table}

\subsection{Comparison of Index Sufficiency and Nesting Inequalities}\label{sec:monte carlo exercise 2}

Our second exercise compares the nesting inequality and index sufficiency tests. Section 2 argued that the relative power of testing nesting inequalities and index sufficiency will depend on similarity between propensity score distributions conditional on $Z$. To examine this more formally, we compare rejection rates foe the nesting inequalities and index sufficiency in two cases. Both set all elements of $\gamma$ to 0, so that the instrument and covariates are independent. In the first, the elements of $\delta$ are drawn from the uniform distribution with bounds $(-1,1)$, as in section \ref{sec:monte carlo exercise 1}. In the second, the elements of $\delta$ are drawn from the uniform distribution with bounds $(-0.1,0.1)$. This second case limits the influence of covariates on the propensity score, which in turn limits the overlap of the conditional propensity score distributions for $Z = 0$ and $Z = 1$. Figure \ref{fig:Monte Carlo pscore dens comp} plots the density of propensity scores conditional on $Z = 0$ and $Z = 1$ for different values of $\delta$. As the elements of $\delta$ grow larger in magnitude, the densities overlap more. Figure \ref{fig:Monte Carlo PZ1 comp} plots the $\Pr(Z=1|p(X,Z))$ function associated with each value of $\delta$. For the smallest $\delta$ this function is close to a step function but, as the elements of $\delta$ grow in magnitude, it becomes flatter.   

Table \ref{tab:nesting index comparison} presents rejection rates. When the elements of $\delta$ are drawn from the uniform distribution with bounds $(-1,1)$, index sufficiency has higher rejection rates. When the elements of $\delta$ are drawn from the uniform distribution with bounds $(-0.1,0.1)$, index sufficiency loses power in all cases. Nesting inequalities now have considerably higher power for DGP1, slightly higher power for DGP2 and DGP3, while index sufficiency still exhibits more power for DGP4. In all cases, the overall rejection rate is close to the maximum of the rejection rates for the nesting inequalities and index sufficiency.  This suggests that testing both testable implications jointly dominates testing only one of them.   

\begin{table}
\vspace{-2cm}
\centering
\resizebox{\textwidth}{!}{\begin{tabular}{c|cccccc|cccccc}
  \hline
  \hline
  \addlinespace[0.1cm]  
   \multicolumn{1}{r}{} & \multicolumn{6}{c}{$\delta_k \sim U(-1,1)$} & \multicolumn{6}{c}{$\delta_k \sim U(-0.1,0.1)$} \\
   \multicolumn{1}{r}{Trimming Constant:} & \multicolumn{3}{c}{$\xi \approx 0.21$} & \multicolumn{3}{c}{$\xi = 0.30$}  & \multicolumn{3}{c}{$\xi \approx 0.21$} & \multicolumn{3}{c}{$\xi = 0.30$} \\
  \cline{2-4} \cline{5-7} \cline{8-10} \cline{11-13}
 \multicolumn{1}{r}{Nominal Size:} & 0.1 & 0.05 & 0.01 & 0.1 & 0.05 & 0.01 & 0.1 & 0.05 & 0.01 & 0.1 & 0.05 & 0.01 \\ 
  \hline
   \multicolumn{13}{l}{\textbf{Power DGP 1}}\\
  \multicolumn{13}{l}{Nesting:}\\
200 & 0.075 & 0.036 & 0.004 & 0.067 & 0.033 & 0.003 & 0.092 & 0.044 & 0.006 & 0.081 & 0.039 & 0.003 \\ 
  500 & 0.256 & 0.160 & 0.040 & 0.251 & 0.157 & 0.034 & 0.401 & 0.275 & 0.087 & 0.386 & 0.261 & 0.076 \\ 
  1000 & 0.558 & 0.434 & 0.236 & 0.529 & 0.409 & 0.213 & 0.859 & 0.762 & 0.495 & 0.840 & 0.756 & 0.449 \\   
  \multicolumn{13}{l}{Index Sufficiency:}\\
 200 & 0.174 & 0.090 & 0.023 & 0.173 & 0.088 & 0.022 & 0.103 & 0.050 & 0.013 & 0.085 & 0.043 & 0.008 \\  
  500 & 0.460 & 0.318 & 0.111 & 0.445 & 0.310 & 0.115 & 0.196 & 0.104 & 0.015 & 0.167 & 0.086 & 0.016 \\ 
  1000 & 0.708 & 0.574 & 0.349 & 0.710 & 0.573 & 0.328 & 0.316 & 0.216 & 0.086 & 0.330 & 0.214 & 0.074 \\  
  \multicolumn{13}{l}{Overall:}\\
 200 & 0.164 & 0.085 & 0.020 & 0.158 & 0.078 & 0.018 & 0.101 & 0.057 & 0.007 & 0.084 & 0.046 & 0.003 \\ 
  500 & 0.446 & 0.311 & 0.104 & 0.440 & 0.309 & 0.110 & 0.363 & 0.211 & 0.066 & 0.321 & 0.205 & 0.056 \\ 
  1000 & 0.725 & 0.587 & 0.346 & 0.718 & 0.586 & 0.327 & 0.808 & 0.718 & 0.453 & 0.786 & 0.692 & 0.403 \\  
   \addlinespace[0.1cm]  
  \hline
  \addlinespace[0.1cm]
  
 \multicolumn{13}{l}{\textbf{Power DGP 2}}\\
 \multicolumn{13}{l}{Nesting:}\\
200 & 0.048 & 0.019 & 0.000 & 0.025 & 0.007 & 0.000 & 0.063 & 0.024 & 0.000 & 0.027 & 0.006 & 0.000 \\ 
  500 & 0.315 & 0.186 & 0.053 & 0.137 & 0.074 & 0.013 & 0.380 & 0.227 & 0.059 & 0.176 & 0.085 & 0.019 \\  
  1000 & 0.816 & 0.675 & 0.375 & 0.474 & 0.305 & 0.082 & 0.886 & 0.791 & 0.505 & 0.619 & 0.442 & 0.180 \\ 
  \multicolumn{13}{l}{Index Sufficiency:}\\
 200 & 0.224 & 0.126 & 0.033 & 0.227 & 0.130 & 0.035 & 0.148 & 0.075 & 0.011 & 0.158 & 0.074 & 0.010 \\ 
  500 & 0.656 & 0.510 & 0.214 & 0.661 & 0.519 & 0.209  & 0.406 & 0.266 & 0.095 & 0.402 & 0.272 & 0.097 \\ 
  1000 & 0.951 & 0.911 & 0.705 & 0.961 & 0.912 & 0.702 & 0.769 & 0.673 & 0.403 & 0.748 & 0.634 & 0.372 \\  
   \multicolumn{13}{l}{Overall:}\\
  200 & 0.201 & 0.116 & 0.026 & 0.206 & 0.118 & 0.031 & 0.126 & 0.049 & 0.006 & 0.114 & 0.055 & 0.005 \\ 
  500 & 0.638 & 0.493 & 0.194 & 0.648 & 0.505 & 0.195 & 0.426 & 0.288 & 0.093 & 0.353 & 0.223 & 0.071 \\ 
  1000 & 0.954 & 0.908 & 0.697 & 0.955 & 0.902 & 0.684 & 0.898 & 0.805 & 0.527 & 0.759 & 0.612 & 0.337 \\   
  \addlinespace[0.1cm]  
  \hline
  \addlinespace[0.1cm]
  
      \multicolumn{13}{l}{\textbf{Power DGP 3}}\\
  \multicolumn{13}{l}{Nesting:}\\
 200 & 0.122 & 0.063 & 0.010 & 0.143 & 0.074 & 0.010 & 0.116 & 0.056 & 0.012 & 0.124 & 0.063 & 0.010 \\ 
  500 & 0.385 & 0.247 & 0.077 & 0.435 & 0.278 & 0.083 & 0.382 & 0.275 & 0.097 & 0.425 & 0.302 & 0.108 \\ 
  1000 & 0.816 & 0.710 & 0.451 & 0.846 & 0.753 & 0.475 & 0.765 & 0.657 & 0.363 & 0.818 & 0.691 & 0.398 \\ 
  \multicolumn{13}{l}{Index Sufficiency:}\\
  200 & 0.232 & 0.132 & 0.027 & 0.199 & 0.110 & 0.022 & 0.157 & 0.070 & 0.014 & 0.136 & 0.069 & 0.015 \\ 
  500 & 0.792 & 0.659 & 0.326 & 0.672 & 0.522 & 0.199 & 0.462 & 0.363 & 0.158 & 0.414 & 0.296 & 0.110 \\ 
  1000 & 0.991 & 0.981 & 0.928 & 0.977 & 0.958 & 0.836 & 0.667 & 0.613 & 0.452 & 0.667 & 0.585 & 0.382 \\   
   \multicolumn{13}{l}{Overall:}\\
 200 & 0.209 & 0.116 & 0.024 & 0.185 & 0.098 & 0.020 & 0.131 & 0.061 & 0.008 & 0.114 & 0.055 & 0.009 \\ 
  500 & 0.777 & 0.640 & 0.303 & 0.649 & 0.500 & 0.185 & 0.475 & 0.367 & 0.147 & 0.434 & 0.320 & 0.111 \\ 
  1000 & 0.991 & 0.980 & 0.925 & 0.976 & 0.954 & 0.827 & 0.832 & 0.740 & 0.488 & 0.829 & 0.731 & 0.443 \\ 
   \addlinespace[0.1cm]  
  \hline
  \addlinespace[0.1cm]
  
   \multicolumn{13}{l}{\textbf{Power DGP 4}}\\
  \multicolumn{13}{l}{Nesting:}\\
 200 & 0.028 & 0.008 & 0.001 & 0.022 & 0.007 & 0.001 & 0.039 & 0.017 & 0.002 & 0.027 & 0.012 & 0.002 \\ 
  500 & 0.064 & 0.034 & 0.005 & 0.050 & 0.029 & 0.004 & 0.094 & 0.038 & 0.005 & 0.063 & 0.031 & 0.002 \\ 
  1000 & 0.170 & 0.102 & 0.030 & 0.142 & 0.077 & 0.020 & 0.231 & 0.122 & 0.031 & 0.161 & 0.078 & 0.013 \\  
  \multicolumn{13}{l}{Index Sufficiency:}\\
 200 & 0.064 & 0.021 & 0.003 & 0.040 & 0.015 & 0.002 & 0.054 & 0.026 & 0.001 & 0.050 & 0.024 & 0.002 \\  
  500 & 0.321 & 0.186 & 0.041 & 0.141 & 0.061 & 0.009 & 0.174 & 0.113 & 0.032 & 0.106 & 0.048 & 0.013 \\ 
  1000 & 0.748 & 0.630 & 0.376 & 0.431 & 0.292 & 0.082 & 0.458 & 0.372 & 0.196 & 0.284 & 0.188 & 0.065 \\ 
  \multicolumn{13}{l}{Overall:}\\
 200 & 0.052 & 0.018 & 0.002 & 0.031 & 0.011 & 0.001 & 0.055 & 0.020 & 0.002 & 0.034 & 0.018 & 0.001 \\ 
  500 & 0.302 & 0.177 & 0.036 & 0.125 & 0.056 & 0.007 & 0.161 & 0.088 & 0.023 & 0.082 & 0.035 & 0.006 \\ 
  1000 & 0.732 & 0.616 & 0.364 & 0.412 & 0.278 & 0.068 & 0.451 & 0.342 & 0.154 & 0.250 & 0.149 & 0.042 \\  
   \addlinespace[0.1cm]  
   \hline
\end{tabular}} \caption{Comparison of index sufficiency and nesting inequalities}\label{tab:nesting index comparison}
\end{table}

\begin{figure}
    \centering
    \includegraphics[trim = 50 15 50 25, width = \linewidth, keepaspectratio]{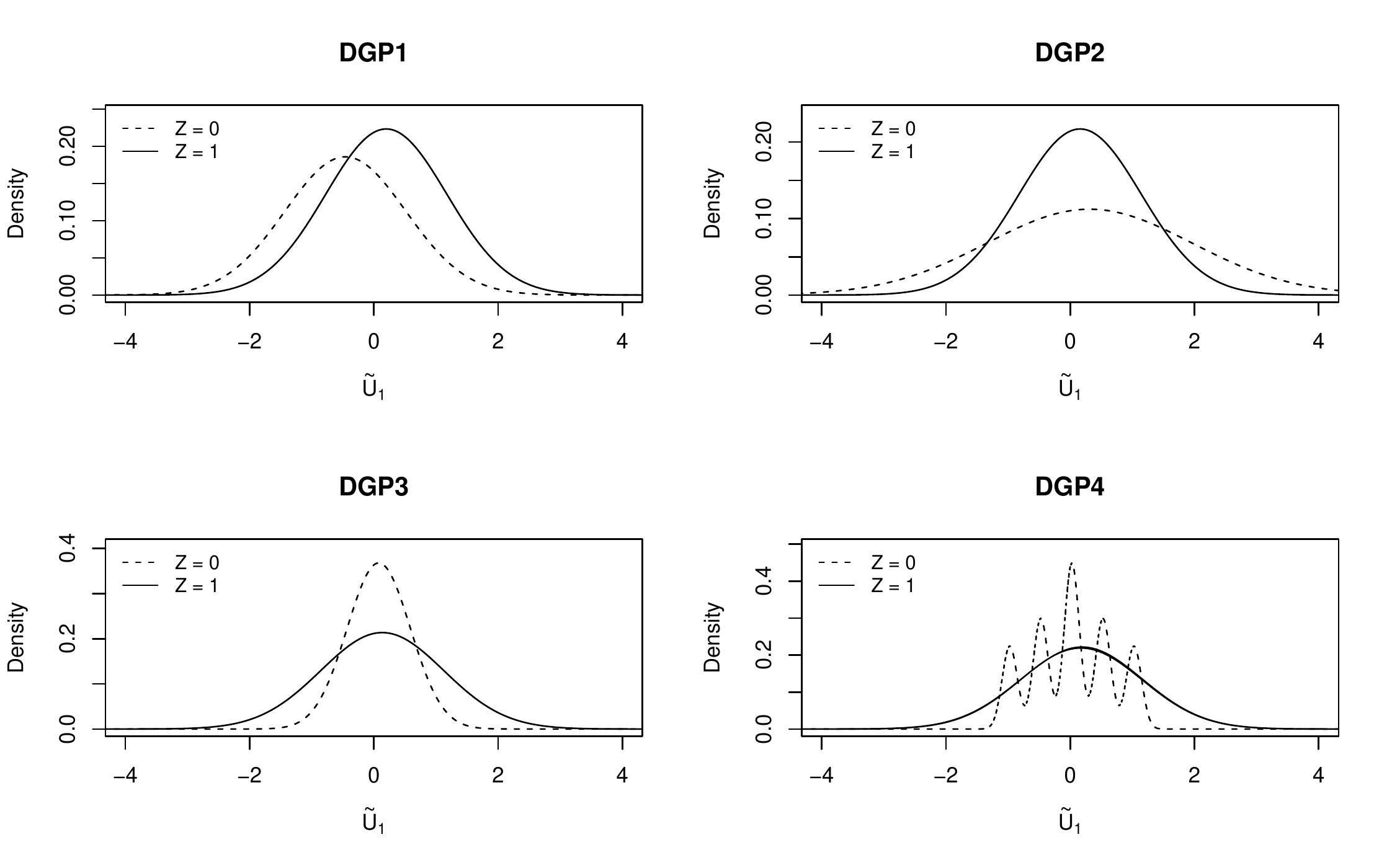}
    \caption{Conditional subdensities of $(\tilde{U}_1, D = 1)$ for DGPs 1 - 4 given values of $\theta$ and $\delta$ drawn from the $U(-1,1)$ distribution. The elements of $\gamma$ are set to 0.
    %Given $\beta$ and $\delta$, closed form expressions for these densities exist. The plots are constructed by evaluating these expressions at 10,000 values of $U$ in the region [-10, 10].
    }
    \label{fig:Monte Carlo true residuals}
\end{figure}

\begin{figure}
    \centering
    \includegraphics[trim = 50 15 50 25, width = \linewidth, keepaspectratio]{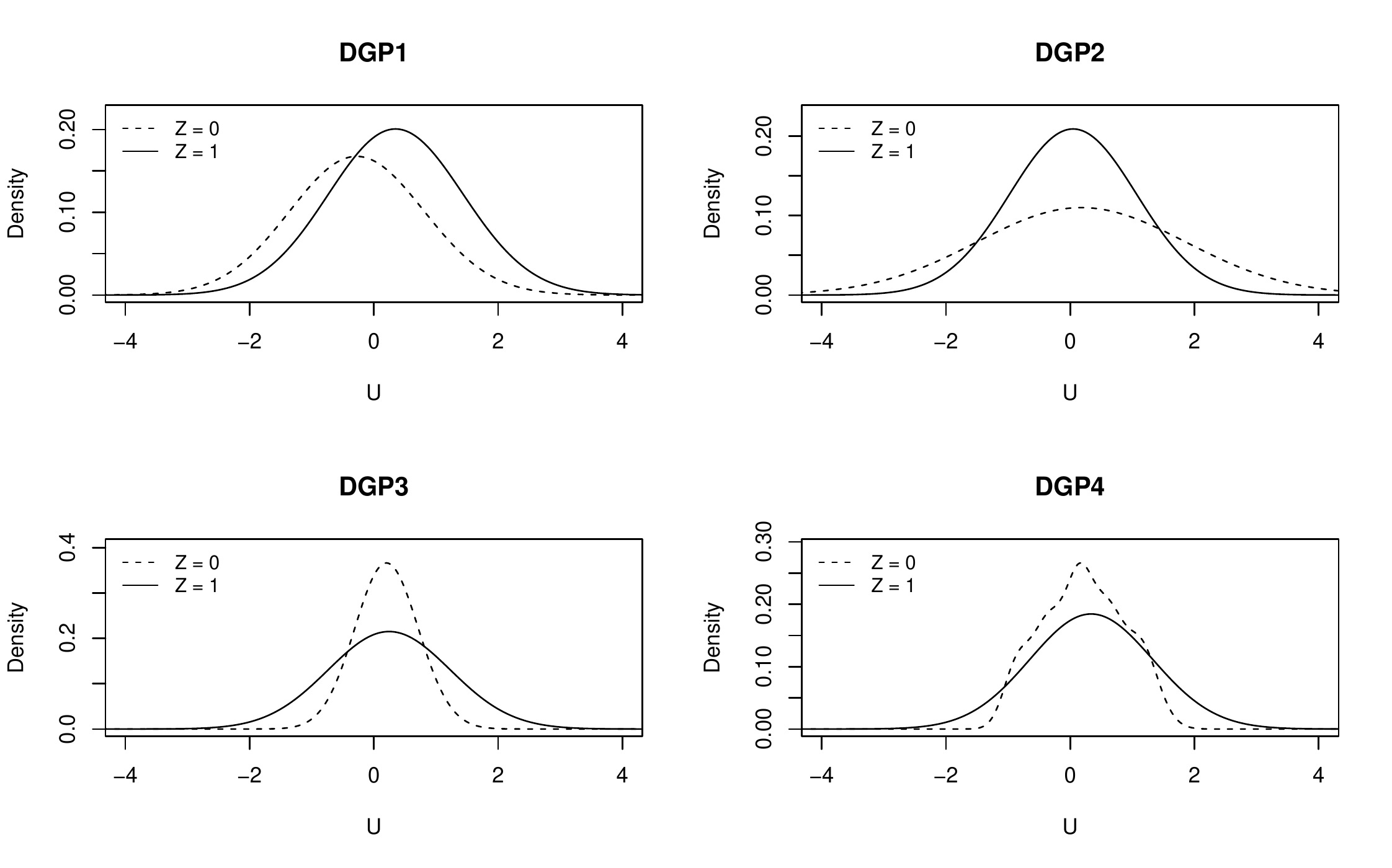}
    \caption{Subdensities of $(U, D = 1)$ for DGPs 1 - 4 conditional on the instrument $Z$, where $U = Y - X^{\prime}\beta^{*}$ with $\beta^{*} = \text{plim} \; \hat{\beta}$s of $\theta$ and $\delta$. The elements of $\beta$ are drawn from the $U(0.5, 1)$ distribution with random flips of their signs and the elements of $\delta$ are drawn from the $U(-1,1)$ distribution. All elements of $\gamma$ are set to 0. 
    %$\beta^{*}$ is obtained by generating one million observations, calculating $E[X|p]$ and $E[Y|p]$ for each variable, then regressing $Y - E[Y|p]$ on $X - E[X|p]$. For the assumed data generating processes, $p(X,Z)$, $E[X|p]$ and $E[Y|p]$ have closed form solutions, so true values are used for these quantities. Given $\beta^{*}$ and $\delta$, a closed form expression exists for the joint subdensity of $(U, D = 1)$ conditional on $Z$ and the $p$. This is then integrated numerically to obtain the density conditional on $Z$ at a given value of $U$.  The plots are constructed by calculating the density at 10,000 values of $U$ in the region [-10, 10]
    }
  \label{fig:Monte Carlo Distilled}
\end{figure}

\begin{figure}
    \centering
    \includegraphics[trim = 50 15 50 25, width = \linewidth, keepaspectratio]{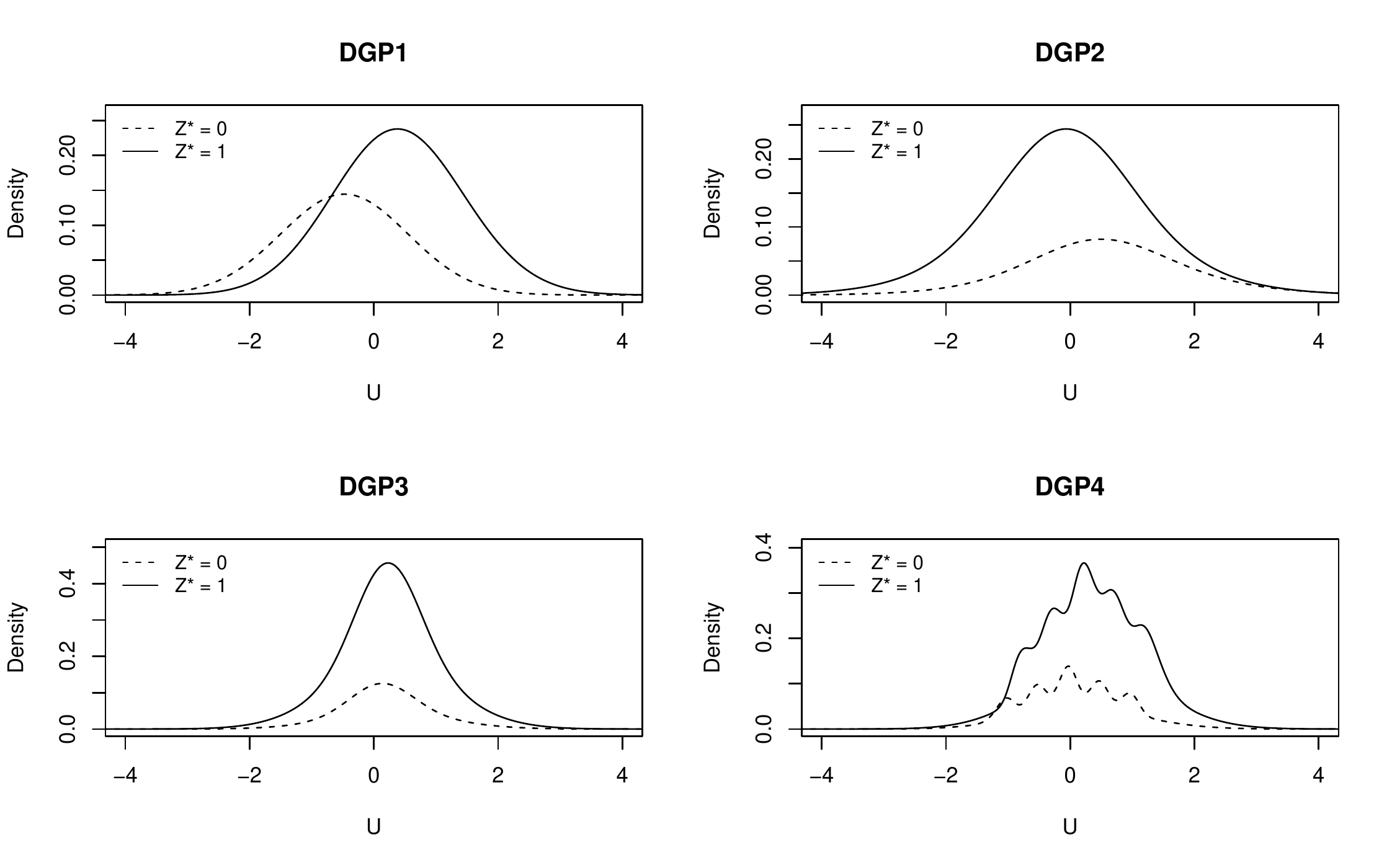}
    \caption{Subdensities of $(U, D = 1)$ for DGPs 1 - 4 conditional on the binarised propensity score, where $U = Y - X^{\prime}\theta^{*}$ with $\theta^{*} = \text{plim} \; \hat{\theta}$, and given the values of $\theta$ and $\delta$. The elements of $\theta$ are drawn from the $U(0.5, 1)$ distribution with random flips of their signs and the elements of $\delta$ are drawn from the $U(-1,1)$ distribution. All elements of $\gamma$ are set to 0. 
    %$\beta^{*}$ is obtained by generating one million observations, calculating $E[X|p]$ and $E[Y|p]$ for each variable, then regressing $Y - E[Y|p]$ on $X - E[X|p]$. For the assumed data generating processes, $p(X,Z)$, $E[X|p]$ and $E[Y|p]$ have closed form solutions, so true values are used for these quantities. Given $\beta^{*}$ and $\delta$, a closed form expression exists for the joint subdensity of $(U, D = 1)$ conditional on $Z$ and the $p$. This is then integrated numerically to obtain the density conditional on $Z$ at a given value of $U$.  The plots are constructed by calculating the density at 10,000 values of $U$ in the region [-10, 10]
    }
  \label{fig:Monte Carlo binarised}
\end{figure}

\begin{figure}
    \centering
    \includegraphics[trim = 50 15 50 25, width = \linewidth, keepaspectratio]{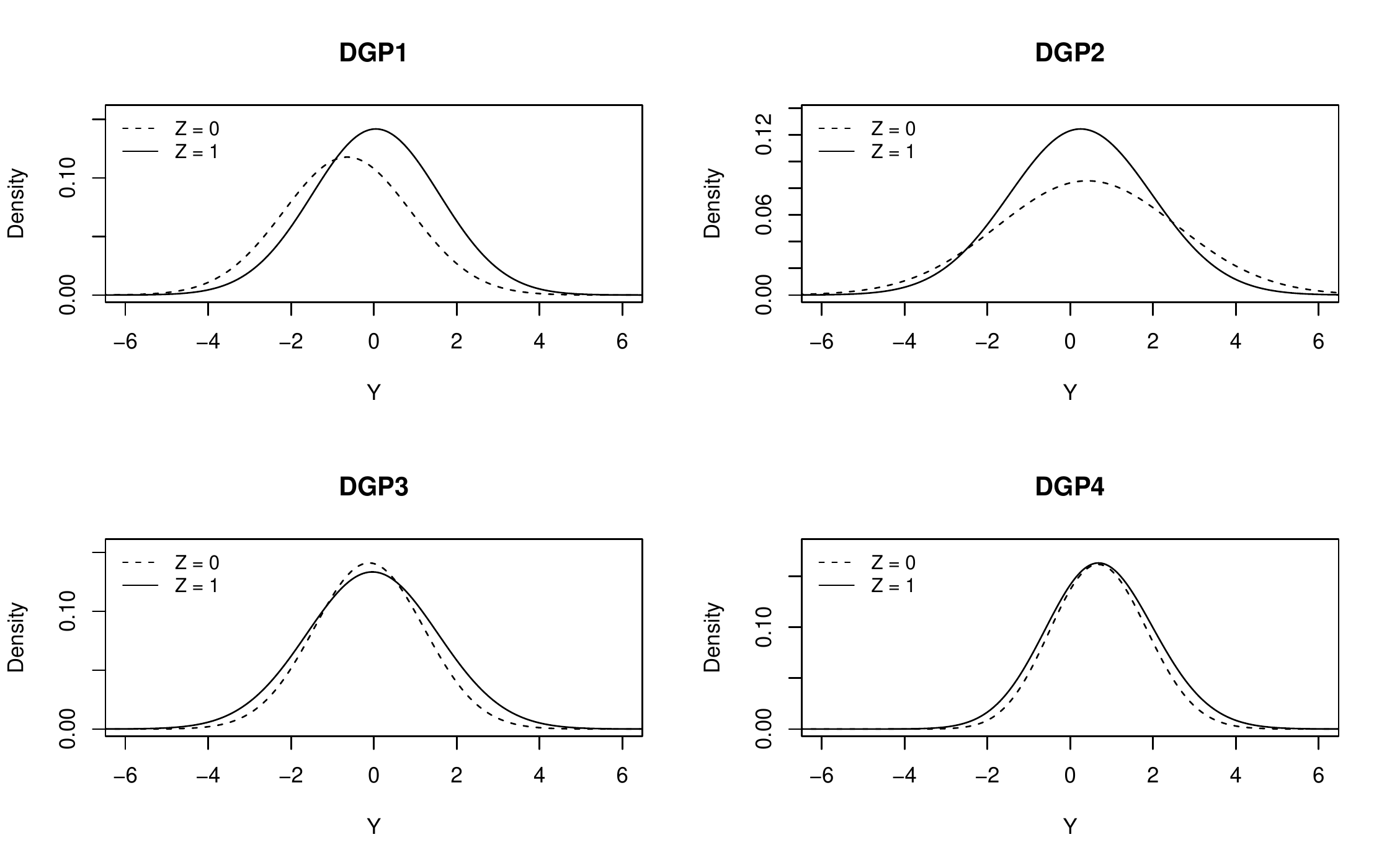}
    \caption{Subdensities of $(Y, D = 1)$ for DGPs 1 - 4 conditional on the instrument $Z$ given $\beta$ and $\delta$.he elements of $\beta$ are drawn from the $U(0.5, 1)$ distribution with random flips of their signs and the elements of $\delta$ are drawn from the $U(-1,1)$ distribution.     
    % All elements of $\gamma$ are set to 0. Given $\beta$, a closed form expression exists for the joint subdensity of $(Y, D = 1)$ conditional on $Z$ and $p$ and $\delta$. This is then integrated numerically to obtain the density conditional on $Z$ at a given value of $Y$. The plots are constructed by calculating the density at 10,000 values of $Y$ in the region [-10, 10]
    }
    \label{fig:Monte Carlo Y}
\end{figure}

% \begin{figure}
%     \centering
%     \includegraphics[trim = 50 15 50 25, width = \linewidth, keepaspectratio]{}
%     \caption{DGP1 - 4 subdensities conditional on the binarised propensity score, $D = 1$. $\beta$, $\delta$ are drawn from the $U(-1,1)$ distribution, $\gamma$ are set to 0. Subdensities are constructed by generating one million simulated observations and applying the second order Gaussian kernel density estimator with the rule-of-thumb bandwidth.}
%   \label{fig:Monte Carlo binarised}
% \end{figure}

\begin{figure}
    \centering
    \includegraphics[width = \linewidth, keepaspectratio]{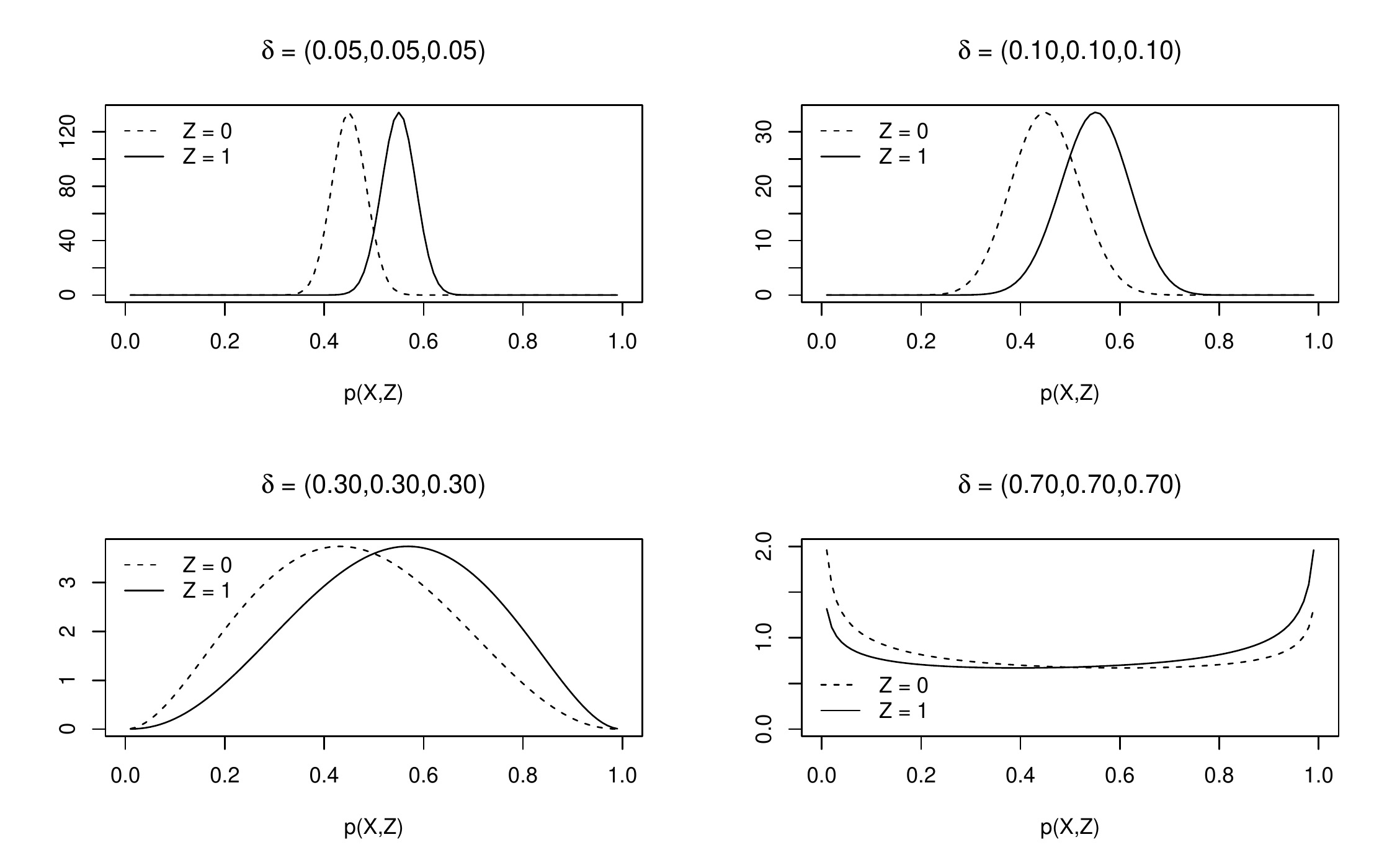}
    \caption{Density of $p(X,Z)$ conditional on $Z = 0$ and $Z = 1$ for four values of $\delta$ with $\gamma$ set to 0.} 
    \label{fig:Monte Carlo pscore dens comp}
\end{figure}

\begin{figure}
    \centering
    \includegraphics[width = \linewidth, keepaspectratio]{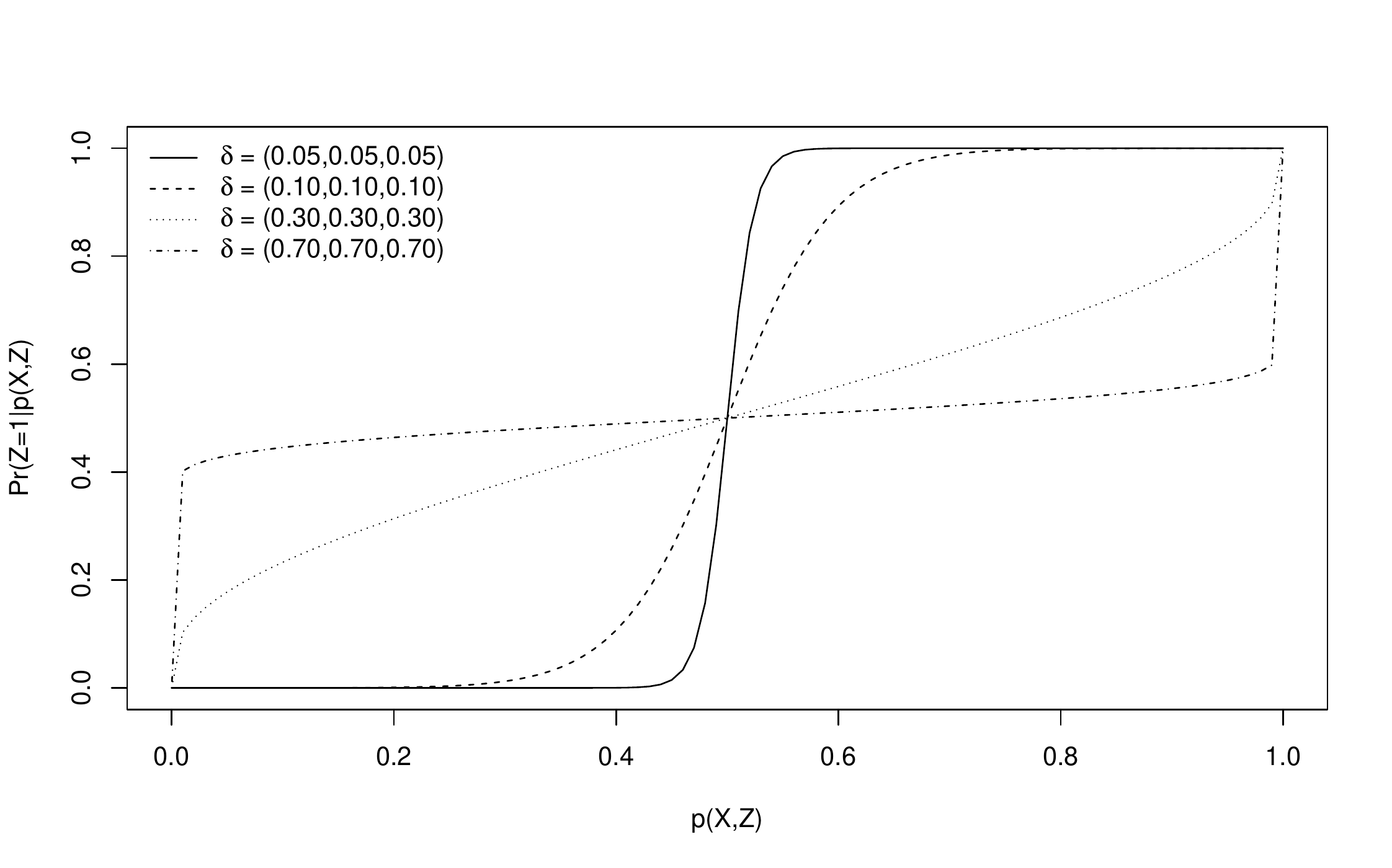}
    \caption{$Pr(Z=1|p(X,Z))$ for four values of $\delta$ with $\gamma$ set to 0.}
    \label{fig:Monte Carlo PZ1 comp}
\end{figure}

\section{Applications}\label{sec:applications}

In this section we apply the proposed test procedure to the instrumental variables of \citet{Card1993}, \citet{angristevans1998}, and \citet{oreopoulos2006}. In all cases we maintain the outcome, treatment, instrument and conditioning covariates of the original design, and estimate a potential outcome model of the form described in Section 2. For the propensity score, we estimate a probit with the instrument, conditioning covariates and interactions between the two as controls. For the outcome, we allow coefficients on the covariates to depend on the treatment. In the case of \citet{Card1993} and \citet{angristevans1998}, the partially linear model is estimated as described in step 2 of Section \ref{sec:test procedure}, with local linear regressions used to estimate $E[X|p]$ and $E[Y|p]$. $\Pr(Z = 1|p)$ is estimated by local constant regressions. Bandwidths are selected by least squares cross validation. For \citet{oreopoulos2006}, the partially linear model is estimated by assuming that the nonparametric component $\phi(p)$ is a cubic polynomial in $p$, and $\Pr(Z = 1|p)$ is estimated by probit. 

Table \ref{tab:pvalues} presents p-values. A p-value below 0.05 corresponds to a rejection of the null hypothesis at the 5\% level of significance. p-values are reported for a range of values of the trimming parameter $\xi$. For each application, Table \ref{tab:sample sizes for each test} shows the number of observations in the data set, and the number of observations included in the trimmed samples used in the tests of nesting inequalities with a distilled sample and index sufficiency.  

% Density functions for now Gaussian, rule-of-thumb bandwidth
% Estimates: local linear + second order Gaussian kernel, bandwidth selected by least squares cross-validation (\emph{np}, \emph{npudens}, \emph{npudist}, \emph{npreg}) 

\begin{table}[h]
    \centering
    \begin{tabular}{l c c c c}
    \hline \hline
     & $\xi \approx 0.07$ & $\xi \approx 0.21$ & $\xi = 0.30$ & $\xi = 1$ \\
    \hline
    \addlinespace[0.1cm]
    
    \multicolumn{5}{l}{\textbf{\citet{Card1993}}}\\
    % no controls & 0.010 & 0.000 & 0.000 & 0.000 \\ 
    Nesting inequalities & 0.360 & 0.996 & 0.998 & 0.998 \\ 
    Index sufficiency & 0.190 & 0.328 & 0.248 & 0.180 \\ 
    Overall & 0.210 & 0.354 & 0.268 & 0.198 \\
    \addlinespace[0.1cm]
    
    \multicolumn{5}{l}{\textbf{\citet{angristevans1998}}} \\
    % no controls & 0.342 & 1.000 & 1.000 & 1.000 \\ 
    Nesting inequalities &  0.636 & 0.986 & 0.996 & 0.998 \\
    Index sufficiency &  0.908 & 0.766 & 0.666 & 0.490 \\ 
    Overall &  0.876 & 0.842 & 0.734 & 0.490 \\ 
    \addlinespace[0.1cm]
 
    \multicolumn{5}{l}{\textbf{\citet{oreopoulos2006}}}\\
    % no controls & 0.000 & 0.000 & 0.000 & 0.004 \\ 
    Nesting inequalities & 0.000 & 0.000 & 0.000 & 0.006 \\ 
      \addlinespace[0.1cm]
    %   index sufficiency & 1.000 & 1.000 & 1.000 & 1.000 \\ 
    %   combined & 0.000 & 0.000 & 0.000 & 0.006 \\  
    \addlinespace[0.05cm]
    \hline
    \end{tabular}
    \caption{P-values for the test of nesting inequalities, the test of index sufficiency, and the combined test of both. A P-value below 0.05 indicates that the null hypothesis would be rejected at the 5\% level of significance. $\xi$ is a user specified trimming parameter as described in \citet{Kitagawa2015}. For \citet{Card1993} conditioning covariates are residence in the South, residence in a standard metropolitan statistical area in 1966, residence in a standard metropolitan statistical area in 1976, race, a quadratic in potential experience, living in a single mother household at age 14, living with both parents at age 14, mother's and father's years of education, dummies for mother's and father's years of education being missing, a nine value categorical variable constructed by interacting father's and mother's education class, and a categorical variable for region of residence in 1966. Covariates for \citet{angristevans1998} are mother's current age, mother's age at the time of the birth of their first child, the sex of the first child, and two dummies for race. Conditioning covariates for \citet{oreopoulos2006} are a quartic in age, survey year fixed effects, year aged 14 fixed effects, sex and a dummy for residence in Northern Ireland. }\label{tab:pvalues}  

\bigskip

% \end{table}

% \begin{table}[h]
%       \centering
    \begin{tabular}{l c c c }
    \hline \hline
     &  Observations & Nesting inequalities & Index sufficiency \\
    \hline
    \addlinespace[0.1cm]
    
   \citet{Card1993} & 3010 & 3010 & 3010 \\   
  \citet{angristevans1998} & 12732 & 12731 & 752 \\
   \citet{oreopoulos2006} & 82908 & 82908 & 0 \\
    \addlinespace[0.1cm]
    \hline
    \end{tabular}
    \caption{The number of observations in the original sample, the number of observations used in the test of nesting inequalities, and the number of observations used in the test of index sufficiency. Observations are removed for the test of nesting inequalities in order to attain stochastic dominance in the trimmed sample. Observations are removed for the test of index sufficiency if $\Pr(Z=1|p(X,Z)) \leq 0.05$ or $\Pr(Z=1|p(X,Z)) \geq 0.95$}\label{tab:sample sizes for each test}  
\end{table}
 
\subsection{\citet{Card1993}}

\citet{Card1993} estimates returns to education. The outcome $Y$ is the logarithm of weekly earnings, $D$ is years of education, and $Z$ is growing up in a local labour market with an accredited four year college. The headline specification uses four ordered discrete, three categorical and six binary conditioning covariates. These include potential experience, parental education, and residence in a standard metropolitan statistical area (SMSA). The sample consists of 3,010 observations.

The validity of this instrument has previously been tested by \citet{hubermellace15Testing}, \citet{Kitagawa2015}, \citet{mourifiewan} and \citet{sun2020}. However, for computational reasons, at most only a subset of the conditioning covariates have been used. \citet{sun2020} does not control for covariates,  \citet{hubermellace15Testing} conditions on two covariates, \citet{mourifiewan} three, and \citet{Kitagawa2015} five. In addition, all of the conditioning covariates that have been used are binary. The proposed test procedure allows us to include all conditioning covariates.

As the treatment is not a binary variable, we binarize it. The binarized treatment is defined to be 1 if years of education is at least 16. \citet{angristimbens1995} and \citet{marshall_2016} show that coarsening the treatment variable in this way can lead the instrument to violate the exclusion restriction through coarsening bias.\footnote{Section 5 of \citet{sun2020} discusses coarsening bias in the context of testing for instrument validity.} This occurs when the instrument induces changes in the underlying continuous treatment which are not captured by the discretised treatment. We verify that, conditional on the binarised treatment, the distribution of the continuous treatment is similar for both values of the instrument. Therefore, the binarized treatment does appear to capture changes in the underlying continuous treatment.

Panel \subref{fig:Card outcome} of Figure \ref{fig:Card subdensities} plots joint $(Y,D)$ subdensities for $Z=0$ and $Z=1$. \citet{Kitagawa2015} finds that, without conditioning on covariates, the null of instrument validity can be rejected. This violation can be seen in the $Y(0)$ subdensities. Figure \ref{fig:Card pscore} plots the conditional distributions and smoothed conditional densities for the estimated propensity score. The distribution for $Z=1$ lies below the distribution for $Z=0$, which indicates that first-order stochastic dominance holds, and there is considerable overlap between the densities. As show in Table \ref{tab:sample sizes for each test}, no observations are removed for either the test of nesting inequalities with a distilled sample or the test of index sufficiency. Panel \subref{fig:Card nesting} of Figure \ref{fig:Card subdensities} plots the subdensities of estimated partial residuals used in the test of nesting inequalities. After controlling for covariates, the violation in the $D=0$ subdensities is no longer evident. Panel \subref{fig:Card index} plots the reweighted densities used to test index sufficiency. These are similar to those in panel \subref{fig:Card nesting}. p-values are presented in the top panel of \ref{tab:pvalues}. p-values are lower for the index sufficiency test than the test of nesting inequalities with a distilled sample, but the null is not rejected at any conventional level of significance. 
      
% \begin{figure}
%     \centering
%     \includegraphics[trim = 50 25 50 25, width = \linewidth, keepaspectratio]{}
%     \caption{Card estimated $\Pr(Z=1|p(X,Z)$}
% \end{figure}

\begin{figure}
    \centering
    \begin{subfigure}{\textwidth}
    \centering
    \includegraphics[trim = 50 240 50 25, width = \linewidth]{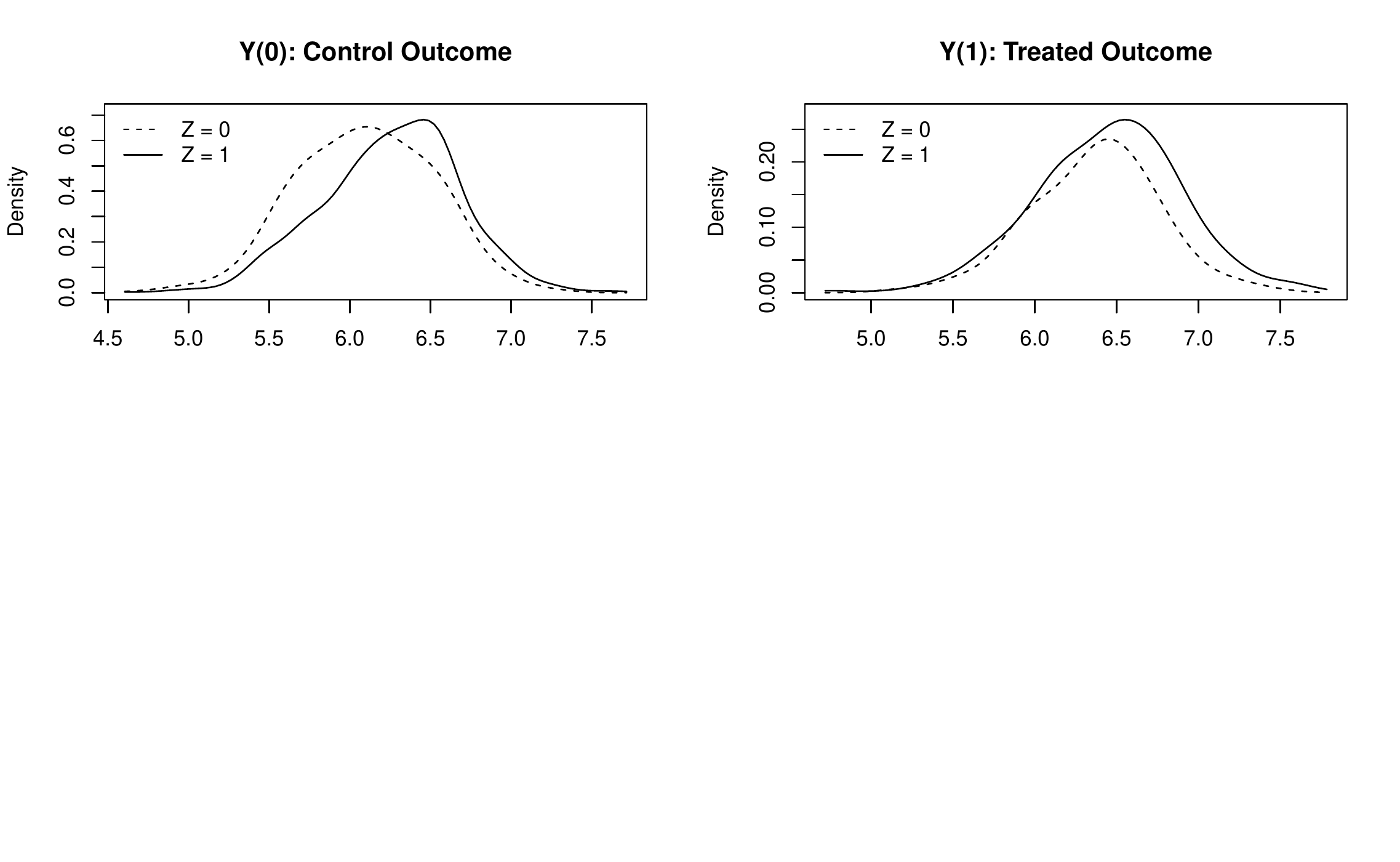}
    \caption{Outcome}\label{fig:Card outcome}
    \end{subfigure}\\
    
    \begin{subfigure}{\textwidth}
    \includegraphics[trim = 50 240 50 0, width = \linewidth, keepaspectratio]{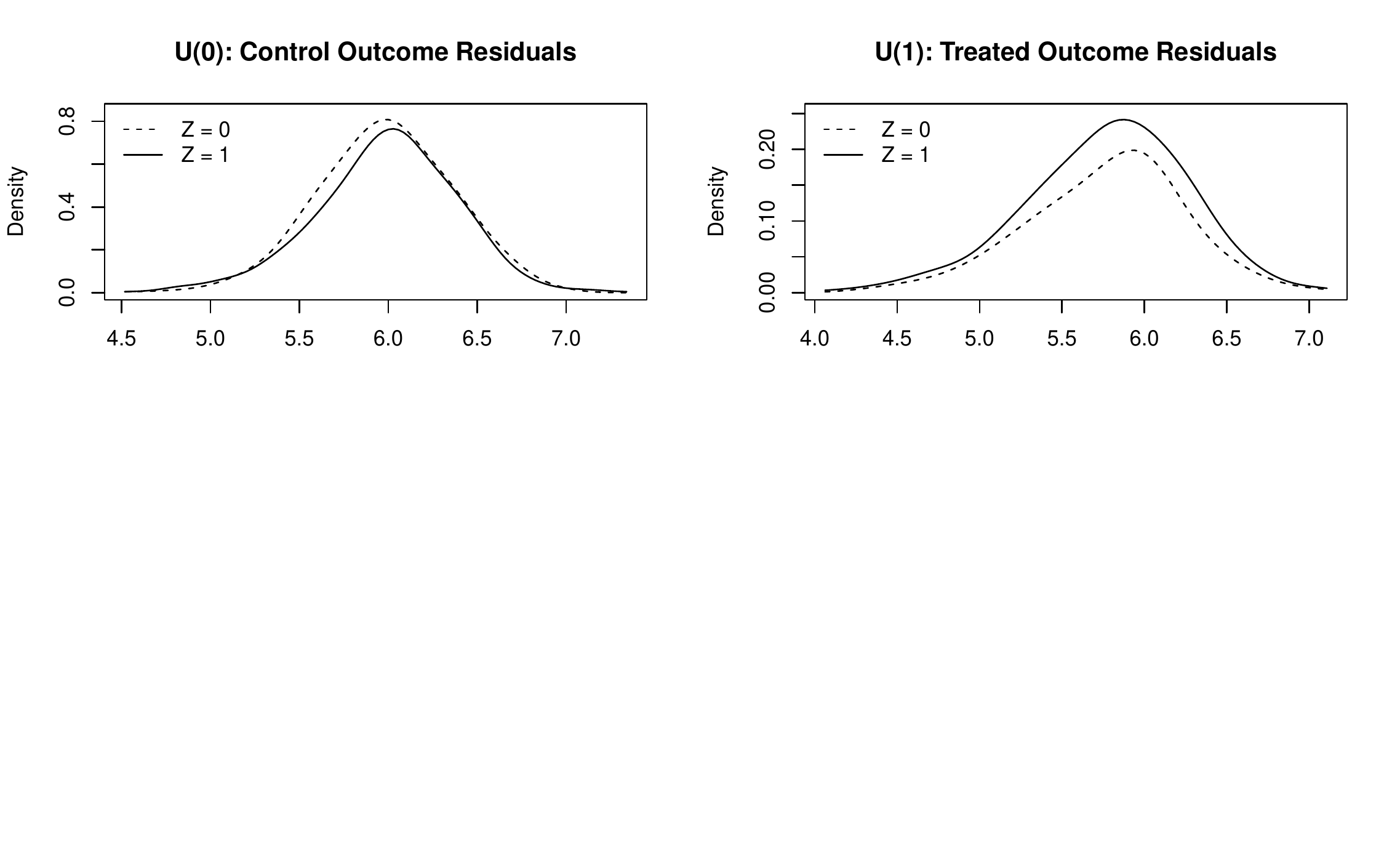}
    \caption{Nesting Inequality Densities} \label{fig:Card nesting}
    \centering
    \end{subfigure}\\
    
    \begin{subfigure}{\textwidth}
    \centering
          \includegraphics[trim = 50 240 50 0, width = \linewidth, keepaspectratio]{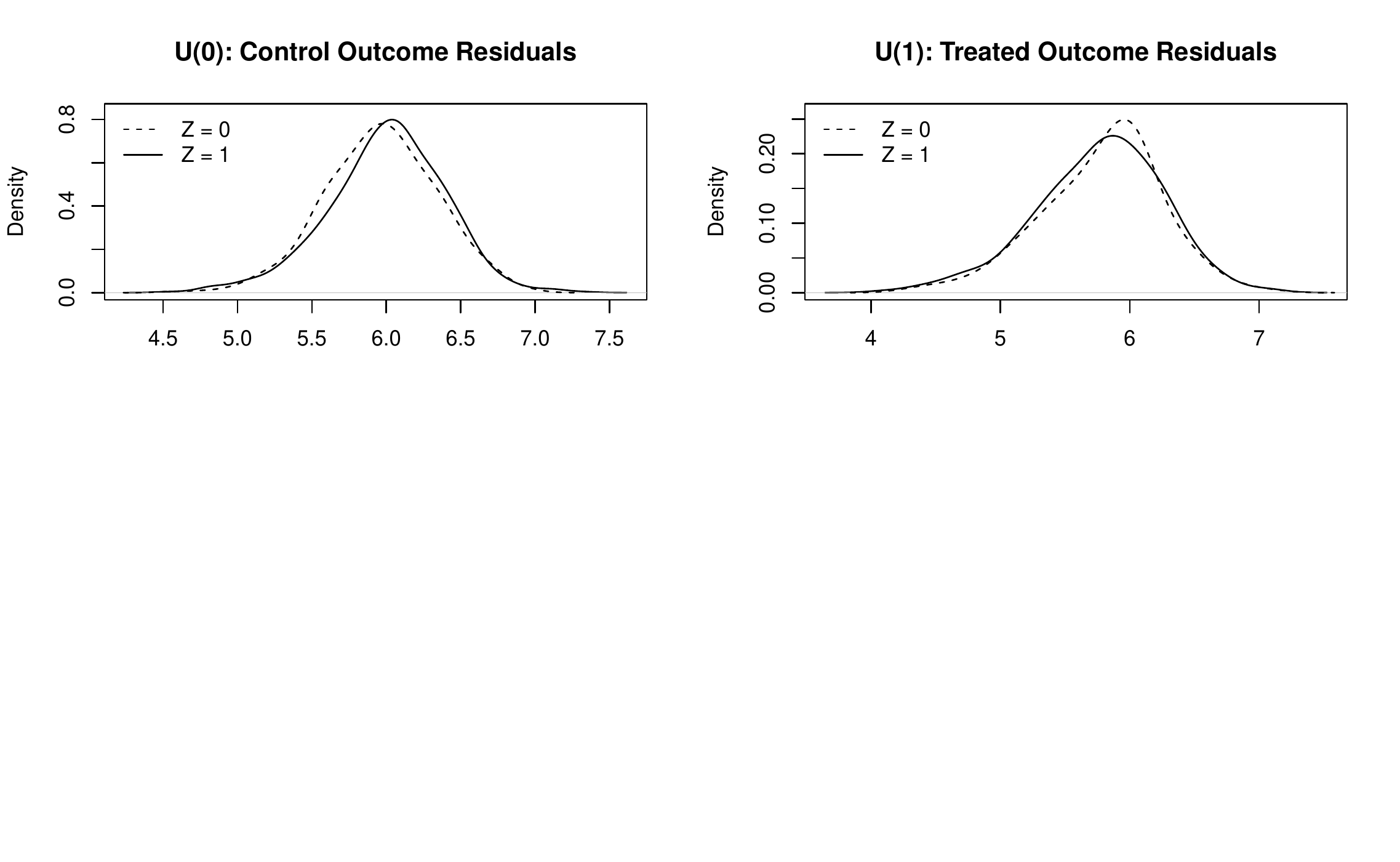}
    \caption{Rewighted} \label{fig:Card index}
    \end{subfigure}\\
    \caption{\citet{Card1993} subdensities. \ref{fig:Card outcome} plots conditional $(Y,D)$ subdensities for $Z=0$ and $Z=1$ where the outcome $Y$ is log weekly earnings. \ref{fig:Card nesting} and \ref{fig:Card index} plot condidtional subdensities for $(U,D)$, where $U$ are the partial residuals obtained by controlling for residence in the South, residence in a standard metropolitan statistical area in 1966, residence in a standard metropolitan statistical area in 1976, race, potential experience and its square, living in a single mother household at age 14, living with both parents at age 14, father's and mother's years of education, indicators for when father's and mother's years of indication are missing, a nine value categorical variable constructed by interacting father's and mother's education class, and a category for region of residence in 1966. \ref{fig:Card nesting} are nesting inequality subdensites which are trimmed to ensure first-order stochastic dominance. \ref{fig:Card index} are index sufficiency subdensities which are reweighted by $P(Z=1|p(X,Z))$. Densities use the Gaussian kernel with the rule-of-thumb bandwidth.} \label{fig:Card subdensities}
\end{figure}

\begin{figure}
    \centering
    \includegraphics[trim = 50 220 50 25, width = \linewidth, keepaspectratio]{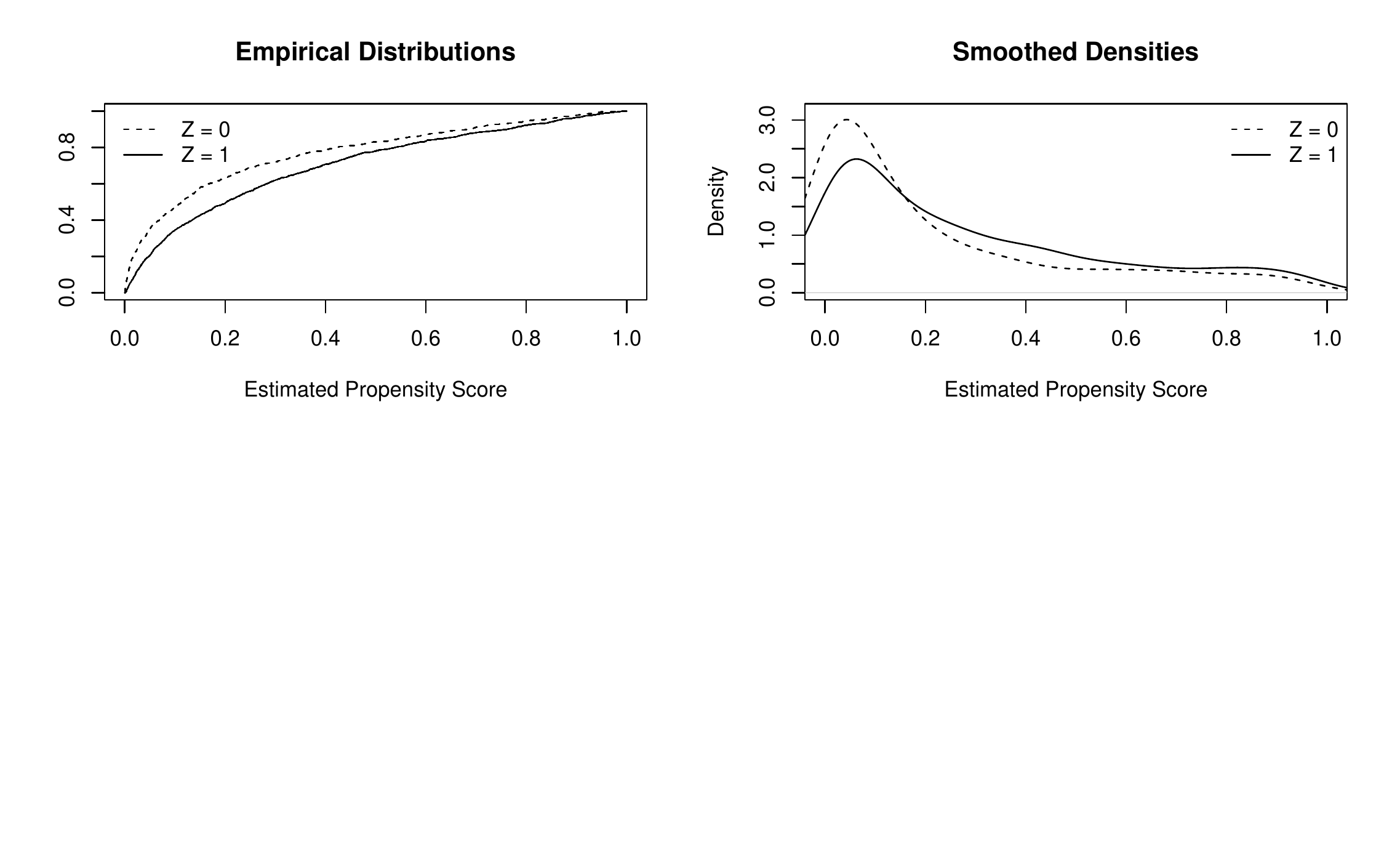}
    \caption{\citet{Card1993} distribution and density of the estimated propensity score. The propensity score is estimated by probit with controls consisting of the instrument, the conditioning covariates, and interactions between the two. Conditioning covariates consist of residence in the South, residence in a standard metropolitan statistical area in 1966, residence in a standard metropolitan statistical area in 1976, race, potential experience and its square, living in a single mother household at age 14, living with both parents at age 14, father's and mother's years of education, indicators for when father's and mother's years of indication are missing, a nine value categorical variable constructed by interacting father's and mother's education class, and a category for region of residence in 1966. The distribution is the simple empirical distribution of estimated propensity scores. The density is an estimated kernel density using the Gaussian kernel and the rule-of-thumb bandwidth.} \label{fig:Card pscore}
\end{figure}

\subsection{\citet{angristevans1998}}

\citet{angristevans1998} estimate the effect of family size on parental labour supply using the same-sex instrument, which takes a value of one when the two eldest children are the same sex and zero otherwise. Their justification for this instrument is that the sex of children is randomly determined and that there is a parental preference for mixed sex children, so parents whose first two children are the same sex are more likely to have a third. The validity of this instrument was subsequently challenged in \citet{RosenzweigWolpin2000}. Using a simple model of labour supply choice, they show that additional restrictions on preferences are needed for this instrument to satisfy the exclusion restriction.

In response to this debate, the validity of the instrument has been tested by  \citet{huber2015} and \citet{mourifiewan}. In both cases, the \citet{angristevans1998} sample is split into subgroups based on values of a subset of the covariates, and then instrument validity is tested for each subgroup. \citet{huber2015} defines 66 subgroups, and finds that the \citet{hubermellace15Testing} test rejects the null in a small number of them, but the \citet{Kitagawa2015} test never rejects the null. \citet{mourifiewan} reports no rejections at the 10\% level of significance in 24 subgroups. Neither \citet{huber2015} or \citet{mourifiewan} uses the full set of covariates from the original specification of \citet{angristevans1998}, andsome of the variables that are used are coarsened.  

We use the replication files for \citet{angristevans1998} to construct the sample and variables. We focus on the 1980 Census sample of married mothers. This consists of married mothers aged between 21 and 35 who have at least two children and were at least 15 years old at the time of their first birth. The sample contains 254,652 observations. To keep estimation of the semiparametric model feasible, we use a randomly chosen 1 / 20 subsample of this data, which results in 12,732 observations. $Z$ is the same-sex instrument. $D$ is an indicator set to 1 for mothers with more than two children. \citet{angristevans1998} presents results for several measures of labour supply. Here we set $Y$ to be the logarithm of annual family labour income. The headline specification controls for demographic characteristics of the mother, the sex of the first child, and the sex of the second child. 

Panel \subref{fig:Angrist Evans outcome} of Figure \ref{fig:Angrist Evans subdensities} plots subdensities for the outcomes without controlling for covariates. In this case there is no visual evidence of violation of the instrument validity. \citet{angristevans1998} shows that the same-sex instrument almost exactly balances covariate values, which suggests that the instrument and covariates are independent. As discussed in Section 4, when the instrument and covariates are independent, failure to control for covariates can mask violations of instrument validity. 

% 1 observation removed when testing nesting inequalities, 8 when testing index sufficiency (of a sample of 12,732 observations)
Figure \ref{fig:Angrist Evans pscore} plots the conditional distributions and smoothed densities for the estimated propensity scores. There is no obvious violation of first-order stochastic dominance, and the propensity score densities overlap. One observation is removed when constructing a distilled sample for the test of nesting inequalities.
However, as seen in Table \ref{tab:sample sizes for each test}, only a small fraction of the observations are used in the test of index sufficiency. Since \citet{angristevans1998} uses only a small number of discrete covariates, there are few estimated propensity score values where there is a mixture of observations with $Z=0$ and $Z=1$. Hence, the estimate of $\Pr(Z=1|p(X,Z))$ is either 0 or 1 for the majority of the sample, and these observations are removed for the test of index sufficiency.  
Panels \subref{fig:Angrist Evans nesting} and \subref{fig:Angrist Evans index} of Figure \ref{fig:Angrist Evans subdensities} plot smoothed subdensities for the samples used in the nesting inequalities with a distilled sample and index sufficiency tests. There is no visual evidence of a violation of the nesting inequalities. For index sufficiency, both the $U(0)$ and $U(1)$ subdensities for $Z=0$ and $Z=1$ appear to differ. However, the results in Table \ref{tab:pvalues} show that this difference is not statistically significant. 
% \begin{itemize}
%     \item Sample constructed following the replication files. 
%     \item Non-parametric estimation did not work well, parametric estimation used instead. 
%     \item For memory reasons, used a subsample of $\approx 6000$ observations. Initially tried $approx$ 24000, but ran out of RAM (more than 10GB used!). Computation time for the test without covariates was < 1 minute, computation time for the test with covariates was 25 minutes. Once the new code is optimised, it will be possible to run with a larger sample.
% \end{itemize}

% \begin{figure}
%     \centering
%     \includegraphics[trim = 50 25 50 25, width = \linewidth, keepaspectratio]{}
%     \caption{Angrist Evans estimated $\Pr(Z=1|p(X,Z)$}
% \end{figure}

\begin{figure}
    \centering
    \begin{subfigure}{\textwidth}
    \centering
    \includegraphics[trim = 50 240 50 0, width = \linewidth]{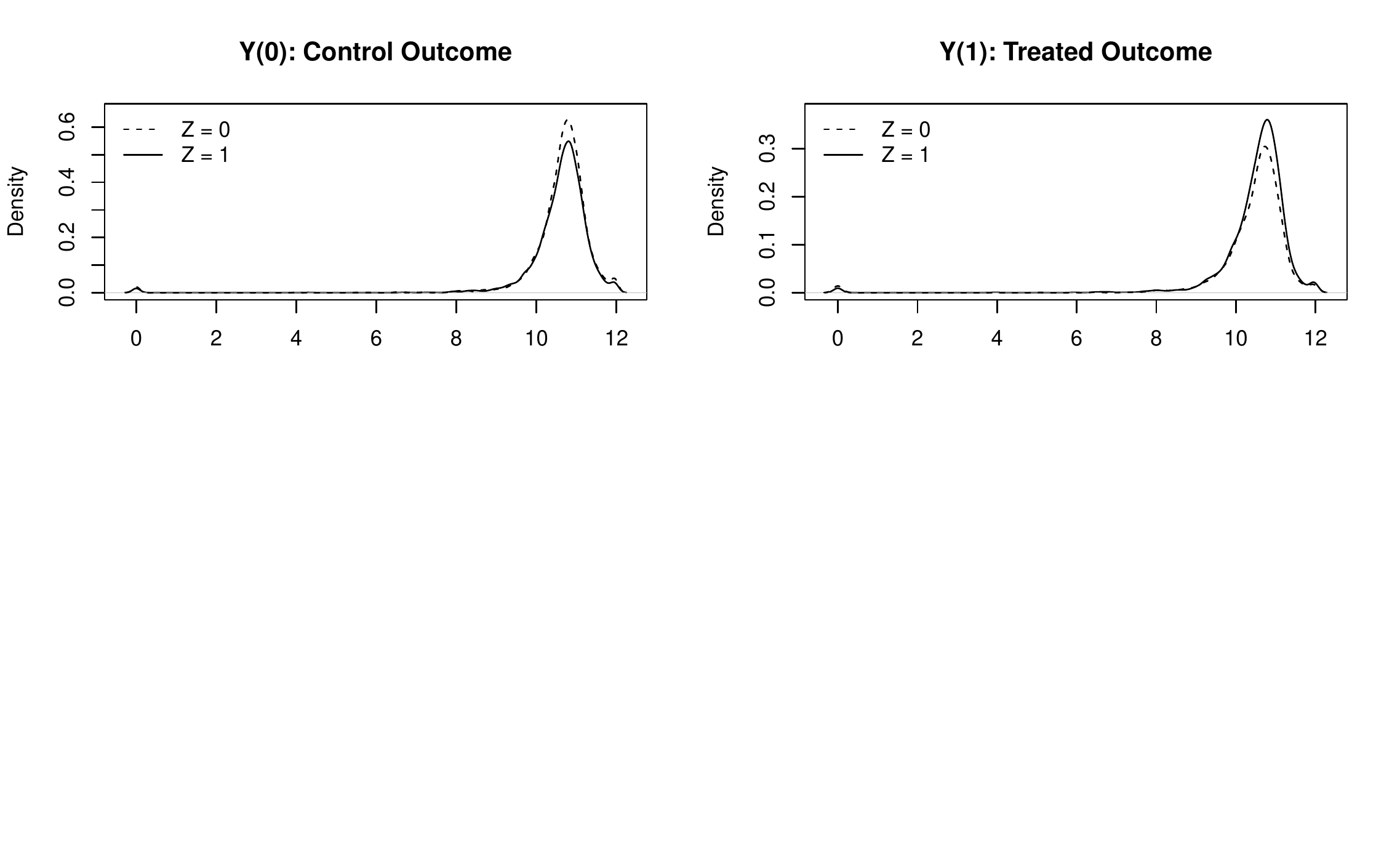}
    \caption{Outcome Densities}\label{fig:Angrist Evans outcome}
    \end{subfigure}\\
    
    \begin{subfigure}{\textwidth}
    \includegraphics[trim = 50 240 50 0, width = \linewidth, keepaspectratio]{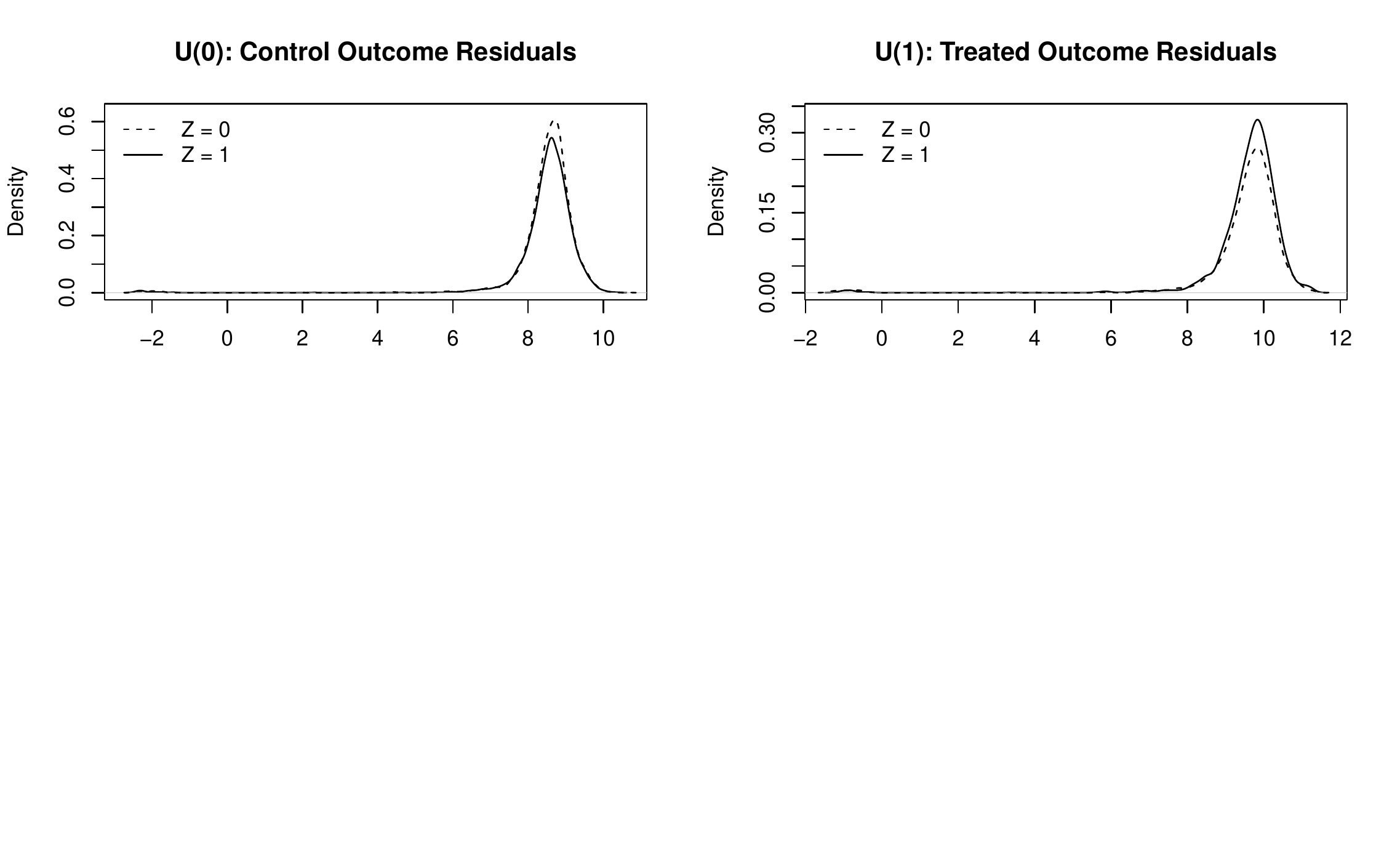}
    \caption{Nesting Inequality Densities}\label{fig:Angrist Evans nesting}
    \centering
    \end{subfigure}\\
    
    \begin{subfigure}{\textwidth}
    \centering
          \includegraphics[trim = 50 240 50 0, width = \linewidth, keepaspectratio]{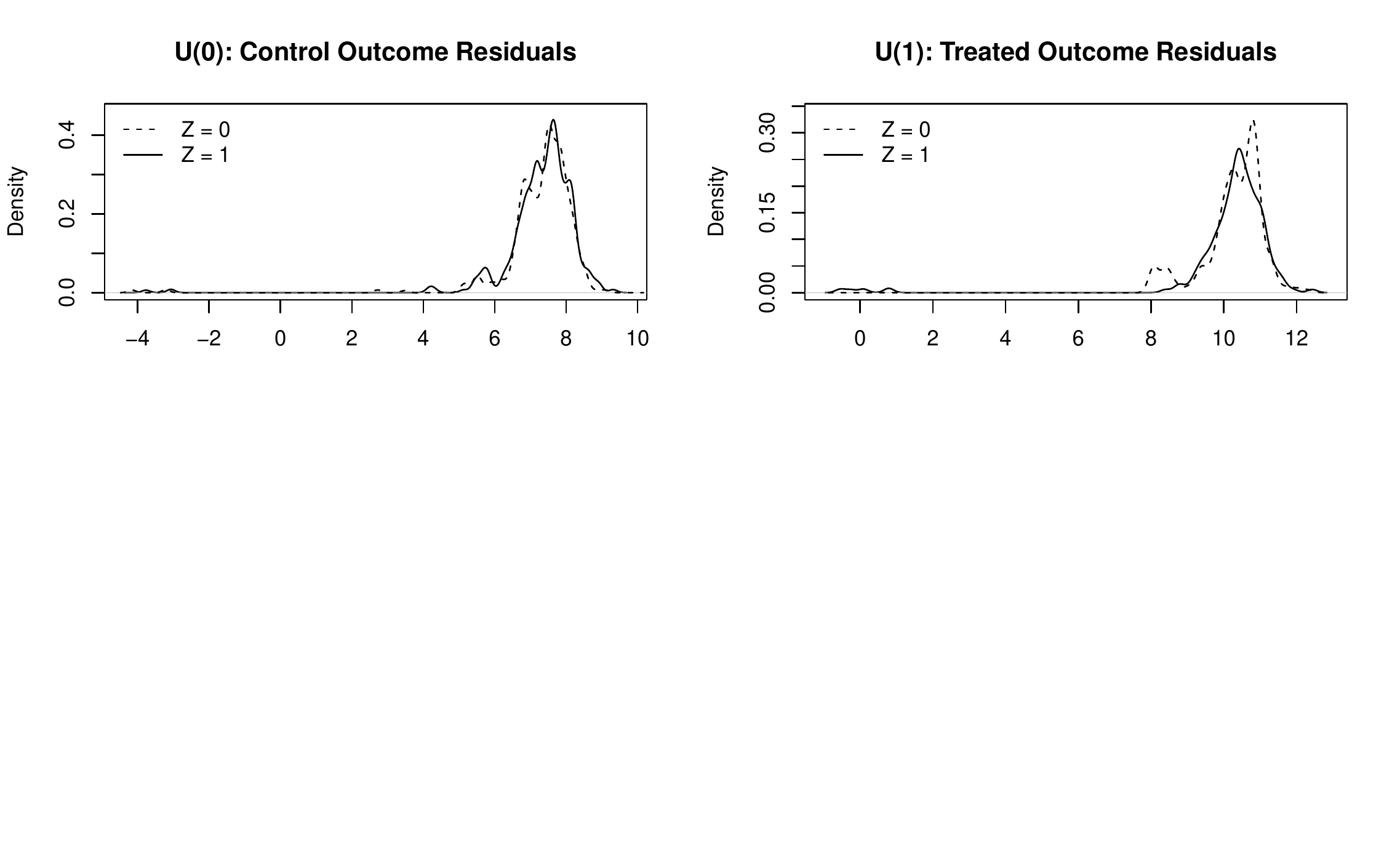}
    \caption{Index Sufficiency Densities}\label{fig:Angrist Evans index}
    \end{subfigure}\\
    \caption{\citet{angristevans1998} subdensities. \ref{fig:Angrist Evans outcome} plots conditional $(Y,D)$ subdensities for $Z=0$ and $Z=1$ where the outcome $Y$ is log annual family earnings. \ref{fig:Angrist Evans nesting} and \ref{fig:Angrist Evans index} plot condidtional subdensities for $(U,D)$, where $U$ are the partial residuals obtained by controlling for mother's age, mother's age at the time of the first birth, two dummies for mother's race, and dummies for the first and second child being a boy. \ref{fig:Angrist Evans nesting} are nesting inequality subdensites which are trimmed to ensure first-order stochastic dominance. \ref{fig:Angrist Evans index} are index sufficiency subdensities which are reweighted by $P(Z=1|p(X,Z))$. Densities use the Gaussian kernel with the rule-of-thumb bandwidth.}\label{fig:Angrist Evans subdensities}
\end{figure}

\begin{figure}
    \centering
    \includegraphics[trim = 50 220 50 25, width = \linewidth, keepaspectratio]{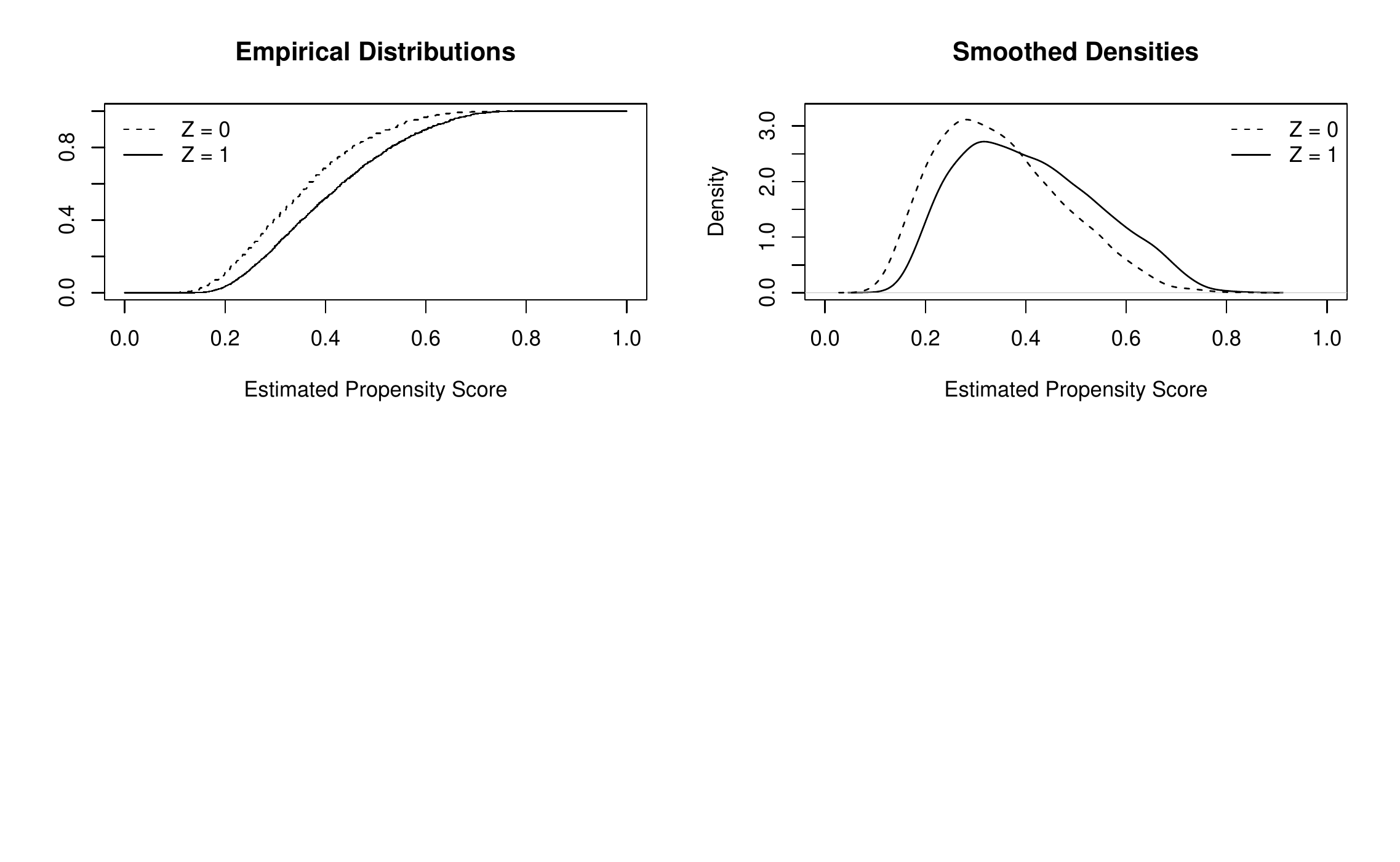}
    \caption{\citet{angristevans1998} distribution and density of the estimated propensity score. The propensity score is estimated by probit with controls consisting of the instrument, the conditioning covariates, and interactions between the two. Conditioning covariates consist of mother's current age, mother's age at the time of the birth of their first child, the sex of the first child, and two dummies for race. The distribution is the simple empirical distribution of estimated propensity scores. The density is an estimated kernel density using the Gaussian kernel and the rule-of-thumb bandwidth.}\label{fig:Angrist Evans pscore}
\end{figure}

\subsection{\citet{oreopoulos2006}}

Changes in schooling laws have been used as an instrument to estimate the effects of education for a variety of outcomes. \citet{acemogluangrist2000} uses this instrument to estimate human capital externalities, \citet{lochnermoretti2004} estimates effects on crime, \citet{LlerasMuney2005} effects on health, and \citet{oreopoulos2006} the returns to education. These designs exploit variation in the timing of changes in compulsory schooling laws across locations. Conditional on covariates, such as location and time fixed effects, this variation is assumed to be unrelated to other changes. \citet{stephensyang2014} presents evidence that this assumption does not hold in data for US states and, in particular, that changes in compulsory schooling laws are correlated with changes in school quality. \citet{brunello2013testing} develops a test of uncorrelatedness between changes in compulsory schooling laws and school quality. They apply this test to European data and find that they cannot reject the null of uncorrelatedness. \citet{BolzernHuber2017} tests the validity of the compulsory schooling law instrument using the \citet{hubermellace15Testing} and \citet{Kitagawa2015} tests with data from seven European countries, and does not reject the null of instrument validity. 

% in schooling laws over time across locations, for example US states. Conditional on covariates, this variation is assumed to be unrelated to other location level factors. \citet{stephensyang2014} investigate the validity of this assumption for US states. They estimate specifications similar to those of \citet{acemogluangrist2000} and \citet{lochnermoretti2004}, but allow for differing trends in the four US census regions. They find that, when census region is controlled for, estimates of treatment effects become insignificant or even change sign. 

We apply our test to the data of \citet{oreopoulos2006}. The instrument is the legal minimum school leaving age for the UK. This was increased from 14 to 15 in 1947 in Great Britain and in 1957 in Northern Ireland. The dataset consists of individuals aged between 16 and 65. It is constructed by combining UK and Northern Ireland household surveys for the years 1984 through to 2006.\footnote{The published paper \citet{oreopoulos2006} used data from the UK General Household Survey for 1983-1998, and data from the Northern Ireland Continuous Household Survey for 1985-1998. A corrigendum with subsequent editions of each survey incorporated into the data was later published online. We use the revised data.} $Y$ is log real earnings, $D$ is age of leaving full-time education, and $Z$ is a dummy for the legal minimum leaving age faced at age 14 being 15. Conditioning covariates are year aged 14, current age, sex, survey year and residence in Northern Ireland. As with \citet{Card1993}, the treatment variable must be binarized. Here we define the treatment to be one if the age of leaving education is at least 15. We investigate the possibility that this induces coarsening bias, but the binarised treatment seems to capture well the variation in the underlying treatment induced by the instrument. 

The data consists of 82,908 observations which are aggregated into 30,587 cells. Each cell consists of individuals with common values of the treatment, instrument and conditioning covariates. The outcome variable is averaged at the cell level, and each cell is assigned a weight equal to the number of individuals it contains. The identifying assumptions stated in Section 2.2 were not for data that has been aggregated in this way. However, the aggregation procedure preserves all variables apart from the outcome, which implies that the propensity score is also preserved. The identifying assumptions may then be simply restated in terms of the averaged outcome, and the same testable implications derived.\footnote{Note, however, that failure to reject the LATE assumptions for the aggregated data does not imply that they would not be rejected for the unaggregated data and vice versa.}

Panel \subref{fig:Oreopoulos outcome} of Figure \ref{fig:Oreopoulos subdensities} plots subdensities for the outcome variable. It is immediately noticeable that there are very few observations with $Z=1$ and $D=0$. Figure \ref{fig:Oreopoulos pscore} plots the conditional distributions and densities for the propensity score. First-order stochastic dominance clearly holds, but the distributions barely overlap\footnote{The minimum estimated propensity score for a cell with Z = 1 is 0.678, and the estimated maximum propensity score with Z = 0 is 0.683. The overlap contains 8 cells (21 observations) with Z = 0 and 12 cells (94 observations) with Z = 1.}. The estimated $\Pr(Z=1|p(X,Z))$ is either 1 or 0 for all cells. The entire sample is used in the test of nesting inequalities with a distilled sample, but there are no observations for which to perform the test of index sufficiency. Table \ref{tab:pvalues} therefore present p-values for the test of nesting inequalities with a distilled sample only. We find that the null is rejected at even the 1\% level of significance. The violation is in the lower tail of the $U(1)$ subdensities, which are plotted in panel \subref{fig:Oreopoulos nesting} of Figure \ref{fig:Oreopoulos subdensities}. The subdensity of $U(1)$ for $Z=0$ does not lie entirely within the support of the subdensity for $Z=1$. 
% This is potentially due to the aggregated nature of the data. 

% There are several potential explanations for this. First, as noted earlier, the distribution of actual years of education is not the same across subsamples. Second, there is correlation between the instrument and the conditioning covariates. In the case of year aged 14, this follows from the definition of the instrument. In the case of age and survey year, it is due to the way the data is constructed. Recall that the earliest household survey used is from 1984. The final sample excludes observations where the current age is below 16 or above 65. Cohorts born earlier, who have $Z=0$, appear only at older ages and in earlier surveys.  Table $\ref{tab:oreopoulossampchar}$ illustrates this. For each value of year aged 14, it shows the minimum and maximum age observed in the sample and the first and last survey that cohort appears in. The cohort aged 14 in 1935, for example, are all aged at least 63 by 1984 and have all reached the age of 65 by 1986.\\

% \section{Oreopoulos}

% Minimum propensity score with Z = 1 is 0.678, maximum propensity score with Z = 0 is 0.683. The overlap contains 8 cells (21 observations) with Z = 0 and 12 cells (94 observations) with Z = 1. The index sufficiency test would use an extremely small sample. 

% \begin{figure}
%     \centering
%     \includegraphics[trim = 50 25 50 25, width = \linewidth, keepaspectratio]{oreopoulos PZ1.pdf}
%     \caption{Oreopouls estimated $\Pr(Z=1|p(X,Z)$}
% \end{figure}

\begin{figure}
    \centering
    \begin{subfigure}{\textwidth}
    \centering
    \includegraphics[trim = 50 240 50 0, width = \linewidth]{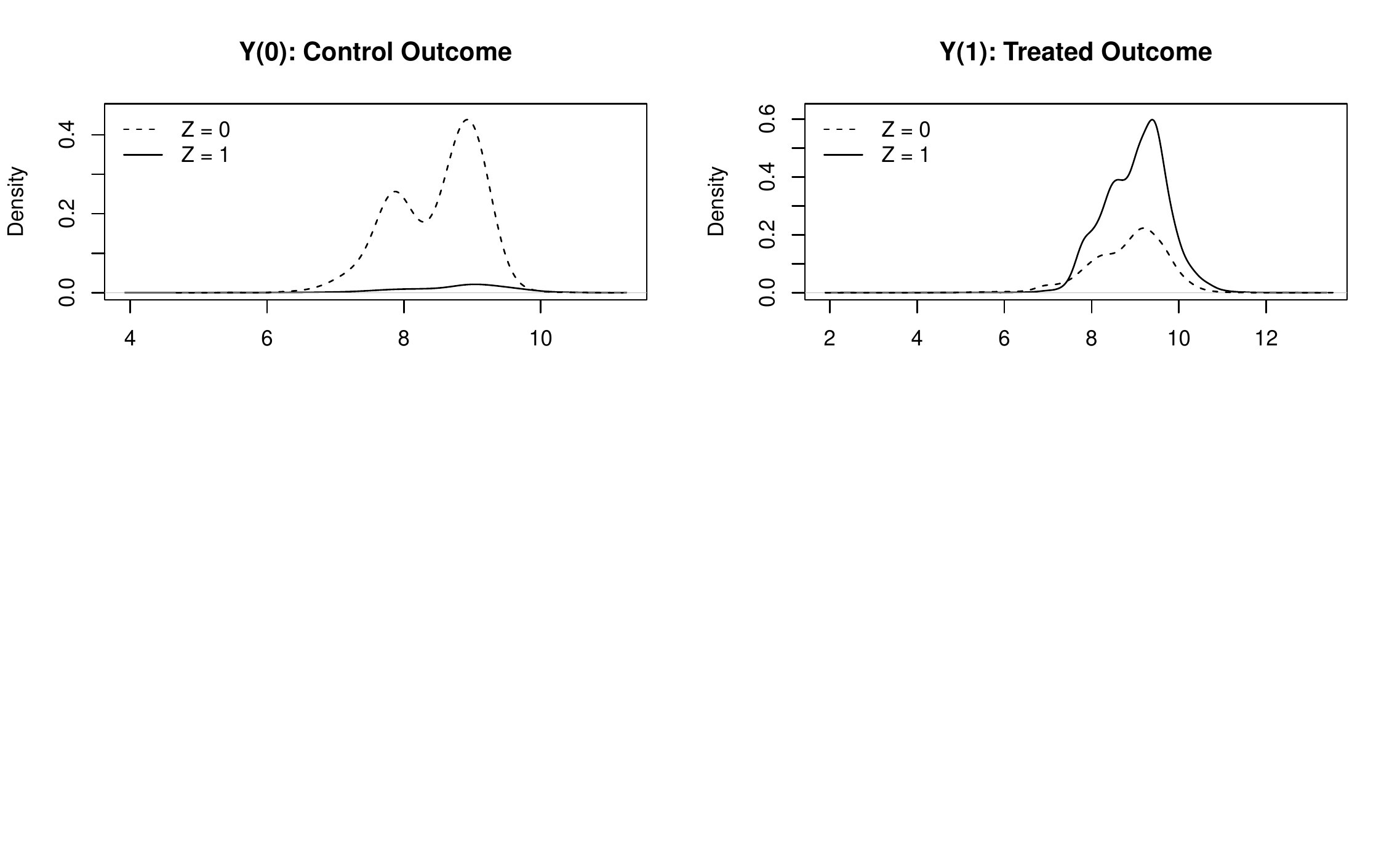}
    \caption{Outcome Densities}\label{fig:Oreopoulos outcome}
    \end{subfigure}\\
    
    \begin{subfigure}{\textwidth}
    \includegraphics[trim = 50 240 50 0, width = \linewidth, keepaspectratio]{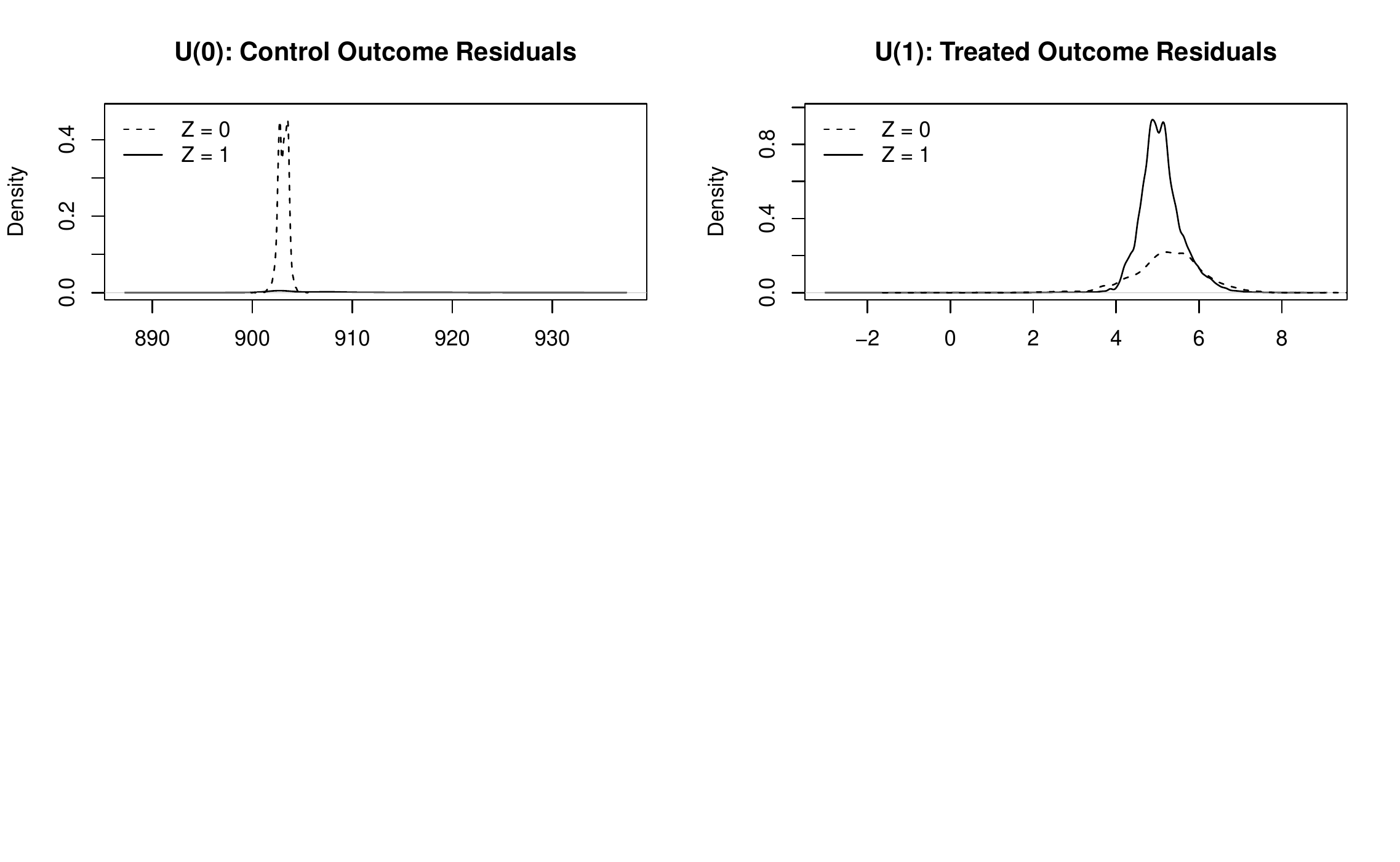}
    \caption{Nesting Inequality Densities}\label{fig:Oreopoulos nesting}
    \centering
    \end{subfigure}\\
    \caption{\citet{oreopoulos2006} subdensities. \ref{fig:Oreopoulos outcome} plots conditional $(Y,D)$ subdensities for $Z=0$ and $Z=1$ where the outcome $Y$ is log real earnings. \ref{fig:Oreopoulos nesting} plots condidtional subdensities for $(U,D)$, where $U$ are the partial residuals obtained by controlling for year aged 14 fixed effects, a quartic polynomial for current age, sex, survey year fixed effects and residence in Northern Ireland. \ref{fig:Oreopoulos nesting} are nesting inequality subdensites which are trimmed to ensure first-order stochastic dominance.  Densities use the Gaussian kernel with the rule-of-thumb bandwidth.}\label{fig:Oreopoulos subdensities}
\end{figure}

\begin{figure}
    \centering
    \includegraphics[trim = 50 220 50 25, width = \linewidth, keepaspectratio]{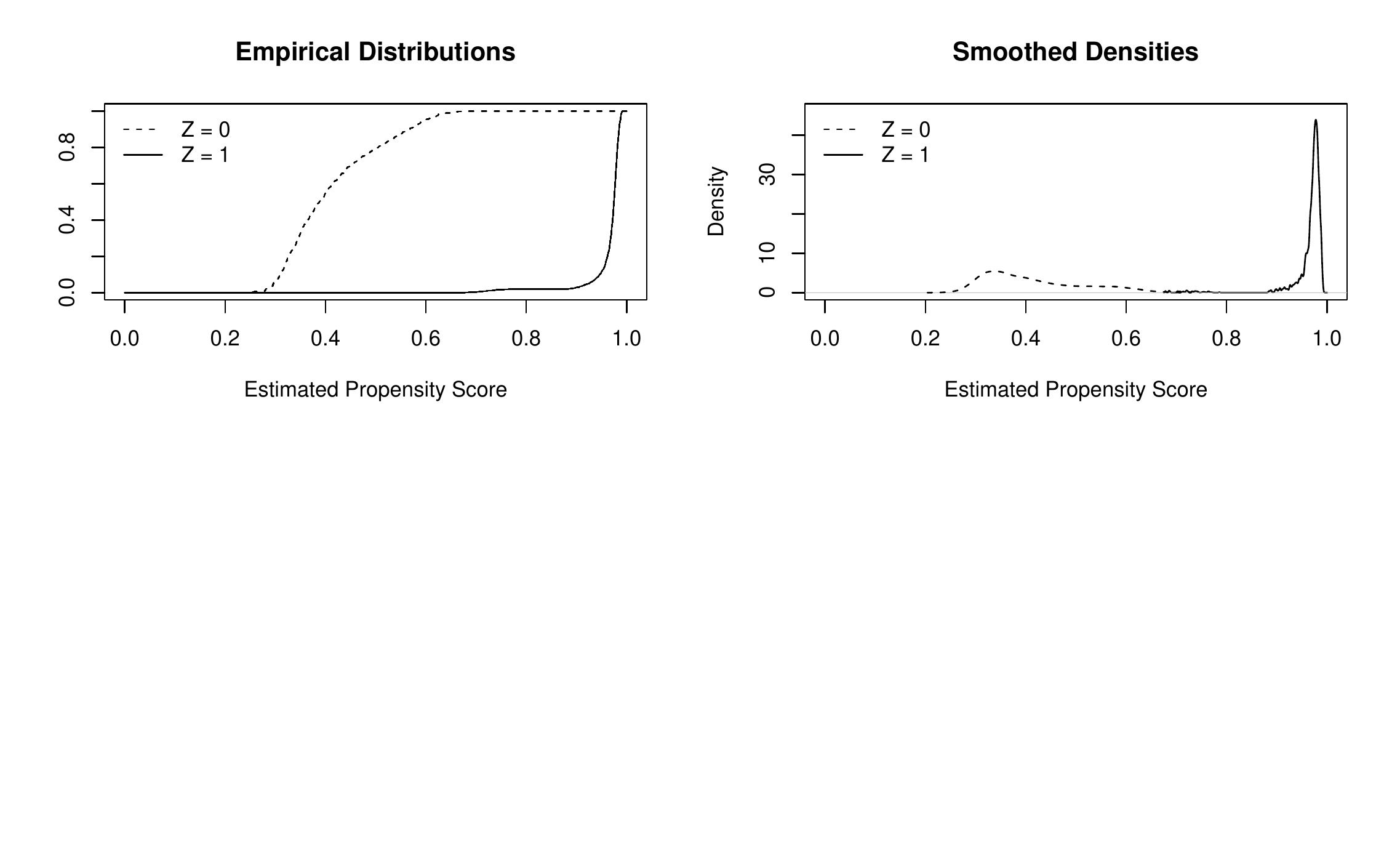}
    \caption{\citet{oreopoulos2006} distribution and density of the estimated propensity score. The propensity score is estimated by probit with controls consisting of the instrument, the conditioning covariates, and interactions between the two. Conditioning covariates consist of year aged 14 fixed effects, a quartic polynomial for current age, sex, survey year fixed effects and residence in Northern Ireland6. The distribution is the simple empirical distribution of estimated propensity scores. The density is an estimated kernel density using the Gaussian kernel and the rule-of-thumb bandwidth.}\label{fig:Oreopoulos pscore}
\end{figure}

\section{Conclusion}

This paper develops a novel test for instrument validity in the LATE/MTE framework that can accommodate conditioning covariates. We follow \citet{CHV11} in assuming a linear relationship between potential outcomes and conditioning covariates and that unobserved heterogeneities are independent of the instrument and covariates.. Expected outcomes then follow a partially linear model, depending linearly on conditioning covariates and non-parametrically on the propensity score. With this functional form, partial residuals can be obtained by partialing out the conditioning covariates. We derive testable implication for the subdensities of these partial residuals: index sufficiency and the nesting inequalities. We propose a test procedure consisting of an initial step where the partially linear model is estimated and partial residuals obtained, followed by a joint test of index sufficiency and the nesting inequalities. Crucially, as conditioning covariates are partialed out, they increase only the complexity of estimating the propensity score and the partially linear model. In this way, the test procedure overcomes the computational hurdles of implementing tests such as \citet{Kitagawa2015}, \citet{hubermellace15Testing} and \citet{mourifiewan} in the presence of conditioning covariates. 

The testable implications are complementary. Index sufficiency holds the propensity score fixed and compares subdensities for different values of the instrument, whereas the nesting inequalities are distributional monotonicity relations ordered with respect to the propensity score. A straightforward approach to testing the nesting inequalities is to coarsen the propensity score and check the inequalities for subdensities indexed by this coarsened variable. However, in cases where the propensity score is strongly correlated with the conditioning covaraites, the distribution of instrument values can become homogenised across propensity score bins, leading to a lack of power. We showed that the nesting inequalities hold with the original instrument in place of the propensity score if the conditional distribution of the propensity score given the instrument is stochastically monotonic in the sense of first-order stochastic dominance. Since first-order stochastic dominance is not implied by the identifying assumptions, we proposed testing the nesting inequalities using a trimmed sample where first-order stochastic dominance does hold. We refer to this process as distillation, and present an algorithm for constructing distilled samples. We show analytically a class of data generating processeses where distillation is guaranteed to outperform a test of the nesting inequalities over propensity score bins in terms of detecting violations of instrument validity. Monte Carlo exercises confirm this finding, with consistent gains in power when using the proposed test procedure in place of testing the nesting inequalities with a coarsened propensity score.

We believe that there are two areas which future research could address. First, formal characterisation of the size and power of this test, incorporating uncertainty over estimation of the partial residuals and the distillation process, is left for future research. Second, as noted in Section 2, in general there will be many possible distilled samples that satisfy the first-order stochastic dominance condition. In Proposition \ref{prop:distillation better}, we showed a restriction on trimming that is guaranteed to improve detection of violations of instrument validity for a class of data generating processes, and we provided an algorithm that satisfies this restriction. However, other possibilities exist, and the optimal choice of distillation process is an open question. 
% directions for future research formal theoretical claims regarding size and power + optimal choice of distilled instrument? 

\begin{appendix}

\section*{Appendix}

\section{Proofs for Propositions 1 and 2}\label{sec:proposition 1 and 2 proofs}
\begin{proof}[Proof of Proposition 1]
Fix the value of the conditioning variable $X$ at $x$, and consider an
arbitrary Borel subset $A\subset \mathcal{Y}$.\ \ Then, the additive error
representation of the selection equation in (\ref{Model1}) implies that the event $\{Y\in A,D=1|X=x, Z=z \}$ is equivalent to $%
\{Y_{1}\in A, V\leq p|p\left( X,Z \right) =p,X=x, Z=z \}$. We therefore 
obtain 
\begin{eqnarray*}
\Pr \left( Y\in A,D=1|X=x, Z=z \right) &=&\Pr \left(
Y_{1}\in A, V\leq p|p\left( X, Z\right) =p,X=x, Z=z \right) \\
&=& \Pr \left( Y_{1}\in A,V\leq p| X=x \right),
\end{eqnarray*}%
where the second equality follows from the random assignment condition (A2). Similarly, for $D=0$ case, we have 
\begin{equation*}
    \Pr \left( Y\in A,D=0|X=x, Z=z \right) = \Pr \left( Y_{0}\in A,V> p| X=x \right).
\end{equation*}
These prove index sufficiency, i.e., conditional on $X$, the distribution of $(Y,D)$ depends on $Z$ through the propensity score $p(X,Z)$ only.  

To obtain the nesting inequalities, we note
\begin{eqnarray}
&&\Pr (Y\in A,D=1|p\left( Z,X\right) =p,X=x)-\Pr (Y\in A,D=1|p\left(
Z,X\right) =p^{\prime },X=x)  \notag \\
&=&\Pr (Y_{1}\in A,V\leq p|X=x)-\Pr (Y_{1}\in A,V\leq p^{\prime
}|X=x)  \notag \\
&=&\Pr (Y_{1}\in A,p^{\prime }<V\leq p|X=x)\geq 0,  \label{Inequality1}
\end{eqnarray}%
follows. The inequality for $D=0$ case is proven in a similar manner.
\end{proof}

\bigskip

\begin{proof}[Proof of Proposition 2]
Noting that the event $\{ U \in A,D=1 | X=x, Z=z \}$ is equivalent to $\{ U \in A, V \leq p | p(X,Z) = p, X=x, Z=z \}$, we have
\begin{eqnarray*}
\Pr \left( U\in A,D=1|X=x, Z=z \right) &=&\Pr \left(
U_{1}\in A, V\leq p|p\left( X, Z\right) =p,X=x, Z=z \right) \\
&=& \Pr \left( U_{1}\in A,V\leq p \right),
\end{eqnarray*}%
where the second equality follows from the strong exogeneity condition of (A4). A similar argument gives 
\begin{equation*}
    \Pr \left( U\in A,D=0|X=x, Z=z \right) = \Pr \left( U_{0}\in A,V> p \right)
\end{equation*}
Hence, we obtain index sufficiency: the distribution of $(U,D)$ depends on $(X,Z)$ through the propensity score $p(X,Z)$ only.

To obtain the nesting inequalities, by (\ref{eq.U1}) and recalling that $V$ is uniformly distributed, we have
\begin{align}
&\Pr(U \in A,D=1|p\left( X,Z \right) =p) - \Pr(U \in A,D=1|p\left(X,Z \right) =p^{\prime }) \notag \\
=& \Pr(U_1 \in A, U_D \leq p) - \Pr(U_1 \in A, U_D \leq p') \notag \\
= & \Pr(U_1 \in A, p'<U_D \leq p) \notag \\ 
\geq & 0.
\end{align}
The other inequality of the proposition can be shown similarly. 
\end{proof}

\section{Proof of Proposition \ref{prop:distillation better}\label{sec:distillation power improvement}}

This section presents a proof for Proposition \ref{prop:distillation better}. We first introduce notation, then show several lemmas, and then use these lemmas to prove each of the three statements of Proposition \ref{prop:distillation better}. We prove the statements of Proposition \ref{prop:distillation better}. for $D = 1$. The proof for $D = 0$ is identical. 

\subsection{Notation}

Let $P$ be an inerval of propensity scores. Define
\begin{align*}
     &\mathcal{F}(P, z) \equiv \int_P \Pr(U \in A, D = 1 | p, Z = z) f(p | p \in P, Z=z) dp, \\
    & w(z|P) \equiv \Pr(Z = z | p \in P). 
\end{align*}
The joint probability of $(U \in A, D = 1)$ conditional on $p \in P$ may be written as
\begin{equation}
   \Pr(U \in A, D = 1|p \in P) = (1 - w(1 |P)) \mathcal{F}(P, 0) + w(1 |P)  \mathcal{F}(P, 1) .
\end{equation}
Thus the nesting inequality with a coarsened propensity score, \eqref{eq:nesting_comparison_1}, subtracts a weighted sum of $\mathcal{F}(P^{+}, 0)$ and $\mathcal{F}(P^{+}, 1)$ from a weighted sum of $\mathcal{F}(P^{-}, 0)$ and $\mathcal{F}(P^{-}, 1)$
\begin{equation}
   (1 - w(1 | P^{-})) \mathcal{F}(P^{-}, 0) + w(1 | P^{-})  \mathcal{F}(P^{-}, 1)  - (1 - w(1 |P^{+})) \mathcal{F}(P^{+}, 0) - w(1 |P^{+})  \mathcal{F}(P^{+}, 1). \label{eq:rewritten nesting}
\end{equation}
We refer to \eqref{eq:rewritten nesting} as the coarsened propensity score nesting violation.

Define
\begin{align*}
     &\mathcal{G}(P, z) \equiv \int_P \Pr(U \in A, D = 1 | p, S_1 = 1, Z = z) f(p | p \in P, S_1 = 1, Z=z) dp, \\
    & v(P|z) \equiv \Pr(p \in  P | Z = z, S_1 = 1, p \in  P^{-} \cup P^{+}). 
\end{align*}
The joint probability of $(U \in A, D = 1)$ conditional on $Z = z$, $S_1 = 1$ and $p \in P^{-} \cup P^{+}$ can be written 
\begin{equation}
      \Pr(U \in A, D = 1| Z = z, p \in  P^{-} \cup P^{+}) = v(P^{-}|z) \mathcal{G}(P^{-}, z) + v(P^{+}| z) \mathcal{G}(P^{+}, z). 
\end{equation}
Thus the nesting inequality with a distilled sample, \eqref{eq:distilled_comparison_1}, subtracts a weighted sum of $\mathcal{G}(P^{-}, 1)$ and $\mathcal{G}(P^{+}, 1)$ from a weighted sum of $\mathcal{G}(P^{-}, 0)$ and $\mathcal{G}(P^{+}, 0)$,
\begin{equation}\label{eq:rewritten distilled}
    v(P^{-}|0) \mathcal{G}(P^{-}, 0) + v(P^{+}| 0) \mathcal{G}(P^{+}, 0) -  v(P^{-}|1) \mathcal{G}(P^{-}, 1) - v(P^{+}| 1) \mathcal{G}(P^{+}, 1).
\end{equation}
We refer to \eqref{eq:rewritten distilled} as the distilled nesting violation. 
\subsection{Lemmas}

\begin{lemma}\label{lemma:weights inequality}
The labeling condition of Proposition \ref{prop:distillation better} implies
\begin{equation}
    w(1|P^{+}) \geq  w(1|P^{-}). \label{eq:weights inequality 1} \\
\end{equation}
\end{lemma}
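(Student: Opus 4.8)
The plan is to reduce both the hypothesis and the conclusion to a single cross-product inequality among the four joint probabilities of $(p \in P^{\pm}, Z = z)$. Write $\pi_z^{-} \equiv \Pr(p \in P^{-}, Z = z)$ and $\pi_z^{+} \equiv \Pr(p \in P^{+}, Z = z)$ for $z \in \{0,1\}$. By the standing assumption of Proposition \ref{prop:distillation better} all four of these quantities are strictly positive, so every conditional probability below is well defined and every denominator that appears is positive.

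First I would rewrite the labeling condition \eqref{eq:labeling restriction}. Since $\Pr(p \in P^{+} \mid Z = z, p \in P^{-} \cup P^{+}) = \pi_z^{+} / (\pi_z^{-} + \pi_z^{+})$, the inequality \eqref{eq:labeling restriction} reads
\[
\frac{\pi_0^{+}}{\pi_0^{-} + \pi_0^{+}} \leq \frac{\pi_1^{+}}{\pi_1^{-} + \pi_1^{+}}.
\]
Cross-multiplying by the positive denominators and cancelling the common term $\pi_0^{+}\pi_1^{+}$ shows this is equivalent to $\pi_0^{+}\pi_1^{-} \leq \pi_0^{-}\pi_1^{+}$.

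Next I would express the target inequality \eqref{eq:weights inequality 1} in the same notation. By the definition $w(1 \mid P) = \Pr(Z = 1 \mid p \in P)$, we have $w(1 \mid P^{+}) = \pi_1^{+}/(\pi_0^{+} + \pi_1^{+})$ and $w(1 \mid P^{-}) = \pi_1^{-}/(\pi_0^{-} + \pi_1^{-})$. The claim $w(1 \mid P^{+}) \geq w(1 \mid P^{-})$ becomes, after cross-multiplying by the positive denominators and cancelling $\pi_1^{+}\pi_1^{-}$, exactly $\pi_0^{-}\pi_1^{+} \geq \pi_0^{+}\pi_1^{-}$. This is precisely the inequality obtained from the labeling condition, so the implication is immediate. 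There is no substantive obstacle here beyond bookkeeping; the only point requiring care is that the positivity of all four joint probabilities guarantees that the cross-multiplications preserve the direction of the inequalities.
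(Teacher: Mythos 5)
Your proof is correct and follows essentially the same route as the paper's: both arguments cross-multiply the labeling condition, cancel the common product term to reach the cross-product inequality $\Pr(Z=1,p\in P^{+})\Pr(Z=0,p\in P^{-}) \geq \Pr(Z=1,p\in P^{-})\Pr(Z=0,p\in P^{+})$, and then rearrange to obtain $w(1|P^{+})\geq w(1|P^{-})$. The only cosmetic difference is that the paper reaches the conclusion by adding $\Pr(Z=1,p\in P^{+})\Pr(Z=1,p\in P^{-})$ to both sides, whereas you cross-multiply the target directly; these are the same manipulation read in opposite directions.
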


\begin{proof}[Proof of Lemma \ref{lemma:weights inequality}]

Write out the labeling condition \eqref{eq:labeling restriction} and rearrange 
\begin{align*}
    \frac{\Pr(Z = 1, p \in P^{+})}{\Pr(Z = 1, p \in P^{+}) + \Pr(Z = 1, p \in P^{-})} &\geq  \frac{\Pr(Z = 0, p \in P^{+})}{\Pr(Z = 0, p \in P^{-}) + \Pr(Z = 0, p \in P^{+})} \\ \implies 
      \Pr(Z = 1, p \in P^{+})\Pr(Z = 0, p \in P^{-}) &\geq  \Pr(Z = 1, p \in P^{-})\Pr(Z = 0, p \in P^{+}) \\ \implies 
      \Pr(Z = 1, p \in P^{+})\Pr(p \in P^{-}) &\geq  \Pr(Z = 1, p \in P^{-})\Pr(p \in P^{+}) \\ \implies 
        \Pr(Z=1|p \in P^{+}) &\geq \Pr(Z=1|p \in P^{-}).
\end{align*}
The third line adds $\Pr(Z=1, p \in P^{+})\Pr(Z=1, p \in P^{-})$ to both sides and simplifies. By definition, the left hand side of the last line $w(1|P^{+})$, and the right hand side is  $w(1|P^{-})$.\end{proof}

\begin{lemma}\label{lemma:v inequality}
The labeling condition and trimming rule jointly imply
\begin{equation}
     v(P^{+}|1) \geq v(P^{+}|0). \label{eq:weights inequality 2}
\end{equation}

\end{lemma}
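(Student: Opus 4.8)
The plan is to reduce both $v(P^{+}|0)$ and $v(P^{+}|1)$ to ratios of joint masses and then sandwich them around the untrimmed conditional probabilities that appear in the labeling condition. First I would introduce the four untrimmed joint masses $a_z \equiv \Pr(p \in P^{-}, Z = z)$ and $b_z \equiv \Pr(p \in P^{+}, Z = z)$ for $z \in \{0,1\}$, all strictly positive by the hypotheses of Proposition \ref{prop:distillation better}. In this notation the labeling condition \eqref{eq:labeling restriction} reads $b_0/(a_0 + b_0) \le b_1/(a_1 + b_1)$.

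Next I would rewrite the distilled weights in terms of trimmed masses. The trimming rule (ii) retains every $Z = 0$ observation with $p \in P^{-}$ and every $Z = 1$ observation with $p \in P^{+}$, so only $(Z=0,\,P^{+})$ mass and $(Z=1,\,P^{-})$ mass can be discarded. Writing $\tilde b_0 \le b_0$ for the retained $(Z=0,\,P^{+})$ mass and $\tilde a_1 \le a_1$ for the retained $(Z=1,\,P^{-})$ mass, the definition of $v(\cdot|z)$ gives
\begin{equation*}
v(P^{+}|0) = \frac{\tilde b_0}{a_0 + \tilde b_0}, \qquad v(P^{+}|1) = \frac{b_1}{\tilde a_1 + b_1}.
\end{equation*}

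The core of the argument is then a monotonicity observation. Since $x \mapsto x/(a_0 + x)$ is increasing, $\tilde b_0 \le b_0$ yields $v(P^{+}|0) \le b_0/(a_0+b_0)$; and since $a \mapsto b_1/(a + b_1)$ is decreasing, $\tilde a_1 \le a_1$ yields $v(P^{+}|1) \ge b_1/(a_1 + b_1)$. Chaining these two bounds through the labeling inequality gives
\begin{equation*}
v(P^{+}|1) \ge \frac{b_1}{a_1 + b_1} \ge \frac{b_0}{a_0 + b_0} \ge v(P^{+}|0),
\end{equation*}
which is the claim.

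The one point requiring care — and the only place the trimming rule (ii) is actually used — is aligning the direction of trimming with the direction of the inequality: trimming on the $Z=1$ side removes only low-propensity-score ($P^{-}$) mass, which can only raise the $P^{+}$ fraction, whereas trimming on the $Z=0$ side removes only high-propensity-score ($P^{+}$) mass, which can only lower it. Had the rule permitted discarding $(Z=1,P^{+})$ or $(Z=0,P^{-})$ observations, these two monotonicity bounds would flip and the chain would collapse; so the substance of the lemma is precisely that (ii) pushes both weights in the directions that reinforce, rather than fight against, the labeling condition.
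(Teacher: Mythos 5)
Your argument is correct and is essentially the paper's own proof in different notation: the paper likewise sandwiches $v(P^{+}|1) \geq \Pr(p \in P^{+}\mid Z=1)$ and $v(P^{+}|0) \leq \Pr(p \in P^{+}\mid Z=0)$ by noting that the trimming rule forces $\tilde b_1 = b_1$ and $\tilde a_0 = a_0$ while only $\tilde a_1 \leq a_1$ and $\tilde b_0 \leq b_0$ can shrink, and then chains through the labeling condition. Your explicit monotonicity observations for $x \mapsto x/(a_0+x)$ and $a \mapsto b_1/(a+b_1)$ are exactly the inequalities the paper invokes when it drops $S_1=1$ from the denominators.
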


\begin{proof}[Proof of Lemma \ref{lemma:v inequality}]

Consider first $v(P^{+}|1)$
\begin{align*}
    v(P^{+}|1) &\equiv \frac{\Pr(Z = 1, p \in P^{+}, S_1 = 1)}{\Pr(Z = 1, p \in P^{+}, S_1 = 1) + \Pr(Z = 1, p \in P^{-}, S_1 = 1)} \\
    &= \frac{\Pr(Z = 1, p \in P^{+})}{\Pr(Z = 1, p \in P^{+}) + \Pr(Z = 1, p \in P^{-}, S_1 = 1)} \\
    &\geq \frac{\Pr(Z = 1, p \in P^{+})}{\Pr(Z = 1, p \in P^{+}) + \Pr(Z = 1, p \in P^{-})} 
\end{align*}
The second line uses that the trimming under imposes that all observations with $Z = 1$ and $p \in P^{+}$ are retained. The third line follows from $\Pr(Z = 1, p \in P^{-}) \geq \Pr(Z = 1, p \in P^{-}, S_1 = 1) \geq 0$. The expression on the last line is equivalent to $\Pr(p \in P^{+}| Z = 1)$, hence
\begin{equation}
      v(P^{+}|1) \geq \Pr(p \in P^{+}| Z = 1) \label{eq:v inequality proof 1}
\end{equation}
Following the same steps with  $v(P^{+}|0)$, and using that the trimming rule imposes that all observations $Z = 0$ and $p \in P^{-}$ are retained yields
\begin{equation}
   \Pr(p \in P^{+}| Z = 1) \geq v(P^{+}|0) \label{eq:v inequality proof 2}
\end{equation}
Combine \eqref{eq:v inequality proof 1}, \eqref{eq:v inequality proof 2} and the labeling condition \eqref{eq:labeling restriction}
\begin{equation*}
     v(P^{+}|1) \geq \Pr(p \in P^{+}| Z = 1) \geq \Pr(p \in P^{+}| Z = 1) \geq v(P^{+}|0) 
\end{equation*} \end{proof}

\begin{lemma}\label{lemma:conditional nesting implication}
The conditional nesting assumption implies 
\begin{equation}
    \mathcal{F}(P^{+}, z) \geq \mathcal{F}(P^{-}, z), \label{eq:F inequality} 
\end{equation}
for $z \in \{0, 1\}$.
\end{lemma}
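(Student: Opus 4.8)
The plan is to recognise that the conditional nesting assumption (iv) is precisely the statement that, for each fixed $A$ and $z$, the map
\[
g(p) \equiv \Pr(U \in A, D = 1 \mid p(X,Z) = p, Z = z)
\]
is non-decreasing in $p$ on the support of the propensity score. With this notation, $\mathcal{F}(P^{+}, z)$ and $\mathcal{F}(P^{-}, z)$ are nothing more than the averages of $g$ against the conditional densities $f(\cdot \mid p \in P^{+}, Z = z)$ and $f(\cdot \mid p \in P^{-}, Z = z)$, each of which integrates to one over its respective interval. So the lemma reduces to a monotone-averaging claim.

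Next I would exploit the geometry of the two intervals. By hypothesis $P^{+}$ and $P^{-}$ are disjoint with the lower endpoint of $P^{+}$ weakly above the upper endpoint of $P^{-}$, so every $p \in P^{+}$ satisfies $p \geq p'$ for every $p' \in P^{-}$. Combined with the monotonicity of $g$, this yields the pointwise comparison $g(p) \geq g(p')$ for all $p \in P^{+}$ and all $p' \in P^{-}$, not merely at shared support points.

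The final step is a routine averaging argument. Since each conditional density has total mass one, I can multiply each single integral by the unit mass of the other and rewrite the difference as a single double integral:
\begin{equation*}
\mathcal{F}(P^{+}, z) - \mathcal{F}(P^{-}, z) = \int_{P^{+}} \int_{P^{-}} \left[ g(p) - g(p') \right] f(p' \mid p' \in P^{-}, Z = z)\, f(p \mid p \in P^{+}, Z = z)\, dp'\, dp .
\end{equation*}
The integrand is nonnegative by the previous step, hence the difference is nonnegative, which is exactly \eqref{eq:F inequality}.

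I do not expect any genuine obstacle here: the only points requiring care are the bookkeeping that recasts the two marginal integrals as one double integral, and the observation that it is the interval ordering in the hypothesis (rather than a comparison at common values of $p$) that delivers the pointwise inequality $g(p) \geq g(p')$. The argument is symmetric in $z$, so it applies verbatim for both $z = 0$ and $z = 1$, and the analogue for the $D = 0$ case follows identically with the direction of the monotonicity, and hence of the inequality, reversed, as anticipated at the start of the appendix.
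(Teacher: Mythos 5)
Your proof is correct and rests on the same idea as the paper's: conditional nesting plus the ordering of the intervals gives the pointwise inequality $g(p)\geq g(p')$ for $p\in P^{+}$, $p'\in P^{-}$, which then passes to the conditional averages. The paper sandwiches both averages through the value of $g$ at the upper endpoint $\hat{p}$ of $P^{-}$, whereas you write the difference as a double integral against the product of the two conditional densities; this is only a bookkeeping difference (your version has the minor advantage of never evaluating $g$ at a point that might lie outside the support).
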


\begin{proof}[Proof of Lemma \ref{lemma:conditional nesting implication}]

Let $\hat{p}$ be the upper bound of $P^{-}$. Equation \eqref{eq:conditional nesting D = 1} implies
\begin{align*}
     \mathcal{F}(P^{+}, z) & \equiv \int_{P^{+}} \Pr(U \in A, D = 1|Z = z, p) f(p|p \in P^{+}, Z = z) dp \\ 
    &\geq \Pr(U \in A, D = 1|Z = z, p = \hat{p}) \\
    &\geq \int_{P^{-}} \Pr(U \in A, D = 1|Z = z, p) f(p|p \in P^{-}, Z = z) dp \\
    &\equiv \mathcal{F}(P^{-}, z).
\end{align*} \end{proof}

\begin{lemma}\label{lemma:trimming inequality}

Under the trimming rule
\begin{align}
    &\mathcal{G}(P^{-}, 0) = \mathcal{F}(P^{-}, 0), \label{eq:FG Z=0 inequality} \\
    &\mathcal{F}(P^{+}, 1) = \mathcal{G}(P^{+}, 1). \label{eq:FG Z=1 inequality} 
\end{align}
\end{lemma}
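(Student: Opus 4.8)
The plan is to recognize that both claimed equalities follow immediately once the composite objects $\mathcal{F}$ and $\mathcal{G}$ are read as conditional probabilities and the trimming rule is invoked. First I would rewrite
\begin{align*}
  \mathcal{F}(P^{-}, 0) &= \Pr(U \in A, D = 1 \mid p \in P^{-}, Z = 0), \\
  \mathcal{G}(P^{-}, 0) &= \Pr(U \in A, D = 1 \mid p \in P^{-}, Z = 0, S_1 = 1),
\end{align*}
and similarly $\mathcal{F}(P^{+}, 1) = \Pr(U \in A, D = 1 \mid p \in P^{+}, Z = 1)$ and $\mathcal{G}(P^{+}, 1) = \Pr(U \in A, D = 1 \mid p \in P^{+}, Z = 1, S_1 = 1)$, so that each target equality becomes the statement that additionally conditioning on $\{S_1 = 1\}$ does not change the relevant conditional probability.

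The central step is to use condition (ii) of Proposition \ref{prop:distillation better}: the trimming rule retains every observation with $Z = 0$ and $p \in P^{-}$, hence $\Pr(S_1 = 1 \mid Z = 0, p \in P^{-}) = 1$. I would then argue that for any event $E$ measurable with respect to $(U, D, p)$,
\begin{equation*}
  \Pr(E \mid Z = 0, p \in P^{-}, S_1 = 1) = \Pr(E \mid Z = 0, p \in P^{-}),
\end{equation*}
because conditioning on a conditionally almost-sure event imposes no restriction. Specializing $E$ to $\{U \in A, D = 1\}$ delivers \eqref{eq:FG Z=0 inequality}, and the identical argument on the subpopulation $\{Z = 1, p \in P^{+}\}$, where the trimming rule similarly forces $\Pr(S_1 = 1 \mid Z = 1, p \in P^{+}) = 1$, delivers \eqref{eq:FG Z=1 inequality}.

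There is no serious obstacle here; the only point deserving care is that $\mathcal{G}(P^{-}, 0)$ is built from two ingredients --- the $p$-conditional probability $\Pr(U \in A, D = 1 \mid p, Z = 0, S_1 = 1)$ and the conditional density $f(p \mid p \in P^{-}, Z = 0, S_1 = 1)$ --- and I must check that conditioning on $\{S_1 = 1\}$ leaves both unchanged. The single observation that $S_1$ is degenerate on $\{Z = 0, p \in P^{-}\}$ handles both at once, so the two factors in the integrand of $\mathcal{G}(P^{-}, 0)$ reduce termwise to those of $\mathcal{F}(P^{-}, 0)$. This asymmetry in the trimming rule --- retaining the low-propensity $Z = 0$ mass and the high-propensity $Z = 1$ mass --- is precisely what makes the $Z = 0$ arm of the distilled violation \eqref{eq:rewritten distilled} align with the $P^{-}$ term and the $Z = 1$ arm with the $P^{+}$ term, which the downstream comparison against \eqref{eq:rewritten nesting} exploits.
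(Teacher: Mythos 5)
Your proof is correct and follows essentially the same route as the paper's: both arguments rest on the single observation that the trimming rule makes $S_1=1$ hold with probability one on $\{Z=0,\,p\in P^{-}\}$ and on $\{Z=1,\,p\in P^{+}\}$, so that conditioning on $\{S_1=1\}$ can be dropped from both the $p$-conditional probability and the conditional density of $p$ in the integrand defining $\mathcal{G}$. Your repackaging of $\mathcal{F}$ and $\mathcal{G}$ as conditional probabilities of $\{U\in A, D=1\}$ given the bin is a valid (and slightly cleaner) presentation of the identical argument.
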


\begin{proof}[Proof of Lemma \ref{lemma:trimming inequality}]

Equation \eqref{eq:FG Z=0 inequality} is shown as follows
\begin{align*}
    \mathcal{G}(P^{-}, 0) &\equiv \int_{P^{-}} \Pr(U \in A, D = 1 | p, S_1 = 1, Z = 0) f(p | p \in P^{-} S_1 = 1, Z=0) dp, \\
    &= \int_{P^{-}} \Pr(U \in A, D = 1 | p, Z = 0) f(p | p \in P^{-}, Z=0) dp \\
    &\equiv \mathcal{F}(P^{-}, 0).
\end{align*}
The second line uses that, under the trimming rule, $S_1 = 1$ for all observations with $Z = 0$ and $p \in P^{-}$.  By using that $S_1 = 1$ for all observations with $Z = 1$ and $p \in P^{+}$, \eqref{eq:FG Z=1 inequality} can be shown similarly. \end{proof}

\begin{lemma}\label{lemma:G conditional nesting implication}
The conditional nesting assumption and trimming rule jointly imply  
\begin{equation}
    \mathcal{G}(P^{+}, 1) \geq \mathcal{G}(P^{-}, 1), \label{eq:G inequality}
\end{equation}
and either 
\begin{align}
    \mathcal{G}(P^{+}, 0) &\geq \mathcal{G}(P^{-}, 0), \label{eq:G Z = 0 case 1}\\
    v(P^{+}, 0) &> 0, \label{eq:G Z = 0 case 1 v}
\end{align}
or
\begin{align}
    &\mathcal{G}(P^{+}, 0) = 0, \label{eq:G Z = 0 case 2} \\
    &v(P^{+}, 0) = 0. \label{eq:G Z = 0 case 2 v}
\end{align}
\end{lemma}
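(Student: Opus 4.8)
The plan is to prove the three displayed relations by reducing every $\mathcal{G}$-quantity to the monotone per-propensity-score subprobabilities governed by condition (iv), exactly as in the proof of Lemma~\ref{lemma:conditional nesting implication}, and then exploiting that $P^{-}$ lies entirely below $P^{+}$. Writing $\mathcal{G}(P,z) = \Pr(U \in A, D = 1 \mid Z = z, p \in P, S_1 = 1)$, the first thing I would record is a reduction step: for each fixed $p$ and $z$ with $\Pr(S_1 = 1 \mid p, Z = z) > 0$,
\[
\Pr(U \in A, D = 1 \mid p, S_1 = 1, Z = z) = \Pr(U \in A, D = 1 \mid p, Z = z).
\]
This holds because, conditional on $p(X,Z) = p$, the pair $(U, D) = (U_1 1\{V \le p\} + U_0 1\{V > p\},\, 1\{V \le p\})$ is a deterministic function of $(U_1, U_0, V)$, which Definition~\ref{def:distilledIV} requires to be independent of the inclusion indicator $S_1$; hence conditioning further on $S_1 = 1$ only reweights the distribution of $p$ within the interval and leaves the per-$p$ subprobability untouched. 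Condition (iv) then guarantees that this common per-$p$ subprobability is nondecreasing in $p$ for each $z$.

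For the first display I would combine Lemma~\ref{lemma:trimming inequality}, which gives $\mathcal{G}(P^{+},1) = \mathcal{F}(P^{+},1)$, with the sandwich argument of Lemma~\ref{lemma:conditional nesting implication}. Letting $\hat p$ denote the upper bound of $P^{-}$, which is a lower bound for every $p \in P^{+}$, condition (iv) forces $\mathcal{F}(P^{+},1) \ge \Pr(U \in A, D = 1 \mid \hat p, Z = 1)$, while the reduction step together with (iv) forces $\mathcal{G}(P^{-},1) \le \Pr(U \in A, D = 1 \mid \hat p, Z = 1)$, since $\mathcal{G}(P^{-},1)$ averages the same monotone per-$p$ subprobabilities over values $p \le \hat p$. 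Chaining these gives $\mathcal{G}(P^{+},1) \ge \mathcal{G}(P^{-},1)$; if $v(P^{-}\mid 1) = 0$ the quantity $\mathcal{G}(P^{-},1)$ is $0$ by the null-event convention and the inequality is immediate.

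For the dichotomy I would split on $v(P^{+}\mid 0)$. When $v(P^{+}\mid 0) > 0$ the averaging set defining $\mathcal{G}(P^{+},0)$ has positive mass, so the reduction step applies on $P^{+}$ as well; using Lemma~\ref{lemma:trimming inequality} to write $\mathcal{G}(P^{-},0) = \mathcal{F}(P^{-},0)$ and the same $\hat p$-sandwich delivers $\mathcal{G}(P^{+},0) \ge \mathcal{G}(P^{-},0)$, yielding the first branch. When $v(P^{+}\mid 0) = 0$, no retained $Z = 0$ observation has $p \in P^{+}$, so $\mathcal{G}(P^{+},0)$ is a conditional subprobability on a null event and equals $0$ by convention, giving the second branch.

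The hard part will be the reduction step: establishing that trimming leaves the per-propensity-score subprobabilities intact. Definition~\ref{def:distilledIV} only asserts the unconditional independence $S_1 \perp (U_1, U_0, V)$, whereas the argument needs this independence to survive conditioning on $(p(X,Z), Z)$. I would discharge this by noting that the admissible distillation rules, and in particular the algorithm of Section~\ref{sec:distillation procedure}, select $S_1$ as a function of $(p(X,Z), Z)$ together with exogenous randomization, so that $S_1$ is conditionally independent of $(U_1, U_0, V)$ given $(p(X,Z), Z)$; the remaining steps are then purely the monotonicity-and-averaging bookkeeping already carried out for the $\mathcal{F}$-quantities in Lemma~\ref{lemma:conditional nesting implication}.
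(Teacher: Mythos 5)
Your proof is correct and follows essentially the same route as the paper's: identify $\mathcal{G}(P^{+},1)=\mathcal{F}(P^{+},1)$ and $\mathcal{G}(P^{-},0)=\mathcal{F}(P^{-},0)$ via Lemma \ref{lemma:trimming inequality}, run the $\hat p$-sandwich of Lemma \ref{lemma:conditional nesting implication}, and split the $Z=0$ case on whether any retained observation has $p\in P^{+}$. The only difference is that you make explicit the reduction step $\Pr(U\in A, D=1\mid p, S_1=1, Z=z)=\Pr(U\in A, D=1\mid p, Z=z)$, which the paper's proof uses silently when it says to follow the proof of Lemma \ref{lemma:conditional nesting implication}; your justification via $S_1$ being a function of $(p(X,Z),Z)$ and exogenous randomization is a sensible way to discharge it and is consistent with the algorithm of Section \ref{sec:distillation procedure}.
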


\begin{proof}[Proof of Lemma \ref{lemma:G conditional nesting implication}]

To show \eqref{eq:G inequality}, use Lemma \ref{lemma:trimming inequality} to impose $\mathcal{F}(P^{+}, 1) = \mathcal{G}(P^{+}, 1)$. The inequality can then be shown by following the proof for Lemma \ref{lemma:conditional nesting implication}. 

For \eqref{eq:G Z = 0 case 1} and \eqref{eq:G Z = 0 case 2}, note that under the trimming rule it is possible that all observations with $Z = 0, p \in P^{+}$ are trimmed. If the trimmed sample includes observations with $Z = 0, p \in P^{+}$, then $\Pr(p \in P^{+}, Z = 1, S_1 = 1) > 0$, which implies \eqref{eq:G Z = 0 case 1 v}. Equation \eqref{eq:G Z = 0 case 1} can ten be shown by following the proof for Lemma \ref{lemma:conditional nesting implication}. If all observations with $Z = 0, p \in P^{+}$ are trimmed, then $\Pr(p \in P^{+}, Z = 1, S_1 = 1) = 0$, and \eqref{eq:G Z = 0 case 2} and \eqref{eq:G Z = 0 case 2 v} follow immediately. \end{proof}

\subsection{Proof of Proposition \ref{prop:distillation better}}

\begin{proof}[Proof of Statement 1]
    
Suppose that the coarsened propensity score nesting violation \eqref{eq:rewritten nesting} is positive. For a weighted sum of $\mathcal{F}(P^{-}, 0)$ and $\mathcal{F}(P^{-}, 1)$ to exceed a weighted sum of $\mathcal{F}(P^{+}, 0)$ and $\mathcal{F}(P^{+}, 1)$, it must be that
\begin{equation}
    \text{max}(\mathcal{F}(P^{-}, 0),  \mathcal{F}(P^{-}, 1)) > \text{min}(\mathcal{F}(P^{+}, 0), \mathcal{F}(P^{+}, 1)). \label{eq:proposition 5 inequality 3}
\end{equation}
Combining \eqref{eq:proposition 5 inequality 3} and Lemma \ref{lemma:conditional nesting implication}, there are two possible orderings of $\mathcal{F}(P^{+}, 0)$, $\mathcal{F}(P^{-}, 0)$, $\mathcal{F}(P^{+}, 1)$, and $\mathcal{F}(P^{-}, 1)$
\begin{align}
   \mathcal{F}(P^{-}, 0) \leq \mathcal{F}(P^{+}, 0) \leq \mathcal{F}(P^{-}, 1) \leq \mathcal{F}(P^{+}, 1), \label{eq:proposition 5 ordering 1} \\
   \mathcal{F}(P^{-}, 1) \leq \mathcal{F}(P^{+}, 1) \leq \mathcal{F}(P^{-}, 0) \leq \mathcal{F}(P^{+}, 0). \label{eq:proposition 5 ordering 2} 
\end{align}

Rewrite \eqref{eq:rewritten nesting} as
\begin{multline}
    w(1 | P^{-})  (\mathcal{F}(P^{-}, 1) - \mathcal{F}(P^{+}, 1))  + (1 - w(1 | P^{-})) (\mathcal{F}(P^{-}, 0) - \mathcal{F}(P^{+}, 0)) \\   -(w(1|P^{+}) - w(1 |P^{-})) (\mathcal{F}(P^{+}, 1) -  \mathcal{F}(P^{+}, 0)).
\end{multline} 
Lemma \ref{lemma:conditional nesting implication} implies that the first two terms are (weakly) negative. Turning to the final term, by Lemma \ref{lemma:weights inequality} 
\begin{equation*}
    w(1|P^{+}) - w(1 |P^{-}) \geq 0,
\end{equation*}
If \eqref{eq:proposition 5 ordering 1} holds, $\mathcal{F}(P^{+}, 1) > \mathcal{F}(P^{+},0)$, and this last term will also be negative, which implies that the nesting inequality for a coarsened propensity can never be positive. Hence, for this nesting inequality to be positive, \eqref{eq:proposition 5 ordering 2} must hold. 

Assuming \eqref{eq:proposition 5 ordering 2} holds, we obtain the inequality
\begin{align}
    v(P^{+}|0) \mathcal{G}(P^{+}, 0) + v(P^{-}| 0) \mathcal{G}(P^{-}, 0) &\geq
    \mathcal{F}(P^{-}, 0) \notag \\ &\geq 
     (1 - w(1 | P^{-})) \mathcal{F}(P^{-}, 0) + w(1 | P^{-})  \mathcal{F}(P^{-}, 1).  \label{eq:statement 1 inequality 1}
\end{align}
The first inequality holds by Lemma \ref{lemma:trimming inequality} and Lemma \ref{lemma:G conditional nesting implication}. The second inequality follows directly from \eqref{eq:proposition 5 ordering 2}. Maintaining \eqref{eq:proposition 5 ordering 2}, we can obtain the additional inequality
\begin{align}
  (1 - w(1 |P^{+})) \mathcal{F}(P^{+}, 0) + w(1 |P^{+})  \mathcal{F}(P^{+}, 1) &\geq 
   \mathcal{F}(P^{+}, 1) \notag \\  &\geq
   v(P^{+}|1) \mathcal{G}(P^{+}, 1) + v(P^{-}| 1) \mathcal{G}(P^{-}, 1). \label{eq:statement 1 inequality 2}
\end{align}
Here the first inequality follows from \eqref{eq:proposition 5 ordering 2}, while the second uses Lemma \ref{lemma:trimming inequality} and Lemma \ref{lemma:G conditional nesting implication}.

Summing \eqref{eq:statement 1 inequality 1} and \eqref{eq:statement 1 inequality 2}
\begin{multline*}
     v(P^{+}|0) \mathcal{G}(P^{+}, 0) + v(P^{-}| 0) \mathcal{G}(P^{-}, 0) +  (1 - w(1 |P^{+})) \mathcal{F}(P^{+}, 0) + w(1 |P^{+})  \mathcal{F}(P^{+}, 1) \geq \\
     (1 - w(1 | P^{-})) \mathcal{F}(P^{-}, 0) + w(1 | P^{-})  \mathcal{F}(P^{-}, 1) + v(P^{+}|1) \mathcal{G}(P^{+}, 1) + v(P^{-}| 1) \mathcal{G}(P^{-}, 1),
\end{multline*}
which can be rearranged to
\begin{multline*}
   v(P^{+}|0) \mathcal{G}(P^{+}, 0) + v(P^{-}| 0)\mathcal{G}(P^{-}, 0) - v(P^{+}|1) \mathcal{G}(P^{+}, 1) - v(P^{-}| 1) \mathcal{G}(P^{-}, 1) \geq \\
  (1 - w(1 | P^{-})) \mathcal{F}(P^{-}, 0) + w(1 | P^{-})  \mathcal{F}(P^{-}, 1) - (1 - w(1 |P^{+})) \mathcal{F}(P^{+}, 0) - w(1 |P^{+})  \mathcal{F}(P^{+}, 1).
\end{multline*}
The left-hand side is the distilled nesting violation \eqref{eq:rewritten distilled}. The right-hand side is the coarsened propensity score nesting violation \eqref{eq:rewritten nesting}.

\end{proof}

\begin{proof}[Proof of Statement 2]

Our strategy to show the second statement is as follows. First, we make assumptions such that, for a given $A$, the coarsened propensity score nesting violation \eqref{eq:rewritten nesting} is not positive. We then show that this does not preclude the distilled nesting violation \eqref{eq:rewritten distilled} from being positive.

The above argument established that \eqref{eq:proposition 5 ordering 2} must hold for the nesting inequality conditional on bins of $p$ to be positive. Therefore, to guarantee that the nesting inequality conditional on the coarsened propensity score is negative, we assume that  \eqref{eq:proposition 5 ordering 2} does \emph{not} hold i.e.
\begin{equation}
    \mathcal{F}(P^{-}, 0) < \mathcal{F}(P^{+}, 1). \label{eq:nesting negative sufficient condition}
\end{equation}

 Write the distilled nesting violation \eqref{eq:rewritten distilled} as
 \begin{multline}
    v(P^{-}|1)(\mathcal{G}(P^{-}, 0) - \mathcal{G}(P^{-}, 1)) + (v(P^{-}|0) - v(P^{-}|1))(\mathcal{G}(P^{-}, 0) -  \mathcal{G}(P^{+}, 1)) \\ + (1 - v(P^{-}| 0))(\mathcal{G}(P^{+}, 0) - \mathcal{G}(P^{+}, 1)). \label{eq:more rewritten distilled}
 \end{multline}
Expressed this way, \eqref{eq:rewritten distilled} is a weighted sum of the differences $\mathcal{G}(P^{-}, 0) - \mathcal{G}(P^{-}, 1)$, $\mathcal{G}(P^{-}, 0) -  \mathcal{G}(P^{+}, 1)$ and $\mathcal{G}(P^{+}, 0) - \mathcal{G}(P^{+}, 1)$. Lemma \ref{lemma:v inequality} guarantees that $v(P^{-}|0) - v(P^{-}|1) \geq 0$, so the weights are all (weakly) positive. By Lemma \ref{lemma:trimming inequality}, \eqref{eq:nesting negative sufficient condition} is equivalent to
\begin{equation}
    \mathcal{G}(P^{-}, 0) < \mathcal{G}(P^{+}, 1). \label{eq:G restriction nesting negative}
\end{equation}
Assuming \eqref{eq:G restriction nesting negative}, \eqref{eq:more rewritten distilled} can only be positive if at least one of $\mathcal{G}(P^{-}, 0) - \mathcal{G}(P^{-}, 1)$ and  $\mathcal{G}(P^{+}, 0) - \mathcal{G}(P^{+}, 1)$ is positive. 

Take the four $\mathcal{G}(P, z)$ as given and solve for $v(P^{-}|0)$ and $ v(P^{-}|1)$ such that \eqref{eq:more rewritten distilled} is positive subject to the constraints
\begin{align*}
   &0 \leq v(P^{-}|0) \leq 1, \\
   &0 \leq v(P^{-}|1) \leq 1, \\
   &v(P^{-}|1)\leq v(P^{-}|0).
\end{align*}

Doing so yields the following bounds for $v(P^{-}|0)$
\begin{equation}
    v(P^{-}|1)\leq v(P^{-}|0) \leq \text{min}\left\{1, \frac{v(P^{-}, 1)(\mathcal{G}(P^{+}, 1) - \mathcal{G}(P^{-}, 1)) + \mathcal{G}(P^{+}, 0) - \mathcal{G}(P^{+}, 1)}{\mathcal{G}(P^{+}, 0) - \mathcal{G}(P^{-}, 0)}\right\}. \label{eq:statement 2 bounds 1}
\end{equation}
and for $v(P^{-}|1)$
\begin{equation}
    \mathbbm{1}\left\{\mathcal{G}(P^{+}, 0) \leq \mathcal{G}((P^{+}, 1) \right\} \bar{v} \leq v(P^{-}|1) \leq 1 - \mathbbm{1}\left\{\mathcal{G}(P^{-}, 0) \leq \mathcal{G}(P^{-}, 1) \right\}(1 + \bar{v}). \label{eq:statement 2 bounds 2}
\end{equation}
with 
\begin{equation*}
    \bar{v} \equiv  \frac{\mathcal{G}(P^{+}, 1) -  \mathcal{G}(P^{+}, 0)}{ \mathcal{G}(P^{+}, 1) -  \mathcal{G}(P^{-}, 1) -  \mathcal{G}(P^{+}, 0) +  \mathcal{G}(P^{-}, 0)}.
\end{equation*}
Taking $v(P^{-}|1)$ as given,  \eqref{eq:statement 2 bounds 1} gives bounds for $v(P^{-}|0)$ such that \eqref{eq:more rewritten distilled} is positive. Equation \eqref{eq:statement 2 bounds 2} gives bounds for $v(P^{-}|1)$ such that the bounds in \eqref{eq:statement 2 bounds 1} are non-empty. If $\mathcal{G}(P^{-}, 0) \geq \mathcal{G}(P^{-}, 1)$ and $\mathcal{G}(P^{+}, 0) \geq \mathcal{G}(P^{+}, 1)$, for any $v(P^{-}|1)$ between 0 and 1 there exists $v(P^{-}|0)$ such that \eqref{eq:more rewritten distilled} is positive. If $\mathcal{G}(P^{-}, 0) \geq \mathcal{G}(P^{-}, 1)$ and $\mathcal{G}(P^{+}, 0) \leq \mathcal{G}(P^{+}, 1)$, a $v(P^{-}|0)$ where  \eqref{eq:more rewritten distilled} is positive exists if $v(P^{-}|1)$ is above $\bar{v}$.  If $\mathcal{G}(P^{-}, 0) \leq \mathcal{G}(P^{-}, 1)$ and $\mathcal{G}(P^{+}, 0) \geq \mathcal{G}(P^{+}, 1)$,  a $v(P^{-}|0)$ where  \eqref{eq:more rewritten distilled} is positive exists if $v(P^{-}|1)$ is below $-\bar{v}$. \end{proof}

% Consider a data generating process that sarisfies the following ordering:
% \begin{equation*}
%     \mathcal{F}(P^{-}, 1) < \mathcal{F}(P^{+}, 1) = \mathcal{F}(P^{-}, 0) < \mathcal{F}(P^{+}, 0)
% \end{equation*}
% Here 
% \begin{equation}
%     \text{max}(\mathcal{F}(P^{-}, 0),  \mathcal{F}(P^{-}, 1)) = \text{min}(\mathcal{F}(P^{+}, 0), \mathcal{F}(P^{+}, 1))
% \end{equation}
% so the nesting inequality will not be positive. The distilled inequality is positive as long as any weight is given to $\mathcal{F}(P^{+}, 0)$. 

\begin{proof}[Proof of Statement 3]

To prove the third statement, we first find a sufficient condition for the coarsened propensity score nesting violation \eqref{eq:rewritten nesting} to be negative for any $A$. We then show that it is possible to construct a process where this condition holds and the distilled nesting violation \eqref{eq:rewritten distilled} is positive.

Consider again \eqref{eq:rewritten nesting}. Lemma \ref{lemma:weights inequality} and Lemma \ref{lemma:conditional nesting implication} together imply 
\begin{equation*}
     w(1 | P^{-})  \mathcal{F}(P^{-}, 1) \leq  
     w(1 | P^{+})  \mathcal{F}(P^{+}, 1) .
\end{equation*}
A sufficient condition for \eqref{eq:rewritten nesting} to be negative for a given $A$ is then
\begin{equation*}
    (1 - w(1 | P^{-})) \mathcal{F}(P^{-}, 0) \leq (1 - w(1 | P^{+})) \mathcal{F}(P^{+}, 0).
\end{equation*}
and a sufficient condition for \eqref{eq:rewritten nesting} to be negative for all $A$ where $\mathcal{F}(P^{+}, 0) > 0$ is 
\begin{equation}\label{eq:statement 3 condition 1}
  \underset{A}{\text{sup}}\left\{\frac{\mathcal{F}(P^{-}, 0)}{\mathcal{F}(P^{+}, 0)}\right\} \leq  \frac{ 1 - w(1 | P^{+})}{1 - w(1 | P^{-})}.
\end{equation}
Note that Lemma \ref{lemma:conditional nesting implication} implies the left-hand side is weakly below 1. Hence \eqref{eq:statement 3 condition 1} will always hold if $w(1 | P^{+}) = w(1 | P^{-})$. That is, if the mixture of $Z = 0$ and $Z = 1$ observations is sufficiently similar in $P^{-}$ and $P^{+}$, then \eqref{eq:rewritten nesting} will never be positive.\\

Now consider the distilled nesting violation \eqref{eq:rewritten distilled}. Assume that \eqref{eq:proposition 5 ordering 2} holds.  
% From Lemma \ref{lemma:v inequality}
% \begin{equation*}
%      v(P^{-}| 0) \geq v(P^{-}| 1).
% \end{equation*}
Lemmas \ref{lemma:v inequality}, \ref{lemma:conditional nesting implication}, \ref{lemma:trimming inequality} and \ref{lemma:G conditional nesting implication} together imply
\begin{equation*}
     v(P^{-}| 0) \mathcal{G}(P^{-}, 0) \geq  v(P^{-}|1) \mathcal{G}(P^{-}, 1).
\end{equation*}
A sufficient condition for \eqref{eq:rewritten distilled} to be positive at a given $A$ is
\begin{equation*}
    v(P^{+}|0) \mathcal{G}(P^{+}, 0) \geq   v(P^{+}|1) \mathcal{G}(P^{+}, 1).
\end{equation*}
A sufficient condition for \eqref{eq:rewritten distilled} to be positive for some $A$ where $\mathcal{G}(P^{+}, 1) > 0$ is thus
\begin{equation}\label{eq:statement 3 condition 2}
    \underset{A}{\text{sup}}\left\{\frac{\mathcal{G}(P^{+}, 0)}{\mathcal{G}(P^{+}, 1)}\right\} \geq \frac{ v(P^{+}|1)}{v(P^{+}|0)}.
\end{equation}
For any $A$ where \eqref{eq:proposition 5 ordering 2} holds, the left-hand side is bounded from below by 1. Equation \eqref{eq:statement 3 condition 2} thus states that, if \eqref{eq:proposition 5 ordering 2} holds for some $A$ and the mixture of observations from $P^{-}$ and $P^{+}$ is sufficiently similar for $Z = 0$ and $Z = 1$, then \eqref{eq:rewritten distilled} is positive for that $A$.

A case where \eqref{eq:statement 3 condition 1} and \eqref{eq:statement 3 condition 2} are immediate is when there exists $A$ such that $\mathcal{G}(P^{+}, 0) > \mathcal{G}(P^{+}, 1)$ and $Z$ is an irrelevant instrument that is independent of $X$. Then
\begin{align*}
    w(1 | P^{+}) &= w(1 | P^{-}), \\
    v(P^{+}|1) &= v(P^{+}|0).
\end{align*}
Hence, any violation can only be detected by conditioning on $Z$. \end{proof}

\section{Numerical Example}\label{app:numerical example}

% Additional assumptions: 
% p(X,Z), E[X|p(X,Z]], E[Y|p(X,Z)] are estimated consistently (correct functional form?)
% E[XX'] is invertible

This section derives properties of the conditional joint densities of partial residuals and treatment assignment for the process introduced in Section \ref{sec:example process}.

Let $\tilde{\theta}_1$ and $\tilde{\theta}_0$ be the asymptotic bias of the estimates of $\theta_1$ and $\theta_0$
\begin{align*}
  \tilde{\theta}_1 &= \hat{\theta}_1^{*} - \theta_1, \\
  \tilde{\theta}_0 &= \hat{\theta}_0^{*} - \theta_0.
\end{align*}
Using this notation , the asymptotic partial residuals are
\begin{equation*}
    U = \nu Z - \nu (1 - Z) + D(\bar{U}_1 - X^{\prime}\tilde{\theta}_1) + (1 - D)(\bar{U}_0 - X^{\prime}\tilde{\theta}_0) .
\end{equation*}
Given  process of Section \ref{sec:example process}, the following expressions can be derived for the joint density of $(U, D = 1)$ conditional on $p$ and $Z$
\begin{align*}
    &f(U = u, D = 1|p , Z = 1) = \int_{- \Phi^{-1}(p)}^{\infty}  \phi(U_D) \frac{1}{ \sqrt{2 \pi  \sigma_{\hat{U}_1|U_{D}, Z = 1}^2}}  \text{exp}\left(-\frac{\left(u - \nu - \mu_{\tilde{U}_1|U_D}(U_D) - \frac{\delta_1^{\prime}\tilde{\theta_1}}{\delta_1^{\prime}\delta_1}X^{\prime}\delta_1 \right)^2}{2  \sigma_{\hat{U}_1|U_{D}, Z = 1}^2} \right)  d U_D, \\
      &f(U = u, D = 1|p , Z = 0) = \int_{- \Phi^{-1}(p)}^{\infty}  \phi(U_D) \frac{1}{ \sqrt{2 \pi  \sigma_{\hat{U}_1|U_{D}, Z = 0}^2}}  \text{exp}\left(-\frac{\left(u + \nu -\mu_{\tilde{U}_1|U_D}(U_D) - \frac{\delta_0^{\prime}\tilde{\theta_1}}{\delta_0^{\prime}\delta_0}X^{\prime}\delta_0 \right)^2}{2  \sigma_{\hat{U}_1|U_{D}, Z = 0}^2} \right)  d U_D.
\end{align*}
where $\phi(\cdot)$ is the standard univariate normal density and
\begin{align*}
    &\mu_{\tilde{U}_1|U_D}(U_D) = \mu_1 + \sigma_1 \rho_{D, 1} U_D, \\
    &\sigma_{\hat{U}_1|U_{D}, Z = 1}^2 = \sigma_1^2(1 - \rho_{D,1} ^2) + \left(1 - \frac{(\tilde{\theta}_1^{\prime}\delta_1)^2}{\tilde{\theta}_1^{\prime}\tilde{\theta}_1 \delta^{\prime}\delta} \right) \tilde{\theta}_1^{\prime}\tilde{\theta}_1, \\
    &\sigma_{\hat{U}_1|U_{D}, Z = 0}^2 = \sigma_1^2(1 - \rho_{D,1} ^2) + \left(1 - \frac{(\tilde{\theta}_1^{\prime}\delta_0)^2}{\tilde{\theta}_1^{\prime}\tilde{\theta}_1 \delta^{\prime}\delta} \right) \tilde{\theta}_1^{\prime}\tilde{\theta}_1,
\end{align*}
with
\begin{align}
    X^{\prime}\delta_1 &= \Phi^{-1}(p) - \alpha, \label{eq:Xd_1 p relation}\\
    X^{\prime}\delta_0 &= \Phi^{-1}(p) + \alpha. \label{eq:Xd_0 p relation}
\end{align}
The expressions for the conditional joint densities of $(U, D = 0)$ are analogous. 
If the asymptotic estimates are unbiased (i.e. $\tilde{\theta}_1 = 0$), then the propensity score only enters through the lower bound of each integral. The derivative of the joint density with respect to $p$ is then weakly positive for all $A$ and $p$ and both $Z = 0$ and $Z = 1$. Condition (iv) of Proposition \ref{prop:distillation better} must then hold for any non-overlapping intervals of $p$ we consider. However, if $\tilde{\theta}_1 \neq 0$, there is an additional effect through \eqref{eq:Xd_1 p relation} and \eqref{eq:Xd_0 p relation}, changing $p$ affects $X^{\prime}\delta_1$ and $X^{\prime}\delta_0$. Intuitively, with biased estimates, partialling out does not remove the effect of covariates from the residuals. Changing propensity scores changes the distribution of covariates, which shifts the distribution of partial residuals. 

$X^{\prime}\delta_1$ and $ X^{\prime}\delta_0$ enter the density multiplied by  $\tilde{\theta}_1^{\prime}\delta_1$ and $\tilde{\theta}_1^{\prime}\delta_0$ respectively. Rather than characterising $\tilde{\theta}_1$, we characterise these quantities. The following proposition holds
\begin{proposition} \label{prop:bias simplification}
Consider the terms  $\delta_1^{\prime}\tilde{\theta}_1$, $\delta_0^{\prime}\tilde{\theta}_1$, $\delta_1^{\prime}\tilde{\theta}_0$ and $\delta_0^{\prime}\tilde{\theta}_0$. Assume \\
\noindent (i) The partially linear model is estimated as described in Section \ref{sec:test procedure}. \\
\noindent (ii) The propensity score function $p(X, Z)$ and the conditional expectations $E[X|p(X,Z)]$ and $E[Y|p(X,Z)]$ are consistently estimated.\\
\noindent (iii) There is no perfect multicollinearity among the components of $X - E[X|p(X,Z)]$.\\
\noindent (iv) The matrix $E$ defined in \eqref{eq:matrix that must be invertible 1} and the matrix $G$ defined in \eqref{eq:matrix that must be invertible 2} are invertible.\\
Then
\begin{enumerate}
    \item The following equalities hold
    \begin{align*}
        &\tilde{\theta}_1^{\prime}\delta_1 = \tilde{\theta}_0^{\prime}\delta_0, \\
        &\tilde{\theta}_0^{\prime}\delta_1 = \tilde{\theta}_1^{\prime}\delta_0.
    \end{align*}
    \item $\tilde{\theta}_1^{\prime}\delta_1$ and $\tilde{\theta}_1^{\prime}\delta_0$ have the form
    \begin{align*}
        \tilde{\theta}_1^{\prime}\delta_1 &= \mathcal{H}_1(\alpha, \rho_{\delta}, \delta^{\prime}\delta) \nu, \\
        \tilde{\theta}_1^{\prime}\delta_0 &= \mathcal{H}_0(\alpha, \rho_{\delta}, \delta^{\prime}\delta) \nu.
    \end{align*}
    
\end{enumerate}
\end{proposition}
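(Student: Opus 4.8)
The plan is to characterize the probability limit $(\hat{\theta}_1^{*}, \hat{\theta}_0^{*})$ through the population analogue of the Robinson-type regression in Step 2 of Section \ref{sec:test procedure}. Writing $\tilde{X} \equiv X - E[X \mid p(X,Z)]$ and using that, by consistency of the nonparametric stages (assumption (ii)), partialling out removes exactly the components that are functions of $p$ alone, the estimand solves the population normal equations obtained by projecting $Y - E[Y\mid p]$ onto $p\tilde{X}$ and $(1-p)\tilde{X}$. Subtracting the true $(\theta_1,\theta_0)$, the asymptotic biases satisfy a linear system
\begin{align*}
\begin{pmatrix} E[p^2\tilde{X}\tilde{X}'] & E[p(1-p)\tilde{X}\tilde{X}'] \\ E[p(1-p)\tilde{X}\tilde{X}'] & E[(1-p)^2\tilde{X}\tilde{X}'] \end{pmatrix}\begin{pmatrix}\tilde{\theta}_1\\\tilde{\theta}_0\end{pmatrix} = \begin{pmatrix} E[p\tilde{X}R] \\ E[(1-p)\tilde{X}R]\end{pmatrix},
\end{align*}
where $R = (D-p)X'(\theta_1-\theta_0) + \nu(2Z-1) + D\tilde{U}_1 + (1-D)\tilde{U}_0$ collects the terms of $Y$ not captured by $pX'\theta_1+(1-p)X'\theta_0$. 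Assumption (iii) guarantees the coefficient matrix $M$ is invertible.

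First I would isolate the source of the bias. The matrix $M$ and the regressors $p\tilde{X}$, $(1-p)\tilde{X}$ do not involve $\nu$, and the dependent variable is linear in the components of $R$, so $(\tilde{\theta}_1,\tilde{\theta}_0)$ is affine in $\nu$. Setting $\nu = 0$ returns the process to one in which (A1), (A3), (A4), (A5) hold, so the partially linear model \eqref{reg y on p and x} is correctly specified and the Robinson estimator is consistent, forcing the bias to vanish. By invertibility this means the non-$\nu$ part of the right-hand side is zero, so only the $\nu(2Z-1)$ term survives and $(\tilde{\theta}_1,\tilde{\theta}_0) = \nu\, M^{-1}(E[p\tilde{X}(2Z-1)], E[(1-p)\tilde{X}(2Z-1)])$. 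This establishes that $\nu$ enters only as a scalar multiplier and, as a by-product, that the bias does not depend on $\mu$, $\Sigma$, or $(\theta_1,\theta_0)$.

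To obtain the remaining structure of Statement 2, I would decompose $X = X_{\parallel} + X_{\perp}$ into its projection onto $\mathrm{span}(\delta_0,\delta_1)$ and the orthogonal complement. Because $X\sim N(0,I)$, $Z\perp X$, and $p(X,Z)$ depends on $X$ only through the indices $W_0 \equiv X'\delta_0$ and $W_1 \equiv X'\delta_1$, the part $X_{\perp}$ is mean zero and independent of $(p,Z,X_{\parallel})$. Hence every cross moment between $X_{\parallel}$ and $X_{\perp}$ vanishes, the orthogonal block of the system is homogeneous, and (using assumption (iv)) $\tilde{\theta}_1,\tilde{\theta}_0 \in \mathrm{span}(\delta_0,\delta_1)$. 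Taking inner products of the vector equation with $\delta_0$ and $\delta_1$ then yields a closed four-equation system in the scalars $\tilde{\theta}_d'\delta_{d'}$ whose coefficients are moments of $(W_0,W_1,Z,p)$. Since $(W_0,W_1)$ is bivariate normal with variances $\delta'\delta$ and correlation $\rho_{\delta}$, $Z$ is Bernoulli$(1/2)$ independent of the indices, and $p = \Phi(W_1+\alpha)$ or $\Phi(W_0-\alpha)$ according to $Z$, these moments, and hence the solution, depend on the parameters only through $(\alpha,\rho_{\delta},\delta'\delta)$, which defines $\mathcal{H}_1$ and $\mathcal{H}_0$.

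For Statement 1, I would argue by an invariance of the reduced problem under the transformation $\tau:(X,Z)\mapsto(-X,1-Z)$ accompanied by the relabeling $\delta_0\leftrightarrow\delta_1$, $\theta_0\leftrightarrow\theta_1$, $\nu\mapsto-\nu$. One checks that under $\tau$ the selection rule sends $D\mapsto 1-D$, hence $p\mapsto 1-p$ and $\tilde{X}\mapsto -\tilde{X}$, while the only bias-relevant term transforms as $\nu(2Z-1)\mapsto(-\nu)(2(1-Z)-1)=\nu(2Z-1)$ and the law of $(X,Z)$ is preserved. Because the bias is independent of $\mu$, $\Sigma$, and $\theta$, the failure of $\tau$ to preserve those objects is immaterial, and $\tau$ maps the bias problem to itself with the $D=1$ and $D=0$ coefficient roles interchanged; matching the transformed normal equations to the originals delivers $\tilde{\theta}_1'\delta_1 = \tilde{\theta}_0'\delta_0$ and $\tilde{\theta}_1'\delta_0 = \tilde{\theta}_0'\delta_1$. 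I expect this sign- and role-tracking through $\tau$ to be the main obstacle: the transformation preserves the quantity of interest only after the preceding reductions, so the argument must be sequenced so that the independence of the bias from $(\mu,\Sigma,\theta)$ is in hand before invariance is invoked, and the bookkeeping that the starred "treated" coefficient corresponds to the original "untreated" coefficient (together with the sign from $\tilde{X}\mapsto-\tilde{X}$) must be done explicitly to land on the stated equalities rather than their negatives.
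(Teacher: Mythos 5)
Your proposal is correct in substance but reaches the result by a genuinely more structural route than the paper. The paper's proof is an explicit computation: it derives closed-form expressions for $f(p,Z=z)$, $f(p)$ and $\Pr(Z=1\mid p)$, catalogues the integral symmetries they satisfy (e.g.\ $\int p^{2}\Pr(Z{=}1\mid p)f(p)\,dp=\int(1-p)^{2}\Pr(Z{=}0\mid p)f(p)\,dp$), writes each block of $E[\mathbf{X}\mathbf{X}']$ as a combination of $I$, $\delta_1\delta_1'$, $\delta_0\delta_0'$ and $\delta_0\delta_1'+\delta_1\delta_0'$, inverts blockwise, and extracts the scalars $\delta_i'\Omega_k\delta_j$ by solving a $4\times4$ system whose permutation-symmetric block structure encodes the equalities of Statement 1. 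You obtain the same conclusions from three structural observations: (a) the $\nu=0$ model satisfies (A1)--(A5), so the Robinson estimand is consistent there and only the $\nu(2Z-1)$ term contributes to the bias --- this is cleaner than the paper, which simply asserts the bias formula $(\tilde\theta_1,\tilde\theta_0)'=\pm\,2\nu\,E[\mathbf{X}\mathbf{X}']^{-1}E[\mathbf{X}\mathbf{Z}]$ (the paper's displayed sign and component ordering there are in fact internally inconsistent); (b) the component of $X$ orthogonal to $\mathrm{span}(\delta_0,\delta_1)$ is independent of $(p,Z)$, so the bias lives in that span and the problem collapses to a $4\times4$ system in the inner products whose coefficients are moments of $(W_0,W_1,Z,p)$, hence functions of $(\alpha,\rho_\delta,\delta'\delta)$ alone --- this is exactly what the paper's $E$-matrix computation delivers by hand; and (c) the measure-preserving involution $(X,Z)\mapsto(-X,1-Z)$ composed with $\delta_0\leftrightarrow\delta_1$ exchanges the treated and untreated coefficient roles, and the paper's integral identities are precisely the scalar shadows of this involution. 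What the paper's longer computation buys is the explicit constants ($w_j$, $a_j$, $b_j$, $f_{ij}$, \dots) that it reuses in the numerical exercise; what your argument buys is brevity, transparency about where the symmetry comes from, and less bookkeeping risk. One point to make explicit when writing this up: the involution does not map the model to itself but to the $\delta$-swapped model, so Statement 1 does not follow from invariance alone; you must combine it with the fact, already available from your step (b), that the reduced $4\times4$ system depends on $(\delta_0,\delta_1)$ only through the swap-invariant quantities $(\delta'\delta,\rho_\delta)$, so that the swapped model's solution coincides (by the invertibility in assumption (iv)) with the original's after permuting the four scalars. You flag the sequencing issue yourself; this is the composition that resolves it.
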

where $\mathcal{H}_1$ and $\mathcal{H}_0$ are functions. In the numerical exercise, the matrices introduced in assumption (iv) are always invertible. The first statement implies that the coefficients for $X^{\prime}\delta_1$ and $ X^{\prime}\delta_0$ in the subdensities for $(U, D = 0)$ will mirror those in the subdensities for $(U, D = 1)$, so it is sufficient to only examine one set of subdensities. The second statement implies that, once we have fixed $\nu, \alpha, \delta^{\prime}\delta$ and $\rho_{\delta}$, the actual values of $\delta_0$ and $\delta_1$ and all other parameters are irrelevant.

\subsection{Proof of Proposition \ref{prop:bias simplification}}\label{sec:bias simplification proof}

The proof proceeds as follows. First, we state several properties of density of the propensity score and the conditional density of $X$ given the propensity score. These properties are then used to show a series of lemmas. Finally, the lemmas are used to show the two statements of Proposition \ref{prop:bias simplification}. 

\subsubsection{Propensity Score Density}

The propensity score function for this process is
 \begin{equation}
     p(X, Z) = \Phi(Z(X^{\prime}\delta_1 + \alpha) + (1 - Z)(X^{\prime}\delta_0 - \alpha)). \label{eq:numerical example propensity score}
 \end{equation}

Given the distributions of $X$ and $Z$, and this propensity score function, the joint density of $p$ and $Z$ is 
\begin{align}
    f(p, Z = 0) &= \frac{1}{2}  \frac{1}{\sqrt{2 \pi \delta^{\prime}\delta}} \text{exp}\left(-\frac{(\Phi^{-1}(p) + \alpha)^{2}}{2 \delta^{\prime}\delta} \right) \frac{1}{\phi(\Phi^{-1}(p))} \label{eq:joint density p Z = 0},\\
    f(p, Z = 1) &= \frac{1}{2}  \frac{1}{\sqrt{2 \pi \delta^{\prime}\delta}} \text{exp}\left(-\frac{(\Phi^{-1}(p) - \alpha)^{2}}{2 \delta^{\prime}\delta} \right) \frac{1}{\phi(\Phi^{-1}(p))}. \label{eq:joint density p Z = 1}
\end{align}
Summing \eqref{eq:joint density p Z = 0} and \eqref{eq:joint density p Z = 1}, the unconditional density of the propensity score is  
\begin{equation}
    f(p) = \frac{1}{\sqrt{2 \pi \delta^{\prime}\delta}}\text{exp}\left(-\frac{\Phi^{-1}(p)^2 + \alpha^{2}}{2 \delta^{\prime}\delta} \right)\text{cosh}\left(\frac{\Phi^{-1}(p)\alpha}{\delta^{\prime}\delta} \right) \frac{1}{\phi(\Phi^{-1}(p))}.  \label{eq:numerical example p unconditional density}
\end{equation}
Notice that this density satisfies 
\begin{equation}
     f(p) = f(1 - p), \label{eq:symmetry 1}
\end{equation}
which implies
\begin{equation}
     \int_0^1 p^2 f(p) dp = \int_0^1 (1 - p)^2 f(p) dp. \label{eq: EXX symmetry 1}
\end{equation}

Combining \eqref{eq:joint density p Z = 1} and \eqref{eq:numerical example p unconditional density}, the probability of $Z = 1$ conditional on the propensity score is
 \begin{equation}
       \Pr(Z=1|p) =  \frac{1}{2} + \frac{\text{tanh}(\alpha \Phi^{-1}(p) / \delta^{\prime}\delta) }{2},
 \end{equation}   
 which satisfies
 \begin{equation}
     \Pr(Z=1|p) = 1 - \Pr(Z=1|1-p). \label{eq:symmetry 2}
 \end{equation}
The symmetry relations \eqref{eq:symmetry 1} and \eqref{eq:symmetry 2}, and $\Pr(Z=0|p) = 1  - \Pr(Z=1|p)$ together imply
\begin{align}
     & \int_0^1 p^2 \Pr(Z=1|p) f(p) dp = \int_0^1 (1 - p)^2 \Pr(Z=0|p) f(p) dp, \label{eq: EXX symmetry 2}\\
     & \int_0^1 p^2 \Pr(Z=0|p) f(p) dp = \int_0^1 (1 - p)^2 \Pr(Z=1|p) f(p) dp,  \label{eq: EXX symmetry 3}\\
      & \int_0^1 p(1 - p) \Pr(Z = 1 |p)f(p) dp =  \int_0^1 p(1 - p) \Pr(Z = 0 |p)f(p) dp, \label{eq: EXX symmetry 4}\\
      & \int_0^1 p \Pr(z=0|p)\Pr(z=1|p) f(p) dp =  \int_0^1 (1 - p) \Pr(z=0|p)\Pr(z=1|p) f(p) dp, \label{eq:E XZ symmetry 1}\\ 
       & \int_0^1 p^2 \Pr(Z = 0|p) \Pr(Z =1|p) f(p) dp = \int_0^1 (1 - p)^2 \Pr(Z = 0|p) \Pr(Z =1|p) f(p) dp. \label{eq: EXX symmetry 5}
\end{align}

Finally, as $\Phi^{-1}(p) = - \Phi^{-1}(1 - p)$ and $\Phi^{-1}(p)^2 = \Phi^{-1}(1 - p)^2$
\begin{align}
    & \int_0^1 p (1 - p) \Pr(Z=1|p)\Pr(Z=0|p) \Phi^{-1}(p) f(p) dp = 0, \label{eq: EXX symmetry 6}\\
    & \int_0^1 (1 - p) \Pr(z=0|p)\Pr(z=1|p) \Phi^{-1}(p) f(p) dp = -  \int_0^1 p \Pr(z=0|p)\Pr(z=1|p) \Phi^{-1}(p) f(p) dp, \label{eq:E XZ symmetry 2} \\
   &   \int_0^1 p^2 \Pr(Z = 0|p) \Pr(Z =1|p) \Phi^{-1}(p) f(p) dp = - \int_0^1 (1 - p)^2 \Pr(Z = 0|p) \Pr(Z =1|p) \Phi^{-1}(p) f(p) dp, \label{eq: EXX symmetry 7}\\
   & \int_0^1 p^2 \Pr(Z = 0|p) \Pr(Z =1|p) \Phi^{-1}(p)^2 f(p) dp = \int_0^1 (1 - p)^2 \Pr(Z = 0|p) \Pr(Z =1|p) \Phi^{-1}(p)^2 f(p) dp. \label{eq: EXX symmetry 8}
\end{align}

Notice that the only parameters which enter the propensity score density and conditional distribution of $Z$ are $\alpha$ and $\delta^{\prime}\delta$. As a consequence, these are the only parameters which affect the integrals above. 

\subsubsection{Conditional Density of $X$}

Given the propensity score function \eqref{eq:numerical example propensity score}, conditioning on $(p, Z = 1)$ is equivalent to conditioning on $X^{\prime}\delta_1 = \Phi^{-1}(p) - \alpha$, and conditioning on $(p, Z = 0)$ is equivalent to conditioning on $X^{\prime}\delta_0 = \Phi^{-1}(p) + \alpha$. Using that $X$ is normally distributed
\begin{align}
     & E[X|p, Z = 1] =  E[X|X^{\prime}\delta_1 = \Phi^{-1}(p) - \alpha] = \frac{\delta_1}{\delta^{\prime}\delta} (\Phi^{-1}(p) - \alpha ), \label{eq:EX|p. Z = 1}\\
    & E[X|p, Z = 0] = E[X|X^{\prime}\delta_0 = \Phi^{-1}(p) + \alpha] = \frac{\delta_0}{\delta^{\prime}\delta} (\Phi^{-1}(p) + \alpha ), \label{eq:EX|p. Z = 0}  \\
    & E[X|p] = \Pr(Z=1|p)\frac{\delta_1}{\delta^{\prime}\delta} (\Phi^{-1}(p) - \alpha ) + \Pr(Z = 0|p) \frac{\delta_0}{\delta^{\prime}\delta} (\Phi^{-1}(p) + \alpha ). \label{eq:EX|p}
\end{align}
Let
\begin{equation*}
    \text{Cov}(X - E[X|p, Z]|p, Z) \equiv E\left[(X - E[X|p, Z])(X - E[X|p, Z])^{\prime}|p, Z\right], \\
\end{equation*}
then
\begin{align}
    \text{Cov}(X - E[X|p, Z]|p, Z = 1)  &=   \text{Cov}(X - E[X|p, Z]|X^{\prime}\delta_1 = \Phi^{-1}(p) - \alpha) = I - \frac{\delta_1\delta_1^{\prime}}{\delta^{\prime}\delta},  \label{eq:cov X|p. Z = 1}\\
    \text{Cov}(X - E[X|p, Z]|p, Z = 0)  &=   \text{Cov}(X - E[X|p, Z]|X^{\prime}\delta_0 = \Phi^{-1}(p) + \alpha) = I - \frac{\delta_0\delta_0^{\prime}}{\delta^{\prime}\delta}.  \label{eq:cov X|p. Z = 0} 
\end{align}
Similarly, define
\begin{equation*}
    \text{Cov}(X - E[X|p]|p) \equiv E[(X - E[X|p])(X - E[X|p])^{\prime}|p].
\end{equation*}
$\text{Cov}(X - E[X|p]|p)$ satisfies the decomposition
\begin{equation*}
  \text{Cov}(X - E[X|p]|p) =  E[\text{Cov}(X - E[X|p, Z]|p, Z)|p] + E[\text{Cov}(E[X|p,Z] - E[X|p]|p, Z)|p].    
\end{equation*}
Using \eqref{eq:EX|p. Z = 1}, \eqref{eq:EX|p. Z = 0}, \eqref{eq:EX|p}, \eqref{eq:cov X|p. Z = 1} and \eqref{eq:cov X|p. Z = 0} and simplifying
\begin{multline}
    \text{Cov}(X - E[X|p]|p) = I - \Pr(Z =1 |p) \frac{\delta_1\delta_1^{\prime}}{\delta^{\prime}\delta} - \Pr(Z = 0|p) \frac{\delta_0\delta_0^{\prime}}{\delta^{\prime}\delta} \\
   + \frac{ \delta_1\delta_1^{\prime}}{(\delta^{\prime}\delta)^2}(\Phi^{-1}(p) - \alpha)^2 - \frac{\delta_0\delta_1^{\prime} + \delta_1\delta_0^{\prime}}{(\delta^{\prime}\delta)^2} (\Phi^{-1}(p) - \alpha)(\Phi^{-1}(p) + \alpha) + \frac{ \delta_0\delta_0^{\prime}}{(\delta^{\prime}\delta)^2} (\Phi^{-1}(p) + \alpha)^2. \label{eq:cov X - E[X|p] | p}
\end{multline}

\subsubsection{Lemmas}

Let $\mathbf{X}$ be a column vector of length $2k_X$ which stacks $p(X - E[X|p])$ and $(1-p)(X-E[X|p])$. Let  $\mathbf{Z} \equiv Z - E[Z|p]$.

\begin{lemma}\label{lemma:E XZ}
\begin{equation*}
    E[\mathbf{X}\mathbf{Z}] = \begin{pmatrix}
        w_1 \delta_1 + w_0 \delta_0 \\
        w_1 \delta_0 + w_0 \delta_1
    \end{pmatrix},
\end{equation*}
where $w_0$ and $w_1$ are scalars that depend on $\alpha$ and $\delta^{\prime}\delta$ only.  
\end{lemma}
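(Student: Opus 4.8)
The plan is to compute each of the two blocks of $E[\mathbf{X}\mathbf{Z}]$ by iterated expectations, first conditioning on $(p,Z)$ and then integrating out $p$. The crucial simplification is that $Z$ is binary, so the inner conditional covariance collapses to a single closed form. Writing $E[(X-E[X|p])(Z-E[Z|p])\mid p]=\mathrm{Cov}(X,Z\mid p)$ and using $E[XZ\mid p]=\Pr(Z=1|p)\,E[X\mid p,Z=1]$ together with $E[Z\mid p]=\Pr(Z=1|p)$, a short computation gives $\mathrm{Cov}(X,Z\mid p)=\Pr(Z=1|p)\Pr(Z=0|p)\bigl(E[X\mid p,Z=1]-E[X\mid p,Z=0]\bigr)$. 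I would establish this identity first, since it avoids having to carry the full mixture expression for $E[X|p]$ through the argument.

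Next I would substitute the conditional means \eqref{eq:EX|p. Z = 1} and \eqref{eq:EX|p. Z = 0}, which yield $E[X\mid p,Z=1]-E[X\mid p,Z=0]=\tfrac{1}{\delta^{\prime}\delta}\bigl[(\delta_1-\delta_0)\Phi^{-1}(p)-(\delta_1+\delta_0)\alpha\bigr]$. Multiplying by $p$ (for the top block) or $1-p$ (for the bottom block) and integrating against $f(p)$, each block becomes a fixed linear combination of $\delta_1$ and $\delta_0$, with coefficients built from the scalar integrals $I_1=\int_0^1 p\,\Pr(Z=0|p)\Pr(Z=1|p)\Phi^{-1}(p)f(p)\,dp$ and $I_2=\int_0^1 p\,\Pr(Z=0|p)\Pr(Z=1|p)f(p)\,dp$ for the top block, and their $1-p$ analogues $J_1,J_2$ for the bottom block. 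Concretely the top block is $\tfrac{1}{\delta^{\prime}\delta}\bigl[\delta_1(I_1-\alpha I_2)+\delta_0(-I_1-\alpha I_2)\bigr]$.

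The decisive step — and the one I expect to be the real content rather than a routine calculation — is showing that the bottom block is the top block with $\delta_1$ and $\delta_0$ interchanged. This is exactly where the precomputed symmetry relations earn their keep: \eqref{eq:E XZ symmetry 1} gives $J_2=I_2$, while \eqref{eq:E XZ symmetry 2} gives $J_1=-I_1$. Feeding these into the bottom block turns it into $\tfrac{1}{\delta^{\prime}\delta}\bigl[\delta_1(-I_1-\alpha I_2)+\delta_0(I_1-\alpha I_2)\bigr]$, so that setting $w_1=(I_1-\alpha I_2)/\delta^{\prime}\delta$ and $w_0=-(I_1+\alpha I_2)/\delta^{\prime}\delta$ reproduces the claimed $(w_1\delta_1+w_0\delta_0,\;w_1\delta_0+w_0\delta_1)$ structure. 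Finally, since $I_1$ and $I_2$ involve only $f(p)$ and $\Pr(Z=z|p)$, which depend on the structural parameters solely through $\alpha$ and $\delta^{\prime}\delta$ (as noted after \eqref{eq: EXX symmetry 8}), the scalars $w_0,w_1$ depend only on $\alpha$ and $\delta^{\prime}\delta$, completing the proof. The only place demanding care is verifying the sign in $J_1=-I_1$ and tracking which factor of $(\delta_1\pm\delta_0)$ the sign flip acts on, since that is precisely what produces the coefficient swap.
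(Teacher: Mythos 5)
Your proposal is correct and follows essentially the same route as the paper: iterated expectations reducing each block to $\int p^{i}(1-p)^{1-i}\Pr(Z=0|p)\Pr(Z=1|p)\bigl(E[X|p,Z=1]-E[X|p,Z=0]\bigr)f(p)\,dp$, substitution of the conditional means, and the symmetry relations \eqref{eq:E XZ symmetry 1}--\eqref{eq:E XZ symmetry 2} to produce the coefficient swap in the bottom block. Your explicit $w_1=(I_1-\alpha I_2)/\delta^{\prime}\delta$ and $w_0=-(I_1+\alpha I_2)/\delta^{\prime}\delta$ match the paper's formulas exactly, and the closing observation about dependence on $\alpha$ and $\delta^{\prime}\delta$ only is the same.
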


\begin{lemma}\label{lemma:E XX}
\begin{equation*}
     E[\mathbf{X}\mathbf{X}^{\prime}]^{-1} = \begin{pmatrix}
        \Omega_1 & \Omega_2 \\
        \Omega_2^{\prime} & \Omega_3
    \end{pmatrix},
\end{equation*} 
where $\Omega_1$, $\Omega_2$, and $\Omega_3$ are $K \times K$ matrices.  
\end{lemma}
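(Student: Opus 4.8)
The plan is to extract the substantive content of the lemma — not merely the triviality that a symmetric invertible $2K\times K$ matrix (with $K\equiv k_X$) partitions into four $K\times K$ blocks with $\Omega_1,\Omega_3$ symmetric and off-diagonal blocks mutually transposed, which is immediate from the symmetry of $E[\mathbf{X}\mathbf{X}^{\prime}]$ and invertibility under assumption (iv) — but the explicit structural form of $\Omega_1,\Omega_2,\Omega_3$ that the downstream proof of Proposition \ref{prop:bias simplification} requires. Concretely, I would first compute $E[\mathbf{X}\mathbf{X}^{\prime}]$ in closed form. Since $\mathbf{X}$ stacks $p(X-E[X|p])$ and $(1-p)(X-E[X|p])$, the law of iterated expectations writes its three distinct blocks as $E[g(p)\,\mathrm{Cov}(X-E[X|p]\mid p)]$ for $g(p)\in\{p^2,\ p(1-p),\ (1-p)^2\}$. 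Substituting \eqref{eq:cov X - E[X|p] | p}, each block becomes a combination $aI+b\,\delta_1\delta_1^{\prime}+c\,\delta_0\delta_0^{\prime}+d(\delta_0\delta_1^{\prime}+\delta_1\delta_0^{\prime})$, where $a,b,c,d$ are integrals of $g(p)$ against $f(p)$, $\Pr(Z\mid p)$, $\Phi^{-1}(p)$ and their products. By the remark following \eqref{eq: EXX symmetry 8} these integrals depend only on $\alpha$ and $\delta^{\prime}\delta$, and the symmetry identities \eqref{eq: EXX symmetry 1}--\eqref{eq: EXX symmetry 8} pin down the coefficients of the $p^2$ and $(1-p)^2$ blocks up to the interchange $\delta_0\leftrightarrow\delta_1$.

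Next I would decompose $\mathbb{R}^{k_X}=\mathrm{span}\{\delta_0,\delta_1\}\oplus\mathrm{span}\{\delta_0,\delta_1\}^{\perp}$. On the orthogonal complement every $\delta_i\delta_j^{\prime}$ term annihilates, so there each block of $E[\mathbf{X}\mathbf{X}^{\prime}]$ acts as a scalar multiple of the identity; on $\mathrm{span}\{\delta_0,\delta_1\}$ the action is captured by a fixed $4\times4$ matrix (two copies of the two-dimensional subspace) whose entries are assembled from $a,b,c,d$ together with $\delta_0^{\prime}\delta_1=\rho_\delta\,\delta^{\prime}\delta$. Hence, after an orthogonal change of basis, $E[\mathbf{X}\mathbf{X}^{\prime}]$ is block-diagonal into a trivial scalar part and this $4\times4$ part.

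Inversion then proceeds separately on the two pieces: the complement inverts as a scalar times the projection onto $\mathrm{span}\{\delta_0,\delta_1\}^{\perp}$, while the $4\times4$ part inverts explicitly into entries that are rational functions of $a,b,c,d$ and $\rho_\delta$, hence of $\alpha,\rho_\delta,\delta^{\prime}\delta$ only. Reassembling in the original coordinates — using $I$ to represent the complement piece and $\delta_i\delta_j^{\prime}$ to represent the span piece — yields $E[\mathbf{X}\mathbf{X}^{\prime}]^{-1}=\begin{pmatrix}\Omega_1&\Omega_2\\ \Omega_2^{\prime}&\Omega_3\end{pmatrix}$ with each $\Omega_j$ again of the form $\tilde a I+\tilde b\,\delta_1\delta_1^{\prime}+\tilde c\,\delta_0\delta_0^{\prime}+\tilde d(\delta_0\delta_1^{\prime}+\delta_1\delta_0^{\prime})$, coefficients depending only on $\alpha,\rho_\delta,\delta^{\prime}\delta$. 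Symmetry of $\Omega_1,\Omega_3$ and the transpose relation for $\Omega_2$ are inherited from the symmetry of $E[\mathbf{X}\mathbf{X}^{\prime}]$, and existence of the inverse is guaranteed by assumption (iv) (equivalently the no-multicollinearity condition (iii)).

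The main obstacle I anticipate is the inversion while preserving the $\{I,\delta_1\delta_1^{\prime},\delta_0\delta_0^{\prime},\delta_0\delta_1^{\prime}+\delta_1\delta_0^{\prime}\}$ structure: a direct inversion of a $2k_X\times2k_X$ matrix is opaque, so the argument rests entirely on recognizing that all nontrivial variation collapses onto $\mathrm{span}\{\delta_0,\delta_1\}$, reducing the problem to a fixed low-dimensional inversion. Some care is needed in the degenerate case $|\rho_\delta|=1$, where $\delta_0$ and $\delta_1$ become collinear and the relevant subspace drops to dimension one; under assumption (iv) this configuration does not occur, and the argument should note that the closed form stays well defined precisely because invertibility is assumed rather than derived.
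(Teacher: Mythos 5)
Your proof of the lemma as literally stated is correct, and you rightly observe that the statement itself is nearly immediate: the paper's own proof is just as short, noting that the two off-diagonal blocks of $E[\mathbf{X}\mathbf{X}^{\prime}]$ in \eqref{eq: EXX not inverted} are identical and every block is symmetric, so that symmetry and invertibility (assumption (iii)/(iv)) force the inverse into the displayed form with lower-left block $\Omega_2^{\prime}$. Where you diverge is in how you extract the structure of $\Omega_1,\Omega_2,\Omega_3$ that the downstream argument needs. You invert $E[\mathbf{X}\mathbf{X}^{\prime}]$ explicitly by splitting $\mathbb{R}^{k_X}$ into $\mathrm{span}\{\delta_0,\delta_1\}$ and its orthogonal complement, inverting a scalar piece and a $4\times4$ piece separately. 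The paper never computes the blocks $\Omega_k$ at all: in Lemmas \ref{lemma:equalities} and \ref{lemma:E XX inverse parameters} it contracts the identity $E[\mathbf{X}\mathbf{X}^{\prime}]\,E[\mathbf{X}\mathbf{X}^{\prime}]^{-1}=I$ on both sides with $\delta_0$ and $\delta_1$, obtaining a closed linear system in the sixteen scalars $\delta_i^{\prime}\Omega_k\delta_j$ whose coefficient matrix is the $4\times4$ matrix $E$ of \eqref{eq:matrix that must be invertible 1}; assumption (iv) is invertibility of that coefficient matrix, not of your restricted $4\times4$ block. The two routes carry the same information, since only the scalars $\delta_i^{\prime}\Omega_k\delta_j$ ever enter the bias formulas, but the paper's contraction argument avoids any change of basis and makes assumption (iv) exactly the hypothesis it consumes, whereas your route buys a fully explicit inverse at the cost of more bookkeeping.

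One inaccuracy in your extra structure: you assert each $\Omega_j$ has the form $\tilde aI+\tilde b\,\delta_1\delta_1^{\prime}+\tilde c\,\delta_0\delta_0^{\prime}+\tilde d(\delta_0\delta_1^{\prime}+\delta_1\delta_0^{\prime})$, which would make $\Omega_2$ symmetric. In general it is not: symmetry of the full inverse only pins the lower-left block as $\Omega_2^{\prime}$ while leaving $\Omega_2$ itself asymmetric, and the paper's equalities \eqref{eq:equality five}--\eqref{eq:equality eight} keep $\Omega_2$ and $\Omega_2^{\prime}$ distinct for exactly this reason. Your representation of $\Omega_2$ needs separate coefficients on $\delta_0\delta_1^{\prime}$ and $\delta_1\delta_0^{\prime}$. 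This does not affect the lemma as stated, only the stronger structural claim you append to it.
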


\begin{lemma}\label{lemma:equalities}

The following equalities hold
\begin{align*}
    &\delta_1^{\prime}\Omega_1\delta_1 = \delta_0^{\prime}\Omega_3\delta_0, \\
    &\delta_0^{\prime}\Omega_1 \delta_0 =  \delta_1^{\prime}\Omega_3\delta_1,\\
    &\delta_0^{\prime}\Omega_1 \delta_1 =  \delta_1^{\prime}\Omega_3\delta_0,\\
    &\delta_1^{\prime}\Omega_1 \delta_0 =  \delta_0^{\prime}\Omega_3\delta_1,\\
    &\delta_1^{\prime}\Omega_2\delta_1 =  \delta_0^{\prime}\Omega_2^{\prime} \delta_0.
\end{align*}

\end{lemma}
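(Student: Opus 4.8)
The plan is to show that the whole problem is invariant under the orthogonal reflection of $\mathbb{R}^{k_X}$ that interchanges $\delta_0$ and $\delta_1$, and then to read the five identities off this invariance. Write the Gram matrix in $k_X \times k_X$ blocks as
\[
E[\mathbf{X}\mathbf{X}'] = \begin{pmatrix} A & B \\ B' & C \end{pmatrix}, \qquad A = E[p^2 \Sigma(p)], \; B = E[p(1-p)\Sigma(p)], \; C = E[(1-p)^2 \Sigma(p)],
\]
where $\Sigma(p) \equiv \text{Cov}(X - E[X|p]\mid p)$ is given by \eqref{eq:cov X - E[X|p] | p}. First I would record two structural facts: (i) each of $A,B,C$ lies in the linear span of $\{I,\ \delta_1\delta_1',\ \delta_0\delta_0',\ \delta_0\delta_1' + \delta_1\delta_0'\}$, since $\Sigma(p)$ does and the blocks are integrals of $\Sigma(p)$ against scalar weights; and (ii) $B$ is symmetric, because $\Sigma(p)$ is symmetric and $p(1-p)f(p)$ is a scalar.

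Next I would introduce the key device. Because $\delta_0'\delta_0 = \delta_1'\delta_1 = \delta'\delta$, the Householder reflection $R \equiv I - 2 v v'/(v'v)$ with $v = \delta_0 - \delta_1$ (and $R = I$ if $\delta_0 = \delta_1$) is a symmetric orthogonal involution with $R\delta_1 = \delta_0$ and $R\delta_0 = \delta_1$. Conjugation by $R$ fixes $I$ and $\delta_0\delta_1' + \delta_1\delta_0'$ and interchanges $\delta_1\delta_1'$ with $\delta_0\delta_0'$, so on the span above it acts exactly as the swap $\delta_0 \leftrightarrow \delta_1$. Using $\Phi^{-1}(1-p) = -\Phi^{-1}(p)$ together with \eqref{eq:symmetry 2}, I would verify directly from \eqref{eq:cov X - E[X|p] | p} that $R\,\Sigma(p)\,R = \Sigma(1-p)$. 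Integrating against $p^2 f(p)$ and substituting $q = 1-p$ (using $f(p)=f(1-p)$ from \eqref{eq:symmetry 1}) then gives $RAR = C$, and likewise $RCR = A$ and $RBR = B$.

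Then I would lift the reflection to the stacked space by setting $K \equiv \begin{pmatrix} 0 & R \\ R & 0 \end{pmatrix}$, which is symmetric, orthogonal, and satisfies $K^2 = I$. The block identities $RAR = C$, $RCR = A$, $RBR = B$, together with $B = B'$, give $K\,E[\mathbf{X}\mathbf{X}']\,K = E[\mathbf{X}\mathbf{X}']$; that is, $E[\mathbf{X}\mathbf{X}']$ commutes with $K$. Its inverse then also commutes with $K$, i.e. $K\,E[\mathbf{X}\mathbf{X}']^{-1}\,K = E[\mathbf{X}\mathbf{X}']^{-1}$. Expanding this in the block notation of Lemma \ref{lemma:E XX} yields the matrix identities
\[
\Omega_1 = R\,\Omega_3\,R, \qquad \Omega_3 = R\,\Omega_1\,R, \qquad \Omega_2 = R\,\Omega_2'\,R.
\]

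Finally, each claimed equality follows by sandwiching these between $\delta_0$ and $\delta_1$ and using $R\delta_1 = \delta_0$, $R\delta_0 = \delta_1$, $R = R'$. For instance $\delta_1'\Omega_1\delta_1 = \delta_1' R\Omega_3 R \delta_1 = (R\delta_1)'\Omega_3 (R\delta_1) = \delta_0'\Omega_3\delta_0$, giving the first identity; the next three come the same way from $\Omega_1 = R\Omega_3 R$ with the appropriate flanking vectors, and the last from $\Omega_2 = R\Omega_2' R$. The main obstacle is conceptual rather than computational: the density symmetries by themselves only show that the two sides of each equality are images of one another under the parameter swap $\delta_0 \leftrightarrow \delta_1$, which is not enough to conclude equality. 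The crucial step is recognizing that, because $\delta_0$ and $\delta_1$ have equal norm, this swap is realized by a genuine orthogonal conjugation $R$; this upgrades the swap relations to honest matrix identities and makes the flanking vectors transform into one another, which is what forces the scalar equalities.
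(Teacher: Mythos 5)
Your proof is correct, and it takes a genuinely different route from the paper's. The paper sandwiches the block identity $\Psi\Omega = I$ between $\delta_0$ and $\delta_1$ to produce sixteen scalar equations, assembles them into a $4\times 4$ linear system with coefficient matrix $E$, and then reads the equalities off the explicit inverse of $E$ (exploiting its block-permutation structure); this is why the paper needs the extra invertibility hypothesis on $E$, i.e.\ assumption (iv) of Proposition \ref{prop:bias simplification}. You instead observe that, since $\delta_0'\delta_0=\delta_1'\delta_1$, the swap $\delta_0\leftrightarrow\delta_1$ is realized by an orthogonal involution $R$, verify $R\,\Sigma(p)\,R=\Sigma(1-p)$ directly from \eqref{eq:cov X - E[X|p] | p} and the symmetries \eqref{eq:symmetry 1}--\eqref{eq:symmetry 2}, and conclude that $E[\mathbf{X}\mathbf{X}']$ commutes with the lifted involution $K$, hence so does its inverse. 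This is shorter, needs only the invertibility of $E[\mathbf{X}\mathbf{X}']$ (assumption (iii)), and delivers the stronger matrix-level identities $\Omega_1=R\,\Omega_3\,R$ and $\Omega_2=R\,\Omega_2'\,R$, from which all five scalar equalities fall out by flanking with $\delta_0,\delta_1$. The one thing your argument does not produce, and which the paper's heavier computation is really for, are the explicit representations $\delta_i'\Omega_k\delta_j = f_{\cdot\cdot}\,\delta'\delta + f_{\cdot\cdot}\,\rho_\delta\,\delta'\delta$ in \eqref{eq:equality one}--\eqref{eq:equality eight}; those formulas are reused in Lemma \ref{lemma:E XX inverse parameters} and in the proof of Proposition \ref{prop:bias simplification} itself, so if your proof were substituted for the paper's, that downstream dependence would still have to be established separately (though your commutation identities would simplify that task as well). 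Your closing remark is exactly the right diagnosis: the density symmetries alone only show the two sides are images of each other under the parameter swap, and it is the equal-norm condition $\delta_0'\delta_0=\delta_1'\delta_1$ that upgrades the swap to an orthogonal conjugation and hence to genuine equalities.
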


\begin{lemma}\label{lemma:E XX inverse parameters}
The constants $\delta_i^{\prime} \Omega_k \delta_j$ for $i,j = \{0, 1\}$, $k = \{1, 2, 3\}$ depend on $\alpha$, $\delta^{\prime}\delta$ and $\rho_{\delta}$ only.
\end{lemma}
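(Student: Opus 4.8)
The plan is to reduce the claim to a statement about a single $2\times 2$ matrix and then exploit the rotational invariance of the covariate distribution. Collect the two coefficient vectors into $V \equiv [\delta_1, \delta_0]$, a $k_X \times 2$ matrix, and observe that the four scalars $\delta_i^{\prime}\Omega_k\delta_j$, $i,j\in\{0,1\}$, are exactly the entries of the $2\times 2$ matrix $V^{\prime}\Omega_k V$. Because the only parameters entering $E[\mathbf{X}\mathbf{X}^{\prime}]$ are $\alpha$ together with $\delta_1,\delta_0$ (the remaining DGP parameters $\mu_d,\sigma_d,\rho_{D,d},\rho_{1,0}$ never touch $X$, $p$, or $Z$), it suffices to show that $V^{\prime}\Omega_k V$ is unchanged when $(\delta_1,\delta_0)$ is replaced by any other pair with the same Gram matrix $V^{\prime}V = \delta^{\prime}\delta\,\bigl(\begin{smallmatrix}1 & \rho_{\delta} \\ \rho_{\delta} & 1\end{smallmatrix}\bigr)$, that is, the same $(\delta^{\prime}\delta,\rho_{\delta})$.

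The key step I would use is an orthogonal-invariance argument. Fix $\alpha$ and let $Q\in O(k_X)$. Replacing $(\delta_1,\delta_0)$ by $(Q\delta_1,Q\delta_0)$ leaves the propensity score function \eqref{eq:numerical example propensity score} equal to $p(Q^{\prime}X,Z)$; since $X\sim N(0,I)$ is invariant under $X\mapsto Q^{\prime}X$ and $Z\perp X$, the substitution $\tilde X=Q^{\prime}X$ gives $E[X\mid p']=QE[X\mid p]$ and hence $X-E[X\mid p']=Q(\tilde X-E[\tilde X\mid p])$. Stacking the $p$- and $(1-p)$-weighted residuals, the transformed design vector satisfies $\mathbf{X}'=(I_2\otimes Q)\tilde{\mathbf{X}}$, so $E[\mathbf{X}'\mathbf{X}'^{\prime}]=(I_2\otimes Q)\,E[\mathbf{X}\mathbf{X}^{\prime}]\,(I_2\otimes Q^{\prime})$ and therefore $\Omega_k\mapsto Q\Omega_k Q^{\prime}$. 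Consequently $(Q\delta_i)^{\prime}(Q\Omega_k Q^{\prime})(Q\delta_j)=\delta_i^{\prime}\Omega_k\delta_j$, so the quantities of interest are invariant along the orbit of the orthogonal group.

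To finish, I would invoke the elementary fact that two pairs of vectors in $\mathbb{R}^{k_X}$ (with $k_X\geq 2$) having identical Gram matrices are related by some $Q\in O(k_X)$; since all admissible $(\delta_1,\delta_0)$ with a given $(\delta^{\prime}\delta,\rho_{\delta})$ share the Gram matrix $\delta^{\prime}\delta\,\bigl(\begin{smallmatrix}1 & \rho_{\delta} \\ \rho_{\delta} & 1\end{smallmatrix}\bigr)$, the invariance just established forces each $\delta_i^{\prime}\Omega_k\delta_j$ to be a function of $(\alpha,\delta^{\prime}\delta,\rho_{\delta})$ alone. The hard part will be the careful bookkeeping in the second step: verifying that the conditional mean transforms as $E[X\mid p']=QE[X\mid p]$ (which relies on both the rotation-invariance of $N(0,I)$ and the independence $Z\perp X$) and that the Kronecker factor $I_2\otimes Q$ commutes cleanly through the inversion of $E[\mathbf{X}\mathbf{X}^{\prime}]$.

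An alternative, more computational route that dovetails with Lemmas \ref{lemma:E XX}--\ref{lemma:equalities} would instead write $E[\mathbf{X}\mathbf{X}^{\prime}] = \mathbf{A}\otimes I + (I_2\otimes V)\,\mathbf{C}\,(I_2\otimes V^{\prime})$ using the representation \eqref{eq:cov X - E[X|p] | p} of $\mathrm{Cov}(X-E[X\mid p]\mid p)$ as $I$ plus a term in the span of $\delta_1\delta_1^{\prime},\delta_0\delta_0^{\prime},\delta_0\delta_1^{\prime}+\delta_1\delta_0^{\prime}$, where $\mathbf{A}$ and $\mathbf{C}$ depend only on $(\alpha,\delta^{\prime}\delta)$. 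Applying the Woodbury identity and projecting onto $V$ shows that $\delta_1,\delta_0$ enter $V^{\prime}\Omega_k V$ only through $V^{\prime}V$, reproducing the same conclusion; this version, however, requires the invertibility hypotheses in condition (iv) of Proposition \ref{prop:bias simplification}, whereas the invariance argument avoids them.
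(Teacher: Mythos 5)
Your argument is correct, but it takes a genuinely different route from the paper. The paper's proof is computational: it reads off the identities \eqref{eq:equality one}--\eqref{eq:equality eight} established in the proof of Lemma \ref{lemma:equalities}, which express each $\delta_i^{\prime}\Omega_k\delta_j$ in the form $f_{lm}\,\delta^{\prime}\delta + f_{l'm'}\,\rho_{\delta}\delta^{\prime}\delta$ for entries of $F=E^{-1}$, and then observes that the entries of $E$ are combinations of $\rho_{\delta}$, $\delta^{\prime}\delta$, and integrals against $f(p)$ and $\Pr(Z=1\mid p)$, which by \eqref{eq:numerical example p unconditional density} involve only $(\alpha,\delta^{\prime}\delta)$. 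Your orthogonal-invariance argument instead shows directly that replacing $(\delta_1,\delta_0)$ by $(Q\delta_1,Q\delta_0)$ with $Q\in O(k_X)$ induces $\Omega_k\mapsto Q\Omega_kQ^{\prime}$, so that $V^{\prime}\Omega_kV$ is constant on orthogonal orbits, and then identifies orbits with Gram matrices, i.e.\ with $(\delta^{\prime}\delta,\rho_{\delta})$. The bookkeeping you flag does check out: $p'(X,Z)=p(Q^{\prime}X,Z)$, and since $X\sim N(0,I)$ is rotation-invariant and independent of $Z$, setting $\tilde X=Q^{\prime}X$ gives $(\tilde X,Z)\overset{d}{=}(X,Z)$, hence $X-E[X\mid p']=Q(\tilde X-E[\tilde X\mid p])$ with identical realized propensity scores, $\mathbf{X}'=(I_2\otimes Q)\tilde{\mathbf{X}}$, and $(I_2\otimes Q)^{-1}=I_2\otimes Q^{\prime}$ passes cleanly through the inversion of $E[\mathbf{X}\mathbf{X}^{\prime}]$. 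As for what each approach buys: the paper's computation is what actually produces the explicit representations used in the numerical exercise and simultaneously delivers the equalities of Lemma \ref{lemma:equalities}, but it leans on the invertibility of $E$ from condition (iv); your argument needs only that $E[\mathbf{X}\mathbf{X}^{\prime}]$ is invertible (condition (iii)), is shorter, and makes transparent \emph{why} only the Gram data of $(\delta_1,\delta_0)$ can matter --- though it yields no formulas and so could not replace the paper's derivation wholesale. One minor caveat: the fact that two pairs of vectors with identical Gram matrices are orthogonally equivalent requires $k_X\geq 2$, which is satisfied since the process assumes $k_X\geq 3$.
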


\begin{proof}[Proof of Lemma \ref{lemma:E XZ}]

\begin{align}
    E[\mathbf{X}\mathbf{Z}] &= \begin{pmatrix}
        E[p(X-E[X|p])(Z-E[Z|p]] \\
        E[(1 - p)(X-E[X|p])(Z-E[Z|p]] 
    \end{pmatrix}, \notag \\
    &= \begin{pmatrix}
        \int_0^1 p \Pr(z = 0|p) \Pr(z = 1|p) (E[X|p, Z = 1] - E[X|p, Z = 0]) f(p) dp \\
        \int_0^1 (1 - p) \Pr(z = 0|p) \Pr(z = 1|p) (E[X|p, Z = 1] - E[X|p, Z = 0]) f(p) dp
    \end{pmatrix}, \notag \\
    & =  \frac{1}{\delta^{\prime}\delta}\begin{pmatrix}
       \int_0^1 p \Pr(z = 0|p) \Pr(z = 1|p) (\delta_1(\Phi^{-1}(p) - \alpha) - \delta_0 (\Phi^{-1}(p) + \alpha)) f(p) dp \\
        \int_0^1 (1 - p) \Pr(z = 0|p) \Pr(z = 1|p) (\delta_1(\Phi^{-1}(p) - \alpha) - \delta_0 (\Phi^{-1}(p) + \alpha)) f(p) dp
    \end{pmatrix}, \notag \\
    &= \begin{pmatrix}
        w_1 \delta_1 + w_0 \delta_0 \\
        w_1 \delta_0 + w_0 \delta_1
    \end{pmatrix}, \label{eq:EXZ expression}
\end{align}
where $w_1$ and $w_0$ are the scalars 
\begin{align*}
    w_1 &= \frac{\int_0^1 p \Pr(z=0|p)\Pr(z=1|p) (\Phi^{-1}(p) - \alpha) f(p) dp}{\delta^{\prime}\delta}, \\
    w_0 &= - \frac{\int_0^1 p \Pr(z=0|p)\Pr(z=1|p) (\Phi^{-1}(p) + \alpha) f(p) dp}{\delta^{\prime}\delta}.
\end{align*}
The first line uses the definitions of $\mathbf{X}$ and $\mathbf{Z}$. The second line uses the law of iterated expectations, and the $Z$ is binary so $E[Z|p] = \Pr(Z = 1|p)$.  The third uses \eqref{eq:EX|p. Z = 1} and \eqref{eq:EX|p. Z = 0}. Equations \eqref{eq:E XZ symmetry 1} and \eqref{eq:E XZ symmetry 2} are used in the fourth.

$w_1$ and $w_0$ are integrals of functions that depend on $\alpha$ and $\delta^{\prime}\delta$ only. Hence, they themselves will depend on $\alpha$ and $\delta^{\prime}\delta$ only. \end{proof}

\subsubsection{Characterising  $E[\mathbf{X}\mathbf{X}^{\prime}]$}

By definition
\begin{equation*}
   E[\mathbf{X}\mathbf{X}^{\prime}] = \begin{pmatrix}
         E[p^2 (X - E[X|p])(X-E[X|p])^{\prime}] & E[p(1-p) (X - E[X|p])(X-E[X|p])^{\prime}] \\
         E[p (1-p) (X - E[X|p])(X-E[X|p])^{\prime}] & E[(1-p)^2 (X - E[X|p])(X-E[X|p])^{\prime}] 
     \end{pmatrix}.    
\end{equation*}
All four components of this matrix can be written in the form $E[p^{i}(1 - p)^{j} (X - E[X|p])(X-E[X|p])^{\prime}]$ where $i$ and $j$ take values $\{0, 1, 2\}$. Notice that
\begin{align*}
    E[p^{i}(1 - p)^{j} (X - E[X|p])(X-E[X|p])^{\prime}] &=  \int_0^1 p^{i}(1 - p)^{j} E[(X - E[X|p])(X-E[X|p])^{\prime}|p] f(p) dp, \\ 
    &=  \int_0^1 p^{i}(1 - p)^{j} \text{Cov}(X - E[X|p]|p) f(p) dp.
\end{align*}
Using \eqref{eq:cov X - E[X|p] | p} to substitute for $\text{Cov}(X - E[X|p]|p)$, we can obtain
\begin{align}
    & E[\mathbf{X}\mathbf{X}^{\prime}] = \begin{pmatrix}
        \Psi_1 & \Psi_2 \\
        \Psi_2 & \Psi_3 
     \end{pmatrix},   \label{eq: EXX not inverted} \\
    & \Psi_1 =  I a_1 + \delta_1\delta_1^{\prime} a_2 +  \delta_0\delta_0^{\prime} a_3 +  (\delta_0\delta_1^{\prime} +  \delta_1\delta_0^{\prime})a_4, \notag \\
    & \Psi_3 =  I a_1 + \delta_1\delta_1^{\prime} a_3 +  \delta_0\delta_0^{\prime} a_2 +  (\delta_0\delta_1^{\prime} +  \delta_1\delta_0^{\prime})a_4, \notag \\
    & \Psi_2 =  I b_1 + (\delta_1\delta_1^{\prime} +  \delta_0\delta_0^{\prime}) b_2 +  (\delta_0\delta_1^{\prime} +  \delta_1\delta_0^{\prime}) b_3, \notag
\end{align}
with
\begin{align*}
    a_1 &= \int p^2 f(p) dp, \\
    a_2 &= -\frac{\int p^2 \Pr(Z=1|p) f(p) dp}{\delta^{\prime}\delta} + \frac{\int p^2 \Pr(Z=1|p) \Pr(Z=0|p) (\Phi^{-1}(p) - \alpha)^2 f(p) dp}{(\delta^{\prime}\delta)^2}, \\
    a_3 &= -\frac{\int p^2 \Pr(Z=0|p) f(p) dp}{\delta^{\prime}\delta} + \frac{\int p^2 \Pr(Z=1|p) \Pr(Z=0|p) (\Phi^{-1}(p) + \alpha)^2 f(p) dp}{(\delta^{\prime}\delta)^2}, \\
    a_4 &= \frac{\int p^2 \Pr(Z=1|p) \Pr(Z=0|p) (\alpha^2 - \Phi^{-1}(p)^2) f(p) dp}{(\delta^{\prime}\delta)^2}, \\
    b_1 &= \int p(1 - p) f(p) dp, \\
    b_2 &= -\frac{\int p(1 - p) \Pr(Z=1|p) f(p) dp}{\delta^{\prime}\delta} + \frac{\int p(1 - p) \Pr(Z=1|p) \Pr(Z=0|p) (\Phi^{-1}(p)^2 + \alpha^2) f(p) dp}{(\delta^{\prime}\delta)^2}, \\
    b_3 &=\frac{\int p(1 - p) \Pr(Z=1|p) \Pr(Z=0|p) (\alpha^2 - \Phi^{-1}(p)^2) f(p) dp}{(\delta^{\prime}\delta)^2},
\end{align*}
where \eqref{eq: EXX symmetry 1}, \eqref{eq: EXX symmetry 2}, \eqref{eq: EXX symmetry 3}, \eqref{eq: EXX symmetry 4}, \eqref{eq: EXX symmetry 5}, \eqref{eq: EXX symmetry 6}, \eqref{eq: EXX symmetry 7}, and \eqref{eq: EXX symmetry 8} have all been imposed.  By pre-multiplying the components of $E[\mathbf{X}\mathbf{X}^{\prime}]$ by $\delta_1$ and $\delta_0$, we can obtain the following
\begin{align}
    \delta_1^{\prime}\Psi_1 &= c_1 \delta_1^{\prime} + c_0 \delta_0^{\prime}, \label{eq: push through 1}\\
    \delta_0^{\prime}\Psi_1 &= d_0 \delta_1^{\prime} + d_1 \delta_0^{\prime}, \label{eq: push through 2}\\
    \delta_1^{\prime}\Psi_3 &= d_1 \delta_1^{\prime} + d_0 \delta_0^{\prime}, \label{eq: push through 3}\\
    \delta_0^{\prime}\Psi_3 &= c_0 \delta_1^{\prime} + c_1 \delta_0^{\prime}, \label{eq: push through 4}\\
    \delta_1^{\prime}\Psi_2 &= e_1 \delta_1^{\prime} + e_0 \delta_0^{\prime}, \label{eq: push through 5}\\ 
    \delta_0^{\prime}\Psi_2 &= e_0 \delta_1^{\prime} + e_1 \delta_0^{\prime}, \label{eq: push through 6}
\end{align}
where
\begin{align*}
    c_0 &= a_3  \rho_{\delta}\delta^{\prime}\delta + a_4 \delta^{\prime}\delta, & d_0 &= a_2  \rho_{\delta}\delta^{\prime}\delta + a_4 \delta^{\prime}\delta, &  e_0 &= b_2 \rho_{\delta} \delta^{\prime}\delta + b_3 \delta^{\prime}\delta, \\
    c_1 &= a_1 + a_2 \delta^{\prime}\delta + \rho_{\delta}\delta^{\prime}\delta a_4, & d_1 &= a_1 + a_3 \delta^{\prime}\delta + \rho_{\delta}\delta^{\prime}\delta a_4, & e_1 &= b_1 + b_2 \delta^{\prime}\delta + b_3 \rho_{\delta}\delta^{\prime}\delta. 
\end{align*}
% \begin{align*}
%     c_0 &= a_3  \rho_{\delta}\delta^{\prime}\delta + a_4 \delta^{\prime}\delta \\
%     c_1 &= a_1 + a_2 \delta^{\prime}\delta + \rho_{\delta}\delta^{\prime}\delta a_4 \\
%     d_0 &= a_2  \rho_{\delta}\delta^{\prime}\delta + a_4 \delta^{\prime}\delta \\
%     d_1 &= a_1 + a_3 \delta^{\prime}\delta + \rho_{\delta}\delta^{\prime}\delta a_4 \\
%     e_0 &= b_2 \rho_{\delta} \delta^{\prime}\delta + b_3 \delta^{\prime}\delta \\
%     e_1 &= b_1 + b_2 \delta^{\prime}\delta + b_3 \rho_{\delta}\delta^{\prime}\delta 
% \end{align*}
Note that
\begin{equation}
    c_0 + \rho_{\delta} c_1 = d_0 + \rho_{\delta} d_1. \label{eq:symmetry condtion}
\end{equation}

\begin{proof}[Proof of Lemma  \ref{lemma:E XX}]

 $E[\mathbf{X}\mathbf{X}^{\prime}]$ is invertible by assumption (iii). Notice that the off-diagonal blocks in \eqref{eq: EXX not inverted} are identical, and that each block is a symmetric matrix. Applying the formulas for block matrix inversion and imposing these properties gives
\begin{equation}
   \begin{pmatrix}
        \Psi_1 & \Psi_2 \\
        \Psi_2 & \Psi_3 
     \end{pmatrix}^{-1} = \begin{pmatrix}
        \Omega_1 & \Omega_2 \\
        \Omega_2^{\prime} & \Omega_3 \label{eq:EXX inverse}
    \end{pmatrix}.
\end{equation} \end{proof}

\begin{proof}[Proof of Lemma \ref{lemma:equalities}]

Equation \eqref{eq:EXX inverse} implies 
\begin{equation}
\begin{pmatrix}
    \Psi_1 \Omega_1 + \Psi_2 \Omega_2^{\prime} & \Psi_1 \Omega_2 + \Psi_2 \Omega_3 \\
      \Psi_2 \Omega_1 + \Psi_3 \Omega_2^{\prime}  & \Psi_2 \Omega_2 + \Psi_3 \Omega_3
\end{pmatrix} = \begin{pmatrix}
    I & 0 \\
    0 & I
\end{pmatrix}.  \label{eq:EXX inverse block relations}
\end{equation}
% \begin{align}
%     \Psi_1 \Omega_1 + \Psi_2 \Omega_2^{\prime} &= I \label{eq:inverse block 1}\\
%     \Psi_1 \Omega_2 + \Psi_2 \Omega_3 &= 0 \label{eq:inverse block 2}\\
%     \Psi_2 \Omega_1 + \Psi_3 \Omega_2^{\prime} &= 0 \label{eq:inverse block 3}\\
%     \Psi_2 \Omega_2 + \Psi_3 \Omega_3 &= I \label{eq:inverse block 4} 
% \end{align}

Treating each block of \eqref{eq:EXX inverse block relations} as a system of equations, consider pre and post-multiplying the top-left blocks by $\delta_1$
\begin{equation*}
     \delta_1^{\prime}\Psi_1 \Omega_1\delta_1 + \delta_1^{\prime}\Psi_2 \Omega_2^{\prime}\delta_1 = \delta^{\prime}\delta .
\end{equation*}
Using \eqref{eq: push through 1} and \eqref{eq: push through 5}, this becomes
\begin{equation*}
    c_1   \delta_1^{\prime}\Omega_1\delta_1 + c_0 \delta_0^{\prime}\Omega_1\delta_1 + e_1  \delta_1^{\prime} \Omega_2^{\prime}\delta_1 + e_0  \delta_0^{\prime} \Omega_2^{\prime}\delta_1 = \delta^{\prime}\delta. 
\end{equation*}
Repeat this with all four blocks and combinations of $\delta_1$ and $\delta_0$ to obtain sixteen conditions. These sixteen conditions can be written as the system of equations
\begin{equation*}
       E
     \begin{pmatrix}
        \delta_1^{\prime}\Omega_1\delta_1 & \delta_1^{\prime}\Omega_2\delta_1 & \delta_1^{\prime}\Omega_1\delta_0 & \delta_1^{\prime}\Omega_2\delta_0  \\
        \delta_0^{\prime}\Omega_1\delta_1 & \delta_0^{\prime}\Omega_2\delta_1 & \delta_0^{\prime}\Omega_1\delta_0 & \delta_0^{\prime}\Omega_2\delta_0 \\
        \delta_1^{\prime}\Omega_2^{\prime}\delta_1 & \delta_1^{\prime}\Omega_3\delta_1 & \delta_1^{\prime}\Omega_2^{\prime}\delta_0 & \delta_1^{\prime}\Omega_3\delta_0 \\
        \delta_0^{\prime}\Omega_2^{\prime}\delta_1 & \delta_0^{\prime}\Omega_3\delta_1 &  \delta_0^{\prime}\Omega_2^{\prime}\delta_0 & \delta_0^{\prime}\Omega_3\delta_0
    \end{pmatrix} = \begin{pmatrix}
        \delta^{\prime}\delta & 0 &  \rho_{\delta}\delta^{\prime}\delta & 0\\
        0 &  \delta^{\prime}\delta & 0 & \rho_{\delta}\delta^{\prime}\delta\\
        \rho_{\delta}\delta^{\prime}\delta & 0 & \delta^{\prime}\delta & 0\\
        0 &  \rho_{\delta}\delta^{\prime}\delta & 0 & \delta^{\prime}\delta 
    \end{pmatrix},
\end{equation*}
where
% \begin{equation} \label{eq:matrix that must be invertible 1}
%     E \equiv \begin{pmatrix}
%         E_1 & E_2 \\
%         P E_2 P & P E_1 P 
%     \end{pmatrix}
% \end{equation}
% \begin{equation*}
%     E_1 = \begin{pmatrix} c_1 & c_0 \\
%     e_1 & e_0\end{pmatrix} \; 
%     E_2 = \begin{pmatrix}
%        e_1 & e_0 \\
%        d_1 & d_0 \\
%     \end{pmatrix} \;
%     P = \begin{pmatrix}
%         0 & 1 \\
%         1 & 0
%     \end{pmatrix}
% \end{equation*}
\begin{equation} \label{eq:matrix that must be invertible 1}
    E \equiv \begin{pmatrix}
        E_1 & E_2 \\
        P E_2 P & P E_1 P 
    \end{pmatrix},
    \quad
    E_1 = \begin{pmatrix} 
    c_1 & c_0 \\
    e_1 & e_0
    \end{pmatrix}, \; 
    E_2 = \begin{pmatrix}
       e_1 & e_0 \\
       d_1 & d_0 \\
    \end{pmatrix},
\end{equation}
and $P$ is the perturbation matrix
\begin{equation*}
      P = \begin{pmatrix}
        0 & 1 \\
        1 & 0
    \end{pmatrix}.
\end{equation*}

By assumption (iv), $E$ is invertible. Let $F = E^{-1}$. Inverting $E$ as a block matrix, and using that $P$ is its own inverse 
\begin{equation}
    F = 
    \begin{pmatrix}
        F_1 & F_2 \\
        P F_{2} P & P F_1 P 
    \end{pmatrix}, 
\quad
  F_1 = \begin{pmatrix} 
  f_{11} & f_{12} \\
    f_{21} & f_{22} 
    \end{pmatrix}, \; 
    F_2 = \begin{pmatrix}
       f_{13} & f_{14} \\
       f_{23} & f_{24} \\
    \end{pmatrix}.  \label{eq:E block matrix inverse}
\end{equation}

Hence,
\begin{equation*}
   \begin{pmatrix}
        \delta_1^{\prime}\Omega_1\delta_1 & \delta_1^{\prime}\Omega_2\delta_1 & \delta_1^{\prime}\Omega_1\delta_0 & \delta_1^{\prime}\Omega_2\delta_0  \\
        \delta_0^{\prime}\Omega_1\delta_1 & \delta_0^{\prime}\Omega_2\delta_1 & \delta_0^{\prime}\Omega_1\delta_0 & \delta_0^{\prime}\Omega_2\delta_0 \\
        \delta_1^{\prime}\Omega_2^{\prime}\delta_1 & \delta_1^{\prime}\Omega_3\delta_1 & \delta_1^{\prime}\Omega_2^{\prime}\delta_0 & \delta_1^{\prime}\Omega_3\delta_0 \\
        \delta_0^{\prime}\Omega_2^{\prime}\delta_1 & \delta_0^{\prime}\Omega_3\delta_1 &  \delta_0^{\prime}\Omega_2^{\prime}\delta_0 & \delta_0^{\prime}\Omega_3\delta_0
    \end{pmatrix} =   \begin{pmatrix}
        F_1 & F_2 \\
        P F_{2} P & P F_1 P 
    \end{pmatrix} \begin{pmatrix}
        \delta^{\prime}\delta & 0 &  \rho_{\delta}\delta^{\prime}\delta & 0\\
        0 &  \delta^{\prime}\delta & 0 & \rho_{\delta}\delta^{\prime}\delta\\
        \rho_{\delta}\delta^{\prime}\delta & 0 & \delta^{\prime}\delta & 0\\
        0 &  \rho_{\delta}\delta^{\prime}\delta & 0 & \delta^{\prime}\delta 
    \end{pmatrix}.
\end{equation*}

From which we can obtain the relations stated in  Lemma \ref{lemma:equalities}
\begin{align}
    &\delta_1^{\prime}\Omega_1\delta_1 = \delta_0^{\prime}\Omega_3\delta_0 = f_{11} \delta^{\prime}\delta + f_{13} \rho_{\delta} \delta^{\prime}\delta, \label{eq:equality one} \\
    &\delta_0^{\prime}\Omega_1 \delta_0 =  \delta_1^{\prime}\Omega_3\delta_1 = f_{21} \rho_{\delta} \delta^{\prime}\delta + f_{23} \delta^{\prime}\delta, \label{eq:equality two} \\
    &\delta_0^{\prime}\Omega_1 \delta_1 =  \delta_1^{\prime}\Omega_3\delta_0 = f_{21}  \delta^{\prime}\delta + f_{23} \rho_{\delta} \delta^{\prime}\delta, \label{eq:equality three} \\
    &\delta_1^{\prime}\Omega_1 \delta_0 =  \delta_0^{\prime}\Omega_3\delta_1 = f_{11}  \rho_{\delta}\delta^{\prime}\delta + f_{13}  \delta^{\prime}\delta, \label{eq:equality four} \\
    &\delta_1^{\prime}\Omega_2\delta_1 =  \delta_0^{\prime}\Omega_2^{\prime} \delta_0 = f_{12}  \delta^{\prime}\delta + f_{14} \rho_{\delta} \delta^{\prime}\delta, \label{eq:equality five}\\
    &\delta_1^{\prime}\Omega_2^{\prime} \delta_1 =  \delta_0^{\prime}\Omega_2 \delta_0 = f_{22}  \rho_{\delta}\delta^{\prime}\delta + f_{24}  \delta^{\prime}\delta, \label{eq:equality six}\\
     &\delta_0^{\prime}\Omega_2^{\prime} \delta_1 =  \delta_1^{\prime}\Omega_2 \delta_0 = f_{12}  \rho_{\delta}\delta^{\prime}\delta + f_{14}  \delta^{\prime}\delta, \label{eq:equality seven} \\
    &\delta_0^{\prime}\Omega_2 \delta_1 =  \delta_1^{\prime}\Omega_2^{\prime} \delta_0 = f_{22}  \delta^{\prime}\delta + f_{24}  \rho_{\delta}\delta^{\prime}\delta. \label{eq:equality eight}
\end{align}

\end{proof}

\begin{proof}[Proof of Lemma \ref{lemma:E XX inverse parameters}]

The formulas for $\delta_i^{\prime} \Omega_k \delta_j$ $i,j = \{0, 1\}$, $k = \{1, 2, 3\}$ above contain $\delta_{\prime}\delta$, $\rho_{\delta}$, and the elements of the inverse of $E$. The elements of $E$ are given by combinations of $\rho_{\delta}$, $\delta^{\prime}\delta$ and integrals of the propensity score density and the conditional distribution of $Z$. Recall that the propensity score density and the conditional distribution of $Z$ depend only on $\alpha$ and $\delta^{\prime}\delta$. Hence, $E$ depends only on $\alpha$, $\delta^{\prime}\delta$ and $\rho_{\delta}$, which implies its also inverse depends only on these parameters. Thus, the only parameters affecting each $\delta_i^{\prime} \Omega_k \delta_j$ are $\alpha$, $\delta^{\prime}\delta$ and $\rho_{\delta}$. \end{proof}

\begin{proof}[Proof of Statements 1 and 2]

Under assumption (i), estimates of $\theta_0$ and $\theta_1$ are obtained by regressing $Y - E[Y|p]$ on $p(X - E[X|p])$ and $(1 - p)(X - E[X|p])$. The asymptotic limit of these estimates is
\begin{equation*}
    \left(\begin{array}{c}
         \hat{\theta}_0 \\
         \hat{\theta}_1 
    \end{array} \right) = \begin{pmatrix}
           \theta_0 \\
          \theta_1 
    \end{pmatrix} + E[\mathbf{X}\mathbf{X}^{\prime}]^{-1}E[\mathbf{X}\mathbf{Z}] 2 \nu,
\end{equation*}
so that
\begin{equation*}
    \begin{pmatrix}
        \tilde{\theta}_1 \\
        \tilde{\theta}_0
    \end{pmatrix} =  - E[\mathbf{X}\mathbf{X}^{\prime}]^{-1}E[\mathbf{X}\mathbf{Z}] 2 \nu.
\end{equation*}

Use Lemma \ref{lemma:E XZ} to substitute for $E[\mathbf{X}\mathbf{Z}]$, Lemma \ref{lemma:E XX} to substitute for  $E[\mathbf{X}\mathbf{X}^{\prime}]^{-1}$ and simplify to obtain
\begin{equation}
    \begin{pmatrix}
        \tilde{\theta}_1 \\
        \tilde{\theta}_0
    \end{pmatrix} = - \begin{pmatrix}
        w_1 \Omega_1 \delta_1 + w_0 \Omega_1 \delta_0 + w_1 \Omega_2 \delta_0 + w_0 \Omega_2 \delta_1 \\
        w_1 \Omega_2^{\prime}\delta_1 + w_0 \Omega_2^{\prime}\delta_0 + w_1 \Omega_3 \delta_0 + w_0 \Omega_3 \delta_1
    \end{pmatrix} 2 \nu. \label{eq:asymptotic bias expression}
\end{equation}

Pre-mutliply the two elements in \eqref{eq:asymptotic bias expression} by $\delta_1^{\prime}$ and $\delta_0^{\prime}$ to obtain
\begin{align}
    \delta_1^{\prime} \tilde{\theta}_1 &= -(w_1 \delta_1^{\prime}\Omega_1 \delta_1 + w_0 \delta_1^{\prime}\Omega_1 \delta_0 + w_1 \delta_1^{\prime}\Omega_2 \delta_0 + w_0 \delta_1^{\prime}\Omega_2 \delta_1)2\nu, \label{eq: d1 theta1}\\
    \delta_1^{\prime} \tilde{\theta}_0 &= -(w_1 \delta_1^{\prime}\Omega_2^{\prime}\delta_1 + w_0 \delta_1^{\prime}\Omega_2^{\prime}\delta_0 + w_1 \delta_1^{\prime}\Omega_3 \delta_0 + w_0 \delta_1^{\prime}\Omega_3 \delta_1)2\nu, \label{eq: d1 theta0} \\
  \delta_0^{\prime} \tilde{\theta}_1 &= -(w_1 \delta_0^{\prime}\Omega_1 \delta_1 + w_0 \delta_0^{\prime}\Omega_1 \delta_0 + w_1 \delta_0^{\prime}\Omega_2 \delta_0 + w_0 \delta_0^{\prime}\Omega_2 \delta_1)2\nu, \label{eq: d0 theta1}\\
   \delta_0^{\prime} \tilde{\theta}_0 &= -(w_1 \delta_0^{\prime}\Omega_2^{\prime}\delta_1 + w_0 \delta_0^{\prime}\Omega_2^{\prime}\delta_0 + w_1 \delta_0^{\prime}\Omega_3 \delta_0 + w_0 \delta_0^{\prime}\Omega_3 \delta_1)2\nu. \label{eq: d0 theta0}
\end{align}

Using \eqref{eq: d1 theta1} and \eqref{eq: d0 theta0}
\begin{multline*}
     \delta_1^{\prime}  \tilde{\theta}_1 -   \delta_0^{\prime}  \tilde{\theta}_0 = -2(w_1 (\delta_1^{\prime} \Omega_1 \delta_1 - \delta_0^{\prime} \Omega_3 \delta_0) + w_0 (\delta_1^{\prime} \Omega_1 \delta_0 - \delta_0^{\prime} \Omega_3 \delta_1) + w_1 (\delta_1^{\prime} \Omega_2 \delta_0 - \delta_0^{\prime} \Omega_2^{\prime} \delta_1) +  w_0 (\delta_1^{\prime} \Omega_2 \delta_1 - \delta_0^{\prime} \Omega_2^{\prime} \delta_0))\nu.
\end{multline*}
Impose the equalities of Lemma \ref{lemma:equalities} and $\delta_1^{\prime} \Omega_2 \delta_0 = \delta_0^{\prime} \Omega_2^{\prime} \delta_1$, and this is zero. Repeating this process with \eqref{eq: d1 theta0} and \eqref{eq: d0 theta1} yields that $\delta_0^{\prime} \tilde{\theta}_1 = \delta_1^{\prime} \tilde{\theta}_0$. This proves the first statement of Proposition \ref{prop:bias simplification}.

To see the second statement, consider \eqref{eq: d1 theta1}. By Lemma \ref{lemma:E XZ}, $w_0$ and $w_1$ depend only on $\alpha$ and $\delta^{\prime}\delta$. By Lemma \ref{lemma:E XX inverse parameters}, $\delta_1^{\prime}\Omega_1 \delta_1$, $\delta_1^{\prime}\Omega_1 \delta_0$, $\delta_1^{\prime}\Omega_2 \delta_0$ and $\delta_1^{\prime}\Omega_2 \delta_1$ depend only on  $\alpha$, $\delta^{\prime}\delta$ and $\rho_{\delta}$. Hence, we can write
\begin{equation*}
    \delta_1^{\prime} \tilde{\theta}_1 = \mathcal{H}_1(\alpha, \rho_{\delta}, \delta^{\prime}\delta) \nu,
\end{equation*}
with
\begin{equation*}
     \mathcal{H}_1(\alpha, \rho_{\delta}, \delta^{\prime}\delta) \equiv -(w_1 \delta_1^{\prime}\Omega_1 \delta_1 + w_0 \delta_1^{\prime}\Omega_1 \delta_0 + w_1 \delta_1^{\prime}\Omega_2 \delta_0 + w_0 \delta_1^{\prime}\Omega_2 \delta_1)2 .
\end{equation*}
The same argument can be applied to \eqref{eq: d1 theta0} to obtain the second identity. \end{proof}

\subsubsection{Additional Results}

For \eqref{eq:equality three} and \eqref{eq:equality four} to both hold, it must be that
\begin{equation*}
     f_{21} + f_{23} \rho_{\delta}  =  f_{11}  \rho_{\delta} + f_{13}. 
\end{equation*}
Similarly, for \eqref{eq:equality five} and \eqref{eq:equality six} to both hold, we require
\begin{equation*}
      f_{12}   + f_{14} \rho_{\delta}  =  f_{22}  \rho_{\delta} + f_{24}.
\end{equation*}

As $F$ is the inverse of $E$
\begin{equation*}
     \begin{pmatrix}
        F_1 & F_2 \\
        P F_{2} P & P F_1 P 
    \end{pmatrix}
     \begin{pmatrix}
        E_1 & E_2 \\
        P E_2 P & P E_1 P 
    \end{pmatrix} =
   \begin{pmatrix}
        I & 0 \\
        0 & I 
    \end{pmatrix}. 
\end{equation*}
% Hence,
% \begin{equation*}
%        \begin{pmatrix}
%         F_1 & F_2 \\
%         P F_{2} P & P F_1 P 
%     \end{pmatrix} \begin{pmatrix}
%         E_1 & E_2 \\
%         P E_2 P & P E_1 P 
%     \end{pmatrix}
%     \begin{pmatrix}
%         \rho_{\delta} & 0 \\
%         1 & 0 \\
%         0 & \rho_{\delta} \\
%         0 & 1
%     \end{pmatrix}
%    =   \begin{pmatrix}
%         I & 0 \\
%         0 & I 
%     \end{pmatrix}  
%      \begin{pmatrix}
%         \rho_{\delta} & 0 \\
%         1 & 0 \\
%         0 & \rho_{\delta} \\
%         0 & 1
%     \end{pmatrix}
% \end{equation*}

% This simplifies to
Post-multiply both sides by the matrix with columns $(\rho_{\delta}, 1, 0, 0)$ and $(0, 0, \rho_{\delta}, 1)$ to obtain
\begin{equation}
      \begin{pmatrix}
        F_1 & F_2 \\
        P F_{2} P & P F_1 P 
    \end{pmatrix}
    \begin{pmatrix}
        G_1  \\
        G_2
    \end{pmatrix} 
   =   \begin{pmatrix}
        \rho_{\delta} & 0 \\
        1 & 0 \\
        0 & \rho_{\delta} \\
        0 & 1
    \end{pmatrix}, \quad     G_1 = \begin{pmatrix}
        g_1 & g_2 \\
        g_2 & g_1 \\
    \end{pmatrix}, \; G_2 = \begin{pmatrix}
        g_4 & g_5  \\
        g_5 & g_3
    \end{pmatrix}, \label{eq:F G system}
\end{equation}
with
\begin{align*}
    g_1 &= \rho_{\delta} c_1 + c_0 = \rho_{\delta} d_1 + d_0, & g_3 &= \rho_{\delta} c_0 + c_1, \\
    g_2 &= \rho_{\delta} e_1 + e_0, & g_4 &= \rho_{\delta} d_0 + d_1, \\
    & & g_5 &= \rho_{\delta} e_0 + e_1.
\end{align*}
Where the $g_1$ uses \eqref{eq:symmetry condtion}. Notice that
\begin{align}
    & G_1 P = P G_1, \label{eq: G_1 P condition}\\
    & G_1 = P G_1 P = G_1^{T}, \label{eq:G_1 transpose}\\
    & G_2 = G_2^{T}. \label{eq:G_2 transpose}
\end{align}

The system \eqref{eq:F G system} can be rearranged to
\begin{equation}
    G \begin{pmatrix}
        F_1^T \\
        F_2^T
    \end{pmatrix} 
   =   \begin{pmatrix}
        \rho_{\delta} & 1 \\
        0 & 0 \\
        0 & 0 \\
        1 & \rho_{\delta}
    \end{pmatrix}, \label{eq: G F system}
\end{equation}
where
\begin{equation}
    G \equiv \begin{pmatrix}
        G_1 & G_2 \\
        G_2 P & G_1 P
    \end{pmatrix}. \label{eq:matrix that must be invertible 2}
\end{equation}

By assumption (iv), $G$ is invertible. Let $G^{-1} = H$. Applying the formulas for block matrix inversion
\begin{equation*}
     H = \begin{pmatrix}
        H_1 & H_2 \\
        P H_2 & P H_1
    \end{pmatrix}.
\end{equation*}
Furthermore, using  \eqref{eq: G_1 P condition}, \eqref{eq:G_1 transpose} and \eqref{eq:G_2 transpose}, it is possible to show 
\begin{align*}
    H_1^T = P H_1 P, \\
    H_2^T = P H_2 P,
\end{align*}
which implies that $H_1$ and $H_2$ are constant diagonal matrices
\begin{equation*}
    H_1 = \begin{pmatrix}
        h_{11} & h_{12} \\
        h_{21} & h_{11}
    \end{pmatrix}, \; 
    H_2 = \begin{pmatrix}
        h_{13} & h_{14} \\
        h_{23} & h_{13}
    \end{pmatrix}.
\end{equation*}
Multiplying both sides of \eqref{eq: G F system} by $H$
\begin{equation*}
     \begin{pmatrix}
        F_1^T \\
        F_2^T
    \end{pmatrix} 
   =  \begin{pmatrix}
        H_1 & H_2 \\
        P H_2 & P H_1
    \end{pmatrix} \begin{pmatrix}
        \rho_{\delta} & 1 \\
        0 & 0 \\
        0 & 0 \\
        1 & \rho_{\delta}
    \end{pmatrix}, 
\end{equation*}

% \begin{equation*}
%     \begin{pmatrix}
%         f_{11} & f_{21} \\
%         f_{12} & f_{22} \\
%         f_{13} & f_{23} \\
%         f_{14} & f_{24}
%     \end{pmatrix} 
%    =   \begin{pmatrix}
%         h_{11} & h_{12} & h_{13} & h_{14} \\
%         h_{21} & h_{11} & h_{23} & h_{13} \\
%         h_{23} & h_{13} & h_{21} & h_{11} \\
%         h_{13} & h_{14} & h_{11} & h_{12} \\
%     \end{pmatrix} 
%     \begin{pmatrix}
%         \rho_{\delta} & 1 \\
%         0 & 0 \\
%         0 & 0 \\
%         1 & \rho_{\delta}
%     \end{pmatrix} 
% \end{equation*}
which yields the following formulas 
\begin{align*}
    & f_{11} = \rho_{\delta} h_{11} + h_{14}, & & f_{21} = h_{11} + \rho_{\delta} h_{14}, \\
    & f_{12} = \rho_{\delta} h_{21} + h_{13}, & & f_{22} = h_{21} + \rho_{\delta} h_{13}, \\
    & f_{13} = \rho_{\delta} h_{23} + h_{11}, & & f_{23} = h_{23} + \rho_{\delta} h_{11}, \\
    & f_{14} = \rho_{\delta} h_{13} + h_{12}, & & f_{24} = h_{13} + \rho_{\delta} h_{12}.
\end{align*}
Using these formulas
\begin{align*}
    f_{21} + \rho_{\delta} f_{23} = \rho_{\delta} f_{11} + f_{13} = (1 + \rho_{\delta}^2) h_{11} + \rho_{\delta} (h_{14} + h_{23}),  \\
    f_{12} + \rho_{\delta} f_{14} = \rho_{\delta} f_{22} + f_{24} = (1 + \rho_{\delta}^2) h_{13} + \rho_{\delta} (h_{12} + h_{21}). 
\end{align*}

\section{Modified Distillation Algorithm}\label{sec:distillation algorithm details}

As in section \ref{sec:distillation procedure}, assume the sample is sorted in ascending order by propensity score, so that observation $i = 1$ has the lowest propensity score and $i = N$ the highest, with ties broken by placing observations with $Z=0$ above observations with $Z=1$. Maintain the definitions of $n_0$, $n_1$, and $\Delta_j$ of section \ref{sec:distillation procedure}, and let  $J^{-} = \{j:p_j \in P^{-}\}$ and Let $J^{+} = \{j:p_j \in P^{+}\}$.

Define the following `reaction' functions
\begin{align}
    R_{d_1}(d_0) &=  \underset{j \in J^{-}}{\text{max}} \left\{ n_1 \frac{n_0 - d_{0}}{n_0 - d_{0} - \sum_{i=1}^j (1 - Z_i)} \left(\Delta_j - \frac{d_{0}}{n_0 - d_{0}} \frac{\sum_{i = 1}^j (1 - Z_i)}{n_0}\right) \label{eq:d1 reaction} \right\} \\
    R_{d_0}(d_1) &=  \underset{j \in J^{+}}{\text{max}} \left\{ n_0 \frac{n_1 - d_{1}}{\sum_{i=1}^j Z_i - d_{1}} \left(\Delta_j - \frac{d_{1}}{n_1 - d_{1}} \frac{n_1 - \sum_{i = 1}^j Z_i}{n_1}\right) \right\} \label{eq:d0 reaction} 
\end{align}
Given that $d_{0}$ observations with $(Z_i = 0, p_i \in P^{+})$ are trimmed, \eqref{eq:d1 reaction} returns the number of observations with  $(Z_i = 1, p_i \in P^{-})$ that must be trimmed for first-order stochastic dominance to hold. Given that $d_{1}$ observations with $(Z_i = 1, p_i \in P^{-})$ are trimmed, \eqref{eq:d0 reaction} returns the number of observations with  $(Z_i = 0, p_i \in P^{+})$ that must be trimmed for first-order stochastic dominance to hold.   

The modified algorithm is as follows  
\begin{enumerate}
    \item Set $S_{1i} = 1 \; \forall \; i$
    \item Calculate $ \Delta_j$ for $j = 1,...,N$. If $\text{max}(\Delta_j) \leq 0$, stop. If $\text{max}(\Delta_j) > 0$, continue to steps 3, 4, and 5.
    \item Obtain $d_1$ and $d_0$ as follows
    \begin{enumerate}
        \item Calculate
        \begin{align*}
            & d_1^{\text{max}} = R_{d_1}(0) \\
            & d_0^{\text{max}} = R_{d_0}(0) \\
            & d_1^{\text{min}} = \text{max}\{R_{d_1}(d_0^{\text{max}}), 0\} \\
            & d_0^{\text{min}} = \text{max}\{R_{d_0}(d_1^{\text{max}}), 0\} \\
        \end{align*}
        \item  If $d_1^{\text{min}} > 0$ and  $d_0^{\text{min}} > 0$ or $d_1^{\text{min}} = 0$ and  $d_0^{\text{min}} = 0$, obtain all $d_0, d_1$ where $R_{d_1}(d_0) = R_{d_0}(d_1)$
        \item Evaluate $d_0 + d_1$ at $(d_0^{\text{min}}, d_1^{\text{max}})$, $(d_0^{\text{max}}, d_1^{\text{min}})$ and all $d_0, d_1$ where $R_{d_1}(d_0) = R_{d_0}(d_1)$, Select the combination that attains the minimum. 
        \item Obtain $j^{-}$ as the lowest $j \in J^{-}$ where $R_{d_1}(d_0) = d_1$ and $j^{+}$ as the highest $j \in J^{+}$ where $R_{d_0}(d_1) = d_0$. 
    \end{enumerate}    
    \item Loop over $j = 1,...,j^-$. If $Z_j = 1$
     \begin{enumerate}
         \item Calculate
         \begin{equation*}
              \delta_{1j} = \frac{\sum_{i=1}^j S_{1,i} Z_i}{n_1 -  d_{1}} - \frac{\sum_{i=1}^j (1 - Z_i)}{n_0 - d_0}
         \end{equation*}
         \item If $\delta_{1j} > 0$, set $S_{1j} = 0$
     \end{enumerate}
   \item Loop over $j = N, ..., j^+ + 1$. If $Z_j = 0$
        \begin{enumerate}
            \item Calculate
            \begin{equation*}
              \delta_{0} =  \frac{\sum_{i=j}^N S_{1,i}(1 - Z_i)}{n_0 - d_{0}} - \frac{\sum_{i=j}^N  Z_i}{n_1 - d_{1}}
            \end{equation*}
            \item If $\delta_{0j} > 0$, set $S_{1j} = 0$
        \end{enumerate}
    \end{enumerate}

Steps 1 and 2 are identical to the algorithm of Section \ref{sec:distillation procedure}. In step 3, $d_0$ and $d_1$ are jointly determined such that $d_1$ is the smallest number of observations with $(Z_i = 1, p_i \in P^{-})$ that must be trimmed given that $d_0$ observations with $(Z_i = 0, p_i \in P^{+})$ are trimmed, $d_0$ is the smallest number of observations with $(Z_i = 0, p_i \in P^{+})$ that must be trimmed given that $d_1$ observations with $(Z_i = 1, p_i \in P^{-})$ are trimmed, and the sum $d_0 + d_1$ is minimised. Step 4 corresponds to step 3b of the algorithm of Section \ref{sec:distillation procedure}, with the modification that the trimmed distribution for $Z = 0$ is used. Step 5 is identical to step 4b of the algorithm of Section \ref{sec:distillation procedure}.

Figure \ref{fig:subsample full example} illustrates the modified algorithm using the same artificial sample as Figure \ref{fig:subsample simple example}. Panel \subref{fig:subsample full reaction functions} plots $R_{d_0}(d_1)$ and $R_{d_1}(d_0)$ for this sample. The algorithm of Section \ref{sec:distillation procedure} leads to $d_1 = 6, d_0 = 11$. In comparison, the modified algorithm selects $d_1 = 4, d_0 = 12$, so we trim two less observations with $Z = 1$ at the cost of trimming an additional observation with $Z = 0$. Panels \subref{fig:subsample full P^-} and \subref{fig:subsample full P^+} show how steps 4 and 5 choose which points to trim. Panel \subref{fig:subsample full trimmed cdfs} plots the resulting trimmed conditional propensity score distributions. 

\begin{figure}
    \centering
    
    \begin{subfigure}[b]{0.49 \textwidth}
    \centering
    \includegraphics[width = \textwidth]{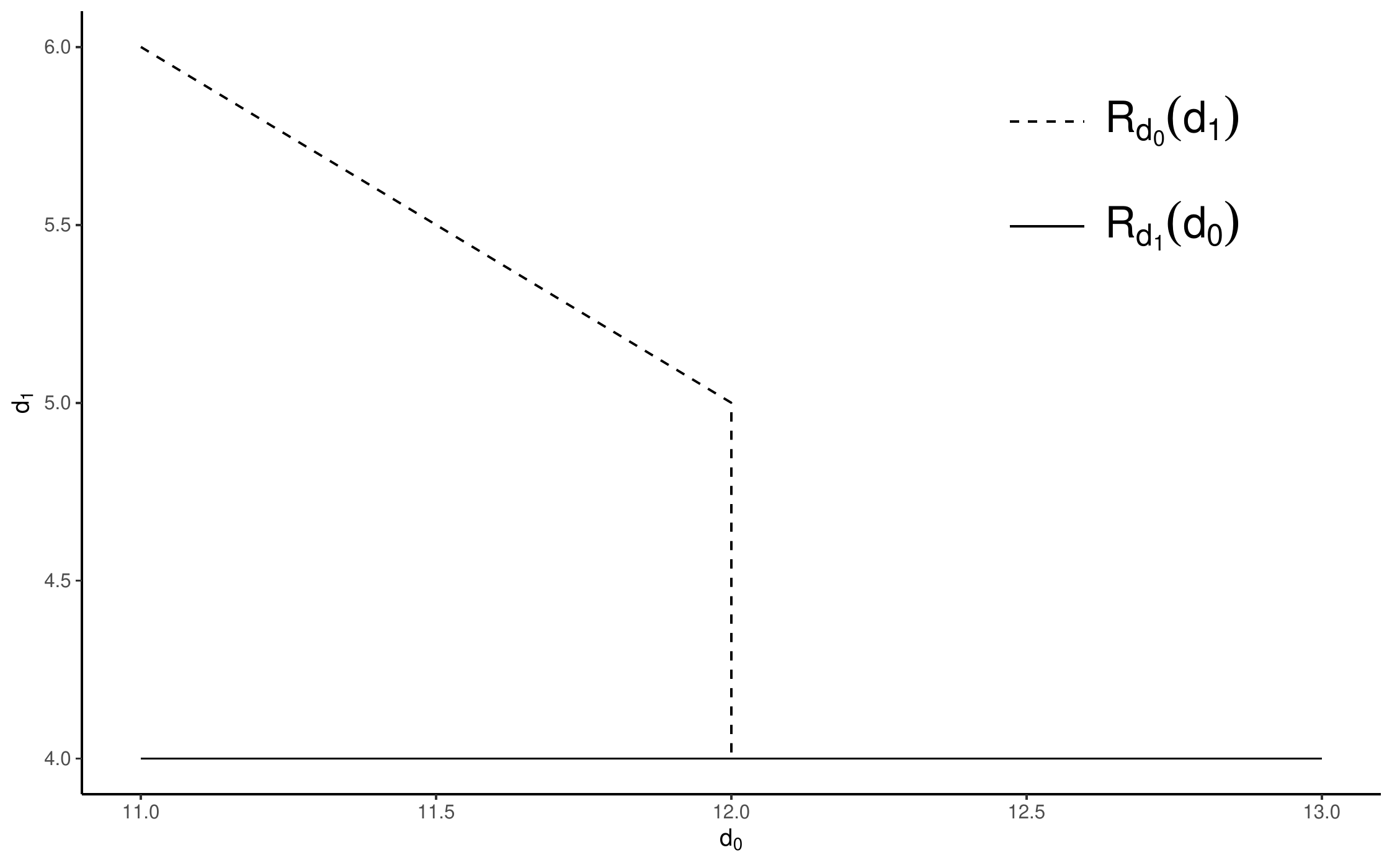}
    \caption{$R_{d_0}(d_1)$ and $R_{d_1}(d_0)$}\label{fig:subsample full reaction functions}
    \end{subfigure}
    \begin{subfigure}[b]{0.49 \textwidth}
    \centering
    \includegraphics[width = \textwidth]{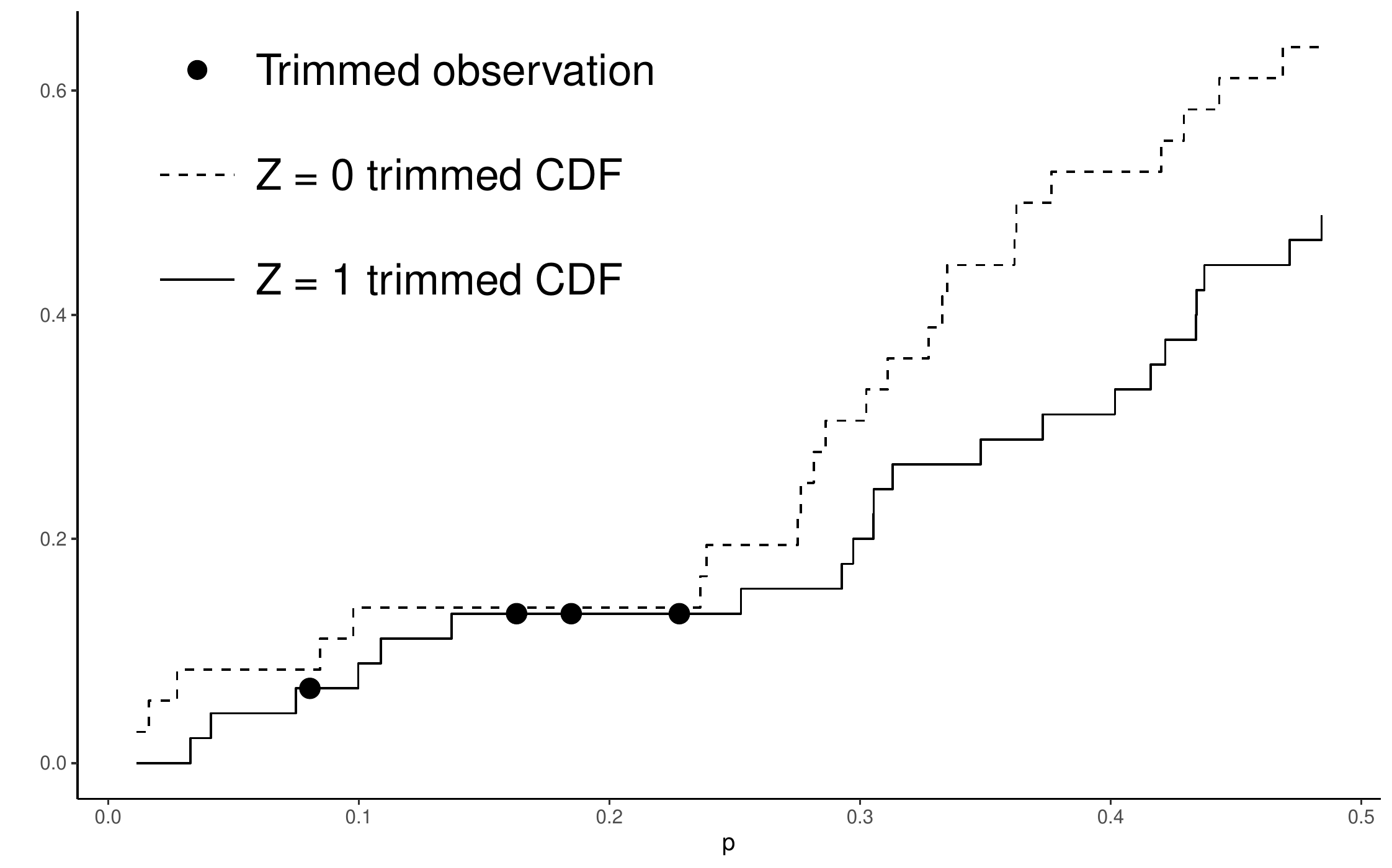}
    \caption{$P^-$ region}\label{fig:subsample full P^-}
    \end{subfigure}\\
    
    \begin{subfigure}[b]{0.49 \textwidth}
    \centering
    \includegraphics[width = \textwidth]{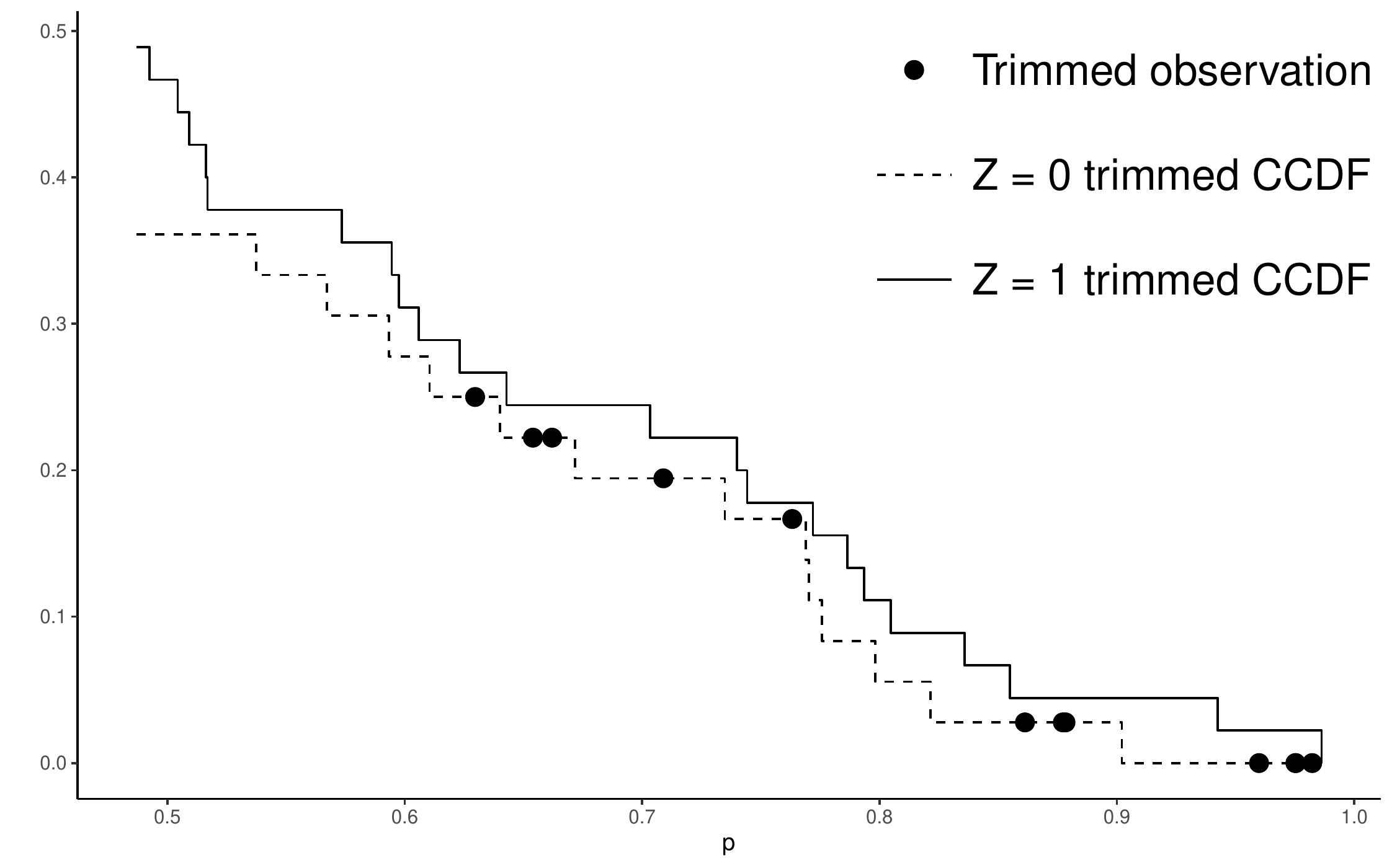}
    \caption{$P^+$ region}\label{fig:subsample full P^+}
    \end{subfigure}    
     \begin{subfigure}[b]{0.49 \textwidth}
    \centering
    \includegraphics[width = \textwidth]{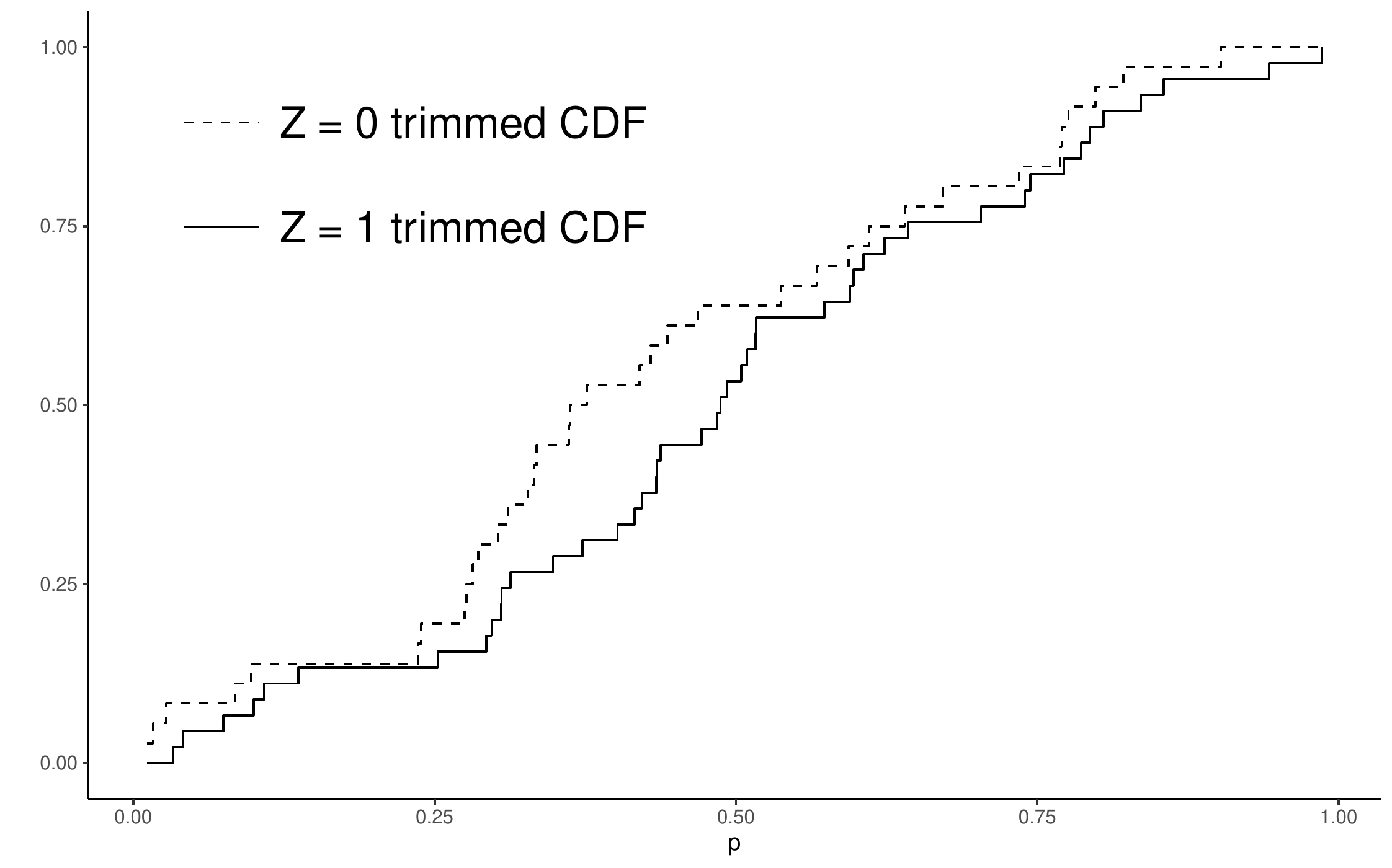}
    \caption{Trimmed distributions}\label{fig:subsample full trimmed cdfs}
    \end{subfigure}
    
    \caption{Example of the distillation algorithm. The sample consists of 100 observations. The data generating process sets $\Pr(Z = 1) = 1 / 2$ with propensity score uniformly distributed between 0 and 1 for $Z = 0, 1$. The interval $P^-$ includes all observations with a propensity weakly less than the median, while $P^+$ includes all observations with a propensity score above the median.} \label{fig:subsample full example}
\end{figure}

\section{Test Procedure for a Multi-valued Discrete Instrument}\label{sec:test procedure multi-valued discrete instrument}

The test statistic and procedure of sections \ref{sec:test statistic} and \ref{sec:test procedure} can be extended to the case of a single multi-valued instrument. This test procedure forms pairs of neighbouring instrument values and jointly tests the nesting inequalities and index sufficiency for all pairs. 

Let $Z \in \{z_1, z_2,...,z_K\}$, $n_k$ be the number of observations with $Z = z_k$, and $\lambda_{z_k} = \Pr(Z = z_k)$. Form $K-1$ pairs of neighbouring instrument values $(z_1, z_2)$, $(z_2, z_3)$,...,$(z_{K-1}, z_{L})$. For $k = 1, ..., K-1$, let $(S_{1k}, S_{2k})$ be a pair of sample inclusion indicators corresponding to the subsample containing only observations with $Z \in \{z_{k}, z_{k + 1}\}$. $S_{1k}$ can be constructed by applying a distillation procedure to each $(z_{k}, z_{k+1})$ subsample. The construction of $S_{2k}$ differs slightly from the headline test procedure in that it must now be based on the value of the two inverse probability weighting terms $\Pr(Z=z_{k}|p(X_i,Z_i))$ and $\Pr(Z=z_{k+1}|p(X_i,Z_i))$. For example $S_{2j} = 1$ if  $\Pr(Z=z_{k}|p(X_i,Z_i)) > 0.05$ and $\Pr(Z=z_{k + 1}|p(X_i,Z_i))> 0.05$.  

% For $k = 1,...,K-1$, we test the following moment inequalities and equalities: for every measurable subset $A \subset \mathcal{Y}$
% \begin{align}
%     &E[1\{ U_i \in A \}D_i | Z= z_k, S_{1,k} = 1 ] - E[1\{ U_i \in A \}D_i | Z= z_{k+1}, S_{1,k} = 1 ] \leq  0, \label{eq:moment_ineq_1} \\
%     &E[1\{ U_i \in A \}(1-D_i) | Z= z_{k + 1}, S_{1,k} = 1 ] - E[1\{ U_i \in A \}(1-D_i) | Z = z_{k}, S_{1,k} = 1 ] \leq  0, \label{eq:moment_ineq_2} \\
%     & E\left[ \frac{1 \{U_i \in A  \} D_i \lambda_k}{ \Pr(Z=z_k|p(X_i,Z_i))} |Z=z_k, S_{2,k} = 1 \right] - E\left[ \frac{1 \{U_i \in A  \} D_i \lambda_{k+1}}{ \Pr(Z=z_{k+1}|p(X_i,Z_i))} |Z=z_{k + 1}, S_{2,k} = 1 \right] = 0,\label{eq:moment_eq_1} \\
%     & E\left[ \frac{1 \{U_i \in A  \} (1-D_i) \lambda_k}{ \Pr(Z=z_k|p(X_i,Z_i))} |Z=z_k, S_{2,k} = 1 \right] - E\left[ \frac{1 \{U_i \in A  \} (1-D_i) \lambda_{k + 1}}{ \Pr(Z=z_{k+1}|p(X_i,Z_i))} |Z=z_{k + 1}, S_{2,k} = 1 \right] = 0, \label{eq:moment_eq_2}
% \end{align}

For the subsample of observations with $Z \in \{z_k, z_{k+1}\}$, denote the empirical conditional distributions of  $(U,D,X)$ given $Z=z_{k+1}$ and $Z=z_{k}$ by $P_{n_{k+1}}$ and $Q_{n_k}$. 

Define
\begin{align*}
    f_i^{z_k,S_{1k}}(A,d) &= \frac{1\{U_i \in A, D_i = d \} S_{1ik}}{\Pr(S_{1k} = 1 | Z=z_k )}, \\
    f_i^{z_{k+1},S_{1k}}(A,d) &= \frac{1\{U_i \in A, D_i = d \} S_{1ik}}{\Pr(S_{1k} = 1 | Z=z_{k+1} )}, \\
    g_i^{z_k,S_{2k}}(A,d) &= \frac{1\{U_i \in A, D_i = d \} \lambda_{z_k} S_{2ik}}{ \Pr(Z=z_k|p(X_i,Z_i))\Pr(S_{2k} = 1 | Z=z_k)},\\
    g_i^{z_{k+1},S_{2k}}(A,d) &= \frac{1\{U_i \in A, D_i = d \} \lambda_{z_{k+1}} S_{2ik}}{ \Pr(Z=z_k|p(X_i,Z_i))\Pr(S_{2k} = 1 | Z=z_{k+1})}.
\end{align*}
The test statistic is
\begin{align*}
    T \equiv &\underset{k \in \{1,...,K-1\}}{\max} T_k \\ 
    T_k \equiv & \max \left\{ \sup_{A \in \mathcal{C}(\mathcal{Y}),d \in \{0 , 1 \}} \frac{T_{1k}(A,d)}{\hat{\sigma}_{1k}(A,d)}, \sup_{A \in \mathcal{C}(\mathcal{Y}),d \in \{0 , 1 \}} \frac{T_{2k}(A,d)}{\hat{\sigma}_{2k}(A,d)}, \sup_{A \in \mathcal{C}(\mathcal{Y}), d \in \{0 , 1 \}} \frac{-T_{2k}(A,d)}{\hat{\sigma}_{2k}(A,d)} \right\}, \ \text{where} \\
    T_{1k}(A,d) = & \sqrt{\frac{n_{k + 1} n_{k}}{n_{k + 1}+n_{k}}} \left[ \left( Q_{n_k} f_i^{z_k,S_{1k}}(A,1) - P_{n_{k + 1}} f_i^{z_{k+1},S_{1k}}(A,1) \right) d  \right. \\ &+\left. (P_{n_{k+1}} f_i^{z_{k+1},S_{1k}}(A,0) - Q_{n_k} f_i^{z_k,S_{1k}}(A,0) )(1-d)  \right] \\
    T_{2k}(A,d) = & \sqrt{\frac{n_{k + 1} n_{k}}{n_{k + 1}+n_{k}}}  \left( Q_{n_k} g_i^{z_k, S_{2k}}(A,d) - P_{n_{k+1}} g_i^{z_{k+1}, S_{2k}}(A,d) \right) \\
    \hat{\sigma}_{1k}^2(A,d) = & \left\{ \lambda_{z_k}[ Q_{n_k} (f_i^{z_k,S_{1k}}(A,1))^2 - ( Q_{n_k} f_i^{z_k,S_{1k}}(A,1))^2] \right. \\ &+ \left. \lambda_{z_{k+1}}[ P_{n_{k+1}} (f_i^{z_{k+1},S_{1k}}(A,1))^2 - ( P_{n_{k+1}} f_i^{z_{k+1},S_{1k}}(A,1))^2] \right\} d \\
    & + \left\{ \lambda_{z_k}[ Q_{n_k} (f_i^{z_k,S_{1k}}(A,0))^2 - ( Q_{n_k} f_i^{z_k,S_{1k}}(A,0))^2] \right. \\ & + \left. \lambda_{z_{k + 1}}[ P_{n_{k+1}} (f_i^{z_{k+1},S_{1k}}(A,0))^2 - ( P_{n_{k+1}} f_i^{z_{k+1},S_{1k}}(A,0))^2] \right\} (1-d), \\
    \hat{\sigma}_{2k}^2(A,d) = & \lambda_{z_k}[ Q_{n_k} (g_i^{z_k,S_{2k}}(A,d))^2 - ( Q_{n_0} g_i^{z_kS_{2k}}(A,d))^2] + \lambda_{z_{k + 1}}[ P_{n_{k+1}} (g_i^{z_{k+1}S_{2k}}(A,d))^2 - ( P_{n_{k+1}} g_i^{z_{k+1}S_{2k}}(A,d))^2]. \\
\end{align*}

For the bootstrap, again let $(\hat{M}_1, \dots, \hat{M}_{n})$ be iid random bootstrap multipliers such that they are independent of the original sample and satisfy $E(\hat{M}_i) = 0$ and $Var(\hat{M}_i) = 1$. A bootstrap analogue of $T$ can be constructed as
\begin{align*}
    \hat{T} \equiv &\underset{k \in \{1,...,K-1\}}{\max} \hat{T}_k \\
    \hat{T}_k \equiv & \max \left\{ \sup_{A  \in \mathcal{C}(\mathcal{Y}),d \in \{0 , 1 \}} \frac{\hat{T}_{1k}(A,d)}{\sigma_{1k}(A,d)}, \sup_{A \in \mathcal{C}(\mathcal{Y}),d \in \{0 , 1 \}} \frac{\hat{T}_{2k}(A,d)}{\sigma_{2k}(A,d)}, \sup_{A \in \mathcal{C}(\mathcal{Y}),d \in \{0 , 1 \}} \frac{-\hat{T}_{2k}(A,d)}{\sigma_{2k}(A,d)} \right\}, \ \text{where} \\
    \hat{T}_{1k}(A,d) = & \sqrt{\frac{n_{k+1} n_{k}}{n_{k+1}+n_{k}}} \left( (\hat{Q}_{n_k} - Q_{n_k}) f_i^{z_k,S_{1k}}(A,1) - (\hat{P}_{n_{k+1}}- P_{n_{k+1}}) f_i^{z_{k+1},S_{1k}}(A,1) \right) d \\ 
    & + \sqrt{\frac{n_{k+1} n_{k}}{n_{k+1} + n_{k}}} \left( (\hat{P}_{n_{k+1}} - P_{n_{k+1}}) f_i^{z_{k+1},S_{1k}}(A,0) - (\hat{Q}_{n_k} - Q_{n_k} ) f_i^{z_k,S_{1k}}(A,0) \right) (1-d) \\
    \hat{T}_{2k}(A,d) = & \sqrt{\frac{n_{k + 1} n_{k}}{n_{k + 1}+n_{k}}}  \left[ ( \hat{Q}_{n_k} - Q_{n_k}) g_i^{z_k,S_{2k}}(A,d) - (\hat{P}_{n_{k + 1}} - P_{n_{k + 1}}) g_i^{z_{k + 1},S_{2k}}(A,d) \right],
    \end{align*}
where, for the random variable $(a_i: i=1, \dots, n)$, we define
\begin{align*}
(\hat{Q}_{n_k} - Q_{n_k}) a_i & = \frac{1}{n_k} \sum_{i: Z_i=z_k} \hat{M}_i a_i, \mspace{10mu}
(\hat{P}_{n_{k + 1}} - P_{n_{k + 1}}) a_i = \frac{1}{n_{k + 1}} \sum_{i: Z_i=z_{k + 1}} \hat{M}_i a_i.
\end{align*}

Note that common bootstrap draws are used in constructing for all $\hat{T}_{k}$, rather than a separate draw being generated for each.

\end{appendix}

\bibliographystyle{ecta}  
\bibliography{bibliography.bib}

\end{document}